  \providecommand\BibTeX{{%
    \normalfont B\kern-0.5em{\scshape i\kern-0.25em b}\kern-0.8em\TeX}}}
\def\etc{\emph{etc.}\@\xspace}%
\def\eg{e.g.\@\xspace}%
\def\Eg{E.g.\@\xspace}%
\def\ie{i.e.\@\xspace}%
\def\Ie{I.e.\@\xspace}%
\def\vs{vs.\@\xspace}%
\def\wrt{wrt.\@\xspace}%
\newcommand{\hlight}[1]{{\setlength{\fboxsep}{0pt}\colorbox{yellow}{#1}}}%
  \patchcmd{\@addmarginpar}{\ifodd\c@page}{\ifodd\c@page\@tempcnta\m@ne}{}{}
  \newcommand{\change}[2]{{\marginpar{{\color{red}\scriptsize\textbf{#1}}}}{{\color{red}{#2}}}}%
  \newcommand{\changeNoMargin}[1]{{\color{red}{#1}}}%
  \newcommand{\change}[2]{{#2}}%
  \newcommand{\changeNoMargin}[1]{{#1}}%
\newcommand{\dlsqb}{[\![}
\newcommand{\drsqb}{]\!]}
\newcommand{\ts}[1]{\mathbb{#1}}%
\newcommand{\T}{{\mathbb T}}
\newcommand{\TT}{{\mathsf T}}
\newcommand{\W}{{\mathbb W}}
\newcommand{\WT}{{\mathsf W}}
\newcommand{\UU}{{\mathbb U}}
\newcommand{\UT}{{\mathsf U}}
\newcommand{\V}{{\mathbb V}}
\newcommand{\VT}{{\mathsf V}}
\newcommand{\ST}{{\mathsf S}}
\newcommand{\Rel}{{\mathcal{R}}}
\newcommand{\MM}{{\mathsf M}}
\newcommand{\zero}{\texttt 0}
\newcommand{\rulename}[1]{\text{\scriptsize[\textsc{#1}]}}
\newcommand{\role}[1]{\sf {#1}}%
\newcommand{\pp}{{\sf p}}
\newcommand{\q}{\pq}
\newcommand{\pq}{{\sf q}}
\newcommand{\pr}{{\sf r}}
\newcommand{\e}{\kf{e}}
\newcommand{\x}{x}
\newcommand{\y}{y}
\newcommand{\h}{h}
\newcommand{\val}{\kf{v}}
\newcommand{\valn}{\kf{n}}
\newcommand{\valr}{\kf{i}}
\newcommand{\sep}{\ensuremath{~~\mathbf{|\!\!|}~~ }}
\newcommand{\kf}[1]{\ensuremath{\mathsf{#1}}}
\newcommand{\pc}{\ensuremath{~|~}}
\newcommand{\ty}{\textbf{t}}
\newcommand{\N}{\ensuremath{\mathcal M}}
\newcommand{\M}{\ensuremath{\mathcal M}}
\newcommand{\pa}[2]{#1 \triangleleft  #2}
\newcommand{\set}[1]{\{#1\}}
\newcommand{\eval}[2]{#1 \downarrow #2}
\newcommand{\noteval}[2]{#1 \mathbin{\not{\downarrow}} #2}
\newcommand{\true}{\kf{true}}
\newcommand{\false}{\kf{false}}
\newcommand{\rln}[1]{\textsc{#1}}
\newcommand{\CP}[1]{ {\mathcal P}\!\left(#1\right)}
\newcommand{\participant}[1]{\mathtt{pt}(#1)}
\newcommand{\proj}[2]{ #1 \upharpoonright #2}
\newcommand{\projOut}[2]{ #1 \mathbin{\upharpoonright^{!}} #2}
\newcommand{\projIn}[2]{ #1 \mathbin{\upharpoonright^{?}} #2}
\newcommand{\sub}[2]{\set{#1/#2}}
\newcommand{\actions}[1]{\mathtt{act}(#1)}
\newcommand{\subst}[2]{\{\nicefrac{#1}{#2}\}}%
\newcommand{\mapUpdate}[2]{\!\left\{{#1} \mapsto {#2}\right\}}%
\newcommand{\single}[1]{\llbracket #1 \rrbracket_\SI}
\newcommand{\msgLabel}[1]{\mathit{{#1}}}%
\newcommand{\procin}[3]{#1 ? #2.#3}%
\newcommand{\procinNoSuf}[2]{{#1} ? {#2}}
\newcommand{\procout}[4]{#1 ! #2\langle #3 \rangle.#4}%
\newcommand{\procoutNoSuf}[3]{{#1} ! {#2}\langle{#3}\rangle}
\newcommand{\error}{\kf{error}}
\newcommand{\PP}{\ensuremath{P}}
\newcommand{\Q}{\PQ}
\newcommand{\PQ}{\ensuremath{Q}}
\newcommand{\cond}[3]{\kf{if}~ #1 ~\kf{then} ~#2 ~\kf{else}~#3}
\newcommand{\inact}{\mathbf{0}}
\newcommand{\internal}{\oplus}
\newcommand{\emptyqueue}{\varnothing}
\newcommand{\exprt}[2]{\text{\rm\textbf{\color{blue}\underline{expr$\llparenthesis{\color{black}#1},{\color{black}#2}\rrparenthesis$}}}}
\newcommand{\valt}[1]{\text{\rm\textbf{\color{blue}\underline{val$\llparenthesis{\color{black}#1}\rrparenthesis$}}}}
\newcommand{\tend}{\mathtt{end}}
\newcommand{\tstr}{\mathtt{string}}
\newcommand{\tbool}{\mathtt{bool}}
\newcommand{\tnat}{\mathtt{nat}}
\newcommand{\treal}{\mathtt{real}}
\newcommand{\tint}{\mathtt{int}}
\newcommand{\tunit}{\mathtt{unit}}
\newcommand{\tin}[3]{#1?#2(#3)}
\newcommand{\tout}[3]{#1!#2(#3)}
\DeclareMathOperator*{\tinternal}{%
  \vphantom{\sum}\mathlarger{\mathlarger{\mathlarger{\mathlarger{\raisebox{-3pt}{$\oplus$}}}}}%
}%
\DeclareMathOperator*{\texternal}{%
  \vphantom{\sum}\mathlarger{\mathlarger{\mathlarger{\raisebox{-3pt}{\textnormal\&}}}}%
}%
\newcommand{\tqueue}{\sigma}
\newcommand{\temptyqueue}{\epsilon}
\renewcommand{\S}{\ST}%
\newcommand{\ttreeSym}{\mathcal{T}}%
\newcommand{\ttree}[1]{\operatorname{\ttreeSym}\!\left({#1}\right)}%
\newcommand{\Econtext}{\mathcal{E}}
\newcommand{\subt}{\leqslant}
\newcommand{\subttt}{\lesssim}
\newcommand{\nsubttt}{\not\lesssim}%
\newcommand{\subs}{\leq\vcentcolon}
\newcommand{\red}{\longrightarrow}
\newcommand{\recvLabel}[3]{{#1}{:}{#2}?{#3}}
\newcommand{\sendLabel}[3]{{#1}{:}{#2}!{#3}}
\newcommand{\redLabel}[1]{\xrightarrow{#1}}
\newcommand{\redSend}[3]{\xrightarrow{\sendLabel{#1}{#2}{#3}}}
\newcommand{\redRecv}[3]{\xrightarrow{\recvLabel{#1}{#2}{#3}}}
\newcommand{\reds}{\mathrel{\longrightarrow^{\!*}}}
\newcommand{\fsqrt}[1]{{\sf inv}(#1)}
\newcommand{\finv}[1]{{\sf inv}(#1)}
\newcommand{\fneg}{\fsqrt}
\newcommand{\fsucc}[1]{{\sf succ}(#1)}
\newcommand{\fv}{\mathsf{fv}}
\newcommand{\cyclic}[2]{{\tt cyclic}(#1,#2,\pr)}
\newcommand{\regU}[4]{{\tt reg}_\SO(#1,#2,#3,#4)}
\newcommand{\mufree}[1]{{\tt mu}^-(#1)}
\newcommand{\cinferrule}[3][]{
  \infer=[#1]{#3}{#2}%
}
\newcommand{\inferrule}[3][]{
  \infer[\!\!{#1}]{#3}{#2}%
}
\newcommand{\inferruleR}[3][]{
  \infer[\!\!{#1}]{#3}{#2}%
}
\newcommand{\cinfer}[3][]{
  \infer=[#1]{#3}{#2}%
}%
\newcommand{\AContext}[1]{\mathcal{A}^{(#1)}}
\newcommand{\BContext}[1]{\mathcal{B}^{(#1)}}
\newcommand{\CContext}[1]{\mathcal{C}^{(#1)}}
\newcommand{\DContext}[1]{\mathcal{D}^{(#1)}}
\newcommand{\AContextt}[2]{\mathcal{A}^{(#1)}_{#2}}
\newcommand{\BContextt}[2]{\mathcal{B}^{(#1)}_{#2}}
\newcommand{\ACon}[2]{\mathcal{A}^{(#1)}.#2}
\newcommand{\BCon}[2]{\mathcal{B}^{(#1)}.#2}
\newcommand{\DCon}[2]{\mathcal{D}^{(#1)}.#2}
\newcommand{\ACont}[3]{\mathcal{A}^{(#1)}_{#2}.#3}
\newcommand{\BCont}[3]{\mathcal{B}^{(#1)}_{#2}.#3}
\newcommand{\BC}{\mathcal{B}}
\newcommand{\AC}{\mathcal{A}}
\newcommand{\dom}[1]{\ensuremath{dom( #1)}}
\newcommand{\EmptyQueue}{\varnothing}
\newcommand{\Queue}{h}
\newcommand{\msg}[3]{(#1,#2(#3))}
\definecolor{ceca}{rgb}{1,0.5,0}
\newcommand{\SO}{\text{\tiny\sf{SO}}}%
\newcommand{\SI}{\text{\tiny\sf{SI}}}%
\newcommand{\Source}{\text{\color{blue}{Source}}\xspace}
\newcommand{\Sink}{\text{\color{blue}{Sink}}\xspace}
\newcommand{\Processor}{\text{\color{blue}{Processor}}\xspace}
\newcommand{\ProcessorOne}{\text{\color{blue}{Processor~1}}\xspace}
\newcommand{\ProcessorTwo}{\text{\color{blue}{Processor~2}}\xspace}
\newcommand{\Control}{\text{\color{blue}{Control}}\xspace}
\theoremstyle{acmdefinition}
\newtheorem{remark}[theorem]{Remark}
\begin{document}

\newcommand{\myTitle}{Precise Subtyping for Asynchronous Multiparty Sessions%
  \iftoggle{techreport}{ %
    (Extended Version)%
  }{}%
}
\title[\myTitle]{\myTitle}

\author{Silvia Ghilezan}
\email{gsilvia@uns.ac.rs}
\orcid{0000-0003-2253-8285}
\author{Jovanka Pantović}
\email{pantovic@uns.ac.rs}
\orcid{0000-0002-3974-5064}
\author{Ivan Prokić}
\email{prokic@uns.ac.rs}
\orcid{0000-0001-5420-1527}
\affiliation{%
  \institution{Univerzitet u Novom Sadu}
  \city{Novi Sad}
  \country{Serbia}
}
\author{Alceste Scalas}
\email{alcsc@dtu.dk}
\orcid{0000-0002-1153-6164}
\affiliation{%
  \institution{Technical University of Denmark}
  \streetaddress{Richard Petersens Plads, Bygning 324}
  \city{Kongens Lyngby}
  \country{DK}
  \postcode{2800}
}
\affiliation{%
  \institution{Aston University}
  \city{Birmingham}
  \country{UK}
}
\author{Nobuko Yoshida}
\email{n.yoshida@imperial.ac.uk}
\orcid{0000-0002-3925-8557}
\affiliation{%
  \institution{Imperial College London}
  \streetaddress{South Kensington Campus}
  \city{London}
  \country{UK}
  \postcode{SW7 2AZ}
}

\begin{abstract}
  Session subtyping is a cornerstone of refinement of communicating processes:
  a process implementing a session type (\ie, a communication
  protocol) $T$ %
  can be safely used whenever a process implementing %
  one of its supertypes $T'$ is expected, in any context, %
  without introducing deadlocks nor other communication errors. %
  As a consequence, whenever $T \subt T'$ holds,
  it is safe to replace an implementation of\, $T'$ with an
  implementation of the subtype $T$,
  which may allow for more optimised communication patterns.

  This paper presents the first formalisation %
  of the \emph{precise} subtyping relation for \emph{asynchronous
    multiparty} sessions.
  We show that our subtyping relation is \emph{sound} %
  (\ie, guarantees safe process replacement, as outlined above) %
  and also \emph{complete}: any extension of the relation is unsound.
  To achieve our results, we develop a novel
  \emph{session decomposition} technique, from \emph{full}
  session types (including internal/external choices) into
  \emph{single input/output session trees} (without choices).

  Previous work studies precise subtyping for %
  \emph{binary} sessions (with just two participants), %
  or multiparty sessions (with any number of participants) and
  \emph{synchronous} interaction. %
  Here, we cover \emph{multiparty} sessions with \emph{asynchronous}
  interaction, where messages are transmitted via FIFO queues (as in
  the TCP/IP protocol), and prove that our subtyping is both
  operationally and denotationally precise. %
  In the asynchronous multiparty setting, finding the precise
  subtyping relation is a highly complex task: %
  this is because, under some conditions, %
  participants can permute the order of their inputs and outputs, by
  sending some messages earlier or receiving some later, %
  without causing errors; the precise subtyping relation must capture
  \emph{all} such valid permutations %
  --- and consequently, its formalisation, reasoning and proofs become
  challenging. %
  Our session decomposition technique overcomes this complexity,
  expressing the subtyping relation %
  as a composition of refinement relations between single input/output
  trees, and providing a simple reasoning principle for asynchronous
  message optimisations.
\end{abstract}

\begin{CCSXML}
<ccs2012>
<concept>
<concept_id>10003752.10003753.10003761.10003764</concept_id>
<concept_desc>Theory of computation~Process calculi</concept_desc>
<concept_significance>500</concept_significance>
</concept>
<concept>
<concept_id>10003752.10010124.10010125.10010130</concept_id>
<concept_desc>Theory of computation~Type structures</concept_desc>
<concept_significance>500</concept_significance>
</concept>
</ccs2012>
\end{CCSXML}

\ccsdesc[500]{Theory of computation~Process calculi}
\ccsdesc[500]{Theory of computation~Type structures}
\keywords{session types, the $\pi$-calculus, typing systems, subtyping, asynchronous multiparty
session types, soundness, completeness} 

\maketitle
\section{Introduction}\label{sec:intro}
Modern software systems are routinely designed and developed %
as ensembles of concurrent and distributed components, %
interacting via message-passing %
according to pre-determined \emph{communication protocols}. %
A key challenge lies in ensuring that %
each component abides by the desired protocol, %
thus avoiding run-time failures %
due to, \eg, communication errors and deadlocks. %
One of the most successful approaches to this problem %
are \textbf{session types} \cite{THK,HVK}.
In their original formulation,
session types allow formalising two-party protocols
(\eg, for client-server interaction),
whose structure includes sequencing, choices, and recursion;
they were later extended to
\emph{multiparty} protocols \cite{HYC08,HYC16}.
By describing \emph{(multiparty) protocols as types}, %
session types provide a type-based methodology to statically verify
whether a given process implements a desired protocol.
Beyond the theoretical developments, %
multiparty session types have been implemented in mainstream programming languages
such as Java, Python, Go, Scala, C, TypeScript, F$\sharp$,
OCaml, Haskell, Erlang \cite{BTTrends,bettytoolbook}.%

\paragraph{Precise subtyping} The substitution
principle of \citeN{Liskov:1994:BNS} %
establishes a general notion of \emph{subtyping}: %
if $T$ is subtype of $T'$, then
an object of type $T$ can always replace an object of type $T'$,
in any context.
Similar notions arise in the realm of process calculi, %
since \citet{PierceSangiorgi95} first introduced
IO-subtyping for input and output channel capabilities in the $\pi$-calculus. %
As session types are protocols,
the notion of \emph{session subtyping}
(first introduced by \citeN{GH05}) can be interpreted as %
\textbf{\emph{protocol refinement}}:
given two types/protocols $T$ and $T'$, %
if $T$ is a subtype (or refinement) of $T'$, %
then a process that implements $T$ can be used %
whenever a process implementing $T'$ is needed. %
In general, when a type system is equipped with a subtyping (subsumption) rule,
then we can enlarge the set of typable programs
by enlarging its subtyping relation.
On the one hand, a larger subtyping can be desirable,
since it makes the type system more flexible,
and the verification more powerful;
however, a subtyping relation that is \emph{too} large
makes the type system unsound:
\eg, if we consider $\tstr$ as a subtype of $\treal$,
then expressions like\; $1 + \mathtt{``foo"}$ \;become typable,
and typed programs can crash at run-time. %
Finding the ``right subtyping'' (not too strict, nor too lax)
leads to the problem of finding a canonical, \emph{precise} subtyping
for a given type system --- \ie, a subtyping relation that is sound
(``typed programs never go wrong'') and cannot be further enlarged
(otherwise, the type system would become unsound).
The problem has been widely studied for the $\lambda$-calculus %
\cite{BHLN12,BHLN17,DG14,SIAM}; %
several papers have also %
tackled the problem in the realm of session types
\cite{CDSY2017,GhilezanJPSY19}. 
A session subtyping relation $\subt$ is \textbf{\emph{precise}} when it is both %
\emph{sound} and \emph{complete}:

\begin{description}
\item[soundness] means that, if we have a context $C$ %
  expecting some process $P'$ of type $T'$, then $T \subt T'$ implies that %
  any process $P$ of type $T$ can be placed into $C$ %
  without causing ``bad behaviours'' %
  (\eg, communication errors or deadlocks); %
\item[completeness] means that %
  $\subt$ cannot be extended without becoming unsound. %
  More accurately: if $T \not\subt T'$, then %
  we can find a process $P$ of type $T$, %
  and a context $C$ expecting a process of type $T'$, %
  such that if we place $P$ in $C$, %
  it will cause ``bad behaviours.''
\end{description}

\paragraph{Asynchronous Multiparty Session Subtyping}
This work tackles the problem of finding
the precise subtyping relation $\subt$ %
for \emph{multiparty asynchronous} session types. %
The starting point is a type system for
processes that %
\begin{enumerate*}[label=\emph{(\arabic*)}]
\item
  implement types/protocols with 2 or more participants, and
\item
  communicate through a medium %
  that buffers messages %
  while preserving their order %
  --- as in TCP/IP sockets,
  and akin to the original papers on multiparty session types
  \cite{HYC08,BCDLDY2008}.
\end{enumerate*}

For example, consider a scenario where participant $\pr$ %
waits for the outcome of a computation from $\pp$, %
and notifies $\pq$ on whether to continue the calculation %
(we omit part of the system ``$\cdots$''):
\[
\begin{array}{c}
\pa\pr{P_\pr}%
\;\pc\;
\pa\pp{P_\pp}%
\;\pc\;
\pa\pq{P_\pq}%
\;\pc\;%
\cdots%
\\[2mm]%
\text{where }\;%
P_\pr \;=\;
\sum\left\{%
  \begin{array}{@{}l@{}}
    \procin{\pp}{\msgLabel{success}(x)}{
      \cond{(x > \zero)}{\procout{\pq}{\msgLabel{cont}}{x}{\inact}}
                   {\procout{\pq}{\msgLabel{stop}}{}{\inact}}
    }%
    \\%
    \procin{\pp}{\msgLabel{error}(\mathit{fatal})}{
      \cond{(\neg\mathit{fatal})}{\procout{\pq}{\msgLabel{cont}}{42}{\inact}}
                            {\procout{\pq}{\msgLabel{stop}}{}{\inact}}
    }%
  \end{array}
\right\}
\end{array}
\]

\noindent%
Above, $\pa\pr{P_\pr}$ denotes a process $P_\pr$ %
executed by participant $\pr$, %
$\procinNoSuf{\pp}{\ell(x)}$ is an input %
of message $\ell$ with payload value $x$ from participant $\pp$, %
and $\procoutNoSuf{\pq}{\ell}{5}$ is an output of message $\ell$ %
with payload $5$ to participant $\pq$. %
In the example, %
$\pr$ waits to receive %
either $\msgLabel{success}$ or $\msgLabel{error}$ %
from $\pp$. In case of $\msgLabel{success}$, %
$\pr$ checks whether the message payload $x$
is greater than $\zero$ (zero), and tells $\pq$ to either $\msgLabel{cont}$inue
(forwarding the payload $x$) or $\msgLabel{stop}$, %
then terminates ($\inact$); %
in case of $\msgLabel{error}$, %
$\pr$ checks whether the error is non-fatal, %
and then tells $\pq$ to either $\msgLabel{cont}$inue
(with a constant value $42$), or $\msgLabel{stop}$. %
Note that $\pr$ is blocked until a message is sent by $\pp$, %
and correspondingly, $\pq$ is waiting for $\pr$, who is waiting for $\pp$.
Yet, depending on the application, %
one might attempt to \emph{locally optimise} $\pr$, %
by replacing $P_\pr$ above with the process:
\[
P'_\pr \;\;=\;\;%
  \cond{(\ldots)}{%
    \procout{\pq}{\msgLabel{cont}}{42}{%
      \sum\left\{%
        \begin{array}{@{}l@{}}
          \procin{\pp}{\msgLabel{success}(x)}{\inact}%
          \\%
          \procin{\pp}{\msgLabel{error}(y)}{\inact}%
        \end{array}
      \right\}%
    }%
  }{%
    \procout{\pq}{\msgLabel{stop}}{}{%
      \sum\left\{%
        \begin{array}{@{}l@{}}
          \procin{\pp}{\msgLabel{success}(x)}{\inact}%
          \\%
          \procin{\pp}{\msgLabel{error}(y)}{\inact}%
        \end{array}
      \right\}%
    }%
  }%
\]
\noindent%
Process $P'_\pr$ internally decides (with an omitted condition ``$\ldots$'') %
whether to tell $\pq$ to $\msgLabel{cont}$inue with a constant value $42$, %
or $\msgLabel{stop}$. %
Then, $\pr$ receives the $\msgLabel{success}$/$\msgLabel{error}$ message
from $\pp$, and does nothing with it. %
As a result, $\pq$ can start its computation immediately, %
without waiting for $\pp$. %
Observe that this optimisation permutes the order of inputs and outputs %
in the process of $\pr$: is it ``correct''? %
\Ie, could this permutation %
introduce any deadlock or communication error in the system? %
Do we have enough information to determine it, %
or do we need to know the behaviour of $P_\pp$ and $P_\pq$, %
and the omitted part (``$\cdots$'') of the system? %
If this optimisation never causes deadlocks or communication errors,
then a session subtyping relation should allow for it: %
\ie, %
if $T'$ is the type of $P'_\pr$, %
and $T$ is the type of $P_\pr$, %
we should have $T' \subt T$ %
--- hence, the type system should let %
$P'_\pr$ be used in place of $P_\pr$. %
(We illustrate such types later on, in Example~\ref{ex:sub-intro}.) %
Due to practical needs, similar program optimisations  %
have been implemented 
for various programming languages \cite{NYH12,NCY2015,H2017,CY2020,CY2020B,YoshidaVPH08}. %
Yet, this optimisation is \emph{not} allowed %
by \emph{synchronous} multiparty session subtyping \cite{GhilezanJPSY19}: %
in fact, under synchronous communication, %
there are cases where replacing $P_\pr$ with $P'_\pr$
would introduce deadlocks. %
However, most real-world distributed and concurrent systems %
use \emph{asynchronous} communication: %
does asynchrony make the optimisation above \emph{always} safe, %
and should the subtyping allow for it? %
If we prove that this optimisation, and others, %
are indeed sound under asynchrony, %
it would be possible %
to check them locally, at the type-level, %
for each individual participant in a multiparty session.%

\paragraph{Contributions}%
We present the \textbf{first formalisation of the \emph{precise} subtyping
relation $\subt$ for multiparty asynchronous session types}. %
We introduce the relation (\Cref{subsec:subtyping:formal})
and prove that it is \textbf{operationally
  precise} (\Cref{lem:completeness}),
\ie, satisfies the notions of ``soundness'' and
``completeness'' outlined above. Then, we use this result as a
stepping stone to prove that $\subt$ is also
\textbf{denotationally precise} (\Cref{thm:dpreciseness}),
and \textbf{precise \wrt liveness} (\Cref{thm:subt-precise-liveness}).

A key element of our contribution is a novel approach based on
\textbf{session decomposition}: %
given two session types $T$ and $T'$, we formalise the subtyping
$T \subt T'$ as a composition of refinement relations $\subttt$ over
\emph{single-input, single-output (SISO) trees} extracted from $T$ and
$T'$. (The idea and its motivation are explained in \Cref{sec:session-trees-refinement}.)

We base our development on a recent advancement of
the multiparty session types theory \cite{ScalasY19}: %
this way, we achieve not only %
a precise subtyping relation, 
but also a simpler formulation, %
and more general results than previous work
on multiparty session subtyping \cite{CDSY2017,MostrousY15,GhilezanJPSY19}, %
supporting the verification of a larger set of concurrent and distributed
processes and protocols.

To demonstrate the tractability and generality of our subtyping
relation, we discuss various examples --- including one that is
\emph{not} supported by a sound algorithm for asynchronous
\emph{binary} session subtyping (\Cref{ex:CONCUR19-maintext}), and
another where we prove the correctness of a messaging optimisation
(based on \emph{double buffering} \cite[Section 3.2]{doublebuffer},
adapted from \cite{mostrous_yoshida_honda_esop09,CY2020B,YoshidaVPH08}) applied to a
distributed data processing scenario (\Cref{sec:batch-processing}).

\paragraph{Overview}%
Section~\ref{sec:msc} formalises %
the asynchronous multiparty session calculus. 
Section~\ref{sec:tsu} presents our asynchronous multiparty session subtyping
relation, with its decomposition technique.
Section~\ref{sec:tsy} introduces the typing system, proving its soundness. %
Section~\ref{sec:op} proves the completeness and preciseness of our subtyping.
Section~\ref{sec:batch-processing} applies our subtyping
to prove the correctness of an asynchronous
optimisation of a distributed system.
Section~\ref{sec:preciseness-extra} proves additional preciseness results:
denotational preciseness, and preciseness \wrt liveness.
Related work is in Section~\ref{sec:related}. %
\iftoggle{techreport}{%
  Proofs and additional examples
  are available in the appendices.%
}{%
  \change{Tech report}{%
    Proofs and additional examples
    are available in a separate technical report \cite{POPL21TR}.
  }%
}

\section{Asynchronous Multiparty Session Calculus}\label{sec:msc}
This section formalises the syntax and operational semantics of an
asynchronous multiparty session calculus. %
Our formulation is a streamlined presentation of the session calculus
by \citeN{BCDLDY2008},
omitting some elements (in particular, session creation and shared channels)
to better focus on subtyping.
The same design is adopted, \eg, by \citeN{GhilezanJPSY19}
--- but here we include message queues,
for asynchronous (FIFO-based) communication.

\begin{table}[t]
  \[
  \begin{array}{c}
    \val \;\Coloneqq\; 
    \valr \sep \true \sep \false \sep
    \changeNoMargin{()}
    \qquad
    \changeNoMargin{\valr \;\Coloneqq\; {\zero} \sep \valn \sep -\valn}
    \qquad
    \changeNoMargin{\valn \;\Coloneqq\; 1 \sep \valn+1}
    \\
    \e \;\Coloneqq\; x \sep \val \sep \fsucc{\e} \sep \finv{\e} \sep \neg{\e} \sep \changeNoMargin{\e > \zero \sep \e\approx ()}
  \end{array}
  \]
\[
  \begin{array}{@{}rcll@{}}
    \N & \Coloneqq &  & \text{\em\bf Sessions} \\
       &         & \pa\pp\PP \pc \pa\pp \h & \text{\em individual participant} \\
       & \sep & \N\pc\N  & \text{\em parallel} \\
       & \sep & \error & \text{\em error} \\[2mm]
    \h & \Coloneqq & & \text{\em\bf Message queues} \\
       &         & \emptyqueue & \text{\em empty queue} \\
       & \sep & \left(\q,\ell(\val)\right) & \text{\em message} \\
       & \sep & \h\cdot \h  &\text{\em concatenation}
  \end{array}
  \qquad%
  \begin{array}{@{}rcll@{}}
    \PP,\PQ & \Coloneqq & \text{\em\bf Processes} \\
            & & \sum_{i\in I}\procin{\pp}{\ell_i(\x_i)}{\PP_i} & \text{\em external choice}\\
            & \sep & \procout{\pp}{\ell}{\e}{\PP} & \text{\em output} \\
            & \sep & \cond{\e} \PP \PQ & \text{\em conditional} \\ 
            & \sep & X & \text{\em variable} \\   
            & \sep & \mu X.\PP & \text{\em recursion} \\ 
            & \sep &  \inact  & \text{\em inaction}
  \end{array}
  \]

\caption{\label{tab:sessions}\label{tab:expressions}Syntax of values ($\val$), expressions ($\e$), sessions, processes, and queues.} %
\vspace{-5mm}%
\end{table}

\subsection{Syntax}%

The syntax of our calculus is defined in Table~\ref{tab:sessions}.
Values and expressions are standard:
a \textbf{value} $\val$ can be 
\changeNoMargin{an integer $\valr$ (positive $\valn$, negative $-\valn$, or zero $\zero$)}
, a boolean $\true$/$\false$,
\change{Clarify \Cref{tab:expressions} and $\tunit$}{%
  or unit $()$ (that we will often omit, for brevity)%
};
an \textbf{expression} $\e$ can be a
variable, a value, or a term built from expressions by applying the
operators ${\tt succ}, {\tt inv}, \neg$, or the relations \changeNoMargin{$>, \approx$}. 

\textbf{Asynchronous multiparty sessions}
(ranged over by $\N, \N',\ldots$) 
are parallel compositions of individual \textbf{participants} (ranged over by $\pp,\pq,\ldots$) %
associated with their own \textbf{process} $\PP$ and \textbf{message queue} $\h$ %
(notation: $\pa\pp\PP \pc \pa\pp \h$). %
In the processes syntax, %
the \textbf{external choice} $\sum_{i\in I}\procin{\pp}{\ell_i(\x_i)}{\PP_i}$ %
denotes the input from participant $\pp$ %
of a message with label $\ell_i$ carrying value $x_i$, for any $i \!\in\! I$;
instead, $\procout{\pp}{\ell}{\e} \PP$ %
denotes the \textbf{output} towards participant $\pp$ %
of a message with label $\ell$ %
carrying the value returned by expression $\e$. 
The \textbf{conditional}\; $\cond{\e} \PP \PQ$ \;is standard. %
\change{\#B, \#D: queues of sent vs.~received messages}{}%
The term $\pa\pp \h$ states that $\h$ is the \textbf{output message queue} of participant $\pp$; if a message $(\q,\ell(\val))$ is in the queue of participant $\pp$, it means that $\pp$ has sent $\ell(\val)$ to $\pq$.%
\footnote{%
  \changeNoMargin{%
    Alternatively, we could formalise a calculus
    with actor-style \emph{input} message queues,
    at the cost of some additional notation;
    the semantics would be equivalent \cite{DemangeonY15} %
    and would not influence subtyping.%
  }%
}
Messages are consumed by their recipients %
on a FIFO (first in, first out) basis.
The rest of the syntax is standard \cite{GhilezanJPSY19}.
We assume that in recursive processes, %
recursion variables are guarded by external choices and/or outputs.

We also define the set $\actions\PP$,
containing the \textbf{input and output actions} of $\PP$;
its elements have the form $\pp?$ or $\pp!$,
representing an input or an output from/to participant $\pp$, respectively:

\[
\begin{array}{c}
\actions{\inact}=\emptyset \qquad
\actions{\procout{\pp}{\ell}{\e}{\PP}} = \{\pp!\} \cup \actions{\PP} \qquad
\actions{\sum_{i\in I}\procin{\pp}{\ell_i(\x_i)}{\PP_i}}
= \{\pp?\} \cup \bigcup_{i \in I}\actions{\PP_i} \qquad
\\
\actions{\mu X.\PP'}=\actions{\PP'} \qquad
\actions{\cond{\e} {\PP_1}  {\PP_2}} =
\actions{\PP_1} \cup \actions{\PP_2}
\end{array}
\]

\subsection{Reductions and Errors}
First, we give the operational semantics of expressions.
To this purpose, we define an \textbf{evaluation context} $\Econtext$
as an expression (from \Cref{tab:expressions}) with exactly one hole $[\,]$, %
\change{\#D: clarify $\Econtext$}{%
  given by the grammar:
  \[
    \Econtext \;\Coloneqq\;
    [\,] \sep \fsucc{\Econtext} \sep \finv{\Econtext} \sep \neg{\Econtext} \sep \Econtext > \zero
  \]
}%
We write $\Econtext(\e)$ for the expression obtained from
context $\Econtext$ by filling its unique hole with $\e$.
The \textbf{value of an expression} is computed as defined in %
Table~\ref{tab:evaluation}:
the notation  $\eval\e\val$ means that expression $\e$ evaluates to $\val$.
Notice that the successor operation ${\tt succ}$ is
defined on natural numbers {\changeNoMargin{(i.e. positive integers)}, the inverse operation ${\tt inv}$ is
defined on integers, and negation $\neg$ is 
defined on boolean values;
\change{Clarify $\tunit$ and evaluation}{moreover, the evaluation of\; $\e>\zero$ \;is defined only if $\e$ evaluates to some integer, while the evaluation of\; $\e\approx ()$ \;is defined only if $\e$ is exactly the unit value.}

The operational semantics of our calculus
is defined in Table~\ref{tab:main:reduction}.
By $\rulename{r-send},$ a participant $\pp$ sends $\ell\langle \e\rangle$ to a participant $\pq$, enqueuing the 
message $\msg\pq\ell{\val}$.
Rule $\rulename{r-rcv}$ lets participant $\pp$ %
receive a message from $\pq$: %
if one of the input labels $\ell_k$ %
matches a queued message $\msg\pp{\ell_k}{\val}$ previously sent by $\pq$ %
(for some $k \in I$), %
the message is dequeued, %
and the continuation $\PP_k$ proceeds %
with value $\val$ substituting $\x_k$. %
The rules for conditionals %
are standard.
Rule \rulename{r-struct} defines
the reduction modulo a standard %
\textbf{structural congruence} $\equiv$, %
defined in Table \ref{tab:congruence}. 

\begin{table}[t]{}
\centerline{
\(
\small
 \begin{array}[t]{@{}c@{}}
\eval{\fsucc\valn}(\valn +1)
 \qquad
 \changeNoMargin{\eval{\finv\valn}{-\valn}
  \qquad
 \eval{\finv{-\valn}}{\valn}
  \qquad
 \eval{\finv\zero}{\zero}}
 \\
 \eval{\neg\true}\false \qquad \eval{\neg\false}\true
 \qquad 
 \changeNoMargin{\eval{\big(()\approx ()\big)}{\true}}
 \qquad \eval\val\val
 \\[2mm]
\changeNoMargin
 { \inferrule[]{\eval{\e_1}{\valn}}{\eval{(\e_1 > \zero)}{\true}}
 \qquad
 \inferrule[]{\eval{\e_1}{-\valn}}{\eval{(\e_1 > \zero)}{\false}}
  \qquad
 \inferrule[]{\eval{\e_1}{\zero}}{\eval{(\e_1 > \zero)}{\false}}
}
 \qquad
 \inferrule[]{\eval{\e}{\val}\quad\eval{\Econtext(\val)}{\val'}}{\eval{\Econtext(\e)}{\val'}}
\end{array}
\)
}
\vspace{1mm}
\caption{\label{tab:evaluation}Evaluation rules for expressions.}

\centerline{\(%
\small
\begin{array}{@{}llr@{}}
\rulename{r-send} & 
{\pa\pp\procout{\pq}{\ell}{\e}{\PP} \pc \pa\pp\h_{\pp} \pc \N \;\red\; \pa\pp\PP \pc \pa\pp\h_\pp\cdot(\q,\ell(\val)) \pc \N} & (\eval{\e} \val)       
\\[1mm]
\rulename{r-rcv}
&
\pa\pp\sum_{i\in I} \procin\pq{\ell_i(\x_i)}\PP_i \pc \pa\pp\h_\pp\pc \pa\q\Q \pc \pa\pq (\pp,\ell_k(\val)) \cdot \h \pc \N\hspace{-10mm}
 & (k\in I)
\\
& \hspace{3.2cm}\red\; \pa\pp \PP_k\subst{\val}{\x_k} \pc \pa\pp\h_\pp\pc \pa\q\Q \pc \pa\pq \h \pc \N
\hspace{-5mm}
\\[1mm]
\rulename{r-cond-T} & 
\pa\pp{\cond{\e}{\PP}{\Q}}  \pc \pa\pp\h \pc \N \red \pa\pp\PP \pc\pa\pp\h \pc  \N & (\eval{\e}{\true})
\\[1mm]
\rulename{r-cond-F} & 
\pa\pp{\cond{\e}{\PP}{\Q}}  \pc \pa\pp\h \pc \N \red \pa\pp\Q \pc \pa\pp\h \pc \N
& (\eval{\e}{\false})
\\[1mm]
\rulename{r-struct} & 
\N_1\equiv \N_1' \;\;\;\text{and}\;\;\; \N_1'\red \N_2' \;\;\;\text{and}\;\;\; \N_2'\equiv\N_2 \quad\implies\quad 
\N_1\red\N_2
\hspace{-15mm}\\[1mm]
\rulename{err-mism}&
   \pa\pp\sum_{i\in I} \procin\pq{\ell_i(\x_i)}\PP_i \pc \pa\pp\h_\pp  \pc\pa\pq \PQ \pc\pa\pq \msg{\pp}{\ell}{\val}\cdot\h \pc \M \;\red\; \error
     \hspace{-15mm}%
   &   (\forall i\in I. \ell_i\neq \ell)
   \\[1mm]
\rulename{err-ophn}&  \pa\pp\PP \pc \pa\pp \h_\pp \pc \pa\pq \PQ\pc \pa\pq\msg{\pp}{\ell}{\val}\cdot\h \pc \M \;\red\; \error &
({\pq?} \not\in \actions{\PP})\\[1mm]
\rulename{err-strv} & 
    \pa\pp\sum_{i\in I} \procin\pq{\ell_i(\x_i)}\PP_i  \pc \pa\pp\h_\pp \pc 
    \pa\pq\PQ \pc \pa\pq\h_\pq \pc \M \;\red\; \error%
&
\hspace{-4cm}
({\pp!} \not\in \actions{\PQ}, 
\h_\pq \not\equiv \msg{\pp}{-}{-} \cdot \h_\pq')
\\[1mm]
\rulename{err-eval}&
{\pa\pp \cond{\e}{\PP}{\PQ}  \pc \pa\pp\h \pc \M  \;\red\; \error}
& %
(\noteval{\e}{\true} \;\text{ and }\; \noteval{\e}{\false})
\\[1mm]
\rulename{err-eval2} & 
{\pa\pp\procout{\pq}{\ell}{\e}{\PP} \pc \pa\pp\h \pc \N \;\red\; \error}
& ({\not\exists \val: \eval\e \val})%
\\[1mm]
\rulename{err-dlock}& 
 \changeNoMargin{\prod_{j\in J} \left( \pa{\pp_j}{\sum_{i_j\in I_j} \procin{\pq_j}{\ell_{i_j}(\x_{i_j})}\PP_{i_j}} \pc \pa{\pp_{j}}{\h_{\pp_{j}}}\right)
  \pc
  \prod_{j\in J'} \left( \inact \pc \pa{\pp_{j}}{\h_{\pp_{j}}}\right)}
  & \changeNoMargin{ \hspace{-1.3cm} (\forall j \!\in\! J \!\neq\! \emptyset:
    \forall i \!\in\! J \!\cup\! J':}%
    \\
  &  \hspace{3.2cm}\red \error
  & \changeNoMargin{\h_{\pp_i} \!\not\equiv\! (\pp_j,-(-)) \cdot \h'_{\pp_i})}
\end{array}
\)}%
\smallskip%
\vspace{2mm}
\caption{\label{tab:main:reduction}\label{tab:error}Reduction relation on sessions.}%
\small
$\begin{array}{c}
\h_1\cdot \msg{\pq_1}{\ell_1}{\val_1} \cdot \msg{\q_2}{\ell_2}{\val_2}\cdot \h_2  \equiv  \h_1 \cdot \msg{\q_2}{\ell_2}{\val_2}\cdot \msg{\pq_1}{\ell_1}{\val_1}\cdot\h_2 \quad\text{(if $\q_1\neq \q_2$)}\\[1mm]
 \EmptyQueue \cdot\Queue \equiv  \Queue \qquad\qquad
 \Queue\cdot \EmptyQueue \equiv  \Queue \qquad\qquad
 \Queue_1\cdot (\Queue_2\cdot\Queue_3) \equiv (\Queue_1 \cdot \Queue_2)\cdot\Queue_3
\qquad\qquad
 \mu X.\PP \equiv \PP\subst{\mu X.\PP}{X}  
 \\[1mm]
 \pa\pp\inact \pc \pa\pp\EmptyQueue \pc \N \equiv \N \qquad \N_1\pc \N_2 \equiv  \N_2\pc \N_1 \qquad 
 (\N_1\pc \N_2) \pc \N_3 \,\equiv\,  \N_1\pc (\N_2 \pc \N_3) \qquad 
 \\[1mm]
 \PP\equiv \Q \;\text{ and }\;\h_1 \equiv \h_2 \;\;\implies\;\; \pa\pp\PP\pc \pa\pp\h_1 \pc \N \,\equiv\, \pa\pp\Q\pc \pa\pp\h_2 \pc \N
\end{array}
$\\
\vspace{2mm}
\caption{\label{tab:congruence}Structural congruence rules for queues, processes, and sessions.}
\end{table}

Table~\ref{tab:error} also formalises \textbf{error reductions}, %
modelling the following scenarios: %
in \rulename{err-mism}, 
a process tries to read a queued message with an unsupported label;
in \rulename{err-orph}, %
there is a queued message from $\pq$ to $\pp$, %
but $\pp$'s process does not contain any input from $\pq$,  
hence the message is orphan;
in \rulename{err-strv}, 
$\pp$ is waiting for a message from $\pq$, %
but no such message is queued, %
and $\pq$'s process does not contain any output for $\pp$, %
hence $\pp$ will starve;
in \rulename{err-eval}, a condition does not evaluate to a boolean value;
in \rulename{err-eval2}, %
an expression like ``$\fsucc{\true}$'' cannot reduce to any value;
in \rulename{err-dlock}, 
\change{Clarify \rulename{err-dlock}}{%
  the session cannot reduce further, %
  but at least one participant is expecting an input.%
}%

\begin{example}[Reduction relation]
  \label{ex:reduction} We now describe the operational semantics using the example from the Introduction. Consider the session:
\[
\small
\pa\pr\sum\left\{%
  \begin{array}{@{}l@{}}
    \procin{\pp}{\msgLabel{success}(x)}{
      \cond{(x > \zero)}{\procout{\pq}{\msgLabel{cont}}{x}{\inact}}
                   {\procout{\pq}{\msgLabel{stop}}{}{\inact}}
    }%
    \\%
    \procin{\pp}{\msgLabel{error}(\mathit{fatal})}{
      \cond{(\neg\mathit{fatal})}{\procout{\pq}{\msgLabel{cont}}{42}{\inact}}
                            {\procout{\pq}{\msgLabel{stop}}{}{\inact}}
    }%
  \end{array}
\right\}
\;\pc\;
\pa\pr\EmptyQueue
\;\pc\;
\pa\pp\PP_\pp
\;\pc\;
\pa\pp\EmptyQueue
\;\pc\;
\pa\pq\PP_\pq
\;\pc\;%
\cdots%
\]
In this session, the process of $\pr$ is blocked until a message is sent by $\pp$ to $\pr$. If this cannot happen (\eg, because $\PP_\pp$ is $\inact$),
the session will reduce to $\error$ by $\rulename{err-starv}$. %
Now, consider the session above optimised with the process:
\[
\pa\pr{%
  \cond{(\e)}{%
    \procout{\pq}{\msgLabel{cont}}{42}{%
      \sum\left\{%
        \begin{array}{@{}l@{}}
          \procin{\pp}{\msgLabel{success}(x)}{\inact}%
          \\%
          \procin{\pp}{\msgLabel{error}(y)}{\inact}%
        \end{array}
      \right\}%
    }%
  }{%
    \procout{\pq}{\msgLabel{stop}}{}{%
      \sum\left\{%
        \begin{array}{@{}l@{}}
          \procin{\pp}{\msgLabel{success}(x)}{\inact}%
          \\%
          \procin{\pp}{\msgLabel{error}(y)}{\inact}%
        \end{array}
      \right\}%
    }%
  }%
}%
\]
This session could reduce to $\error$ if $\e$
does \emph{not} evaluate to $\true$ or $\false$ (by $\rulename{err-eval}$).
Instead, if $\eval{\e}{\true}$, then $\pr$ can fire $\rulename{r-cond-T}$
and $\rulename{r-send}$, reducing the session to:
\[
\pa\pr{%
  {%
      \sum\left\{%
        \begin{array}{@{}l@{}}
          \procin{\pp}{\msgLabel{success}(x)}{\inact}%
          \\%
          \procin{\pp}{\msgLabel{error}(y)}{\inact}%
        \end{array}
      \right\}%
    }%
}%
\pc \pa \pr (\q,\msgLabel{cont}{(42)})
\;\pc\;
\pa\pp\PP_\pp
\;\pc\;
\pa\pp\EmptyQueue
\;\pc\;
\pa\pq\PP_\pq
\;\pc\;%
\cdots%
\]
Otherwise, if $\eval{\e}{\false}$, then $\pr$ can fire $\rulename{r-cond-F}$
and $\rulename{r-send}$, reducing the session to:
\[
\pa\pr{%
  {%
      \sum\left\{%
        \begin{array}{@{}l@{}}
          \procin{\pp}{\msgLabel{success}(x)}{\inact}%
          \\%
          \procin{\pp}{\msgLabel{error}(y)}{\inact}%
        \end{array}
      \right\}%
    }%
}%
\pc \pa \pr (\q,\msgLabel{stop}{( )})
\;\pc\;
\pa\pp\PP_\pp
\;\pc\;
\pa\pp\EmptyQueue
\;\pc\;
\pa\pq\PP_\pq
\;\pc\;%
\cdots%
\]
In both cases, $\pr$ reduces to $\inact$ (by $\rulename{r-rcv}$)
if it receives a %
$\msgLabel{success}$/$\msgLabel{error}$ message from $\pp$; %
meanwhile, if $\pq$ is ready to receive an input from $\pp$,
then $\pq$ can continue by consuming a message from $\pp$'s output queue.
This kind of optimisation will be verified by means of subtyping in the following section.
\end{example}

\section{Multiparty Session Types and Asynchronous Subtyping}\label{sec:tsu}
This section formalises multiparty session types,
and introduces our asynchronous session subtyping relation. %
We begin with the standard definition of (local) session types.

\begin{definition}\label{def:types}\label{def:sorts}
The \emph{sorts} $\ST$ and \emph{session types} $\T$ are defined as follows:
\[
  \begin{array}{@{}r@{\;}c@{\;}l@{\hskip 0.8cm}r@{\;\;}c@{\;\;}l@{}}
    \ST &\Coloneqq& \tnat \!\!\sep\!\! \tint \!\!\sep\!\! \tbool \!\!\sep\!\! \tunit
    &
    \T  &\Coloneqq& \texternal_{i\in I}\tin\pp{\ell_i}{\ST_i}.\T_i   \sep   \tinternal_{i\in I}\tout\pp{\ell_i}{\ST_i}.\T_i 
    \sep \tend  \sep      \mu\ty. \T   \sep     \ty
  \end{array}
\]
with \change{Clarify}{$I \!\neq\! \emptyset$},
and $\forall i,j\!\in\! I{:}\; %
i \!\neq\! j \Rightarrow \ell_i \!\neq\! \ell_j$. %
We assume guarded recursion. %
We define $\equiv$ as the least congruence %
such that $\mu\ty. \T \equiv \T\subst{\mu\ty. \T}{\ty}$.
We define $\participant{\T}$ as the set of participants occurring in $\T$.%
\end{definition}
Sorts are the types of values ($\tnat$urals,  $\tint$egers, $\tbool$eans, \ldots).
A session type $\T$ describes the behaviour of a participant in a multiparty session. 
The \textbf{branching type} (or \textbf{external choice})\; $\texternal_{i\in I}\tin\pp{\ell_i}{\ST_i}.\T_i$ \;denotes waiting for a message from participant $\pp$, 
where (for some $i\in I$) the message has label $\ell_i$ and carries a payload value of sort $\ST_i$; then, the interaction continues by following $\T_i$. 
The \textbf{selection type} (or \textbf{internal choice})\; $\tinternal_{i\in I}\tout\pp{\ell_i}{\ST_i}.\T_i$ %
\;denotes an output toward participant $\pp$ of a message with label $\ell_i$ and payload of sort $\ST_i$, after which the interaction follows $\T_i$ (for some $i\in I$). %
Type $\mu\ty. \T$ provides \textbf{recursion}, %
binding the \emph{recursion variable} $\ty$ in $\T$; %
the guarded recursion assumption means: in $\mu\ty.\T$, we have $\T\not=\ty'$ for any $\ty'$ (which ensures \emph{contractiveness}).
Type $\tend$ denotes that the participant has concluded its interactions.
For brevity, %
we often omit %
branch/selection symbols for singleton inputs/outputs, %
\change{Clarify $\tunit$}{%
  payloads of sort $\tunit$%
},
unnecessary parentheses, and trailing $\tend$s.%

\subsection{Session Trees and Their Refinement}%
\label{sec:session-trees-refinement}

Our subtyping is defined in two phases:
\begin{enumerate} 
\item%
  we introduce a \emph{refinement relation} $\subttt$ for \emph{session trees} %
  having only singleton choices in all branchings and selections, %
  called \emph{single-input-single-output (SISO) trees} 
  (definition below);
\item%
  then, we consider trees that have %
  only singleton choices in branchings %
  (called \emph{single-input (SI) trees}), %
  or in selections (\emph{single-output (SO) trees}), %
  and we define the session subtyping $\subt$ over all session types %
  by considering their decomposition into SI, SO, and SISO trees.
\end{enumerate}
This two-phases approach is crucial %
to capture all input/output reorderings needed %
by the \emph{precise} subtyping relation, %
while taming the technical complexity of its formulation.
In essence, our session decomposition
is a ``divide and conquer'' technique to separately tackle the main
sources of complications in the definition of the subtyping, and in
the proofs of preciseness:
\begin{itemize}
\item%
  on the one hand, the SISO trees refinement $\subttt$ focuses on
  capturing safe permutations and alterations of input/output messages,
  that never cause deadlocks or communication errors under asynchrony;
\item%
  on the other hand, the subtyping relation $\subt$ focuses on
  reconciling the SISO tree refinement $\subttt$ with the branching
  structures (\ie, the choices) occurring in session types.
\end{itemize}

\newcommand{\payloadTag}{\mathsf{P}}%
\newcommand{\contTag}{\mathsf{C}}%

\begin{wrapfigure}[7]{r}{4cm}
  \vspace{-5mm}%
  \;\scalebox{0.9}{%
    \begin{minipage}{1.1\linewidth}%
\centerline{\(%
  \xymatrix@R-1.9pc@C-2pc@L0pt{%
    &&{\&\pp}\ar@{-}[dll]_{\ell_1^{\payloadTag}}\ar@{-}[ddl]_{\ell_1^{\contTag}}\ar@{-}[ddr]^{\ell_2^{\payloadTag}}\ar@{-}[drr]^{\ell_2^{\contTag}}&&&\\
      \tbool&&&&\tend\\
      &{\oplus\pq}\ar@{-}[dl]_{\ell_3^{\payloadTag}}\ar@{-}[dd]_{\ell_3^{\contTag}}\ar@{-}[ddr]^{\ell_4^{\payloadTag}}\ar@{-}[drr]^{\ell_4^{\contTag}}&&\tnat\\
     \tint&&&\tend&\\
     &_{\vdots}{\&\pp}_{\vdots}&\treal%
  }%
\)}%
    \end{minipage}
  }%
\end{wrapfigure}

\paragraph{Session trees}
To define our subtyping relation, %
we use (possibly infinite) \emph{session trees} %
with the standard formulation %
of~\cite[Appendix A.2]{GhilezanJPSY19}, %
based on \citeN{PierceBC:typsysfpl}. %
The diagram on the right depicts a session tree: %
its internal nodes represent %
branching ($\&\pp$) or selection ($\oplus\pq$) from/to a participant; %
leaf nodes are either payload sorts or $\tend$; %
edge annotations are either $\ell^\payloadTag$ or $\ell^\contTag$, %
respectively linking an internal node to the payload or continuation %
for message $\ell$. %
A type $\T$ yields a tree $\ttree{\T}$: 
the diagram above shows the (infinite) tree of the type\; %
$\mu\ty.\texternal\big\{%
    \tin\pp{\ell_1}{\tbool}.\tinternal\left\{%
      \tout\pq{\ell_3}{\tint}.\ty,%
      \tout\pq{\ell_4}{\treal}.\tend
    \right\}%
    \,,\, \tin\pp{\ell_2}{\tnat}.\tend%
\big\}$.\; %
Notably,
\change{\#D: clarify session trees, coinductive grammar}{%
  the tree of a recursive type $\mu\ty.\T$ %
  coincides with the tree of its unfolding $\T\sub{\mu\ty. \T}{\ty}$.%
} %
We will write $\TT$ to denote a session tree, %
and we will represent it using the \emph{coinductive} syntax:
\[
  \textstyle%
  \TT \;\;\Coloneqq\;\; \tend \sep%
  \texternal_{i\in I}\tin\pp{\ell_i}{\ST_i}.\TT_i \sep%
  \tinternal_{i\in I}\tout\pp{\ell_i}{\ST_i}.\TT_i%
\]
\change{}{%
  The above coinductive definition means that $\TT$ can be an infinite term, generated by infinite applications of the productions: this approach (previously adopted \eg by \citeN{CastagnaWebServices09}) provides a compact way to represent (possibly infinite) trees.
}

\paragraph{SISO trees}%
A SISO tree $\WT$ 
only has singleton choices %
(\ie, one pair of payload+continuation edges) %
in all its branchings and selections. %
We represent $\WT$ with the \emph{coinductive} syntax:
\[
  \WT \;\;\Coloneqq\;\; \tend  \sep \tin\pp{\ell}{\ST}.\WT \sep \tout\pp{\ell}{\ST}.\WT %
\]
We will write $\W$ to denote %
a SISO session type (\ie, having singleton choice in all its branchings and selections), such that $\ttree{\W}$ yields a SISO tree.
We coinductively define the set $\actions\WT$ over a tree $\WT$ as the set of participant names together with actions $?$ (input) or $!$ (output), as:
\[
  \actions\tend=\emptyset \qquad
  \actions{\tin\pp{\ell}{\ST}.\WT'}=\{\pp?\} \cup  \{\actions{\WT'}\}
  \qquad
  \actions{\tout\pp{\ell}{\ST}.\WT'}=\{\pp!\} \cup  \{\actions{\WT'}\}
\]
By extension, we also define  $\actions{\W}=\actions{\ttree{\W}}$.

\paragraph{SISO trees refinement}%
As discussed in \Cref{sec:intro}, %
the asynchronous subtyping should support %
the reordering the input/output actions of a session type: %
\change{\#B: meaning of ``anticipate''}{%
  the intuition is that, under certain conditions,
  the subtype could \emph{anticipate} some input/output actions
  occurring in the supertype,
  by performing them earlier than prescribed.
} %
This is crucial to achieve %
the most flexible %
and \emph{precise} subtyping. %
\change{\#A: ``meat'' of subtyping}{
More in detail, such reorderings can have two forms: %
\begin{enumerate}[label={R\arabic*.},ref={R\arabic*}]
\item\label{item:reordering:in}%
  anticipating a branching from participant $\pp$ %
  before a finite number of branchings which are \emph{not} from $\pp$;%
\item\label{item:reordering:in-out}%
  anticipating a selection toward participant $\pp$ %
  before a finite number of branchings (from any participant), %
  and \emph{also} before other selections %
  which are  \emph{not} toward participant $\pp$,%
\end{enumerate}
}%

\noindent%
To characterise such reorderings of actions %
we define two kinds of \emph{finite} sequences of inputs/outputs:%
\change{\#A: ``meat'' of subtyping}{%
\begin{itemize}
\item%
  $\AContext\pp$, containing only inputs %
  from participants distinct from $\pp$ %
  (we will use it to formalise reordering \ref{item:reordering:in});
\item%
  $\BContext\pp$, containing inputs from any participant %
  and/or outputs to participants distinct from $\pp$
  \change{}{%
    (we will use it to formalise reordering \ref{item:reordering:in-out})%
  }.%
\end{itemize}
}%

\noindent%
Such sequences are inductively defined by the following productions:
\[
  \AContext\pp \Coloneqq  \tin\q{\ell}{\ST} \sep \tin\q{\ell}{\ST}. \AContext\pp \qquad 
  \BContext\pp \Coloneqq  \tin\pr{\ell}{\ST} \sep \tout\pq{\ell}{\ST} \sep  \tin\pr{\ell}{\ST}.\BContext\pp \sep \tout\pq{\ell}{\ST}.\BContext\pp \ \ (\pq\!\neq\!\pp)
\]
We will use the sequences $\AContext\pp$ and $\BContext\pp$
as prefixes for SISO trees;
notice that the base cases require the sequences to have at least one element.

\begin{definition}
  \label{tab:ref}\label{def:ref}\label{tab:sub2}%
  We define \emph{subsorting} $\subs$ as 
  the least reflexive binary relation on sorts (\Cref{def:sorts})
  such that\; $\tnat\subs\tint$.
  \;The \emph{SISO tree refinement relation} $\subttt$ %
  is coinductively defined as:

\noindent%
\hspace{-5mm}\scalebox{0.95}{\rm%
\;\;\begin{minipage}{1.1\linewidth}
\[
  \small
\begin{array}{@{}c@{}}
    \cinfer[\rulename{ref-in}]{\ST' \subs \ST  \quad  \WT \subttt \WT'}{
  \tin\pp{\ell}{\ST}.\WT
  \subttt
   \tin\pp{\ell}{\ST'}.\WT'
 }
 \qquad 
 \cinfer[\rulename{ref-$\AC$}]{\ST' \subs \ST  \quad  \WT \subttt \ACon\pp{\WT'} \quad \actions{\WT}=\actions{\ACon\pp{\WT'}}}{
  \tin\pp{\ell}{\ST}.\WT
  \subttt
   \ACon\pp{{\tin\pp{\ell}{\ST'}.\WT'}}
 }
\\[1mm]
\cinfer[\rulename{ref-out}]{\ST \subs \ST'  \quad  \WT \subttt \WT'}
{
  \tout\pp{\ell}{\ST}.\WT
  \subttt
 \tout\pp{\ell}{\ST'}.\WT'
 }
 \quad
\cinfer[\rulename{ref-$\BC$}]{\ST \subs \ST'  \quad  \WT \subttt \BCon\pp{\WT'} \quad \actions{\WT}=\actions{\BCon\pp{\WT'}}}
{
  \tout\pp{\ell}{\ST}.\WT
  \subttt
\BCon\pp {\tout\pp{\ell}{\ST'}.\WT'}
 }
\quad 
\cinferrule[\rulename{ref-end}]{}
{\tend \subttt \tend}
\end{array}
\]
\end{minipage}
}%
\end{definition}

\change{\#D: clarify subtyping as simulation}{%
  \Cref{def:ref} above formalises a simulation between SISO trees,
  as the largest relation closed backward under the given rules
  \cite{Sangiorgi2011}.%
} %
Rule \rulename{ref-in} relates trees beginning with inputs from a same participant, and having equal message labels; the subtyping between carried sorts must be contravariant, and the continuation trees must be related.
Rule \rulename{ref-out} relates trees beginning with outputs to the same participant, and having equal message labels; the subtyping between carried sorts must be covariant, and the continuation trees must be related.
Rule \rulename{ref-$\AC$}
\change{}{captures reordering \ref{item:reordering:in}}:
it enables anticipating an input from participant $\pp$ before a finite number of inputs from any other participant;
the two payload sorts and the rest of the trees satisfy the same conditions as in rule \rulename{ref-in}, %
while ``$\actions{\WT}=\actions{\ACon\pp{\WT'}}$'' ensures soundness: without such a condition, the tree refinement relation could ``forget'' some inputs.
\change{\#B: why clauses on $\actions{...}$?}{%
  (See \Cref{ex:subt-action-clauses} below.)%
}
Rule \rulename{ref-$\BC$}
\change{}{captures reordering \ref{item:reordering:in-out}}:
it enables anticipating an output to participant $\pp$ before a finite number of inputs from any participant and/or 
outputs to any other participant; the payload types and the rest of the two trees are related similarly to rule \rulename{ref-out}, while ``$\actions{\WT}=\actions{\BCon\pp{\WT'}}$'' ensures that inputs or outputs are not ``forgotten,'' %
similarly to rule~\rulename{ref-$\AC$}.
\change{}{%
  (See \Cref{ex:subt-action-clauses} below.)%
}%

\begin{lemma}
  \label{lem:transitivityW}%
  The refinement relation $\subttt$ over SISO trees is %
  reflexive and transitive.
\end{lemma}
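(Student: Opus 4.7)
The plan is to prove both properties by coinduction, exploiting the coinductive definition of $\subttt$.

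\emph{Reflexivity.} I would show that the identity relation $\mathcal{R}_r = \{(\WT,\WT) \mid \WT \text{ is a SISO tree}\}$ is backward-closed under the rules defining $\subttt$. Case analysis on $\WT$: if $\WT = \tend$, apply \rulename{ref-end}; if $\WT = \tin\pp{\ell}{\ST}.\WT'$, apply \rulename{ref-in} using reflexivity of subsorting ($\ST \subs \ST$) and the coinductive hypothesis $(\WT',\WT') \in \mathcal{R}_r$ on the continuation; the output case uses \rulename{ref-out} symmetrically. This part is routine.

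\emph{Transitivity.} I would set $\mathcal{R}_t = \{(\WT_1,\WT_3) \mid \exists \WT_2.\; \WT_1 \subttt \WT_2 \text{ and } \WT_2 \subttt \WT_3\}$ and show $\mathcal{R}_t \subseteq \subttt$ by coinduction, i.e., that $\mathcal{R}_t$ is backward-closed under the rules. The argument proceeds by case analysis on the rules deriving $\WT_1 \subttt \WT_2$ and $\WT_2 \subttt \WT_3$. The \rulename{ref-end} case is immediate. The pure structural compositions \rulename{ref-in}-then-\rulename{ref-in} and \rulename{ref-out}-then-\rulename{ref-out} close by reapplying the same rule, using transitivity of subsorting $\subs$ on the sort premises and the coinductive hypothesis on the continuations. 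The mixed cases (a structural rule on one side composed with an anticipation rule \rulename{ref-$\AC$} or \rulename{ref-$\BC$} on the other side) conclude by applying the corresponding anticipation rule, again invoking the coinductive hypothesis.

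\emph{Main obstacle and auxiliary lemmas.} The hardest case is composing two anticipation rules, \eg, \rulename{ref-$\AC$}-then-\rulename{ref-$\AC$} (symmetrically for \rulename{ref-$\BC$}): the finite prefix $\AContext\pp$ matched against $\WT_2$ may be further expanded in $\WT_3$, so that the $\pp$-input anticipated by $\WT_1$ ends up deeper in $\WT_3$ under a combined prefix. To handle this, I would first prove an auxiliary \emph{action-preservation lemma}: if $\WT \subttt \WT'$ then $\actions{\WT} = \actions{\WT'}$, by coinduction on $\subttt$ --- the rules \rulename{ref-$\AC$}/\rulename{ref-$\BC$} supply the equality directly as a premise, while \rulename{ref-in}/\rulename{ref-out}/\rulename{ref-end} preserve it structurally. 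This lemma is essential to discharge the action side-conditions of \rulename{ref-$\AC$}/\rulename{ref-$\BC$} when recomposing. The remaining combinatorial difficulty --- merging two anticipation prefixes --- I expect to resolve by an inner induction on the length of the $\AContext\pp$ (resp. $\BContext\pp$) prefix, which is finite by definition: the two prefixes concatenate into a single larger $\AContext\pp$ (resp. $\BContext\pp$) on $\WT_3$, and one final application of \rulename{ref-$\AC$} (resp.\ \rulename{ref-$\BC$}) closes the coinductive step. Transitivity of subsorting $\subs$ (straightforward from its one-step definition) handles the sort side-conditions throughout.
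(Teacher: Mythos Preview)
Your treatment of reflexivity and the action-preservation auxiliary lemma are fine and match the paper. The gap is in the transitivity argument.

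The candidate relation $\mathcal{R}_t = {\subttt} \circ {\subttt}$ is \emph{not} backward-closed under the anticipation rules as you describe. Consider $\WT_1 = \tout\pp{\ell}{\ST}.\WT \subttt \WT_2 = \BCon{\pp}{\tout\pp{\ell}{\ST'}.\WT'}$ by \rulename{ref-$\BC$}, and $\WT_2 \subttt \WT_3$. To close the coinductive step you must exhibit a $\BContext{\pp}_{\!1}$ such that $\WT_3 = \BCont{\pp}{1}{\tout\pp{\ell}{\ST''}.\WT''}$ \emph{and} find a pair $(\WT,\; \BCont{\pp}{1}{\WT''}) \in \mathcal{R}_t$, i.e.\ an intermediate $\WT_2^\star$ with $\WT \subttt \WT_2^\star \subttt \BCont{\pp}{1}{\WT''}$. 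The natural choice is $\WT_2^\star = \BCon{\pp}{\WT'}$, but then you need
\[
\BCon{\pp}{\WT'} \;\subttt\; \BCont{\pp}{1}{\WT''}
\qquad\text{from}\qquad
\BCon{\pp}{\tout\pp{\ell}{\ST'}.\WT'} \;\subttt\; \BCont{\pp}{1}{\tout\pp{\ell}{\ST''}.\WT''},
\]
i.e.\ closure of $\subttt$ under \emph{removing a matching action from the middle} on both sides. This is not supplied by any single rule, and your ``inner induction on the length of $\BContext{\pp}$'' with ``concatenation of prefixes'' does not deliver it: the two prefixes do not simply concatenate (they may interleave, and the matched $\pp$-action in $\WT_3$ can sit either before or after the $\pq$-action that $\WT_2$'s first prefix element is matched against).

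The paper handles exactly this by enlarging the right-hand simulation to an extension $\Rel^{+}$ that explicitly adds the ``middle-removed'' pairs
\[
\{(\BCon{\pp}{\WT},\BCont{\pp}{1}{\WT_1}) : (\BCon{\pp}{\tout\pp{\ell}{\ST}.\WT},\BCont{\pp}{1}{\tout\pp{\ell}{\ST_1}.\WT_1})\in\Rel\}
\]
(and the analogous $\AContext{\pp}$ pairs), proves $\Rel^{+}$ is again a tree simulation, and then shows $\Rel_1 \circ \Rel_2^{+}$ (not $\Rel_1 \circ \Rel_2$) is the simulation witnessing transitivity. Two further structural lemmas are needed along the way: one locating the first $\pp$-action in the supertype of a $\BContext{\pp}$- or $\AContext{\pp}$-prefixed tree, and one disentangling two overlapping contexts $\BContext{\pp}$, $\BContext{\pq}$ (or $\AContext{\pp}$, $\AContext{\pq}$) applied to the same tree. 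Your outline mentions none of these; without the $\Rel^{+}$-style closure the coinductive step for the anticipation-anticipation composition cannot be completed.
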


\iftoggle{techreport}{
 \begin{proof}
   See \Cref{sec:app:transitivity}.
 \end{proof}
}{}

\begin{example}[SISO tree refinement]
\label{ex:ref}
Consider the session types:
\change{\#A: \rulename{ref-$\AC$}}{
\[
  \W_1 \;=\; \mu \ty.\tin\pp{\ell}{\S}.\tin\pq{\ell'}{\S'}.\ty
  \qquad\qquad
  \W_2\;=\;\mu \ty.\tin\pq{\ell'}{\S'}.\tin\pp{\ell}{\S}.\ty
\]
}%
\change{\#D: clarify coinductive derivations}{%
  Their trees are related by the following infinite coinductive derivation
  (notice that the coinductive premise and conclusion coincide);%
} %
we \hlight{highlight} the inputs matched by rule \rulename{ref-$\AC$}:
\[
  \small%
  \infer=[\mbox{\rulename{ref-$\AC$} \,with\, $\AContext\pp \!=\! \tin\pq{\ell'}{\S'}$}]{%
      \ttree{\W_1} \;=\;
      \hlight{$\tin\pp{\ell}{\S}$}.\tin\pq{\ell'}{\S'}.\ttree{\W_1}
      \;\subttt\;
      \tin\pq{\ell'}{\S'}.\hlight{$\tin\pp{\ell}{\S}$}.\ttree{\W_2}
      \;=\;\ttree{\W_2}
  }{%
    \infer=[\rulename{ref-in}]{%
      \tin\pq{\ell'}{\S'}.\ttree{\W_1} \;\subttt\; \tin\pq{\ell'}{\S'}.\ttree{\W_2}
    }{%
      \ttree{\W_1} \;\subttt\; \ttree{\W_2}
    }%
  }%
\]
\change{}{%
  An example using rule \rulename{ref-$\BC$} is available later on
  (\Cref{ex:sub-intro}).
}%
\end{example}

\change{\#B: why clauses on $\actions{...}$?}{
\begin{example}
  \label{ex:subt-action-clauses}
  We now illustrate why we need the clauses on $\actions{...}$
  in rules \rulename{ref-$\AC$} and \rulename{ref-$\BC$} (\Cref{def:ref}).
  Consider the following session types:
  \[
    \T \;=\; \mu\ty.\tin\pp{\ell}{\ST}.\ty
    \qquad\qquad%
    \T' \;=\; \tin\pq{\ell'}{\ST'}.\T \;=\; \tin\pq{\ell'}{\ST'}.\mu\ty.\tin\pp{\ell}{\ST}.\ty%
  \]
  Observe that $\T$ is ``forgetting'' to perform the input $\tin\pq{\ell'}{\ST'}$
  occurring in $\T'$.

  If we omit the clause on $\actions{...}$ in rule \rulename{ref-$\AC$},
  we can construct the following infinite coinductive derivation %
  (notice that the coinductive premise and conclusion coincide;
  we \hlight{highlight} the inputs matched by rule \rulename{ref-$\AC$}):
  \[
    \infer=[\mbox{\rulename{ref-$\AC$} \,with\, $\AContext\pp = \tin\pq{\ell'}{\ST'}$}]{
      \ttree{\T} \;=\; \hlight{$\tin\pp{\ell}{\ST}$}.\ttree{\T}
      \;\subttt\;
      \tin\pq{\ell'}{\ST'}.\hlight{$\tin\pp{\ell}{\ST}$}.\ttree{\T}
      \;=\; \ttree{\T'}
    }{
      \ttree{\T} \;\subttt\; \tin\pq{\ell'}{\ST'}.\ttree{\T}
      \;=\; \ttree{\T'}
    }
  \]
  Were we to admit\; $\ttree{\T} \subttt \ttree{\T'}$,
  \;then later on (\Cref{def:subtyping})
  we would consider $\T$ a subtype of $\T'$,
  which would be unsound: in fact, this would allow our type system
  (\Cref{sec:tsy}) to type-check processes
  that ``forget'' to perform inputs and cause orphan message errors
  (rule \rulename{err-ophn} in \Cref{tab:error}).

  A similar example can be constructed for rule \rulename{ref-$\BC$},
  with the following session types:
  \[
    \T \;=\; \mu\ty.\tout\pp{\ell}{\ST}.\ty
    \qquad\qquad%
    \T' \;=\; \tout\pq{\ell'}{\ST'}.\T \;=\; \tout\pq{\ell'}{\ST'}.\mu\ty.\tout\pp{\ell}{\ST}.\ty%
  \]
  Now, $\T$ is ``forgetting'' to perform the output $\tout\pq{\ell'}{\ST'}$.
  If we omit the clause on $\actions{...}$ in rule \rulename{ref-$\BC$},
  we could derive\; $\ttree{\T} \subttt \ttree{\T'}$
  \;through an infinite sequence of instances of \rulename{ref-$\BC$}
  with $\BContext\pp = \tout\pq{\ell'}{\ST'}$;
  were we to admit this, we would later be able to type-check processes
  that ``forget'' to perform outputs
  and cause starvation errors
  (rule \rulename{err-strv} in \Cref{tab:error}).
\end{example}
}%

\begin{remark}[Prefixes \vs $n$-hole contexts]\label{rem:context}
\label{remark:prefix-vs-n-hole-ctx}%
The binary asynchronous subtyping by
\citeN{cdy14,CDSY2017} uses the $n$-hole branching type context\;
$\AC \Coloneqq [\ ]^n  \!\!\sep\!\! \&_{i\in I} \tin{}{\ell_i}{\ST_i}.{\AC_i}$,
\;which complicates the rules and reasoning
(see Fig.2, Fig.3 in \citeN{CDSY2017}). %
Our $\AContext\pp$ and $\BContext\pp$ have a similar purpose, %
but they are simpler %
(just sequences of inputs or outputs), %
and cater for multiple participants.
\end{remark}

\paragraph{SO trees and SI trees}
To formalise our subtyping, we need two more kinds of session trees: %
\emph{single-output (SO) trees}, denoted $\UT$, %
have only singleton choices in their selections; %
dually, \emph{single-input (SI) trees}, denoted $\VT$, %
have only singleton branchings. %
We represent them with a \emph{coinductive} syntax:
\[
  \UT \;\Coloneqq\; \tend  \sep \texternal_{i\in I}
  \tin\pp{\ell_i}{\ST_i}.\UT_i \sep \tout\pp{\ell}{\ST}.\UT %
     \qquad\quad%
  \VT \;\Coloneqq\; \tend  \sep  \tin\pp{\ell}{\ST}.\VT \sep \tinternal_{i\in I}\tout\pp{\ell_i}{\ST_i}.\VT_i %
\]
\noindent%
We will write %
$\UU$ (resp. $\V$) to denote a SO (resp. SI) session type, \ie, with only singleton selections (resp. branchings), such that $\ttree{\UU}$ (resp.~$\ttree{\V}$) yields a SO (resp.~SI) tree.

We decompose session trees into %
their SO/SI subtrees, with the functions %
$\llbracket \cdot \rrbracket_\SO$ / $\llbracket \cdot \rrbracket_\SI$:
\label{page:tree-decomp}%
\[
  \begin{array}{@{}r@{\;\;}c@{\;\;}l@{}}
    \llbracket \tend \rrbracket_\SO &=&  \{ \tend \}
  \end{array}
  \qquad%
  \begin{array}{@{}r@{\;\;}c@{\;\;}l@{}}
    \llbracket \tinternal_{i\in I} \tout\pp{\ell_i}{\ST_i}.\TT_i\rrbracket_\SO  &=&  \{\tout\pp{\ell_i}{\ST_i}.\UT: \UT\in \llbracket \TT_i\rrbracket_\SO , i\in I \}\\
    \llbracket \texternal_{i\in I}\tin\pp{\ell_i}{\ST_i}.\TT_i\rrbracket_\SO &=&  \{\texternal_{i\in I}\tin\pp{\ell_i}{\ST_i}.\UT: \UT\in \llbracket \TT_i\rrbracket_\SO\}
  \end{array}
\]
\[
  \begin{array}{@{}r@{\;\;}c@{\;\;}l@{}}
    \llbracket \tend \rrbracket_\SI  \, \, &=&  \{ \tend \}
  \end{array}
  \qquad \   %
  \begin{array}{@{}r@{\,\;\;}c@{\;\;}l@{}}
    \llbracket \tinternal_{i\in I}\tout\pp{\ell_i}{\ST_i}.\TT_i\rrbracket_\SI &=&  \{\tinternal_{i\in I}\tout\pp{\ell_i}{\ST_i}.\VT: \VT\in \llbracket \TT_i\rrbracket_\SI \}\\
    \llbracket \texternal_{i\in I} \tin\pp{\ell_i}{\ST_i}.\TT_i\rrbracket_\SI  &=&  \{\tin\pp{\ell_i}{\ST_i}.\VT: \VT\in \llbracket \TT_i\rrbracket_\SI , i\in I \}
  \end{array}
\]
\noindent%
Hence, when $\llbracket \cdot \rrbracket_\SO$ is applied to a session tree $\TT$, it gives the set of all SO trees obtained by taking only a single choice from each selection in $\TT$ (\ie, we take a single continuation edge and the corresponding payload edge starting in a selection node).
The function $\llbracket \cdot \rrbracket_\SI$ is dual:
it takes a single selection from each branching in $\TT$. %
Notice that for any SO tree $\UT$, and SI tree $\VT$, both\; %
$\llbracket \UT \rrbracket_\SI$ %
\;and\; %
$\llbracket \VT \rrbracket_\SO$ %
\;yield SISO trees.
We will provide an example shortly (see \Cref{ex:sub-intro} below).

\subsection{Asynchronous Multiparty Session Subtyping}%
\label{subsec:subtyping:formal}
We can now define our asynchronous session subtyping $\subt$: %
it relates two session types %
by decomposing them into their SI, SO, and SISO trees, %
and checking their refinements.

\begin{definition}\label{def:subtyping} 
The \emph{asynchronous subtyping relation $\subt$} %
over session trees is defined as:
\[
  \inferrule%
  {
   \forall \UT\in\llbracket \TT\rrbracket_\SO \quad \forall \VT' \in \llbracket \TT' \rrbracket_\SI \quad \exists \WT \in\llbracket \UT \rrbracket_\SI \quad \exists \WT' \in \llbracket \VT' \rrbracket_\SO \qquad \WT \subttt\WT'
  }
  {\TT \subt \TT'}
\]
The subtyping relation for session types is defined as\;
$\T\subt \T'$ \;iff\; $\ttree{\T}\subt \ttree{\T'}$.
\end{definition}

Definition~\ref{def:subtyping} says that a session tree $\TT$ is subtype of $\TT'$ if, for all 
SO decompositions of $\TT$ and all SI decompositions of $\TT'$, %
there are paths (i.e., SISO decompositions) related by $\subttt$.

\begin{restatable}{lemma}{lemTransitivity}
  \label{lem:transitivity}%
  The asynchronous subtyping relation $\subt$ is reflexive and transitive.
\end{restatable}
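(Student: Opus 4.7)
The plan is to dispatch reflexivity and transitivity separately, in each case reducing to the matching property of the SISO refinement $\subttt$ supplied by \Cref{lem:transitivityW}.

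For reflexivity, I would fix an arbitrary session tree $\TT$ together with arbitrary $\UT \in \llbracket \TT \rrbracket_\SO$ and $\VT \in \llbracket \TT \rrbracket_\SI$, and construct by coinduction on $\TT$ a single SISO tree $\WT$ lying in both $\llbracket \UT \rrbracket_\SI$ and $\llbracket \VT \rrbracket_\SO$: at each branching node of $\TT$ we follow the unique input branch retained by $\VT$, and at each selection node the unique output branch retained by $\UT$. Since $\UT$ and $\VT$ are decompositions of the same $\TT$, this walk is well-defined at every step, and by construction $\WT$ resolves both the inputs left open in $\UT$ and the outputs left open in $\VT$. Then $\WT \subttt \WT$ by reflexivity of $\subttt$, discharging \Cref{def:subtyping}.

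For transitivity, assume $\TT_1 \subt \TT_2$ and $\TT_2 \subt \TT_3$, and fix $\UT_1 \in \llbracket \TT_1 \rrbracket_\SO$ and $\VT_3 \in \llbracket \TT_3 \rrbracket_\SI$; the goal is to produce $\WT_1 \in \llbracket \UT_1 \rrbracket_\SI$ and $\WT_3 \in \llbracket \VT_3 \rrbracket_\SO$ with $\WT_1 \subttt \WT_3$. The high-level strategy is to interpose intermediate decompositions $\UT_2 \in \llbracket \TT_2 \rrbracket_\SO$ and $\VT_2 \in \llbracket \TT_2 \rrbracket_\SI$, apply the first hypothesis at $(\UT_1, \VT_2)$ to obtain $\WT_1$ together with some $\WT_2^{\ell} \in \llbracket \VT_2 \rrbracket_\SO$ such that $\WT_1 \subttt \WT_2^{\ell}$, apply the second hypothesis at $(\UT_2, \VT_3)$ to obtain $\WT_3$ together with some $\WT_2^{r} \in \llbracket \UT_2 \rrbracket_\SI$ such that $\WT_2^{r} \subttt \WT_3$, and then chain by transitivity of $\subttt$. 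This last step only yields $\WT_1 \subttt \WT_3$ once $\WT_2^{\ell}$ and $\WT_2^{r}$ are arranged to coincide, which is the heart of the argument.

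The main obstacle is precisely this alignment: the SISO witnesses returned by the hypotheses are not directly under our control, so a blind choice of $\UT_2, \VT_2$ will in general produce $\WT_2^{\ell} \neq \WT_2^{r}$. My plan is to build the coinductive derivation of $\WT_1 \subttt \WT_3$ rule-by-rule, using the derivations of $\WT_1 \subttt \WT_2^{\ell}$ and $\WT_2^{r} \subttt \WT_3$ as guides: at each step I inspect a finite prefix of the SISO trees, tune the input choices of $\VT_2$ and the output choices of $\UT_2$ on that prefix so that the exposed behaviour of $\WT_2^{\ell}$ and $\WT_2^{r}$ agrees, fire the appropriate rule among \rulename{ref-in}, \rulename{ref-out}, \rulename{ref-$\AC$}, \rulename{ref-$\BC$}, and defer all off-prefix choices to the coinductive step. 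The side conditions $\actions{\WT} = \actions{\ACon\pp{\WT'}}$ and $\actions{\WT} = \actions{\BCon\pp{\WT'}}$ are essential, since they prevent actions from being dropped along the anticipation prefixes $\AContext\pp$ and $\BContext\pp$, so that the locally aligned SISO prefixes extend to genuine SISO witnesses inhabiting the required SO/SI decompositions.
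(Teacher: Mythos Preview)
Your reflexivity argument is correct and coincides with the paper's (stated there as a separate lemma, \Cref{lem:for-reflexivity-of-subtyping}).

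For transitivity, you correctly identify the alignment of $\WT_2^{\ell}$ and $\WT_2^{r}$ as the crux, but your proposed ``rule-by-rule tuning'' does not actually resolve it. The difficulty is that the hypotheses only hand you \emph{existential} witnesses: for each choice of $\VT_2 \in \llbracket \TT_2 \rrbracket_\SI$ the first hypothesis returns some $\WT_2^{\ell} \in \llbracket \VT_2 \rrbracket_\SO$, and for each $\UT_2 \in \llbracket \TT_2 \rrbracket_\SO$ the second returns some $\WT_2^{r} \in \llbracket \UT_2 \rrbracket_\SI$; you do not get to pick them. Tuning $\VT_2$ and $\UT_2$ on a finite prefix gives you no control over what the hypotheses return beyond that prefix, and ``deferring off-prefix choices to the coinductive step'' just postpones the same mismatch one level up. There is no invariant in your proposal that ties the two existentials together, and the $\actions{\cdot}$ side conditions you invoke constrain $\subttt$, not the choice of witnesses.

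The paper closes this gap with a separate combinatorial lemma (\Cref{lemm:Vanja'sLemma21}): if $Y$ contains, for every $\VT_j \in \llbracket \TT_2 \rrbracket_\SI$, at least one element of $\llbracket \VT_j \rrbracket_\SO$, then there exists a single $\UT \in \llbracket \TT_2 \rrbracket_\SO$ with $\llbracket \UT \rrbracket_\SI \subseteq Y$. One fixes $\UT_1$, lets $Y$ be the collection of all $\WT_2^{\ell}$'s produced by the first hypothesis as $\VT_2$ ranges over $\llbracket \TT_2 \rrbracket_\SI$, applies the lemma to obtain $\UT$, and only then invokes the second hypothesis with this particular $\UT$: whatever $\WT_2^{r} \in \llbracket \UT \rrbracket_\SI$ it returns lies in $Y$ by construction, hence already has a matching $\WT_1$ with $\WT_1 \subttt \WT_2^{r}$, and transitivity of $\subttt$ (\Cref{lem:transitivityW}) finishes. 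This lemma is the missing idea in your argument.
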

\iftoggle{techreport}{
  \begin{proof}
    See \Cref{sec:app:transitivity}.
  \end{proof}
}{}

We now illustrate the relation with two examples: %
we reprise the scenario in the introduction %
(\Cref{ex:sub-intro}), %
and we discuss a case %
from \cite{BravettiCLYZ19,BravettiCLYZ19L} %
(\Cref{ex:CONCUR19-maintext}).
\iftoggle{techreport}{%
  \emph{More examples are available in Appendix~\ref{sec:subtyping-further-examples}}%
}{}%

\newcommand{\introTypes}{%
\[
\small
\T' = \tinternal\pq!\left\{
  \begin{array}{@{}l@{}}
  {\msgLabel{cont}}{(\tint)}.\texternal\pp?\left\{
  		\begin{array}{@{}l@{}}
  		{\msgLabel{success}}{(\tint)}.\tend\\
  		{\msgLabel{error}}{(\tbool)}.\tend
  		\end{array} \right.\\
  {\msgLabel{stop}}{(\tunit)}.\texternal\pp?\left\{
  		\begin{array}{@{}l@{}}
  		{\msgLabel{success}}{(\tint)}.\tend\\
  		{\msgLabel{error}}{(\tbool)}.\tend
  		\end{array} \right.
  \end{array} \right.
\quad
\T = \texternal\pp?\left\{
  \begin{array}{@{}l@{}}
  {\msgLabel{success}}{(\tint).}\tinternal\pq!\left\{
  		\begin{array}{@{}l@{}}
  		{\msgLabel{cont}}{(\tint)}.\tend\\
  		{\msgLabel{stop}}{(\tunit)}.\tend
  		\end{array} \right.\\
  {\msgLabel{error}}{(\tbool)}.\tinternal\pq!\left\{
  		\begin{array}{@{}l@{}}
  		{\msgLabel{cont}}{(\tint)}.\tend\\
  		{\msgLabel{stop}}{(\tunit)}.\tend
  		\end{array} \right.
  \end{array} \right.
\]
}%

\begin{example}\label{ex:sub-intro}
Consider the opening example in Section~\ref{sec:intro}. %
The following types describe the interactions %
of $\PP'_\pr$ and $\PP_\pr$, respectively:
\introTypes%
In order to derive $\T'\subt\T$, we first show the two SO trees such that $\llbracket\ttree{\T'}\rrbracket_\SO = \left\{\UT_1, \UT_2\right\}$:
\[
\UT_1= \tout\pq{\msgLabel{cont}}{\tint}.\texternal\pp?\left\{
  		\begin{array}{@{}l@{}}
  		{\msgLabel{success}}{(\tint)}.\tend\\
  		{\msgLabel{error}}{(\tbool)}.\tend
  		\end{array} \right.\\
\quad
\UT_2= \tout\pq{\msgLabel{stop}}{\tunit}.\texternal\pp?\left\{
  		\begin{array}{@{}l@{}}
  		{\msgLabel{success}}{(\tint)}.\tend\\
  		{\msgLabel{error}}{(\tbool)}.\tend
  		\end{array} \right.
\]
and these are the two SI trees such that $\llbracket\ttree{\T}\rrbracket_\SI = \left\{\VT_1, \VT_2\right\}$:
\[
\VT_1= \tin\pp{\msgLabel{success}}{\tint}.\tinternal\pq!\left\{
  		\begin{array}{@{}l@{}}
  		{\msgLabel{cont}}{(\tint)}.\tend\\
  		{\msgLabel{stop}}{(\tunit)}.\tend
  		\end{array} \right.
\quad
\VT_2= \tin\pp{\msgLabel{error}}{\tbool}.\tinternal\pq!\left\{
  		\begin{array}{@{}l@{}}
  		{\msgLabel{cont}}{(\tint)}.\tend\\
  		{\msgLabel{stop}}{(\tunit)}.\tend
  		\end{array} \right.
\]

Therefore, for all $i,j \in \{1,2\}$, we can find 
$\WT'\in\llbracket\UT_i\rrbracket_\SI$ and 
$\WT\in\llbracket\VT_j\rrbracket_\SO$ such that 
$\WT'\subttt\WT$ can be derived using \rulename{ref-${\BC}$}. 
For instance, since we have:
\begin{align*}
   \llbracket\UT_1\rrbracket_\SI &=  
      \big\{
      \tout\pq{\msgLabel{cont}}{\tint}.\tin\pp{\msgLabel{success}}{\tint}.\tend\,,\;
  	  \tout\pq{\msgLabel{cont}}{\tint}.\tin\pp{\msgLabel{error}}{\tbool}.\tend
      \big\} \\
    \llbracket\VT_1\rrbracket_\SO &= 
      \big\{
      \tin\pp{\msgLabel{success}}{\tint}.\tout\pq{\msgLabel{cont}}{\tint}.\tend\,,\;
      \tin\pp{\msgLabel{success}}{\tint}.\tout\pq{\msgLabel{stop}}{\tunit}.\tend
      \big\}
\end{align*}
we can show that\;
$
\tout\pq{\msgLabel{cont}}{\tint}.\tin\pp{\msgLabel{success}}{\tint}.\tend
\subttt
\tin\pp{\msgLabel{success}}{\tint}.\tout\pq{\msgLabel{cont}}{\tint}.\tend
$, 
\;by the following coinductive derivation:
(we \hlight{highlight} the outputs matched by rule \rulename{ref-$\BC$})
\[
  \infer=[\mbox{\rulename{ref-$\BC$}, {\scriptsize $\BContext\pq = \tin\pp{\msgLabel{success}}{\tint}$}}]{%
  \hlight{$\tout\pq{\msgLabel{cont}}{\tint}$}.\tin\pp{\msgLabel{success}}{\tint}.\tend
\subttt
\tin\pp{\msgLabel{success}}{\tint}.\hlight{$\tout\pq{\msgLabel{cont}}{\tint}$}.\tend}
  {
  \infer=[\mbox{\rulename{ref-in}}]{
    \tin\pp{\msgLabel{success}}{\tint}.\tend
\subttt
\tin\pp{\msgLabel{success}}{\tint}.\tend}
  {
  \infer=[\mbox{\rulename{ref-end}}]{    
  \tend\subttt\tend}
  {}
  }
  }
\]
Hence, by \Cref{def:subtyping}, we conclude that $\T'\subt\T$ holds.
\end{example} 

\begin{example}[Example 3.21 by \citeN{BravettiCLYZ19L}]
  \label{ex:CONCUR19-maintext}%
  This example demonstrates how our subtyping, and its underlying
  session decomposition approach, apply to a complex case, where the
  subtyping proof requires infinite, non-cyclic derivations.
  Consider the following session types:
\[
    \begin{array}{ll}
      \begin{array}{l}
        \T = \mu \ty_1. \texternal\pp?\left\{
        \begin{array}{@{}l@{}}
          {\msgLabel{\ell_1}}{(\S_1).}\tout\pp{\ell_3}{\S_3}. \tout\pp{\ell_3}{\S_3}. \tout\pp{\ell_3}{\S_3}. \ty_1\\
          {\msgLabel{\ell_2}}{(\S_2)}.\mu \ty_2. \tout\pp{\ell_3}{\S_3} .\ty_2
        \end{array} \right.
      \end{array}
      &
        \begin{array}{l}
          \T' =  \mu \ty_1. \texternal\pp?\left\{
          \begin{array}{@{}l@{}}
            {\msgLabel{\ell_1}}{(\S_1).}\tout\pp{\ell_3}{\S_3}. \ty_1\\
            {\msgLabel{\ell_2}}{(\S_2)}.\mu \ty_2. \tout\pp{\ell_3}{\S_3} .\ty_2
          \end{array} \right.
        \end{array}
    \end{array}
    \]
We now prove that $\T \subt \T'$.
Notably, if we omit the participant $\pp$, we obtain binary session types
that are related under the \emph{binary} asynchronous subtyping relation
by \citeN{CDSY2017} %
--- but due to its complexity, the relation cannot be proved %
using the binary asynchronous subtyping algorithm %
of \citeN{BravettiCLYZ19,BravettiCLYZ19L}.%

Letting  $\TT = \ttree{\T}$ and $\TT' = \ttree{\T'}$, 
by \Cref{def:subtyping} we need to show that:
\begin{equation}
  \label{eq:CONCUR19-subtyping-goal}%
   \forall \UT\in\llbracket \TT\rrbracket_\SO \quad \forall \VT' \in \llbracket \TT' \rrbracket_\SI \quad \exists \WT \in\llbracket \UT \rrbracket_\SI \quad \exists \WT' \in \llbracket \VT' \rrbracket_\SO \qquad \WT \subttt\WT'
\end{equation}
\noindent%
 Observe that both $\TT$ and $\TT'$ are SO trees. Therefore, we have
 $\llbracket \TT \rrbracket_\SO = \{ \TT \}$;
 moreover, all $\VT' \in \llbracket \TT' \rrbracket_\SI$ are SISO trees,
 which means that, in \eqref{eq:CONCUR19-subtyping-goal},
 for all such $\VT'$ we have
 $\llbracket \VT' \rrbracket_\SO = \{\VT'\}$. %
 These singleton sets allow for simplifying the quantifications
 in \eqref{eq:CONCUR19-subtyping-goal}, as follows:
\begin{equation}
  \label{eq:CONCUR19-subtyping-goal-simpl}%
   \forall \WT' \in \llbracket \TT' \rrbracket_\SI \quad \exists \WT \in\llbracket \TT \rrbracket_\SI \qquad \WT \subttt \WT'
 \end{equation}
 \noindent%
 and to prove $\T \subt \T'$,
 it is enough to prove \eqref{eq:CONCUR19-subtyping-goal-simpl}.
 Before proceeding, we introduce
 the following abbreviations (where $\WT_i$ are SISO trees,
 and $\pi_i$ are sequences of inputs and outputs, used to prefix a SISO tree):
   \[
 \begin{array}{c}
  \WT_1 = \ttree{\mu \ty. \tin\pp{\ell_1}{\ST_1}. \tout\pp{\ell_3}{\ST_3}.\ty }
   \qquad
   \WT_2 = \tin\pp{\ell_2}{\ST_2}.\ttree{ \mu \ty. \tout\pp{\ell_3}{\ST_3}.\ty} \\
  \WT_3 = \ttree{\mu \ty. \tin\pp{\ell_1}{\ST_1}. \tout\pp{\ell_3}{\ST_3}. \tout\pp{\ell_3}{\ST_3}. \tout\pp{\ell_3}{\ST_3}.\ty} %
   \\[-5mm]
   \pi_1 = \tin\pp{\ell_1}{\ST_1}. \tout\pp{\ell_3}{\ST_3}
   \qquad
   \pi_3 = \tin\pp{\ell_1}{\ST_1}. \tout\pp{\ell_3}{\ST_3}. \tout\pp{\ell_3}{\ST_3}. \tout\pp{\ell_3}{\ST_3}
   \qquad
   \pi_i^n = \overbrace{\pi_i. \dots . \pi_i}^\text{\text{$n$ times}}
 \end{array} 
\] %
\noindent%
 Then, we have:
 \[
  \begin{array}{lcl}
    \llbracket \TT \rrbracket_\SI & = & \big\{
   \WT_3,\; \WT_2,\;   \pi_3. \WT_2,\; \pi_3^2.\WT_2,\; \ldots,\; \pi_3^n.\WT_2,\; \dots
   \big\} \\
 \llbracket \TT' \rrbracket_\SI & = & \big\{
 \WT_1,\; \WT_2,\;   \pi_1. \WT_2,\; \pi_1^2.\WT_2,\; \ldots,\; \pi_1^n.\WT_2,\; \dots
 \big\}\\
 \end{array}
 \]
 \noindent%
 and we can prove \eqref{eq:CONCUR19-subtyping-goal-simpl} by showing that:
 
 \begin{enumerate}[label=\emph{(\roman*)}]
 \item\label{item:CONCUR19-1} $\WT_3 \subttt \WT_1$ \;and
 \item\label{item:CONCUR19-2} for all $n \ge 1$,\; $\pi_3^n. \WT_2 \subttt \pi_1^n. \WT_2$
 \end{enumerate}
 Here we develop item~\ref{item:CONCUR19-1}.
 \iftoggle{techreport}{
   \emph{(The complete proof of item~\ref{item:CONCUR19-2} and more details are given in Appendix~\ref{sec:subtyping-further-examples}, \Cref{ex:CONCUR19})}
 }{}
 The relation $\WT_3 \subttt \WT_1$ is proved by the following coinductive derivation:
\[
  \infer=[\mbox{\rulename{ref-in}}]{%
   \WT_3 \subttt \WT_1
  }{%
  \infer=[\mbox{\rulename{ref-out}}]{%
   (\tout\pp{\ell_3}{\ST_3})^3.\WT_3 \subttt\tout\pp{\ell_3}{\ST_3}.\WT_1
  }{%
   \infer=[\mbox{\rulename{ref-$\BC$}, {\scriptsize $\BContext\pp = \tin\pp{\ell_1}{\S_1}$}}]{%
  (\tout\pp{\ell_3}{\ST_3})^2. \WT_3 \subttt \WT_1
  }{%
    \infer=[\mbox{\rulename{ref-$\BC$}, {\scriptsize $\BContext\pp = \tin\pp{\ell_1}{\S_1}.\tin\pp{\ell_1}{\S_1}$}}]{%
    \tout\pp{\ell_3}{\ST_3}.\WT_3 \subttt \tin\pp{\ell_1}{\ST_1}.\WT_1
    }{%
       \WT_3 \subttt \tin\pp{\ell_1}{\ST_1}.\tin\pp{\ell_1}{\ST_1}.\WT_1
    }%
  }%
 }%
}%
\]
where the topmost relation
holds by the following
\change{\#D: clarify coinductive derivations}{%
  infinite coinductive derivation,
  for all $n \!\geq\! 1$%
}:
\[
  \infer=[\mbox{\rulename{ref-in}}]{%
   \WT_3 \subttt  (\tin\pp{\ell_1}{\ST_1}.\tin\pp{\ell_1}{\ST_1})^{n}.\WT_1
  }{%
    \infer=[\mbox{\rulename{ref-$\BC$}, {\scriptsize $\BContext\pp = (\tin\pp{\ell_1}{\S_1}.\tin\pp{\ell_1}{\S_1})^{n}$}}]{%
   (\tout\pp{\ell_3}{\ST_3})^3.\WT_3 \subttt  (\tin\pp{\ell_1}{\ST_1}.\tin\pp{\ell_1}{\ST_1})^{n}.\tout\pp{\ell_3}{\ST_3}.\WT_1
  }{%
    \infer=[\mbox{\rulename{ref-$\BC$}, {\scriptsize $\BContext\pp = (\tin\pp{\ell_1}{\S_1}.\tin\pp{\ell_1}{\S_1})^{n}.\tin\pp{\ell_1}{\S_1}$}}]{%
  (\tout\pp{\ell_3}{\ST_3})^2.\WT_3 \subttt  (\tin\pp{\ell_1}{\ST_1}.\tin\pp{\ell_1}{\ST_1})^{n}.\WT_1
  }{%
\infer=[\mbox{\rulename{ref-$\BC$}, {\scriptsize $\BContext\pp = (\tin\pp{\ell_1}{\S_1}.\tin\pp{\ell_1}{\S_1})^{n+1}$}}]{%
    \tout\pp{\ell_3}{\ST_3}.\WT_3 \subttt  (\tin\pp{\ell_1}{\ST_1}.\tin\pp{\ell_1}{\ST_1})^{n}.\tin\pp{\ell_1}{\ST_1}.\WT_1
    }{%
       \WT_3 \subttt (\tin\pp{\ell_1}{\ST_1}.\tin\pp{\ell_1}{\ST_1})^{n+1}.\WT_1
    }%
  }%
 }%
}%
\]
\end{example} 
 
\section{Typing System and Type Safety}\label{sec:tsy}
\label{subsec:typesystem}
Our multiparty session typing system %
blends the one by \citeN{GhilezanJPSY19} %
with the one in \cite[Section 7]{ScalasY19}: %
like the latter, we type multiparty sessions without need for global types, %
thus simplifying our formalism %
and generalising our results. %
The key differences are our asynchronous subtyping %
(\Cref{def:subtyping}) and our choice of %
\emph{typing environment liveness} %
(\Cref{def:env-liveness}): %
their interplay yields our preciseness results.

\subsection{Typing System}

Before proceeding, we need to formalise \textbf{queue types} for message queues, %
extending Def.~\ref{def:types}:

\smallskip%
\centerline{\(%
 \tqueue \;\;\Coloneqq\;\;%
 \temptyqueue \sep \tout\pp\ell\ST \sep \tqueue \cdot \tqueue
\)}%
\smallskip%

\noindent%
Type $\temptyqueue$ denotes an empty queue; %
$\tout\pp\ell\ST$ denotes a queued message with recipient $\pp$, label $\ell$, and payload of sort $\ST$; they are concatenated as $\tqueue \cdot \tqueue'$.

\begin{definition}[Typing system]
\label{def:type-system}%
  The type system uses 4 judgments:%

  \smallskip%
  \begin{minipage}{0.5\linewidth}%
    \begin{itemize}
    \item%
      for expressions:\; $\Theta \vdash \e:\ST$
    \item%
      for queues:\; $\vdash \h:\tqueue$
    \end{itemize}
  \end{minipage}
  \begin{minipage}{0.5\linewidth}%
    \begin{itemize}
    \item%
      for processes:\; $\Theta  \vdash \PP:\T$
    \item%
      for sessions:\; $\Gamma \vdash \N$
    \end{itemize}
  \end{minipage}

\noindent%
where the typing environments $\Gamma$ and $\Theta$ are defined as:

  \smallskip%
  \centerline{\(%
    \Gamma \;\;\Coloneqq\;\; \emptyset \sep \Gamma, \pp:(\tqueue, \T)%
    \qquad\qquad%
    \Theta \;\;\Coloneqq\;\; \emptyset \sep \Theta, X:\T \sep \Theta, \x:\ST
  \)}%

\noindent%
  The typing system is inductively defined by the rules %
  in \Cref{figure:typesystem}. 
\end{definition}
The judgment for expressions is standard:\; %
$\Theta \vdash \e:\ST$ \;means that, 
given the variables and sorts in environment $\Theta$,
expression $\e$ is of the sort $\ST$.
The judgment for queues means that queue $\h$ has queue type $\tqueue$. 
The judgment for processes states that, 
given the types of the variables in $\Theta$, 
process $\PP$ behaves as prescribed by $\T$. 
The judgment for sessions states that multiple participants and queues %
behave as prescribed by $\Gamma$, which maps each participant $\pp$ %
to the pairing of a queue type %
(for $\pp$'s message queue) and a session type (for $\pp$'s process).
If $\Theta=\emptyset$ we write $ \vdash \e:\ST$ and  $\vdash \PP:\T.$

\begin{table}[t!]
 \[
 \begin{array}{c}
   \inferrule[]{ }{\Theta \vdash \valn:\tnat}
    \qquad 
  \changeNoMargin{\inferrule[]{ }{\Theta \vdash (-\valn):\tint}
     \qquad 
  \inferrule[]{ }{\Theta \vdash \zero:\tint}}
   \quad
   \inferrule[]{ }{\Theta\vdash \true:\tbool} \quad \inferrule[]{ }{\Theta\vdash \false:\tbool}
  \\
   \changeNoMargin{\inferrule[]{ }{\Theta \vdash ():\tunit}}
    \quad
    \inferrule[]{ }{\Theta, x:\ST \vdash x:\ST}
  \quad
   \inferrule[]
   {\Theta \vdash \e:\tnat}{\Theta\vdash \fsucc\e:\tnat}
   \quad
   \inferrule{\Theta\vdash \e: \tint}{\Theta\vdash \fsqrt\e:\tint}
    \\[2mm]
   \inferrule{\Theta \vdash \e:\tbool}{\Theta\vdash \neg\e:\tbool}{}
   \qquad
 \changeNoMargin{\inferrule{\Theta \vdash \e:\tint}{\Theta \vdash \e > \zero : \tbool}}
    \qquad
 \changeNoMargin{\inferrule{\Theta \vdash \e:\tunit}{\Theta \vdash \e \approx () : \tbool}}
     \qquad
    \inferrule{\Theta \vdash \e:\ST \quad \ST\subs\ST'}{\Theta \vdash \e:\ST'}
   \\[2mm]
   \inferrule{ }{\vdash \emptyqueue: \temptyqueue}
       \qquad
   \inferrule{\ \vdash\val:\ST}{\vdash  (\pq,\ell(\val)):\tout\pq\ell\ST}
       \qquad
   \inferrule{\vdash \h_1:\tqueue_1 \quad \vdash \h_2:\tqueue_2}{\vdash \h_1 \cdot \h_2: \tqueue_1\cdot\tqueue_2}
   \end{array}
 \]
 \\[2mm]
\[
\begin{array}{@{}c@{}}
  \inferruleR[\rulename{t-$\inact$}]{$\;$}{\Theta \vdash \inact: \tend}
  \quad\;
  \inferruleR[\rulename{t-var}]{$\;$}{\Theta, X:\T  \vdash X:\T}
  \quad\;
  \inferruleR[\rulename{t-rec}]{\Theta, X:\T  \vdash \PP:\T}{\Theta  \vdash  \mu X.\PP: \T}
  \quad\;%
\inferruleR[\rulename{t-out}]{\Theta \vdash \e:\ST \quad \Theta \vdash \PP:\T}{\Theta  \vdash \procout\pq\ell\e\PP: \tout\pq\ell\ST.\T}
  \\[3mm]%
\inferruleR[\rulename{t-ext}]{\forall i\in I\;\;\; \Theta, x_i:\ST_i \vdash \PP_i:\T_i }{\Theta \vdash  \sum_{i\in I}\procin{\pq}{\ell_i(\x_i)}{\PP_i}: \texternal_{i\in I}\tin\pq{\ell_i}{\ST_i}.{\T_i}}
  \qquad
\inferruleR[\rulename{t-cond}]{\Theta \vdash \e:\tbool
\quad \Theta  \vdash \PP_i:\T \ \text{\tiny $(i=1,2)$}
}{\Theta \vdash \cond\e{\PP_1}{\PP_2}:\T}
\\[3mm]
\inferruleR[\rulename{t-sub}]{\Theta \vdash \PP:\T \quad \T\subt \T' }{\Theta \vdash \PP:\T'}
\qquad
\inferruleR[\rulename{t-sess}]{\Gamma=\{\pp_i{:}\,(\tqueue_i, \T_i) \;|\; i\in I\} \quad %
  \forall i\in I\;\; \vdash \PP_i:\T_i \;\;  \vdash \h_i:\tqueue_i }{\Gamma \vdash \prod_{i\in I} (\pa{\pp_i}{\PP_i} \pc \pa{\pp_i}{\h_i})}
\end{array}  
\]

\caption{\label{figure:typesystem}%
  Typing rules for expressions and queues (top 3 rows),
  and for processes and sessions (bottom 4 rows).%
}%
\end{table}

We now comment the rules for processes and sessions %
(other rules are self-explanatory).
Rule \rulename{t-$\inact$} types a terminated process. 
Rule \rulename{t-var} types a process variable %
with the assumption in the environment. %
By \rulename{t-rec}, %
a recursive process is typed with $\T$ %
if the process variable $X$ and the body $\PP$ have the same type $\T$. 
By \rulename{t-out}, %
an output process is typed with a singleton selection type, %
if the message being sent is of the correct sort, %
and the process continuation has the continuation type.
By \rulename{t-ext}, a process external choice is typed as a branching type
with matching participant $\pp$ and labels $\ell_i$ (for all $i \!\in\! I$); %
in the rule premise, each continuation process $\PP_i$ must be typed with the corresponding continuation type $\T_i$, assuming the bound variable $\x_i$ is of sort $\ST_i$ (for all $i \!\in\! I$). %
By \rulename{t-cond}, a conditional has type $\T$ %
if its expression has sort $\tbool$, %
and its ``$\mathsf{then}$'' and ``$\mathsf{else}$'' branches have type $\T$.
Rule \rulename{t-sub} is the \emph{subsumption rule}: %
it states that a process of type $\T$ is also typed  by any supertype of $\T$ %
(and since $\subt$ relates types up-to unfolding, %
this rule makes our type system \emph{equi-recursive} %
\cite{PierceBC:typsysfpl}). %
By \rulename{t-sess}, %
a session $\N$ is typed by environment $\Gamma$ %
if all the participants in $\N$ have processes and queues typed %
by the session and queue types pairs in $\Gamma$.

\begin{example}
  \label{ex:typing-intro}%
  The processes $\PP_\pr$ and its optimised version $\PP'_\pr$
  in Section~\ref{sec:intro} %
  are typable using the rules in Def.~\ref{def:type-system},
  and the types $\T, \T'$ in Example~\ref{ex:sub-intro}. %
  In particular, $\PP_\pr$ has type $\T$,
  while $\PP'_\pr$ has type $\T'$.
  Moreover, since $\T' \subt \T$, %
  by rule \rulename{t-sub} %
  the optimised process $\PP'_\pr$ has \emph{also} type $\T$:
  hence, our type system allows using $\PP'_\pr$
  whenever a process of type $\T$ %
  (such as $\PP_\pr$) is expected.
\end{example}

\subsection{Typing Environment Reductions and Liveness}%

To formulate the soundness result for our type system,
we define typing environment reductions. 
The reductions rely on a standard structural congruence relation $\equiv$ %
over queue types,
allowing to reorder messages with different recipients.
Formally, $\equiv$ is the least congruence satisfying:

\smallskip%
\centerline{
$
\begin{array}{c}
\tqueue \cdot \temptyqueue \;\equiv\; \temptyqueue \cdot \tqueue \;\equiv\; \tqueue %
  \qquad\qquad
   \tqueue_1 \cdot (\tqueue_2 \cdot \tqueue_3) \;\equiv\; (\tqueue_1 \cdot \tqueue_2) \cdot \tqueue_3\\[1mm] 
{\tqueue \cdot \tout{\pp_1}{\ell_1}{\ST_1} \cdot \tout{\pp_2}{\ell_2}{\ST_2} \cdot \tqueue' \;\;\equiv\;\; \tqueue\cdot \tout{\pp_2}{\ell_2}{\ST_2} \cdot \tout{\pp_1}{\ell_1}{\ST_1} \cdot \tqueue'}
\quad\text{(if $\pp_1 \neq \pp_2$)}
\end{array}
$}%
\smallskip

\noindent%
For pairs of queue/session types, 
we define 
structural congruence as\; $(\tqueue_1,\T_1) \equiv (\tqueue_2, \T_2)$ 
\;iff\; $\tqueue_1 \equiv \tqueue_2$ and $\T_1 \equiv \T_2$, \; and
subtyping
as\; $(\tqueue_1,\T_1)\subt (\tqueue_2, \T_2)$ \;iff\; 
$\tqueue_1\equiv\tqueue_2$ and $\T_1 \subt \T_2$. %
We extend subtyping and congruence to typing environments,
by requiring all corresponding entries to be related,
and allowing additional unrelated entries of type $(\temptyqueue,\tend)$.
More formally, we write\; %
\label{page:gamma-subt-gammai}%
  $\Gamma\equiv\Gamma'$ (resp. $\Gamma\subt\Gamma'$) \;iff\; $\forall\pp\!\in\! \dom{\Gamma} \cap \dom{\Gamma'}{:}$ $\Gamma(\pp)\!\equiv\Gamma'(\pp)$ (resp. $\Gamma(\pp)\subt\Gamma'(\pp)$) and $\forall\pp\!\in\! \dom{\Gamma} \setminus \dom{\Gamma'}, \pq\!\in\!\dom{\Gamma'} \setminus \dom{\Gamma}{:}$ $\Gamma(\pp) \!\equiv\! (\temptyqueue,\tend) \!\equiv\! \Gamma'(\pq)$. %

\begin{definition}[Typing environment reduction]
\label{def:typing-env-reductions}%
  The reduction\,  $\redLabel{\;\alpha\;}$ %
  \,of asynchronous session typing environments %
  is inductively defined as follows,
  \change{\#D: define $\alpha$}{%
    with $\alpha$ being either\,
    $\recvLabel{\pp}{\pq}{\ell}$
    \,or $\sendLabel{\pp}{\pq}{\ell}$
    \,(for some $\pp,\pq,\ell$):%
  }
\smallskip%
\noindent%
\centerline{%
\small\rm%
\(%
  \begin{array}{@{}l@{\;}l@{\qquad}r@{}}
\rulename{e-rcv} &
\hspace{-3mm}{\pp{:}\,(\tout\pq{\ell_k}{\ST_k'}{\cdot} \tqueue, \T_{\pp}),\pq{:}\,(\tqueue_{\pq}, \texternal_{i\in I}\tin\pp{\ell_i}{\ST_i}.{\T_i}), \Gamma \;\redRecv{\pq}{\pp}{\ell_{k}}\; \pp{:}\,(\tqueue, \T_\pp), \pq{:}\,(\tqueue_{\pq}, \T_k),\Gamma} %
& \hspace{0mm}(k\!\in\! I, \ST_k' \!\subs\! \ST_k)\\
\rulename{e-send}
   & {\pp:(\tqueue, \tinternal_{i\in I}\tout\pq{\ell_i}{\ST_i}.{\T_i}),\Gamma \;\;\redSend{\pp}{\pq}{\ell_{k}}\;\; \pp:(\tqueue{\cdot} \tout\pq{\ell_k}{\ST_k}{\cdot}\temptyqueue, \T_k),\Gamma} &
   {(k\in I) }\\[1mm]
\rulename{e-struct}&
\quad \Gamma \equiv \Gamma_1 \redLabel{\;\alpha\;} \Gamma_1' \equiv \Gamma'%
\;\implies\;%
\Gamma \redLabel{\;\alpha\;} \Gamma'%
   \end{array}
\)%
}%
\noindent%
We often write\; $\Gamma \red \Gamma'$ %
  instead of\; $\Gamma \redLabel{\;\alpha\;} \Gamma'$ %
  \;when $\alpha$ is not important.
We write $\reds$ for the reflexive and transitive closure of $\red$.
\end{definition}

Rule \rulename{e-rcv} says that an environment can take a reduction step if %
participant $\pp$ has a message toward $\pq$ %
with label $\ell_k$ and payload sort $\ST_k'$ at the head of its queue, %
while $\pq$'s type is a branching from $\pp$ 
including label $\ell_k$ and a corresponding sort $\ST_k$ 
supertype of $\ST_k'$; %
the environment evolves with a reduction labelled  
${\pq}{:}{\pp}{?}{\ell_{k}}$, by consuming $\pp$'s message %
and activating the continuation $\T_k$ in $\pq$'s type. %
In rule \rulename{e-send} %
the environment evolves by letting participant $\pp$ %
(having a selection type) %
send a message %
toward $\pq$; the reduction is labelled ${\pp}{:}{\pq}{!}{\ell_{k}}$ %
(with $\ell_k$ being a selection label), %
and places the message at the end of $\pp$'s queue. %
Rule \rulename{e-struct} closes the reduction %
under structural congruence.

Similarly to~\citeN{ScalasY19}, %
we define a behavioral property of typing environments %
(and their reductions) called \emph{liveness}:\footnote{%
  Notably, our definition of liveness is stronger %
  than the ``liveness'' in~\cite[Fig.~5]{ScalasY19}, %
  and is closer to ``liveness\textsuperscript{+}'' therein:
  we adopt it because a weaker ``liveness'' %
  would not allow to achieve %
  Theorem~\ref{thm:subt-precise-liveness} later on.
} %
we will use it as a precondition for typing, %
to ensure that typed processes %
cannot reduce to any error in Table~\ref{tab:error}.%

\begin{definition}[Live typing environment]
  \label{def:env-path-fairness}%
  \label{def:env-liveness}%
  A \emph{typing environment path} %
  is %
  possibly infinite sequence of typing environments %
  $(\Gamma_i)_{i \in I}$, %
  where $I = \{0,1,2,\ldots\}$ is a set of consecutive natural numbers, %
  and, $\forall i \in I$, $\Gamma_i \red \Gamma_{i+1}$. %
  We say that a path $(\Gamma_i)_{i \in I}$ is \emph{fair} iff, %
  $\forall i \in I$:
  \begin{enumerate}[leftmargin=9mm,label={\sf\textbf{(F{\arabic*})}},ref={\sf\textbf{F{\arabic*}}}]
  \item\label{item:fairness:send}%
    whenever $\Gamma_i \redSend{\pp}{\pq}{\ell} \Gamma'$, %
    then $\exists k, \ell'$ %
    such that $I \ni k{+}1 > i$, %
    and $\Gamma_k \redSend{\pp}{\pq}{\ell'} \Gamma_{k+1}$
  \item\label{item:fairness:recv}%
    whenever $\Gamma_i \redRecv{\pp}{\pq}{\ell} \Gamma'$, %
    then $\exists k$ %
    such that $I \ni k{+}1 > i$, %
    and $\Gamma_k \redRecv{\pp}{\pq}{\ell} \Gamma_{k+1}$
  \end{enumerate}

  \noindent%
  We say that a path $(\Gamma_i)_{i \in I}$ is \emph{live} iff, %
  $\forall i \in I$:
  \begin{enumerate}[leftmargin=9mm,label={\sf\textbf{(L{\arabic*})}},ref={\sf\textbf{L{\arabic*}}}]
  \item\label{item:liveness:send}%
    if $\Gamma_i(\pp) \equiv \left(\tout\pq{\ell}\ST \cdot \tqueue \,,\, \T\right)$, %
    then $\exists k$:\; $I \ni k{+}1 > i$ %
    \;and\; %
    $\Gamma_k \redRecv{\pq}{\pp}{\ell} \Gamma_{k+1}$
  \item\label{item:liveness:recv}%
    if $\Gamma_i(\pp) \equiv \left(\tqueue_\pp \,,\, \texternal_{j \in J}{\tin\pq{\ell_j}{\ST_j}.{\T_j}}\right)$, %
    then $\exists k,\ell'$:\; $I \ni k{+}1 > i$ %
    \;and\; %
    $\Gamma_k \redRecv{\pp}{\pq}{\ell'} \Gamma_{k+1}$
  \end{enumerate}
  
  \noindent%
  We say that a typing environment $\Gamma$ is \emph{live} %
  iff %
  all fair paths beginning with $\Gamma$ %
  are live.
\end{definition}

By Def.~\ref{def:env-path-fairness}, %
a path is a (possibly infinite) sequence of reductions %
of a typing environment. %
Intuitively, a fair path represents a ``fair scheduling:'' %
along its reductions, every pending internal choice %
eventually enqueues a message (\ref{item:fairness:send}), %
and every pending message reception %
is eventually performed (\ref{item:fairness:recv}). %
A path is live %
if, along its reductions, %
every queued message is eventually consumed %
(\ref{item:liveness:send}), %
and every waiting external choice eventually consumes a queued message %
(\ref{item:liveness:recv}). %
A typing environment is live if %
it always yields a live path when it is fairly scheduled. %

\begin{example}[Fairness and liveness]
  Consider the typing environment:
  \[
  \Gamma \;=\; \pp{:}(\temptyqueue,\mu\ty.\tout{\pq}{\ell}{\ST}.\ty),\,
  \pq{:}(\temptyqueue,\mu\ty.\tin{\pp}{\ell}{\ST}.\ty)
  \]
  The typing environment $\Gamma$ has an infinite path where $\pp$ keeps
  enqueuing outputs, while $\pq$ never fires a reduction to receive them:
  such a path is unfair, because $\pq$'s message reception is always enabled,
  but never performed. %
  Instead, in all fair paths of $\Gamma$, the messages enqueued by $\pp$
  are eventually consumed by $\pq$, and the inputs of $\pq$
  are eventually fired: hence, $\Gamma$ is live.
  Now consider:
  \[
  \textstyle
  \Gamma' \;=\; \pp{:}\left(\temptyqueue,\mu\ty.\tinternal\!\big\{
    \tout{\pq}{\ell}{\ST}.\ty,\,
    \tout{\pq}{\ell'}{\ST}.\tout{\pr}{\ell'}{\ST}.\ty
    \big\}\right)
    ,\,
    \pq{:}\left(\temptyqueue,\mu\ty.\texternal\!\big\{
      \tin{\pp}{\ell}{\ST}.\ty\,,
      \tin{\pp}{\ell'}{\ST}.\ty
    \big\}\right)
    ,\,
    \pr{:}\left(\temptyqueue,\mu\ty.\tin{\pp}{\ell'}{\ST}.\ty\right)
  \]
  The environment $\Gamma'$ has fair and live paths
  where $\pp$ chooses to send $\ell'$ to $\pq$ and then to $\pr$.
  However, there is also a fair path where $\pp$
  always chooses to send $\ell$ to $\pq$:
  in this case, $\pq$ always eventually receives a message,
  but $\pr$ forever waits for an input that will never arrive.
  Therefore, such a path is fair but not live, hence $\Gamma'$ is not live.
  \hfill$\blacksquare$%
\end{example}

Liveness is preserved by environment reductions and
subtyping, as formalised in \Cref{lem:move-preserves-liveness}
and \Cref{lem:subtyping-preserves-liveness}:
these properties are crucial for
proving subject reduction later on.
\iftoggle{techreport}{%
  \emph{(proofs in \Cref{app:proofs-subtyping-liveness})}%
}{}

\begin{restatable}{proposition}{lemMovePreservesLiveness}
  \label{lem:move-preserves-liveness}%
  If\, $\Gamma$ is live and $\Gamma \red \Gamma'$, %
  then $\Gamma'$ is live.
\end{restatable}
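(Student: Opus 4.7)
The plan is to show that every fair path starting with $\Gamma'$ is live. Let $(\Gamma'_i)_{i \in I}$ be such a path (with $\Gamma'_0 = \Gamma'$), and consider the \emph{prepended} path $\pi = (\Gamma, \Gamma'_0, \Gamma'_1, \ldots)$, which is a valid typing environment path since $\Gamma \red \Gamma'$. If I can show that $\pi$ is fair, then by liveness of $\Gamma$ the path $\pi$ is live; since the liveness clauses \ref{item:liveness:send} and \ref{item:liveness:recv} are forward-looking, their satisfaction at positions $i \geq 1$ of $\pi$ translates directly to their satisfaction at positions $i \geq 0$ of $(\Gamma'_i)_{i \in I}$, yielding the conclusion.

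The crux is therefore to prove that $\pi$ is fair. Fairness at positions $\geq 1$ of $\pi$ is inherited immediately from the fairness of $(\Gamma'_i)_{i \in I}$. At position $0$, suppose an action $\alpha$ is enabled at $\Gamma$; let $\beta$ be the label of the reduction $\Gamma \redLabel{\beta} \Gamma'$. I would distinguish two subcases: either $\alpha$ remains enabled at $\Gamma'$, in which case fairness of $(\Gamma'_i)_{i \in I}$ at position $0$ furnishes the required witness; or $\alpha$ is no longer enabled at $\Gamma'$, and I show that $\beta$ itself serves as the fairness witness (taking $k = 0$ in \ref{item:fairness:send} or \ref{item:fairness:recv}).

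The second subcase is handled by a short case analysis based on rules \rulename{e-send} and \rulename{e-rcv}, together with the structural congruence on queue types that commutes messages with distinct recipients. If $\alpha = \sendLabel{\pp}{\pq}{\ell}$ is disabled at $\Gamma'$, the only reductions that can modify $\pp$'s selection type are of the form $\beta = \sendLabel{\pp}{\pq}{\ell'}$, which satisfies \ref{item:fairness:send} at position $0$. If $\alpha = \recvLabel{\pp'}{\pp''}{\ell}$ is disabled at $\Gamma'$, then $\beta$ must either advance $\pp'$'s external choice or consume the head-for-$\pp'$ entry from $\pp''$'s queue; in both situations the queue-commutation congruence and the uniqueness of queue heads per recipient force $\beta = \recvLabel{\pp'}{\pp''}{\ell}$, so $\beta$ witnesses \ref{item:fairness:recv}.

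The main obstacle will be the bookkeeping in this case analysis: I must carefully argue via queue commutation that reductions involving a particular pair of participants never disable unrelated enabled actions at other positions of the queue or at other participants, so that all actions enabled at $\Gamma$ but not "consumed" by $\beta$ are still enabled at $\Gamma'$ and thus inherit their fairness witness from the suffix. A secondary, but routine, technicality is handling reductions derived via \rulename{e-struct}: since structural congruence preserves both the set of enabled actions and liveness of paths, the analysis reduces to the two base rules.
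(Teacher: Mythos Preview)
Your proof is correct and follows the same prepending idea as the paper: extend a fair path from $\Gamma'$ by the step $\Gamma \red \Gamma'$ and argue that the resulting path is fair, hence live by the hypothesis on $\Gamma$. The paper phrases this by contradiction and simply asserts fairness of the prepended path without justification; your direct version, with the explicit case analysis showing that any action enabled at $\Gamma$ is either discharged by the step $\beta$ itself or remains enabled at $\Gamma'$, fills in precisely the detail the paper omits.
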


\vspace{-3mm}%
\begin{restatable}{lemma}{lemSubPreservesLiveness}
  \label{lem:subtyping-preserves-liveness}%
  If\, $\Gamma$ is live and $\Gamma' \subt \Gamma$, %
  then $\Gamma'$ is live.
\end{restatable}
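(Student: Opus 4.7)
The plan is to prove this by a simulation-style argument. Assume $\Gamma' \subt \Gamma$ with $\Gamma$ live, and let $\pi' = (\Gamma'_i)_{i \in I}$ be an arbitrary fair path of $\Gamma'$. The goal is to construct a corresponding fair path $\pi = (\Gamma_j)_{j \in J}$ of $\Gamma$, together with a monotonic index embedding $f : I \to J$, such that the invariant $\Gamma'_i \subt \Gamma_{f(i)}$ holds along the two paths. Intuitively, each step of $\pi'$ is matched by one or more steps of $\pi$: the ``extra'' steps correspond to actions of $\Gamma$ that the subtype $\Gamma'$ has already anticipated via the SISO refinement rules \rulename{ref-$\AC$} and \rulename{ref-$\BC$}. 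Once $\pi$ is shown to be fair, liveness of $\Gamma$ forces $\pi$ to be live, and the liveness conditions \ref{item:liveness:send}, \ref{item:liveness:recv} then transfer back to $\pi'$ via the invariant.

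The central technical ingredient is a simulation lemma: whenever $\Gamma' \subt \Gamma$ and $\Gamma' \redLabel{\alpha} \Gamma''$, there exists a finite sequence $\Gamma \reds \Gamma^*$ along which the label $\alpha$ is fired, such that $\Gamma'' \subt \Gamma^*$. This follows from the decomposition in Definition~\ref{def:subtyping} and the fact that the anticipation prefixes $\AContext{\pp}$ and $\BContext{\pp}$ are \emph{finite}, so only finitely many intermediate reductions of $\Gamma$ need to be performed before $\alpha$ is enabled. Iterating this lemma yields the desired path $\pi$ of $\Gamma$, in which the sequence of labels of $\pi'$ appears as a subsequence with only finitely many inserted labels between consecutive entries. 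Transferring \ref{item:liveness:send}, \ref{item:liveness:recv} from $\pi$ back to $\pi'$ then uses the congruence clause of $\subt$ on queue types: if $\Gamma'_{i_0}(\pp) \equiv (\tout{\pq}{\ell}{\ST} \cdot \tqueue, \T)$, then $\Gamma_{f(i_0)}(\pp)$ carries an equivalent queued message that, by \ref{item:liveness:send} applied to $\pi$, must eventually be consumed at some index $k$; the matching consumption step in $\pi'$ at $f^{-1}(k)$ witnesses \ref{item:liveness:send} for $\pi'$. An analogous correspondence handles \ref{item:liveness:recv}.

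The main obstacle is establishing \emph{fairness} of the constructed path $\pi$. The simulation lemma produces arbitrary intermediate $\Gamma$-reductions that synchronise $\Gamma$ with the anticipations of $\Gamma'$, but nothing a priori rules out that some send or receive becomes enabled in $\pi$ and is never performed, in which case $\pi$ is not fair and liveness of $\Gamma$ cannot be invoked. To handle this I would schedule the intermediate steps of $\pi$ in a diagonal fashion: at each stage, besides performing the specific action needed to match the next step of $\pi'$, force every currently enabled $\Gamma$-reduction to be fired at some bounded finite distance. This is where the per-participant decomposition of $\subt$ into SISO refinements is essential, since each participant's type in $\Gamma$ contributes only a bounded ``lag'' behind the corresponding type in $\Gamma'$ (governed by the finite $\AContext{\pp}$ or $\BContext{\pp}$ prefixes appearing in its derivation), and these lags can be resolved independently. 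A secondary subtlety is the interaction of the simulation with \rulename{e-struct}: since structural congruence on queue types reorders messages with distinct recipients, care is needed to ensure that the invariant $\Gamma'_i \subt \Gamma_{f(i)}$ is preserved modulo $\equiv$, but this is routine given the analogous handling already present in Definition~\ref{def:typing-env-reductions}.
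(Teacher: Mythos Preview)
Your simulation lemma and the invariant $\Gamma'_i \subt \Gamma_{f(i)}$ cannot be maintained as stated. The definition of $\subt$ on typing environments requires that corresponding entries have \emph{congruent queue types} (only the session types are related by the asynchronous subtyping). But the very essence of rules \rulename{ref-$\AC$} and \rulename{ref-$\BC$} is that $\Gamma'$ fires actions in a different order than $\Gamma$, so the queues diverge immediately. Concretely, take $\Gamma(\pp)=(\temptyqueue,\,\tin{\pr}{\ell'}{\ST'}.\tout{\pq}{\ell}{\ST}.\tend)$ and $\Gamma'(\pp)=(\temptyqueue,\,\tout{\pq}{\ell}{\ST}.\tin{\pr}{\ell'}{\ST'}.\tend)$ with all other entries equal. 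After $\Gamma' \redSend{\pp}{\pq}{\ell} \Gamma''$, the queue of $\pp$ in $\Gamma''$ is $\tout{\pq}{\ell}{\ST}$; there is no reduct $\Gamma^*$ of $\Gamma$ with $\Gamma'' \subt \Gamma^*$, since either $\Gamma$ has not yet enqueued (queue $\temptyqueue$), or $\Gamma$ has first consumed a message from $\pr$'s queue and then enqueued, in which case both $\pr$'s queue and $\pp$'s residual session type disagree with $\Gamma''$. The finiteness of $\AContext{\pp},\BContext{\pp}$ bounds the lag in $\pp$'s \emph{own} actions, but says nothing about how many reductions of \emph{other} participants are needed to supply the inputs in those prefixes, and in any case does not rescue the queue-congruence requirement.

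The paper avoids this by never using the environment-level $\subt$ as the running invariant. It works one participant at a time (Lemma~\ref{lem:subtyping-preserves-p-live} and Proposition~\ref{lem:subtyping-preserves-live}, iterated over $\dom{\Gamma}$), and for that single participant it replaces the pair $(\tqueue,\TT)$ by the \emph{prefixed tree} $\tqueue{\cdot}\TT$ (Def.~\ref{notation:output-prefixing}), so that the invariant becomes $\tqueue'{\cdot}\TT' \subt \tqueue{\cdot}\TT$ at the level of session \emph{trees}, where reorderings are absorbed by the SISO refinement. The path correspondence (Proposition~\ref{lem:subt-reductions}) then tracks a set of compatible SISO pairs rather than an environment subtyping, with an explicit queue of delayed actions; liveness is transferred via a coinductive $\pp$-liveness predicate (Def.~\ref{def:coinductive-liveness}) and contrapositive arguments (Propositions~\ref{lem:subttt-no-output},~\ref{lem:subttt-no-input}). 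If you want to salvage a direct simulation argument, you would need to replace your invariant by something of this shape.
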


We can now state our subject reduction result
(\Cref{thm:SR} below): %
if session $\N$ is typed by a live $\Gamma$, %
and $\N \red \N'$, %
then $\N$ might anticipate some inputs/outputs %
prescribed by $\Gamma$, %
as allowed by subtyping $\subt$. %
Hence, $\N$ reduces by following some $\Gamma'' \!\subt\! \Gamma$, %
which evolves to $\Gamma'$ that types $\N'$.
\iftoggle{techreport}{%
  \emph{(Proofs: \Cref{subsec:app:typesystem}.)}
}{}%

\begin{restatable}[Subject Reduction]{theorem}{theoremSR}
  \label{thm:SR}
  Assume\; $\Gamma \vdash \N$ %
  \;with $\Gamma$ live. %
  If\; $\N \red \N'$, %
  \;then there are live type environments $\Gamma', \Gamma''$ %
  such that\; $\Gamma'' \subt \Gamma$,\; %
  $\Gamma'' \red^* \Gamma'$ \;and\; $\Gamma'\vdash\N'$.
\end{restatable}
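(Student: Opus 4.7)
The plan is to proceed by induction on the derivation of $\N \red \N'$ (equivalently, a case analysis on the reduction rules of \Cref{tab:main:reduction}), inverting the typing derivation $\Gamma \vdash \N$ at each step. The crucial point is that the inversion must traverse applications of the subsumption rule \rulename{t-sub}, which may hide an asynchronous reordering of actions inside the types assigned to individual participants. Concretely, if $\pa{\pp}{\PP} \pc \pa{\pp}{\h}$ fires a reduction, then by rule \rulename{t-sess} we have $\vdash \PP : \T_{\pp}$ where $\T_{\pp} \subt \T_{\pp}^{\Gamma}$, with $\T_{\pp}^{\Gamma}$ being the session type recorded in $\Gamma(\pp)$. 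The witness $\Gamma''$ in the statement will be built by replacing the entries of $\Gamma$ with the ``tighter'' types exposed by the inversion, so that $\Gamma'' \subt \Gamma$ follows componentwise, while $\Gamma'$ is reached from $\Gamma''$ by a (possibly nonempty) sequence of environment reductions anticipating the action actually performed by $\N$.

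For the output case \rulename{r-send}, inversion on $\vdash \procout{\pq}{\ell}{\e}{\PP'} : \T_{\pp}$ (modulo \rulename{t-sub}) yields some type $\tout{\pq}{\ell}{\ST}.\T'$ with $\tout{\pq}{\ell}{\ST}.\T' \subt \T_{\pp}^{\Gamma}$; setting $\Gamma''$ equal to $\Gamma$ with $\pp$ remapped to $(\tqueue_{\pp}, \tout{\pq}{\ell}{\ST}.\T')$ gives $\Gamma'' \subt \Gamma$, and a single application of \rulename{e-send} produces $\Gamma'$ typing $\N'$. The input case \rulename{r-rcv} is analogous, but requires additionally that the queue type of the sender, combined with the selection type exposed by the subtyping, offers $\ell_k$ at the head: this is where the SISO-tree decomposition of $\subt$ (\Cref{def:subtyping}) pays off, because given $\Gamma'' \subt \Gamma$ and the outputs that $\pq$ has already enqueued, one can choose a SI/SO refinement of $\Gamma(\pq)$ whose SISO tree places exactly the performed action in front, yielding the finite environment run $\Gamma'' \red^{*} \Gamma'$. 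The conditional cases are immediate, and \rulename{r-struct} is handled by a standard congruence lemma combined with \rulename{e-struct}.

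Liveness of the resulting $\Gamma''$ and $\Gamma'$ follows by combining the two preservation results already at hand: \Cref{lem:subtyping-preserves-liveness} gives liveness of $\Gamma''$ from liveness of $\Gamma$ and $\Gamma'' \subt \Gamma$, while \Cref{lem:move-preserves-liveness} propagates liveness along $\Gamma'' \red^{*} \Gamma'$. Transitivity of $\subt$ (\Cref{lem:transitivity}) is used silently to compose the per-participant refinements into the environment-level one.

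The main obstacle is the output case when the head action actually performed by $\PP$ is \emph{not} at the head of $\T_{\pp}^{\Gamma}$: here one must exhibit a $\Gamma''$ whose outer shape on $\pp$ begins with exactly the action taken, while still $\Gamma'' \subt \Gamma$. The witness is provided by the \rulename{ref-$\AC$}/\rulename{ref-$\BC$} rules, but one must justify that such a ``promotion'' can be performed simultaneously for every participant of the session without breaking the componentwise subtyping, and that the anticipations are matched by a finite sequence of environment moves $\Gamma'' \red^{*} \Gamma'$ that resynchronises the types with the queues after the reduction. Making this argument precise requires an auxiliary lemma stating that, for any SISO refinement $\WT \subttt \ACon{\pp}{\WT'}$ (resp.\ $\BCon{\pp}{\WT'}$), there is an environment path that consumes the $\AC$/$\BC$ prefix in finitely many steps; this lemma, together with the action-set side conditions $\actions{\WT} = \actions{\ACon{\pp}{\WT'}}$ ensuring no actions are ``dropped,'' closes the case. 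Everything else is a straightforward structural argument.
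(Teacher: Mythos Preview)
Your overall skeleton (induction on the reduction derivation, inversion through \rulename{t-sub}, building $\Gamma''$ from the tight inverted types, then appealing to \Cref{lem:subtyping-preserves-liveness} and \Cref{lem:move-preserves-liveness}) matches the paper exactly. But you have misidentified where the difficulty lies, and the ``main obstacle'' paragraph describes machinery that is not needed.

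Concretely: in \rulename{r-send}, inversion on $\vdash \procout{\pq}{\ell}{\e}{\PP'}:\T_\pp$ already yields $\tout{\pq}{\ell}{\ST}.\T' \subt \T_\pp^\Gamma$, and you set $\Gamma''(\pp)=(\tqueue_\pp,\tout{\pq}{\ell}{\ST}.\T')$. The performed action is therefore \emph{always} at the head of $\Gamma''(\pp)$, regardless of what $\T_\pp^\Gamma$ looks like; all the asynchronous reordering permitted by \rulename{ref-$\AC$}/\rulename{ref-$\BC$} is absorbed into the relation $\Gamma'' \subt \Gamma$ and discharged once and for all by \Cref{lem:subtyping-preserves-liveness}. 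A \emph{single} application of \rulename{e-send} (resp.\ \rulename{e-rcv}) then gives $\Gamma'$; the $\red^*$ in the statement is only there to cover the conditional cases where $\Gamma''=\Gamma'=\Gamma$. Your proposed auxiliary lemma about ``an environment path that consumes the $\AC$/$\BC$ prefix in finitely many steps'' is therefore unnecessary, and in fact such a path need not exist in $\Gamma$ at all---the whole point of the statement is that the reordering is reflected in $\Gamma''$, not performed by $\Gamma$.

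The genuine subtlety you glossed over is in \rulename{r-rcv}: that $\pq$'s queue type has $\tout{\pp}{\ell_k}{\S'}$ at the head is immediate from queue-typing inversion, but the premise $\S' \subs \S'_k$ of \rulename{e-rcv} does \emph{not} follow from inversion. The paper obtains it by contradiction from liveness of $\Gamma''$: if $\S' \not\subs \S'_k$, then \rulename{e-rcv} can never fire on $\pp$'s external choice, so no fair path of $\Gamma''$ ever satisfies clause \ref{item:liveness:recv} of \Cref{def:env-liveness}, contradicting that $\Gamma''$ is live. Your appeal to ``choosing a SI/SO refinement of $\Gamma(\pq)$ whose SISO tree places exactly the performed action in front'' does not address this point and is not how the case is closed.
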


\vspace{-3mm}%
\begin{restatable}[Type Safety and Progress]{corollary}{corollaryProgress}
  \label{thm:error-freedom}%
  Let\; $\Gamma \vdash \N$ %
  \;with $\Gamma$ live. %
  Then,\; %
  $\N \red^* \N'$ \;implies\; $\N' \neq \error$; %
  \;also, %
  either\, $\N'\equiv\pa\pp\inact \pc \pa\pp\emptyqueue$, \,or\, %
  $\exists \N''$ such that\, $\N' \red \N'' \neq \error$.
\end{restatable}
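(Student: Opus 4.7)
The plan is to derive both parts from Subject Reduction (\Cref{thm:SR}) together with an \emph{error exclusion lemma} stating: no session $\M$ typable under a live environment $\Gamma_0$ matches the left-hand side of any error rule in Table~\ref{tab:error}. Given this lemma, the first claim is immediate by induction on the length of $\N \reds \N'$: iterating \Cref{thm:SR} yields a live $\Gamma'$ with $\Gamma' \vdash \N'$, so the error lemma rules out $\N' = \error$. Preservation of liveness along the iterated reductions comes for free, since \Cref{thm:SR} already produces live intermediate environments (and \Cref{lem:subtyping-preserves-liveness} handles the $\Gamma'' \subt \Gamma$ step).

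The bulk of the work is the error exclusion lemma, which I would prove by a case split on the six error rules. Cases \rulename{err-eval} and \rulename{err-eval2} are purely sort-based: inversion of \rulename{t-cond} (resp.\ \rulename{t-out}) and a standard progress property for expressions (by induction on the derivation of $\Theta \vdash \e : \ST$) show that a well-sorted closed expression always evaluates to a value of the correct shape. Cases \rulename{err-mism} and \rulename{err-ophn} invert \rulename{t-sess} (modulo \rulename{t-sub}) at $\pp$ and $\pq$: the head of $\pq$'s message queue forces $\Gamma_0(\pq) \equiv (\tout\pp\ell\ST \cdot \tqueue, \T_\pq)$, and \ref{item:liveness:send} demands a fair path eventually firing $\redRecv\pp\pq\ell$; but the shape of $\pp$'s process (no matching input label, or no input from $\pq$ at all) means that \emph{no} fair extension of $\Gamma_0$ can ever enable such a reduction, contradicting liveness. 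Cases \rulename{err-strv} and \rulename{err-dlock} are symmetric, using \ref{item:liveness:recv}: inversion shows one or more participants stuck at an external choice with no queued message and no syntactic send available anywhere in the session, so one can exhibit a fair but non-live path that schedules all other enabled actions forever while the starving input remains unserved.

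For progress, let $\Gamma'$ be live with $\Gamma' \vdash \N'$ (obtained by iterating \Cref{thm:SR}). If every entry of $\Gamma'$ has shape $(\temptyqueue,\tend)$, inversion of \rulename{t-sess} plus the structural congruence rule $\pa\pp\inact \pc \pa\pp\EmptyQueue \pc \N \equiv \N$ and \rulename{t-$\inact$} together give $\N' \equiv \pa\pp\inact \pc \pa\pp\emptyqueue$. Otherwise some participant has a non-trivial queue type or non-$\tend$ behaviour; liveness guarantees an environment reduction $\Gamma' \redLabel{\alpha} \Gamma''$ (either a send via \ref{item:fairness:send}/\ref{item:liveness:recv}, or a receive via \ref{item:liveness:send}). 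Inverting the typing of $\N'$ at the participant(s) involved in $\alpha$, and unfolding recursion and conditionals as needed via \rulename{r-cond-T/F} and \rulename{r-struct}, produces an actual session reduction $\N' \red \N''$ matching $\alpha$; a further appeal to \Cref{thm:SR} and the error lemma yields $\N'' \neq \error$.

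The main obstacle is the error lemma for \rulename{err-strv} and \rulename{err-dlock}, where the challenge is to translate a \emph{local, syntactic} absence of a sending action in the session into a \emph{global} violation of liveness of $\Gamma_0$. This requires exhibiting a concrete fair, non-live path: one must verify that the chosen scheduler can always fire some other enabled action (to preserve the fairness conditions \ref{item:fairness:send} and \ref{item:fairness:recv} for all \emph{currently} enabled transitions) while the ``missing'' send never becomes available in any successor. Subsumption via \rulename{t-sub} complicates the inversion, since the witnessing environment for typing may be larger than $\Gamma_0$; \Cref{lem:subtyping-preserves-liveness} and the fact that $\subt$ only permutes/anticipates asynchronous actions (never invents new sends out of thin air) are the key ingredients to close this gap.
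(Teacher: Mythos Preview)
Your error-exclusion lemma is exactly the paper's Lemma~\ref{lem:error-subject-reduction}, and your case split mirrors its proof; for \rulename{err-ophn} and \rulename{err-strv} the paper makes the bridge from ``$\pp!\notin\actions{\PP_\pq}$'' to a liveness violation precise via a minimal-type lemma (every typable $\PP$ admits a type $\TT'$ with $\actions{\TT'}\subseteq\actions\PP$, Lemma~\ref{lem:minimal-type-of-process}), which is the formal content behind your ``$\subt$ never invents sends'' remark.

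Where you diverge is the progress half. You try to \emph{construct} a session step from an environment step $\Gamma'\redLabel{\alpha}\Gamma''$, whereas the paper argues by \emph{coverage of the error rules}: if $\N'\not\equiv\pa\pp\inact\pc\pa\pp\emptyqueue$ and $\N'$ has no non-error step, then one of \rulename{err-mism}, \rulename{err-dlock}, \rulename{err-eval}, \rulename{err-eval2} must fire, so $\N'\red\error$, contradicting the error-exclusion lemma already in hand. The paper's route is shorter and avoids a small gap in yours: having every entry of $\Gamma'$ equal to $(\temptyqueue,\tend)$ does \emph{not} force $\N'\equiv\pa\pp\inact\pc\pa\pp\emptyqueue$ (e.g.\ $\cond\e\inact\inact$ has type $\tend$ but is not congruent to $\inact$), and the session action actually fired need not literally match $\alpha$ because of asynchronous permutations allowed by \rulename{t-sub}. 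Both issues are fixable, but the coverage argument sidesteps them by case-splitting on $\N'$ directly rather than on the shape of $\Gamma'$.
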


Notably, %
since our errors (Table~\ref{tab:error}) include orphan messages, %
deadlocks, and starvation, %
Corollary~\ref{thm:error-freedom} implies %
\emph{session liveness}: %
a typed session will never deadlock, %
all its external choices will be eventually activated, %
all its queued messages will be eventually consumed.\footnote{%
  This assumes that a session is fairly scheduled,
  similarly to the notion of ``fair path'' in \Cref{def:env-path-fairness}.%
}%

\section{Preciseness of Asynchronous Multiparty Session Subtyping}%
\label{sec:op}
We now present our main result.
A subtyping relation $\subt$ is \emph{sound} if it satisfies %
the \citeN{Liskov:1994:BNS} substitution principle: %
if $\T \subt \T'$, %
then a process of type $\T'$ engaged in a well-typed session %
may be safely replaced with a process of type $\T$. %
The reversed implication is called \emph{completeness}: %
if it is always safe to replace
a process of type $\T'$ with a process of type $\T$,
then we should have $\T \subt \T'$. %
If a subtyping $\subt$ is both sound and complete,
then $\subt$ is \emph{precise}. %
This is formalised in \Cref{def:preciseness} below %
(where we use the contrapositive of the completeness implication).

\newcommand{\subGen}{\unlhd}%
\newcommand{\nsubGen}{\mathrel{\not\!\!\unlhd}}%
\begin{definition}[Preciseness]
 \label{def:preciseness}%
 Let $\subGen$ be a preorder over session types. We say that $\subGen$ is:
  \begin{enumerate}[label=(\arabic*),ref=\emph{(\arabic*)}]
    \item a \textbf{sound subtyping} if\; $\T \subGen \T'$ \;implies that, %
      for all $\pr \not\in \participant{\T'}, \M, \PP$,  the following holds:
      \begin{enumerate}
      \item%
        if $\big(\forall \Q: \vphantom{\Theta}\vdash \Q:\T' \; \implies \;\; \Gamma \vdash \pa\pr \Q \pc \pa\pr\EmptyQueue \pc \M$  \;for some live $\Gamma\big)$ then \\
        $\big(\vdash \PP:\T \; \implies \; (\pa\pr\PP \pc \pa\pr\EmptyQueue \pc \M \red^* \M'$ \,implies\, $\M'\neq \error) \big)$
      \end{enumerate}
    \item\label{item:completeness} a \textbf{complete subtyping} if $\T \nsubGen \T'$ implies that there are $\pr \not\in \participant{\T'},\M,\PP$ such that:
    \begin{enumerate}[nosep]
           \item  $\forall Q: \vphantom{\Theta}\vdash \Q:\T' \; \implies \;\; \Gamma \vdash \pa\pr \Q \pc \pa\pr\EmptyQueue \pc \M$  \;for some live $\Gamma$   
           \item  $\vdash \PP:\T$
           \item  $\pa\pr\PP \pc \pa\pr\EmptyQueue \pc \M \red^* \error$.
     \end{enumerate}
    \item\label{item:preciseness} a \textbf{precise subtyping} if it is both sound and complete.
  \end{enumerate}
\end{definition}

As customary, our subtyping relation is embedded in the type system via a subsumption rule, giving soundness as an immediate consequence of the subject reduction property.

\begin{theorem}[Soundness]
  \label{thm:sound}
  The asynchronous multiparty session subtyping $\subt$ is sound.
\end{theorem}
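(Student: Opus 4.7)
The plan is to leverage the subject reduction theorem (Theorem 4.3) and the type safety corollary (Corollary 4.4), using the subsumption rule \rulename{t-sub} as the bridge that allows our subtyping to be embedded within the type system. The argument reduces the soundness statement to an instance of error freedom of a well-typed session with a live typing environment.

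Concretely, I would proceed as follows. Assume $\T \subt \T'$, fix any $\pr \notin \participant{\T'}$, any session $\M$, and any process $\PP$ with $\vdash \PP : \T$. Suppose further that the quantified hypothesis of soundness holds: every $\Q$ with $\vdash \Q : \T'$ yields a well-typed live environment $\Gamma$ such that $\Gamma \vdash \pa\pr\Q \pc \pa\pr\EmptyQueue \pc \M$. The key observation is that, by rule \rulename{t-sub} applied to $\vdash \PP : \T$ and the assumption $\T \subt \T'$, we immediately obtain $\vdash \PP : \T'$. Instantiating the hypothesis with $\Q \coloneqq \PP$, we obtain some live $\Gamma$ with $\Gamma \vdash \pa\pr\PP \pc \pa\pr\EmptyQueue \pc \M$.

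From this point, soundness follows directly from the metatheory of the typing system. Suppose $\pa\pr\PP \pc \pa\pr\EmptyQueue \pc \M \red^* \M'$. By iterated application of Subject Reduction (Theorem 4.3), we obtain a sequence of live typing environments along the reduction trace, culminating in some live $\Gamma'$ with $\Gamma' \vdash \M'$. By Type Safety and Progress (Corollary 4.4), no session typable under a live environment can reduce to $\error$, so in particular $\M' \neq \error$, which is precisely the conclusion required by Definition 5.1(1).

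There is no substantial obstacle in this proof: the work has already been done in establishing the subsumption rule, subject reduction, and the type safety corollary, which in turn depend on the crucial properties that liveness is preserved by reductions (Proposition 4.1) and by subtyping (Lemma 4.2). The only subtle point is the direction of the subsumption step: we must ensure that having $\vdash \PP : \T$ allows us to re-type $\PP$ at $\T'$ (the supertype), which is exactly the shape of \rulename{t-sub}, and which justifies substituting $\PP$ for any $\Q$ of type $\T'$ in the ambient session $\M$. Once this is in place, the remainder is a straightforward instantiation of the already-proven safety result.
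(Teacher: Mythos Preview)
Your proposal is correct and follows essentially the same approach as the paper: use \rulename{t-sub} to retype $\PP$ at $\T'$, instantiate the hypothesis with $\PQ \coloneqq \PP$ to obtain a live $\Gamma$ typing the composed session, and then invoke Corollary~4.4 (Type Safety and Progress) to conclude $\M' \neq \error$. The only minor difference is that you unfold the appeal to iterated Subject Reduction explicitly, whereas the paper cites Corollary~4.4 directly (which already absorbs $\red^*$).
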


\begin{proof}
  Take any $\T, \T'$ such that $\T \subt \T'$, and $\pr, \M$ satisfying the following condition:
  \begin{align}
       &   \forall Q:\; \vphantom{\Theta}\vdash \Q:\T' \; \implies \;\; \Gamma \vdash \pa\pr \Q \pc \pa\pr\EmptyQueue \pc \M \;\text{ for some live $\Gamma$} \label{eqn:sound1}
  \end{align}
  If\; $\vdash\PP:\T$, we derive by \rulename{t-sub} that\; $\vdash\PP:\T'$ holds.  By \eqref{eqn:sound1},   $\Gamma \vdash \pa\pr \PP \pc \pa\pr\EmptyQueue \pc \M$  for some live $\Gamma.$
  Hence, by Corollary~\ref{thm:error-freedom},\; %
  $\pa\pr \PP \pc \pa\pr\EmptyQueue \pc \M \red^* \M'$ \;implies\; $\M'\neq \error$.
\end{proof}

The proof of completeness of $\subt$ is much more involved.
We show that $\subt$ satisfies %
item~\ref{item:completeness} of \Cref{def:preciseness} %
in 4 steps, that we develop in the next sections:

\begin{description}[itemsep=2mm]
   \item[{[Step~1]}] We define the \emph{negation $\nsubttt$ of the SISO trees refinement relation} by an inductive definition, thus getting a clear %
   characterisation of the complement $\not\subt$ of the subtyping relation, that is necessary for Step 2. In addition, for every pair $\T,\T'$ such that $\T\not\subt \T'$, we choose a pair $\UU,\V'$ satisfying $\UU\not\subt\V'$  and
$\ttree{\UU}\in \llbracket \ttree{\T}\rrbracket_\SO$ and $\ttree{\V'}\in \llbracket \ttree{\T'}\rrbracket_\SI.$
  \item[{[Step~2]}] We define for every $\UU$ a {\em characteristic process} $\CP{\UU}$. Moreover, we prove that if $\ttree{\UU}\in \llbracket \ttree{\T} \rrbracket\SO$ then we have  $\vdash \CP{\UU}: \T$. 
   \item[{[Step~3]}] For every  $\V'$ with $\ttree{\V'}\in\llbracket \ttree{\T'} \rrbracket_\SI,$ and for every participant $\pr\not\in\participant{\V'},$ we define a {\em characteristic session} $\M_{\pr,\V'}$,  which is typable if composed with a process $\PQ$ of type $\T'$:%
\[\forall \PQ:\; \vphantom{\Theta}\vdash \PQ: \T' \; \implies \; \Gamma \vdash \pa\pr\PQ \pc \pa\pr\EmptyQueue \pc \M_{\pr,\V'} \;\text{ for some live $\Gamma$}.\]
 
\item[{[Step~4]}]
   Finally, we show that for all
  $\UU, \V'$ such that  $\UU \not\subt \V'$, %
  the characteristic session $\M_{\pr,\V'}$ (Step~3) %
  reduces to error if composed with the characteristic process of $\UU$ %
  (Step~2):
\[\pa\pr\CP{\UU} \pc \pa\pr\EmptyQueue \pc \M_{\pr,\V'} \;\red^*\; \error.\]
 \end{description}

\noindent%
Hence, we prove the completeness of $\subt$ by showing that, %
for all $\T,\T'$ such that $\T \not\subt \T'$, %
we can find $\pr\not\in\participant{\T'}$, $\PP=\CP{\UU}$ (Steps 1,2), %
and $\M = \M_{\pr, \V'}$ (Step 3) %
satisfying Def.~\ref{def:preciseness}\ref{item:completeness} (Step 4). %
We now illustrate each step in more detail.

\subsection{Step 1: Subtyping Negation}
\label{sec:preciseness-step1}

\begin{table}[t!]
\noindent%
$
{\small
\begin{array}{@{}c@{}}
     \inferruleR[\rulename{n-out}]
  {\pp! \not\in \actions{\WT' }}{\tout\pp\ell\S.\WT \nsubttt \WT'}
\;\;
       \inferruleR[\rulename{n-inp}]
  {\pp? \not\in \actions{\WT' }}{\tin\pp\ell\S.\WT \nsubttt \WT'}
\;\;
       \inferruleR[\rulename{n-out-R}]
  {\pp! \not\in \actions{\WT}}{\WT \nsubttt \tout\pp\ell\S.\WT'}
\;\;
       \inferruleR[\rulename{n-inp-R}]
  {\pp? \not\in \actions{\WT}}{\WT \nsubttt \tin\pp\ell\S.\WT'}
\\[1.5mm]
\inferruleR[\rulename{n-inp-$\ell$}]
     {\ell\neq \ell'}{\tin\pp{\ell}{\ST}.\WT \nsubttt\tin\pp{\ell'}{\ST'}.\WT'}
\;\;
 \inferruleR[\rulename{n-inp-$\S$}]
     {\ST' \not\subs \ST}{\tin\pp{\ell}{\ST}.\WT\nsubttt\tin\pp{\ell}{\ST'}.\WT'}
 \;\;
 \inferruleR[\rulename{n-inp-$\WT$}]
   {\S'\subs\S\;\;\WT \nsubttt \WT'}{\tin\pp{\ell}{\ST}.\WT\nsubttt  \tin\pp{\ell}{\ST'}.\WT'}
 \\[1.5mm]
 \inferruleR[\rulename{n-${\AC}$-$\ell$}]
     {\ell\neq\ell'}{\tin\pp{\ell}{\ST}.\WT \nsubttt \ACon\pp{{\tin\pp{\ell'}{\ST'}.\WT'}}}
\;\;
 \inferruleR[\rulename{n-${\AC}$-$\S$}]
     {\S'\not\subs\S}{\tin\pp{\ell}{\ST}.\WT \nsubttt \ACon\pp{{\tin\pp{\ell}{\ST'}.\WT'}}}
\\[1.5mm]
\inferruleR[\rulename{n-${\AC}$-$\WT$}]
  {\S' \subs\S\;\;\WT\not\subttt \ACon\pp\WT'}{\tin\pp{\ell}{\ST}.\WT \nsubttt \ACon\pp{{\tin\pp{\ell}{\ST'}.\WT'}}}
\;\;
  \inferruleR[\rulename{n-i-o-1}]
{ $\ $}{\tin\pp{\ell}{\ST}.\WT \nsubttt \tout\pq{\ell'}{\ST'}.\WT'}
  \;\;
  \inferruleR[\rulename{n-i-o-2}]
 { $\ $}{\tin\pp{\ell}{\ST}.\WT \nsubttt \ACon\pp\tout\pq{\ell'}{\ST'}.\WT'}
\\[1.5mm]
\inferruleR[\rulename{n-out-$\ell$}]
   {\ell\neq \ell'}{\tout\pp{\ell}{\ST}.\WT\nsubttt\tout\pp{\ell'}{\ST'}.\WT'}
 \;\;
 \inferruleR[\rulename{n-out-$\S$}]
    {\ST \not\subs \ST'}{\tout\pp{\ell}{\ST}.\WT\nsubttt\tout\pp{\ell}{\ST'}.\WT'}
\;\;
 \inferruleR[\rulename{n-out-$\WT$}]
     {\S\subs\S'\;\;\WT \not\nsubttt \WT'}{\tout\pp{\ell}{\ST}.\WT\nsubttt\tout\pp{\ell}{\ST'}.\WT'}
 \\[1.5mm]%
  \inferruleR[\rulename{n-${\BC}$-$\ell$}]
      {\ell\neq\ell'}{\tout\pp{\ell}{\ST}.\WT \nsubttt \BCon\pp{{\tout\pp{\ell'}{\ST'}.\WT'}}}
  \;\;
\inferruleR[\rulename{n-${\BC}$-$\S$}]
  {\S\not\subs\S'}{\tout\pp{\ell}{\ST}.\WT \nsubttt \BCon\pp{{\tout\pp{\ell}{\ST'}.\WT'}}}
 \;\;%
   \inferruleR[\rulename{n-${\BC}$-$\WT$}]
  {\S\subs\S'\;\;\WT\not\subttt \BCon\pp\WT'}{\tout\pp{\ell}{\ST}.\WT \nsubttt \BCon\pp{{\tout\pp{\ell}{\ST'}.\WT'}}}
\end{array}}$
\smallskip%

\caption{\label{tab:negationW}The relation $\nsubttt$ between SISO trees.}
\vspace{-4mm}
\end{table}

In Table~\ref{tab:negationW} we \emph{inductively}
define the relation $\nsubttt$ %
over SISO trees: it contains all pairs of SISO trees that are \emph{not} related by $\subttt$, 
as stated in Lemma~\ref{lem:subttt-negation} below. %
This step is necessary because, in Step 2 (\Cref{sec:preciseness-step2}),
we will need the shape of the types related by $\nsubttt$
in order to generate some corresponding processes.

The first category of rules checks whether
two SISO trees have a direct syntactic mismatch: 
whether their sets of actions are disjoint ($\rulename{n-out},$ $\rulename{n-inp},$ $\rulename{n-out-R},$ $\rulename{n-inp-R})$; 
the label of the LHS  is  
not equal to the label of the RHS ($\rulename{n-inp-$\ell$}$, $\rulename{n-out-$\ell$}$);  
or 
matching labels are followed by mismatching sorts or continuations  ($\rulename{n-inp-$\S$}$, $\rulename{n-out-$\S$},$ $\rulename{n-inp-$\WT$}$, $\rulename{n-out-$\WT$}$).

The second category checks more subtle cases related to 
asynchronous permutations; 
rule $\rulename{n-$\AC$-$\ell$}$ checks a label mismatch 
when the input on the RHS is preceded by a finite number of inputs from other participant; similarly,
rules $\rulename{n-$\AC$-$\S$}$ and $\rulename{n-$\AC$-$\WT$}$
check mismatching sorts or continuations. 
Rules $\rulename{n-i-o-1}$ and $\rulename{n-i-o-2}$
formulate the cases such that 
the top prefix on the LHS is input and  the top sequence of  prefixes on
the RHS consists of a finite number of inputs from other participants
and/or outputs.
Finally, rules $\rulename{N-$\BC$-$\ell$}$,
$\rulename{n-$\BC$-$\S$}$ and $\rulename{n-$\BC$-$\WT$}$
check the cases of label mismatch, or 
matching labels followed by mismatching sorts, or continuations
of the two types with output prefixes targeting a same participant, %
where the RHS is prefixed by a finite number of outputs (to other participants) and/or inputs (to any participant).
\iftoggle{techreport}{%

More in detail, according to the rules in \Cref{tab:negationW},
two types are related by $\nsubttt$ if:
\begin{itemize}
\item
  their sets of actions are disjunctive ($\rulename{n-out},$ $\rulename{n-inp},$ $\rulename{n-out-R},$ $\rulename{n-inp-R})$;
\item
  their top prefixes are both inputs or both outputs, targeting the same participant --- but the label of the LHS  is  
  not equal to the label of the RHS ($\rulename{n-inp-$\ell$}$, $\rulename{n-out-$\ell$}$);  
 \item
  their top prefixes are both inputs or outputs, targeting the same participant, with matching labels followed by mismatching sorts or mismatching continuations  ($\rulename{n-inp-$\S$}$, $\rulename{n-out-$\S$},$ $\rulename{n-inp-$\WT$}$, $\rulename{n-out-$\WT$}$);
    \item
  they both have inputs from a same participant $\pp$, and such input on the RHS is preceded by a finite number of inputs from other participants --- but the input label from $\pp$ on the LHS is not equal to the label on the RHS ($\rulename{n-$\AC$-$\ell$}$);  
  \item
  they both have inputs from a same participant $\pp$, and such input on the RHS is preceded by a finite number of inputs from other participants --- but the inputs from $\pp$ have matching labels followed by mismatching sorts or mismatching continuations  ($\rulename{n-$\AC$-$\S$}$, $\rulename{n-$\AC$-$\WT$}$);
  \item
  the top prefix on the LHS is an input from a participant $\pp$, but the top-level prefixes on the RHS consists of a finite number of inputs ($0$ or more) from other participants, followed by an output  ($\rulename{n-i-o-1}$, $\rulename{n-i-o-2}$);
    \item
 they both have outputs toward a same participant $\pp$, and the output on the RHS is preceded by a finite number of outputs to other participants and/or inputs --- but the output label towards $\pp$ on the LHS  is  
  not equal to the label on the RHS ($\rulename{N-$\BC$-$\ell$}$);  
  \item
  they both have outputs toward a same participant $\pp$, and the output on the RHS is preceded by a finite number of outputs to other participants and/or inputs --- but the outputs toward $\pp$ have matching labels followed by mismatching sorts or mismatching continuations  ($\rulename{n-$\BC$-$\S$}$, $\rulename{n-$\BC$-$\WT$}$).
\end{itemize}
}{}%

\begin{restatable}{lemma}{lemNegationW}
  \label{lem:subttt-negation}%
  Take any pair of SISO trees $\WT$ and $\WT'$. Then,\, $\WT \subttt \WT'$
  is \emph{not} derivable \emph{if and only if}
  $\WT \nsubttt \WT'$ is derivable with the rules in \Cref{tab:negationW}.
\end{restatable}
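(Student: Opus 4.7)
The plan is to prove the two directions of the equivalence separately, using the standard interplay between inductive and coinductive reasoning.

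\textbf{Direction $(\Leftarrow)$: soundness of $\nsubttt$.} I proceed by structural induction on the (finite) derivation of $\WT \nsubttt \WT'$. For each rule in \Cref{tab:negationW}, I consider every rule of \Cref{def:ref} whose conclusion could potentially derive $\WT \subttt \WT'$ and argue it fails. The ``base'' rules (\rulename{n-out}, \rulename{n-inp}, \rulename{n-out-R}, \rulename{n-inp-R}, \rulename{n-inp-$\ell$}, \rulename{n-inp-$\S$}, \rulename{n-$\AC$-$\ell$}, \rulename{n-$\AC$-$\S$}, \rulename{n-i-o-1}, \rulename{n-i-o-2}, \rulename{n-out-$\ell$}, \rulename{n-out-$\S$}, \rulename{n-$\BC$-$\ell$}, \rulename{n-$\BC$-$\S$}) are ruled out by either a direct shape mismatch (an input opposite an output, or vice-versa via rules \rulename{n-i-o-1}/\rulename{n-i-o-2}), a failure of the sort condition $\subs$, a label mismatch, or — crucially for \rulename{n-out-R}/\rulename{n-inp-R} — a violation of the action-set side condition that \rulename{ref-$\AC$}/\rulename{ref-$\BC$} would require. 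The ``inductive'' rules (\rulename{n-inp-$\WT$}, \rulename{n-out-$\WT$}, \rulename{n-$\AC$-$\WT$}, \rulename{n-$\BC$-$\WT$}) invert the sole matching $\subttt$-rule and apply the induction hypothesis to the continuation to derive a contradiction.

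\textbf{Direction $(\Rightarrow)$: completeness of $\nsubttt$.} I argue contrapositively via coinduction. Define
\[
  R \;=\; \bigl\{(\WT_1,\WT_2)\;:\;\WT_1 \nsubttt \WT_2 \text{ is not derivable}\bigr\}.
\]
Since $\subttt$ is defined as the greatest fixed point of the rules of \Cref{def:ref}, it suffices to show that $R$ is backward-closed: for every $(\WT_1,\WT_2)\in R$ there is a rule of \Cref{def:ref} whose conclusion matches $(\WT_1,\WT_2)$, whose (single) continuation premise lies in $R$, and whose side conditions hold. I case-analyse on the head of $\WT_1$:
\begin{itemize}
\item If $\WT_1=\tend$, failure of \rulename{n-out-R}/\rulename{n-inp-R} on every possible head of $\WT_2$ forces $\WT_2=\tend$, so \rulename{ref-end} applies.
\item If $\WT_1=\tin\pp{\ell}{\S}.\WT$, failure of \rulename{n-inp} forces $\pp?\in\actions{\WT_2}$, and failures of \rulename{n-i-o-1}/\rulename{n-i-o-2} rule out $\WT_2$ being a finite sequence of non-$\pp$ inputs followed by an output. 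Hence $\WT_2$ reaches an input from $\pp$ at some finite depth and has the form $\ACon\pp{\tin\pp{\ell''}{\S''}.\WT''}$ (with $\AContext\pp$ possibly empty, i.e., $\WT_2$ itself starts with a $\pp$-input). Failures of \rulename{n-inp-$\ell$}, \rulename{n-inp-$\S$}, \rulename{n-inp-$\WT$}, \rulename{n-$\AC$-$\ell$}, \rulename{n-$\AC$-$\S$}, \rulename{n-$\AC$-$\WT$} then give label equality, the required sort subsumption $\S''\subs\S$, and $(\WT,\ACon\pp{\WT''})\in R$; failures of \rulename{n-inp-R} on each input in the outer prefix (reached by iterating) give the action-set side condition. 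Therefore \rulename{ref-in} (when $\AContext\pp$ is empty) or \rulename{ref-$\AC$} applies.
\item If $\WT_1$ starts with an output, the argument is symmetric, using the rules $\rulename{n-out}$, $\rulename{n-out-\ast}$, $\rulename{n-$\BC$-\ast}$ to conclude with \rulename{ref-out} or \rulename{ref-$\BC$}.
\end{itemize}

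\textbf{Main obstacle.} The delicate part is discharging the action-set equality $\actions{\WT}=\actions{\ACon\pp{\WT'}}$ (and the analogous one for $\BContext\pp$) required by \rulename{ref-$\AC$}/\rulename{ref-$\BC$}. The rules \rulename{n-inp-R}/\rulename{n-out-R} only compare top-level actions with the full action set of the opponent, so the ``$\actions{\WT}\supseteq\ldots$'' inclusion needs a separate justification; I expect to handle it via an auxiliary lemma showing that if any action in the prefix $\AContext\pp$ (resp.~$\BContext\pp$) is missing from the opposite side, then iterated application of \rulename{n-inp-R}/\rulename{n-out-R} through the corresponding \rulename{n-inp-$\WT$}/\rulename{n-out-$\WT$}/\rulename{n-$\AC$-$\WT$}/\rulename{n-$\BC$-$\WT$} congruence rules yields a derivation of $\nsubttt$, contradicting membership in $R$. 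The \rulename{ref-$\BC$} side is more intricate than the \rulename{ref-$\AC$} side because $\BContext\pp$ mixes inputs from arbitrary participants with outputs to non-$\pp$ participants, producing more sub-cases both to rule out and to combine.
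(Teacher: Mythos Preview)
Your overall architecture is sound and genuinely different from the paper's. The paper does \emph{not} run a direct coinduction on the complement of $\nsubttt$; instead it introduces a deterministic ``rule function'' $\Phi$ that, given a judgement ``$\WT_1\subttt\WT_2$'', returns either $\true$, $\false$, or the unique coinductive premise. A \emph{failing derivation} is then a finite $\Phi$-sequence ending in $\false$, and the two directions of the lemma are obtained by converting failing derivations to $\nsubttt$-derivations (by induction on the length) and back. Your coinductive presentation is the more standard route and avoids the bookkeeping of $\Phi$-sequences; the paper's approach in exchange makes the ``at most one rule applies'' determinism of \Cref{def:ref} explicit and gives a concrete finite witness whenever $\subttt$ fails.

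Where your proposal has a real gap is precisely in the auxiliary lemma you flag. What you need, in order to discharge the side condition of \rulename{ref-$\AC$}, is the full implication ``$\actions{\WT}\neq\actions{\ACon\pp{\WT'}}$ (with $\ST'\subs\ST$) $\Rightarrow$ $\tin\pp\ell\ST.\WT \nsubttt \ACon\pp{\tin\pp\ell{\ST'}.\WT'}$'' (and the $\BContext\pp$ analogue). Your stated lemma --- an action of the \emph{prefix} $\AContext\pp$ missing from the opposite side --- covers only a sliver of the ways the action-sets can differ: the mismatch may lie deep inside $\WT'$, or on the left-hand side $\WT$, and your ``iterated \rulename{n-inp-R}/\rulename{n-out-R} through congruence'' argument does not reach those cases as written. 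The paper closes exactly this gap by introducing two \emph{auxiliary} axioms \rulename{n-$\AC$-act} and \rulename{n-$\BC$-act} (whose premise is the full action-set inequality) and then proving them admissible from \Cref{tab:negationW}: if an action first appears at depth $n$ on one side but never on the other, then after at most $n{-}1$ uses of \rulename{n-inp-$\WT$}/\rulename{n-out-$\WT$}/\rulename{n-$\AC$-$\WT$}/\rulename{n-$\BC$-$\WT$} one hits \rulename{n-out}/\rulename{n-inp} or \rulename{n-out-R}/\rulename{n-inp-R}. If you strengthen your auxiliary lemma to this general form (arbitrary action-set mismatch, both inclusions), your coinductive proof goes through; as currently phrased it does not.
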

\begin{proof}
  We adopt an approach inspired by \citeN{BHLN12} and \citeN{BHLN17}.%

  For the ``$\Rightarrow$'' direction of the statement,
  assume that $\WT \subttt \WT'$ is \emph{not} derivable:
  this means that, if we attempt to build a derivation
  by applying the rules in \Cref{tab:ref},
  starting with ``$\WT \subttt \WT'$'' and moving upwards,
  we obtain a \emph{failing derivation} 
  that, after a finite number $n$ of rule applications,
  reaches a (wrong) judgement ``$\WT_n \subttt \WT'_n$''
  on which no rule of \Cref{tab:ref} can be further applied.
  Then, by induction on $n$, we transform such a failing derivation
  into an \emph{actual} derivation based on the rules for $\nsubttt$
  in \Cref{tab:negationW}, which proves\, $\WT \nsubttt \WT'$.

  For the ``$\Leftarrow$'' direction,
  assume that $\WT \nsubttt \WT'$ holds, by the rules in \Cref{tab:negationW}:
  from its derivation, we construct a failing derivation
  that starts with ``$\WT \subttt \WT'$''
  and is based on the rules of \Cref{tab:ref}
  --- which implies that $\WT \subttt \WT'$ is \emph{not} derivable.
  \iftoggle{techreport}{%
    \noindent%
    \emph{(Full proof in \Cref{sec:subttt-negation})}
  }{}%
\end{proof}

It is immediate from \Cref{def:subtyping} and \Cref{lem:subttt-negation}
that %
$\T$ is \emph{not} a subtype of $\T'$, %
written $\T\not\subt\T'$,  if and only if:
\begin{equation}
  \label{eq:not-subt-def}%
  \exists \UT\in \llbracket\ttree{\T} \rrbracket_\SO \;\; \exists \VT'\in \llbracket \ttree{\T'}\rrbracket_\SI\;\;  \forall\WT\in \llbracket\UT\rrbracket_\SI \;\; \forall \WT'\in \llbracket\VT'\rrbracket_\SO\;\; \WT\not\subttt\WT'
\end{equation}
Moreover, we prove that %
whenever $\T \not\subt \T'$, %
we can find \emph{regular, syntax-derived} SO/SI trees %
usable as the witnesses $\UT,\VT'$ in \eqref{eq:not-subt-def}. %
\iftoggle{techreport}{
  \emph{(See Appendix, page~\pageref{sec:regular_representatives_appendix}.)} %
}{}%
Thus, from this result and by \eqref{eq:not-subt-def}, %
$\T \not\subt \T'$ 
implies:
\begin{equation}
  \label{eq:not-subt-def-regular}%
  \exists \UU, \V':\; %
  \ttree{\UU}\!\in\! \llbracket\ttree{\T} \rrbracket_\SO \;\; \ttree{\V'}\!\in\! \llbracket \ttree{\T'}\rrbracket_\SI\;\;  \forall\WT\!\in\! \llbracket\ttree{\UU}\rrbracket_\SI \;\; \forall \WT'\!\in\! \llbracket\ttree{\V'}\rrbracket_\SO\;\; \WT\not\subttt\WT'
\end{equation}

\begin{example}
  \label{ex:not-sub-intro}%
  Consider the example in Section~\ref{sec:intro}, %
  and its types $\T'$ and $\T$ in Example~\ref{ex:sub-intro}:%
  \introTypes%
  We have seen that $\T' \subt \T$ holds (Example~\ref{ex:sub-intro}), %
  and thus, by subsumption, %
  our type system allows to use the optimised process $\PP'_\pr$ %
  in place of  $\PP_\pr$ (Example~\ref{ex:typing-intro}). 
  We now show that the inverse relation does \emph{not} hold, %
  \ie, $\T \not\subt \T'$, hence the inverse process replacement is disallowed.
  Take, \eg, $\UU,\V'$ as follows, %
  noticing that $\ttree{\UU} \!\in\! \llbracket \ttree{\T} \rrbracket_\SO$ %
  and $\ttree{\V'} \!\in\! \llbracket \ttree{\T'} \rrbracket_\SI$:
\[\small%
    \UU = \texternal\pp ?\left\{
      \begin{array}{@{}l@{}}
        \msgLabel{success}(\tint).\pq!\msgLabel{cont}(\tint).\tend\\
        \msgLabel{error}(\tbool).\pq!\msgLabel{stop}(\tunit).\tend
      \end{array} \right.
    \qquad%
    \V' = \tinternal\pq !\left\{
      \begin{array}{@{}l@{}}
        \msgLabel{cont}(\tint).\pp? \msgLabel{success}(\tint).\tend\\
        \msgLabel{stop}(\tunit).\pp?  \msgLabel{error}(\tbool).\tend
      \end{array} \right.
\]
  \noindent%
  For all $\WT\in \llbracket \ttree{\UU} \rrbracket_\SI=\left\{\pp?\msgLabel{success}(\tint).\pq!\msgLabel{cont}(\tint).\tend\,,\, \pp?\msgLabel{error}(\tbool).\pq!\msgLabel{stop}(\tunit).\tend\right\}$ %
and all $\WT'\in \llbracket \ttree{\V'} \rrbracket_\SO = \left\{\pq! \msgLabel{cont}(\tint).\pp? \msgLabel{success}(\tint).\tend\,,\, \pq!\msgLabel{stop}(\tunit).\pp?  \msgLabel{error}(\tbool).\tend\right\}$ %
 we get by \rulename{n-i-o-1} that
  $\WT\not\subttt\WT'$.  Therefore, we conclude $\T \not\subt \T'$.
 \end{example}

\subsection{Step 2: Characteristic Processes}
\label{sec:preciseness-step2}

For any SO type $\UU$, %
we define a characteristic process $\CP\UU$ %
(Def.~\ref{def:characteristic-process}): %
intuitively, it is a process constructed to %
  communicate as prescribed by $\UU$, %
  and to be typable by $\UU$.%

\begin{definition}
  \label{def:characteristic-process}
  The characteristic process $\CP{\UU}$ of type $\UU$ is defined inductively as follows:%
\[
    \begin{array}{@{}c@{}}
        \CP{\tend} =  \inact \qquad   \CP{\ty} = X_\ty \qquad \CP{\mu\ty.\UU} = \mu X_\ty.\CP{\UU}    \qquad  \CP{\pp!{\ell}(\ST).\UU}=  \procout \pp {\ell}{\valt{\ST}}{\CP{\UU}}     \\
        \CP{ \texternal_{i\in I} \pp?\ell_i(\ST_i).\UU_i } \;=\;  \sum_{i\in I} \procin  \pp {\ell_i(\x_i)}\cond{\exprt{\x_i}{\ST_i}}{\CP{\UU_i}}{\CP{\UU_i}}
      \\[1mm]%
      \begin{array}{@{}l@{}}
            \text{where: }%
            \\[1mm]%
        \valt{\tnat}=1 \quad \valt{\tint}=-1 \quad \valt{\tbool}=\true \quad   \changeNoMargin{\valt{\tunit}=() 
        \quad \exprt{\x}{\tunit}= (\x \approx ())}
        \\%
        \exprt{\x}{\tbool} = (\neg\x) \quad %
        \exprt{\x}{\tnat}=( \fsucc\x > \zero) \quad %
        \exprt{\x}{\tint}= (\fneg\x  > \zero) 
      \end{array}
   \end{array}
\]
\end{definition}

By \Cref{def:characteristic-process}, %
for every output in $\UU$, the characteristic $\CP{\UU}$ sends a value $\valt{\ST}$ %
of the expected sort $\ST$; %
and for every external choice in $\UU$, %
$\CP{\UU}$ %
performs a branching, and uses any received value $x_i$ of sort $\ST_i$
in a boolean expression $\exprt{x_i}{\ST_i}$:
the expression will cause an $\error$ if the value of $x_i$
is not of sort $\ST_i$.

Crucially, %
for all $\T$ and $\UU$ such that %
$\ttree{\UU} \!\in\! \llbracket \ttree{\T} \rrbracket_\SO$ %
(\eg, from \eqref{eq:not-subt-def-regular} above), %
we have $\UU \subt \T$: %
therefore, $\CP{\UU}$ is also typable by $\T$, %
as per \Cref{cp} below.%
\iftoggle{techreport}{%
\emph{ (Proof: \Cref{subsec:app:preciseness})}
}{}%

\begin{restatable}{proposition}{propCP}
\label{cp}
For all closed types $\T$ and $\UU$, if\; $\ttree{\UU}\!\in\!\llbracket \ttree{\T} \rrbracket_\SO$ \;then\; $\vdash \CP{\UU}:\T$.
\end{restatable}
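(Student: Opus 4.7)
The plan is to derive $\vdash \CP{\UU}:\T$ in two layers: first, to establish that the characteristic process is typable by its own originating type, $\vdash \CP{\UU}:\UU$; second, to show $\UU \subt \T$ as a consequence of the hypothesis $\ttree{\UU} \in \llbracket \ttree{\T}\rrbracket_\SO$; finally, the subsumption rule \rulename{t-sub} combines the two to yield $\vdash \CP{\UU}:\T$.

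For the first layer I would prove the strengthened statement $\Theta \vdash \CP{\UU}:\UU$, where $\Theta$ assigns $X_\ty:\ty$ for every free recursion variable $\ty$ of $\UU$, by structural induction on $\UU$. Each case follows directly from the characteristic-process definition matched against a rule of \Cref{figure:typesystem}: the $\inact$, variable, and recursion cases invoke \rulename{t-$\inact$}, \rulename{t-var}, and \rulename{t-rec} respectively (the latter relying on the equi-recursive nature of the system to identify $\ty$ with $\mu\ty.\UU$ where necessary); the output case applies \rulename{t-out}, using the elementary fact $\vdash \valt{\ST}:\ST$ inspectable from the definition of $\valt{\cdot}$; and the external-choice case combines \rulename{t-ext} with \rulename{t-cond}, additionally requiring $\x_i{:}\ST_i \vdash \exprt{\x_i}{\ST_i}:\tbool$, which follows immediately by inspection of each clause of $\exprt{\cdot}{\cdot}$ and the expression-typing rules.

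For the second layer, note that $\UU$ being SO gives $\llbracket\ttree{\UU}\rrbracket_\SO = \{\ttree{\UU}\}$, so unfolding \Cref{def:subtyping} reduces $\UU \subt \T$ to the following claim: for every $\VT' \in \llbracket \ttree{\T}\rrbracket_\SI$ there exist $\WT \in \llbracket \ttree{\UU} \rrbracket_\SI$ and $\WT' \in \llbracket \VT' \rrbracket_\SO$ with $\WT \subttt \WT'$. The key observation is that the two decompositions commute: $\ttree{\UU}$ is the result of fixing one branch at each selection of $\ttree{\T}$, while $\VT'$ is the result of fixing one branch at each branching of $\ttree{\T}$; these choices are made at disjoint kinds of nodes and therefore do not interfere. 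One can thus construct a single SISO tree $\WT$ that lies simultaneously in $\llbracket \ttree{\UU} \rrbracket_\SI$ (by following the branching-choices recorded in $\VT'$) and in $\llbracket \VT' \rrbracket_\SO$ (by following the selection-choices recorded in $\ttree{\UU}$), so that $\WT = \WT'$ on the nose; reflexivity of $\subttt$ (\Cref{lem:transitivityW}) then closes the goal.

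The main obstacle is formalising the commutation of $\llbracket \cdot \rrbracket_\SO$ and $\llbracket \cdot \rrbracket_\SI$ rigorously in the coinductive setting: because session trees may be infinite through recursion and the decomposition operators yield sets of trees rather than deterministic projections, the common witness $\WT$ must be built as the coinductive limit of a simultaneous traversal of $\ttree{\UU}$ and $\VT'$, verifying at every selection or branching node that the respective choice-sets are compatible and produce a well-defined SISO path. Once this commutation lemma is in place, the two pieces assemble without further work: $\vdash \CP{\UU}:\UU$ combined with $\UU \subt \T$ yields $\vdash \CP{\UU}:\T$ via a single application of \rulename{t-sub}.
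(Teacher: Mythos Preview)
Your two-layer strategy is exactly the paper's: it proves $\vdash \CP{\UU}:\UU$ (Lemma~\ref{lem:cp}), then $\UU \subt \T$ from $\ttree{\UU}\in\llbracket\ttree{\T}\rrbracket_\SO$ (Lemma~\ref{lem:sub:properties}(i)), and concludes by \rulename{t-sub}. Two small remarks: for the recursion case of the first layer, ``identifying $\ty$ with $\mu\ty.\UU$'' is not literally an equi-recursive congruence---the paper instead strengthens the induction hypothesis to $\Theta\vdash\CP{\UU}:\ttree{\UU\sigma}$ for an arbitrary closing substitution $\sigma$ matching $\Theta$, together with a substitution lemma (Lemma~\ref{lem:tvs}) that lets one rename the assumption $X_\ty{:}\UT_1$ to $X_\ty{:}\UT_2$; and your commutation-of-decompositions argument for the second layer is precisely Lemma~\ref{lem:for-reflexivity-of-subtyping}, whose coinductive proof is as straightforward as you anticipate.
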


\subsection{Step 3: Characteristic Session}

The next step to prove  completeness is to define for each session type $\V'$ and participant $\pr\not\in \participant{\V'}$ a \emph{characteristic session} $\M_{\pr, \ts{V'}}$, that is well typed (with a live typing environment) %
when composed with participant $\pr$ %
associated with a process of type $\V'$ and empty queue.  

For a SI type $\V'$ and $\pr\not\in\participant{\V'}=\{\pp_1,\ldots,\pp_m\}$, %
we define $m$ \emph{characteristic SO session types} %
where participants $\pp_1,\ldots,\pp_m$ %
are engaged in a live multiparty interaction with $\pr$, %
and with each other. %
\Cref{def:characteristic-types} %
ensures that after each communication between $\pr$ and some %
$\pp\!\in\!\participant{\V'}$, %
there is a cyclic sequence of communications %
starting with $\pp$, %
involving \emph{all} other $\pq \!\in\! \participant{\V'}$, %
and ending with $\pp$ --- %
with each participant acting both as receiver, and as sender.

\begin{definition}
\label{def:characteristic-types}
Let  $\V'$ be  a SI session type and %
$\pr \!\not\in\! \participant{\V'} \!=\! \{\pp_1,{...},\pp_m\}$. %
For every $k \!\in\! \{1,{...},m\}$, %
if $m \!\ge\! 2$ %
we define a \emph{characteristic SO session type} $\cyclic{\V'}{\pp_k}$ %
as follows:
\[
\small
\begin{array}{@{}l@{\;}c@{\;}l@{}}
     \cyclic{\tend}{\pp_k}&=&\tend \\
     \cyclic{\ty}{\pp_k}&=&\ty \\
     \cyclic{\mu\ty.\V_1''}{\pp_k}&=&\mu\ty. \cyclic{\V_1''}{\pp_k}\\
     \cyclic{\tin{\pp_k}{\ell}{\ST}.\V'}{\pp_k} &= & \tout\pr{\ell}{\ST}.\hlight{$\tout{\pp_{k+1}}{\ell}{\tbool}.\tin{\pp_{k-1}}{\ell}{\tbool}.$}
                                    \cyclic{\V'}{\pp_k} \\
     \cyclic{\tin\pq{\ell}{\ST}.\V'}{\pp_k} &= & \tin{\pp_{k-1}}{\ell}{\tbool}.\tout{\pp_{k+1}}{\ell}{\tbool}.
                   \cyclic{\V'}{\pp_k} \;\;\qquad\qquad\hlight{(if $\pq \!\neq\! \pp_k$)}\\
     \cyclic{\tinternal_{j\in J} \tout{\pp_k}{\ell_j}{\ST_j}.\V'_j}{\pp_k} & = & \texternal_{j\in J} \tin\pr{\ell_j}{\ST_j}.\hlight{$\tout{\pp_{k+1}}{\ell_j}{\tbool}.\tin{\pp_{k-1}}{\ell_j}{\tbool}.$}\cyclic{\V'_j}{\pp_{k}}\\
     \cyclic{ \tinternal_{j\in J} \tout\pq{\ell_j}{\ST_j}.\V'_j}{\pp_k}  & = &\texternal_{j\in J} \tin{\pp_{k-1}}{\ell_j}{\tbool}.\tout{\pp_{k+1}}{\ell_j}{\tbool}.
                                                                                                              \cyclic{\V'_j}{\pp_k} \quad\hlight{(if $\pq \!\neq\! \pp_k$)}                                                                                                           
\end{array}
\]
\noindent%
If $m \!=\! 1$ (\ie, if there is only one participant in $\V'$) %
we define $\cyclic{\V'}{\pp_1}$ as above, %
but we omit the \hlight{(highlighted) cyclic communications}, %
and the cases with \hlight{$\pq \!\neq\! \pp_k$} do not apply.
\end{definition}

\begin{example}[Characteristic session types]
  Consider the following SI type:
\[
    \V' \;=\; \mu\ty.%
    \tinternal{}{\big\{%
        \tout\pq{\ell_2}{\tnat}.\tin\pp{\ell_1}{\tnat}.\ty\,,\,%
        \tout\pq{\ell_3}{\tnat}. \tin\pp{\ell_4}{\tnat}.\ty%
    \big\}}%
\]
  \noindent%
  Let $\pr \!\not\in \participant{\V'}$. %
  The characteristic session types for participants %
  $\pp,\pq \!\in\! \participant{\V'}$ are: %
\[
    \begin{array}{@{}r@{\;\,}c@{\,\;}l@{}}
      \cyclic{\V'}{\pp} &=&%
     \mu\ty.\texternal{}{\left\{
      \begin{array}{@{}l@{}}
         \tin\pq{\ell_2}{\tbool}.\tout{\pq}{\ell_2}{\tbool}.\tout\pr{\ell_1}{\tnat}.\tout\pq{\ell_1}{\tbool}.\tin\pq{\ell_1}{\tbool}.\ty \\
         \tin\pq{\ell_3}{\tbool}.\tout{\pq}{\ell_3}{\tbool}.\tout\pr{\ell_4}{\tnat}.\tout\pq{\ell_4}{\tbool}.\tin\pq{\ell_4}{\tbool}.\ty
     \end{array}\right.
 }\\[3mm]%
     \cyclic{\V'}{\pq} &=&%
      \mu\ty.\texternal{}{\left\{
           \begin{array}{@{}l@{}}
         \tin\pr{\ell_2}{\tnat}.\tout{\pp}{\ell_2}{\tbool}.\tin{\pp}{\ell_2}{\tbool}.\tin\pp{\ell_1}{\tbool}.\tout\pp{\ell_1}{\tbool}.\ty \\
          \tin\pr{\ell_3}{\tnat}.\tout{\pp}{\ell_3}{\tbool}.\tin{\pp}{\ell_3}{\tbool}.\tin\pp{\ell_4}{\tbool}.\tout\pp{\ell_4}{\tbool}.\ty
          \end{array}\right.}
   \end{array}
\]
  \noindent%
  Note that if $\pr$ follows type $\V'$, %
  then it must select and send to $\pq$ one message between $\ell_2$ %
  and $\ell_3$; %
  correspondingly, the characteristic session type for $\pq$ %
  receives the message (with a branching), %
  and propagates it to $\pp$, who sends it back to $\pq$ %
  (cyclic communication). %
  Then, $\pr$ waits for a message from $\pp$ %
  (either $\ell_1$ or $\ell_4$, depending on the previous selection): %
  correspondingly, the characteristic session type for
  $\pp$ will send such a message, %
  and also propagate it to $\pq$ with a cyclic communication.%
  \hfill$\blacksquare$%
\end{example}

Given an SI type $\V'$, %
we can use Def.~\ref{def:characteristic-types} %
to construct the following typing environment:
\begin{align}
  \label{eq:characteristic-gamma}%
  &\Gamma \,=\, \{\pr{:}\,(\temptyqueue, \V')\} \cup \left\{\pp{:}\,(\temptyqueue, \UU_{\pp}) \;\middle|\; \pp\!\in\!\participant{\V'} \right\}%
 \quad%
    \text{where\, $\forall \pp \!\in\! \participant{\V'}{:}\, \UU_{\pp}= \cyclic{\V'}{\pp}$}%
\end{align}
\noindent%
\ie, we compose $\V'$ %
with the characteristic session types of all its participants. %
The cyclic communications of Def.~\ref{def:characteristic-types} %
ensure that $\Gamma$ is live. %
We can use $\Gamma$ to type the composition of %
a process for $\pr$, of type $\V'$, %
together with the characteristic processes %
of the characteristic session types of each participants in $\V'$: %
we call such processes the \emph{characteristic session} $\M_{\pr, \V'}$. %
This is formalised in Def.~\ref{def:characteristic-session} %
and Prop.~\ref{prop:live} below.%

\begin{definition}
  \label{def:characteristic-session}%
  For any SI type $\V'$ and $\pr\!\not\in\!\participant{\V'}$, %
  we define the \emph{characteristic session}:

  \smallskip\centerline{\(%
    \displaystyle%
    \M_{\pr, \V'} \;=\;%
    \prod_{\pp \in \participant{\V'}} \Big(\pa{\pp} \CP{\UU_{\pp}} \pc \pa{\pp} \EmptyQueue\Big)
    \qquad%
    \text{where\, $\forall \pp \!\in\! \participant{\V'}:\; \UU_{\pp}= \cyclic{\V'}{\pp}$}%
\)}%
\vspace{-3mm}%
\end{definition}

\begin{restatable}{proposition}{propCharacteristicSession}
\label{prop:live}
  Let $\V'$ be a SI type and $\pr\!\not\in\!\participant{\V'}$. %
  Let $Q$ be a process such that $\vphantom{\Theta} \vdash \Q:\V'$. %
  Then, there is a live typing environment $\Gamma$ %
  (see \eqref{eq:characteristic-gamma}) %
  such that\; %
  $\Gamma \vdash \pa\pr \Q \pc \pa\pr\EmptyQueue \pc \M_{\pr, \V'}$.
\end{restatable}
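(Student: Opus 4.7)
\medskip

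The plan is to construct the typing environment $\Gamma$ as in \eqref{eq:characteristic-gamma}, that is $\Gamma = \{\pr{:}(\temptyqueue,\V')\} \cup \{\pp{:}(\temptyqueue,\cyclic{\V'}{\pp}) \mid \pp \in \participant{\V'}\}$, and to establish the claim in two independent parts: (i)~$\Gamma$ types the composed session, and (ii)~$\Gamma$ is live.

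For the typing part, I would first check that each $\cyclic{\V'}{\pp}$ is indeed a well-formed SO session type, by induction on the structure of $\V'$ --- every clause in \Cref{def:characteristic-types} inserts only singleton outputs (towards $\pr$ or towards the cyclic neighbours $\pp_{k\pm 1}$) and preserves branchings, so no new selection with $|J|>1$ is introduced. Then, by \Cref{cp} applied with $\T = \UU = \cyclic{\V'}{\pp}$ (which trivially satisfies $\ttree{\UU}\in\llbracket\ttree{\T}\rrbracket_\SO$ by reflexivity), we get $\vdash \CP{\UU_\pp}:\UU_\pp$ for every $\pp \in \participant{\V'}$. Combining this with the hypothesis $\vdash Q:\V'$ and $\vdash \emptyqueue:\temptyqueue$, and applying rule \rulename{t-sess} to the parallel composition, yields $\Gamma \vdash \pa\pr Q \pc \pa\pr\emptyqueue \pc \M_{\pr,\V'}$ as required.

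For liveness, I would proceed by examining the possible reductions of $\Gamma$ and showing, by a coinductive argument on the structure of $\V'$ (or equivalently by induction on the depth of unfoldings needed to reach any particular action), that after any finite sequence of reductions, every pending action is eventually performed along any fair path. The key observation is the structure imposed by \Cref{def:characteristic-types}: whenever $\pr$ fires a communication with some $\pp_k$ (either a selection received by $\pp_k$, or a branching sent from $\pp_k$), the matching characteristic type for $\pp_k$ triggers a cyclic sequence $\pp_k \to \pp_{k+1} \to \cdots \to \pp_{k-1} \to \pp_k$ in which every other $\pq \in \participant{\V'}$ must both receive and send a message with the same label. Under fairness, each of these pending sends (\ref{item:fairness:send}) and receives (\ref{item:fairness:recv}) eventually fires, so all queued messages are drained (\ref{item:liveness:send}) and all waiting external choices are unblocked (\ref{item:liveness:recv}) before the next interaction with $\pr$ begins.

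The main obstacle will be the liveness argument, because fair paths can interleave reductions from the entire $\M_{\pr,\V'}$ in many orders, and one must verify that no interleaving of fair choices can leave a pending message or branching un-serviced. I would tame this by a careful invariant, stating that at every point along a fair path the outstanding (unmatched) queue contents and pending branchings form a collection of \emph{prefixes} of the fixed cyclic patterns from \Cref{def:characteristic-types}; the fairness conditions \ref{item:fairness:send} and \ref{item:fairness:recv} then force each such prefix to be completed before infinitely many other reductions can occur. A small separate case-analysis handles the degenerate case $m=1$ where the cyclic propagation is absent but only a single non-$\pr$ participant is present, so liveness reduces to a two-party alternation between $\pr$ and $\pp_1$.
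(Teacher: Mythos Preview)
Your proposal is correct and follows the same approach as the paper, which simply states that the result ``follows directly from the construction of the characteristic session types and the definition of the live typing environments.'' Your plan is a faithful expansion of that one-line justification: the typing part via \Cref{cp} and \rulename{t-sess}, and the liveness part via the cyclic-communication structure of \Cref{def:characteristic-types}, are exactly the ingredients the paper's construction is designed to provide, and your invariant on outstanding messages/branchings being prefixes of the cyclic pattern is a reasonable way to make the ``directly'' precise.
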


Crucially, %
for all $\T'$ and $\V'$ such that %
$\ttree{\V'} \!\in\! \llbracket \ttree{\T'} \rrbracket_\SI$ %
(\eg, from \eqref{eq:not-subt-def-regular} above), %
we have $\T' \subt \V'$. %
Thus, by subsumption, %
$\M_{\pr,\V'}$ is also typable with a process of type $\T'$ %
(\Cref{prop:live-ti}).

\begin{restatable}{proposition}{propCharacteristicSessionTi}
\label{prop:live-ti}
  Take any $\T'$, $\pr\!\not\in\!\participant{\T'}$, SI type $\V'$
  such that $\ttree{\V'} \!\in\! \llbracket \ttree{\T'} \rrbracket_\SI$, %
  and $Q$ such that $\vphantom{\Theta} \vdash \Q:\T'$. %
  Then, there is a live $\Gamma$ %
  (see \eqref{eq:characteristic-gamma}) %
  such that\; %
  $\Gamma \vdash \pa\pr \Q \pc \pa\pr\EmptyQueue \pc \M_{\pr, \V'}$.
\end{restatable}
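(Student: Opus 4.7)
The plan is to reduce this proposition to the previous \Cref{prop:live} via the subsumption rule \rulename{t-sub}, after establishing that $\T' \subt \V'$ under the hypothesis $\ttree{\V'} \in \llbracket \ttree{\T'} \rrbracket_\SI$.

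First, I would prove the auxiliary claim that \emph{whenever $\V'$ is an SI type with $\ttree{\V'} \in \llbracket \ttree{\T'} \rrbracket_\SI$, we have $\T' \subt \V'$.} Unpacking \Cref{def:subtyping} and using that $\V'$ is SI (so $\llbracket \ttree{\V'} \rrbracket_\SI = \{\ttree{\V'}\}$), this reduces to showing: for every $\UT \in \llbracket \ttree{\T'} \rrbracket_\SO$, there exist $\WT \in \llbracket \UT \rrbracket_\SI$ and $\WT' \in \llbracket \ttree{\V'} \rrbracket_\SO$ with $\WT \subttt \WT'$.

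The key technical step is a commutativity-style property of the SI/SO decompositions: both $\llbracket \UT \rrbracket_\SI$ and $\llbracket \ttree{\V'} \rrbracket_\SO$ can be viewed as sets of SISO paths through $\ttree{\T'}$. The SO decomposition $\UT$ has fixed one continuation at every selection of $\ttree{\T'}$ while leaving branchings intact, and dually $\ttree{\V'}$ has fixed one continuation at every branching while leaving selections intact. I would construct coinductively a single SISO tree $\WT^\ast$ that is consistent with both: refine $\UT$ by choosing, at each of its branchings, the continuation prescribed by $\ttree{\V'}$; equivalently, refine $\ttree{\V'}$ by choosing, at each of its selections, the continuation prescribed by $\UT$. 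A routine coinductive argument on the definitions of $\llbracket \cdot \rrbracket_\SO$ and $\llbracket \cdot \rrbracket_\SI$ (page~\pageref{page:tree-decomp}) shows that this $\WT^\ast$ lies in both $\llbracket \UT \rrbracket_\SI$ and $\llbracket \ttree{\V'} \rrbracket_\SO$. Setting $\WT = \WT' = \WT^\ast$ and invoking reflexivity of $\subttt$ (\Cref{lem:transitivityW}) proves the auxiliary claim.

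Finally, from $\vdash Q : \T'$ and $\T' \subt \V'$, the subsumption rule \rulename{t-sub} yields $\vdash Q : \V'$. Applying \Cref{prop:live} to $Q$ of type $\V'$ then produces a live typing environment $\Gamma$ of the form \eqref{eq:characteristic-gamma} such that $\Gamma \vdash \pa\pr Q \pc \pa\pr\EmptyQueue \pc \M_{\pr,\V'}$, which is exactly what the proposition demands. The main obstacle is the auxiliary claim's commutativity argument: while it is intuitive as a ``simultaneous refinement'' of SI and SO decompositions, making the coinductive construction of $\WT^\ast$ precise — especially handling recursive types and ensuring well-definedness at both branchings and selections — requires care. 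Everything else is a direct appeal to results already established in the paper.
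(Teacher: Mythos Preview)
Your proposal is correct and essentially follows the paper's approach, with one minor (and arguably cleaner) variation. The auxiliary claim you spend most effort on---that $\ttree{\V'}\in\llbracket\ttree{\T'}\rrbracket_\SI$ implies $\T'\subt\V'$---is exactly \Cref{lem:sub:properties}(ii) in the paper, and the ``commutativity'' construction of a common SISO witness $\WT^\ast\in\llbracket\UT\rrbracket_\SI\cap\llbracket\ttree{\V'}\rrbracket_\SO$ is precisely \Cref{lem:for-reflexivity-of-subtyping}. So your worry about making this coinduction precise is already discharged.

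The one genuine difference is in the final step: you push $\T'\subt\V'$ through the subsumption rule \rulename{t-sub} to obtain $\vdash Q:\V'$ and then invoke \Cref{prop:live} directly, producing the live $\Gamma$ of \eqref{eq:characteristic-gamma} with entry $\pr{:}(\temptyqueue,\V')$. The paper instead keeps $Q$ typed at $\T'$, takes the live $\Gamma''$ from \Cref{prop:live} with entry $\pr{:}(\temptyqueue,\V')$, replaces that entry by $\pr{:}(\temptyqueue,\T')$ to form $\Gamma\subt\Gamma''$, and then invokes \Cref{lem:subtyping-preserves-liveness} to conclude $\Gamma$ is live. Your route is shorter and avoids the liveness-preservation lemma; the paper's route yields a $\Gamma$ whose $\pr$-entry is literally $\T'$ rather than $\V'$, which is irrelevant for the proposition as stated but aligns with how item~\ref{item:prop:completeness-gamma} of \Cref{prop:completeness-maintext} is phrased.
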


\subsection{Step 4: Completeness}

This final step of our completeness proof %
encompasses all elements introduced thus far.%
\iftoggle{techreport}{%
  \emph{ (Proofs in \Cref{subsec:app:preciseness})}
}{}

\begin{proposition}
  \label{prop:completeness-maintext}
  Let $\T$ and $\T'$ be session types such that $\T \not\subt\T'$. %
  Take any\, $\pr \!\not\in\! \participant{\T'}$. %
  Then, there are $\UU$ and $\V'$ %
  with $\ttree{\UU} \!\in\! \llbracket \ttree{\T} \rrbracket_\SO$ %
  and $\ttree{\V'} \!\in\! \llbracket \ttree{\T'} \rrbracket_\SI$ %
  and $\UU \not\subt \V'$ %
  such that:
  \begin{enumerate}
  \item\label{item:prop:completeness-gamma}%
    $\forall Q: \vphantom{\Theta}\vdash \Q:\T' \,\implies\, \Gamma \vdash \pa\pr \Q \pc \pa\pr\EmptyQueue \pc \M_{\pr, \V'}$  for some live $\Gamma$;%
    \hfill(by \eqref{eq:characteristic-gamma} %
    and Prop.~\ref{prop:live-ti})%
  \item%
    $\vdash \CP{\UU}:\T$;%
    \hfill(by Prop.~\ref{cp})%
  \item\label{item:prop:completeness-error}%
    $\pa\pr\CP{\UU} \pc \pa\pr\EmptyQueue \pc \M_{\pr, \V'} \;\red^*\; \error$.
  \end{enumerate}
\end{proposition}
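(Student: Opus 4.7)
The plan is to first extract suitable witnesses from the hypothesis $\T \not\subt \T'$, then dispatch items~1 and~2 via the propositions already cited in the statement, and finally concentrate the technical work on item~3. From $\T \not\subt \T'$ and \eqref{eq:not-subt-def-regular}, I would obtain session types $\UU$ and $\V'$ with $\ttree{\UU} \in \llbracket\ttree{\T}\rrbracket_\SO$ and $\ttree{\V'} \in \llbracket\ttree{\T'}\rrbracket_\SI$ such that every pair of SISO paths $\WT \in \llbracket\ttree{\UU}\rrbracket_\SI$ and $\WT' \in \llbracket\ttree{\V'}\rrbracket_\SO$ satisfies $\WT \not\subttt \WT'$, and equivalently $\WT \nsubttt \WT'$ by \Cref{lem:subttt-negation}. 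With these witnesses fixed, item~1 is immediate from \Cref{prop:live-ti} applied to $\V'$, and item~2 is immediate from \Cref{cp} applied to $\UU$.

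The remaining task is item~3, namely producing a reduction sequence of $\pa\pr\CP{\UU} \pc \pa\pr\EmptyQueue \pc \M_{\pr, \V'}$ that ends in $\error$. The guiding intuition is that $\CP{\UU}$ acts like a participant executing the protocol $\UU$, while the characteristic processes in $\M_{\pr, \V'}$ drive the rest of the system as if $\pr$ were executing $\V'$; the cyclic communications hardwired into each $\cyclic{\V'}{\pp}$ by \Cref{def:characteristic-types} ensure that every branching in $\UU$ and every selection in $\V'$ is eventually forced to commit to an alternative, effectively pinning down a specific SISO path $\WT \in \llbracket\ttree{\UU}\rrbracket_\SI$ played by $\CP{\UU}$ and a specific SISO path $\WT' \in \llbracket\ttree{\V'}\rrbracket_\SO$ played by the rest of the system. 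Since $\WT \nsubttt \WT'$, the plan is to show that the very rule of \Cref{tab:negationW} that witnesses this inequality corresponds to a concrete error reduction in \Cref{tab:error}.

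Concretely, I would proceed by induction on the derivation of $\WT \nsubttt \WT'$, with a case analysis on the last rule applied. The base cases \rulename{n-out}, \rulename{n-inp}, \rulename{n-out-R}, \rulename{n-inp-R} yield disjoint action sets, and map directly to \rulename{err-ophn} or \rulename{err-strv}. The label-mismatch rules \rulename{n-inp-$\ell$}, \rulename{n-out-$\ell$}, \rulename{n-$\AC$-$\ell$}, \rulename{n-$\BC$-$\ell$} produce a message whose label is not expected, triggering \rulename{err-mism}. The sort-mismatch rules \rulename{n-inp-$\S$}, \rulename{n-out-$\S$}, \rulename{n-$\AC$-$\S$}, \rulename{n-$\BC$-$\S$} exploit the characteristic value $\valt{\ST}$ produced by $\CP{\UU}$ together with the sort-check expression $\exprt{\x}{\ST}$: the received value is of the wrong sort, so $\exprt{\x}{\ST}$ fails to evaluate and \rulename{err-eval} fires. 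The inductive rules \rulename{n-inp-$\WT$}, \rulename{n-out-$\WT$}, \rulename{n-$\AC$-$\WT$}, \rulename{n-$\BC$-$\WT$} are handled by replaying the matching prefix of the interaction and then invoking the induction hypothesis on the continuations.

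The main obstacle will be the asynchronous-reordering rules \rulename{n-i-o-1}, \rulename{n-i-o-2}, and the $\AC$/$\BC$ cases more generally: here $\CP{\UU}$ is scheduled to do an input before doing what $\V'$ forces via $\M_{\pr, \V'}$, so enqueued messages could seem to ``rescue'' the interaction under some schedules. I would argue that the cyclic propagations built into $\cyclic{\V'}{\pp}$ prevent any such rescue, because every message sent by $\pr$ to some $\pp$ is broadcast around the ring of participants before $\pp$ can react, and every message $\pp$ intends to send to $\pr$ must first complete a ring traversal. Consequently, an input that $\CP{\UU}$ expects from a specific $\pq$ at the ``wrong'' point can never be provided by the corresponding characteristic process, and the system becomes stuck: no reduction other than \rulename{err-dlock} (or \rulename{err-strv}, depending on the queue state) remains available. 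Making this argument precise — in particular, carefully tracking which messages are queued at the moment the mismatch surfaces and confirming that \rulename{err-dlock}'s side condition on queues is actually met — is where the bulk of the technical effort will go.
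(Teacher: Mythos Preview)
Your high-level plan is right and matches the paper: items~1 and~2 follow immediately from \Cref{prop:live-ti} and \Cref{cp}, and all the work is in item~3. Your intuitions about which $\nsubttt$-rules trigger which error rules are also essentially correct.

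The gap is in your induction setup. You propose to fix a single pair $\WT \in \llbracket\ttree{\UU}\rrbracket_\SI$, $\WT' \in \llbracket\ttree{\V'}\rrbracket_\SO$ and induct on the derivation of $\WT \nsubttt \WT'$. But the system you are reducing is $\pa\pr\CP{\UU} \pc \pa\pr\EmptyQueue \pc \M_{\pr,\V'}$, whose state is determined by the \emph{SO type} $\UU$ and the \emph{SI type} $\V'$, not by SISO paths. When $\UU$ begins with an external choice $\texternal_{i\in I}\tin\pp{\ell_i}{\S_i}.\UU_i$, the branch $i$ actually taken is dictated by the message that $\pp$'s characteristic process enqueues; the continuation $\WT_1$ appearing in the premise of your $\nsubttt$-rule need not lie in $\llbracket\ttree{\UU_i}\rrbracket_\SI$ at all, so the induction hypothesis does not apply to the residual system. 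Your remark that the execution ``effectively pins down'' a SISO path is where the circularity hides: you need to know the error will happen before you know which path is pinned down.

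The paper resolves this by first \emph{lifting} the negation from SISO trees to SO/SI trees: it defines an inductive relation $\UT \not\subt \VT'$ directly on SO trees $\UT$ and SI trees $\VT'$ (\Cref{tab:negationUandV'}, proved sound in \Cref{lemm:negationUandV}), whose rules quantify over all branches of the external choice in $\UT$ and over all branches of the internal choice in $\VT'$. The induction in \Cref{thm:completenessUandV'} is then on \emph{that} derivation, so the inductive step always lands on the actual continuation types $\UU_i$, $\V'_j$ that the system reduces to. This lift is not just bookkeeping: for the $\BC$-style cases (your ``main obstacle''), the first output to $\pp$ inside $\V'$ can sit at different depths in different selection branches, so the paper introduces a multi-hole context $\CContext\pp$ and a projection $\proj{\CContext\pp}{n}$ to locate all of them simultaneously (rule \rulename{n-UV-$\mathcal{C}$}). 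Handling the $\AC$-prefix also needs a separate lemma (\Cref{lemm:for-completeness-induction-on-A-context}) showing that the characteristic session can ``play out'' the prefix while accumulating the right messages in queues. Your deadlock intuition for \rulename{n-i-o-1}/\rulename{n-i-o-2} is correct, but making it precise requires exactly this queue-tracking lemma.
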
%

Intuitively, we obtain item~\ref{item:prop:completeness-error} %
of \Cref{prop:completeness-maintext} because %
the characteristic session $\M_{\pr, \V'}$ %
expects to interact with a process of type $\V'$ (or a subtype, like $\T'$); %
however, when a process that behaves like $\UU$ is inserted, %
the cyclic communications and/or the expressions
of $\M_{\pr, \V'}$ %
(given by Def.~\ref{def:characteristic-process} %
and \ref{def:characteristic-types}) %
are disrupted: %
this is because $\UU \not\subt \V'$, %
and the (incorrect) message reorderings and mutations %
allowed by $\not\subttt$ (Table~\ref{tab:negationW}) %
cause the errors in Table~\ref{tab:error}.

We now conclude with our main results.

\begin{restatable}{theorem}{thCompleteness}
  \label{lem:completeness}%
  The asynchronous multiparty session subtyping $\subt$ is complete.
\end{restatable}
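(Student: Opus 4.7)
The plan is to derive Theorem~\ref{lem:completeness} as an immediate packaging of Proposition~\ref{prop:completeness-maintext}, since the latter is stated in a form that already matches, clause by clause, the three conditions of completeness in Definition~\ref{def:preciseness}\ref{item:completeness}. Concretely, I would proceed as follows. Fix session types $\T$ and $\T'$ with $\T \not\subt \T'$. Pick any participant name $\pr \not\in \participant{\T'}$; such a name always exists because the set of participant names is inexhaustible and $\participant{\T'}$ is finite (the only binders for participants are the finitely many communication prefixes in the syntactic type, and regularity of trees ensures finiteness up to unfolding). Apply Proposition~\ref{prop:completeness-maintext} to obtain the SO type $\UU$ and the SI type $\V'$ witnessing $\UU \not\subt \V'$ with $\ttree{\UU} \in \llbracket \ttree{\T} \rrbracket_\SO$ and $\ttree{\V'} \in \llbracket \ttree{\T'} \rrbracket_\SI$. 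Set $\PP \coloneqq \CP{\UU}$ and $\M \coloneqq \M_{\pr, \V'}$.

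With these choices, each of the three items required by Definition~\ref{def:preciseness}\ref{item:completeness} is discharged directly by the correspondingly numbered item of Proposition~\ref{prop:completeness-maintext}: item~\ref{item:prop:completeness-gamma} gives clause~(a) — namely, that any process $\Q$ of type $\T'$ placed alongside $\M_{\pr,\V'}$ yields a well-typed session under a live typing environment; Proposition~\ref{cp} (used to prove item~(2) of the proposition) gives clause~(b), $\vdash \CP{\UU}:\T$; and item~\ref{item:prop:completeness-error} gives clause~(c), the reduction to $\error$. So the proof amounts to a short composition of quantifiers followed by citations of these previously established results.

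The main obstacle has therefore already been dispatched in the four preceding steps; the final theorem is essentially bookkeeping. The genuine difficulty lies earlier, in Proposition~\ref{prop:completeness-maintext} itself, which in turn depends on: (i) extracting \emph{regular, syntax-derived} SO/SI witnesses $\UU,\V'$ from the existential \eqref{eq:not-subt-def} so that the characteristic process $\CP{\UU}$ and characteristic session $\M_{\pr,\V'}$ are well-defined syntactic objects (Step~1, culminating in~\eqref{eq:not-subt-def-regular}); (ii) showing that the characteristic constructions type-check at $\T$ and remain live when composed with any $\Q$ of type $\T'$ (Propositions~\ref{cp} and~\ref{prop:live-ti}); and (iii) driving the composed session to $\error$ by a case analysis on which rule of Table~\ref{tab:negationW} witnesses $\WT \nsubttt \WT'$ for every $\WT \in \llbracket \ttree{\UU}\rrbracket_\SI$ and $\WT' \in \llbracket \ttree{\V'}\rrbracket_\SO$. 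In that case analysis, label mismatches trigger \rulename{err-mism}, sort mismatches are detected by the boolean expressions $\exprt{x}{\ST}$ inside $\CP{\UU}$ and trigger \rulename{err-eval}, missing or permuted actions trigger \rulename{err-ophn} or \rulename{err-strv}, and the cyclic communications introduced in Definition~\ref{def:characteristic-types} rule out spurious paths and force \rulename{err-dlock} when the subtype anticipates actions that the supertype does not permit. None of these remaining obligations, however, is needed for Theorem~\ref{lem:completeness} \emph{per se}: once Proposition~\ref{prop:completeness-maintext} is in hand, completeness is a one-line consequence.
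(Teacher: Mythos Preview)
Your proposal is correct and matches the paper's proof essentially verbatim: both derive completeness as an immediate corollary of Proposition~\ref{prop:completeness-maintext} by instantiating $\PP = \CP{\UU}$ and $\M = \M_{\pr,\V'}$ and reading off clauses~(a)--(c) of Definition~\ref{def:preciseness}\ref{item:completeness} from items~(1)--(3) of the proposition. Your additional remark on the existence of a fresh $\pr$ and your summary of where the real work lies are accurate elaborations, but the core argument is identical.
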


\begin{proof}
  Direct consequence of \Cref{prop:completeness-maintext}: %
  by taking $\pr$ and letting $\M = \M_{\pr, \V'}$ and $\PP = \CP{\UU}$ %
  from its statement, %
  we satisfy item \ref{item:completeness} %
  of Def.~\ref{def:preciseness}.%
\end{proof}

\begin{theorem}
  \label{lem:preciseness}%
  The asynchronous multiparty session subtyping $\subt$ is precise.
\end{theorem}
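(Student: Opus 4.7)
The plan is essentially a one-line deduction: by \Cref{def:preciseness}\ref{item:preciseness}, a subtyping is \emph{precise} exactly when it is both sound and complete, so it suffices to invoke the two results already established in this section. Specifically, \Cref{thm:sound} shows that $\subt$ is a sound subtyping, and \Cref{lem:completeness} shows that $\subt$ is a complete subtyping; combining them via \Cref{def:preciseness} immediately yields that $\subt$ is precise.

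There is no real obstacle at this point, since all the heavy lifting has already been done. The genuinely hard step was \Cref{lem:completeness}, whose proof relied on the four-step programme of \Cref{sec:op}: the inductive characterisation $\nsubttt$ of the negation of SISO tree refinement (\Cref{lem:subttt-negation}), the extraction of regular SO/SI witnesses from \eqref{eq:not-subt-def-regular}, the characteristic process construction $\CP{\UU}$ (\Cref{def:characteristic-process}, \Cref{cp}), the characteristic session $\M_{\pr,\V'}$ (\Cref{def:characteristic-session}, \Cref{prop:live-ti}), and finally \Cref{prop:completeness-maintext} which glues these together to exhibit a well-typed reduction to $\error$ whenever $\T \not\subt \T'$. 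Soundness, by contrast, followed directly from subject reduction (\Cref{thm:SR}) through the subsumption rule \rulename{t-sub}.

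Thus the proof I would write is simply: by \Cref{thm:sound}, $\subt$ is sound; by \Cref{lem:completeness}, $\subt$ is complete; hence by \Cref{def:preciseness}\ref{item:preciseness}, $\subt$ is precise. No further calculation or case analysis is required, and no additional lemmas need to be invoked.
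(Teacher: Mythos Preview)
Your proposal is correct and matches the paper's own proof essentially verbatim: the paper also simply observes that the result is a direct consequence of \Cref{thm:sound} and \Cref{lem:completeness}, which together satisfy item~\ref{item:preciseness} of \Cref{def:preciseness}.
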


\begin{proof}
  Direct consequence of Theorems~\ref{thm:sound} and \ref{lem:completeness}, %
  which satisfy item \ref{item:preciseness} %
  of Def.~\ref{def:preciseness}.%
\end{proof}

\section{Example: Distributed Batch Processing}
\label{sec:batch-processing}
In this section we illustrate our 
subtyping relation by showing the correctness of a messaging optimisation
in a distributed processing scenario.
We adapt a multiparty protocol from the
\emph{double-buffering algorithm}~\cite{doublebuffer,mostrous_yoshida_honda_esop09,CY2020B,YoshidaVPH08},
that is widely used, \eg,
in streaming media applications and computer graphics, 
to regulate and speed up asynchronous communication
and data processing.

\subsection{Basic Unoptimised Protocol}
\label{sec:basic-batch-protocol}

\begin{figure}[tpb]
  \centering
  \subcaptionbox{%
    \footnotesize%
    \Control tells \Source it should send the next batch of data
    to \ProcessorOne.%
    \label{fig:batch-processing-1}%
  }{%
    \includegraphics[width=0.28\textwidth]{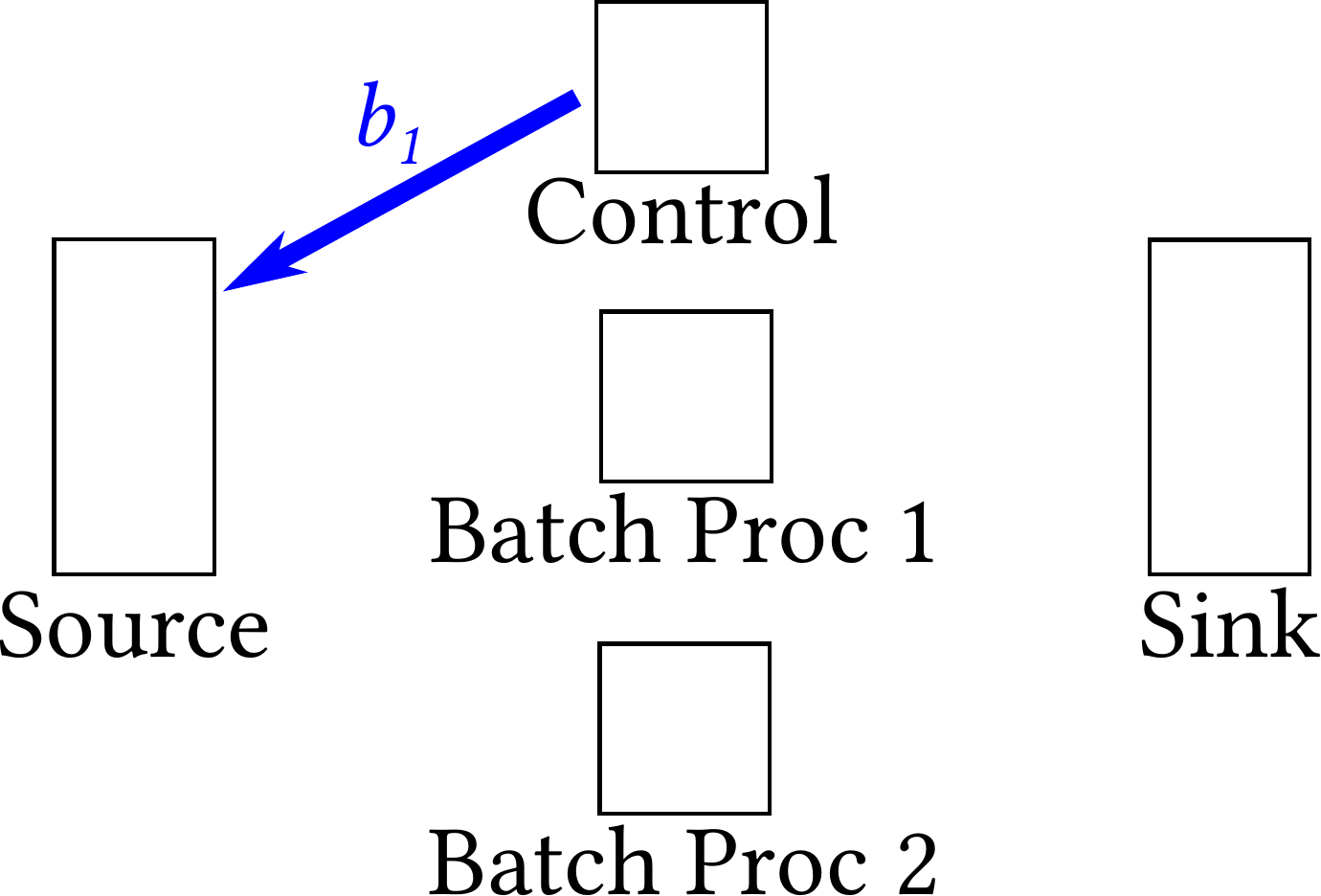}
  }
  \hspace{7mm}
  \subcaptionbox{%
    \footnotesize%
    \Source starts sending data to \ProcessorOne.
    Meanwhile, \Sink asynchronously tells \Control it is ready to receive
    data from \ProcessorOne.%
    \label{fig:batch-processing-2}%
  }{%
    \includegraphics[width=0.28\textwidth]{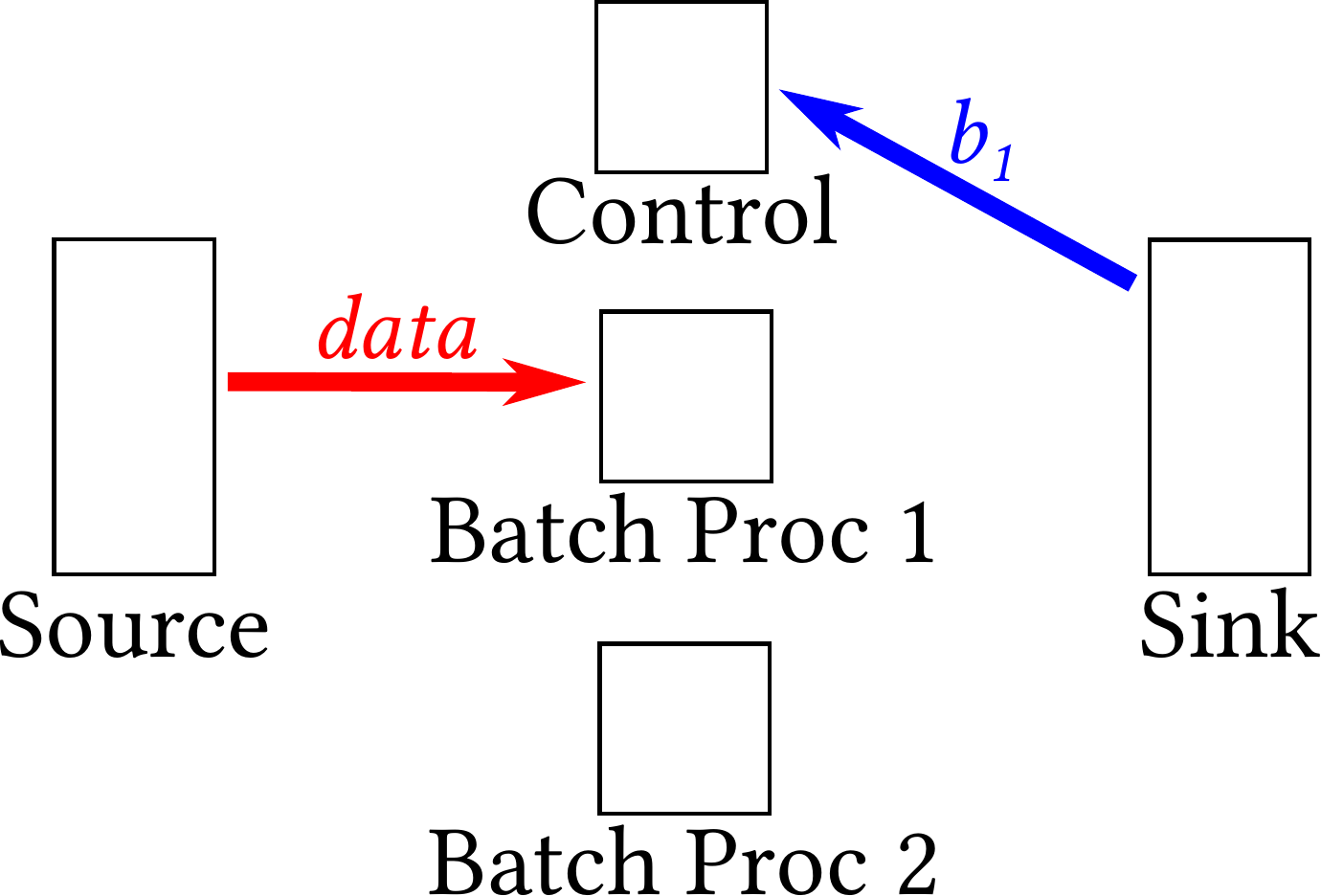}
  }
  \hspace{7mm}
  \subcaptionbox{%
    \footnotesize%
    \ProcessorOne finishes processing its data
    and sends the result to \Sink; meanwhile,
    \Control tells \Source that it should send the next data
    to \ProcessorTwo.
    \label{fig:batch-processing-3}%
  }{%
    \includegraphics[width=0.28\textwidth]{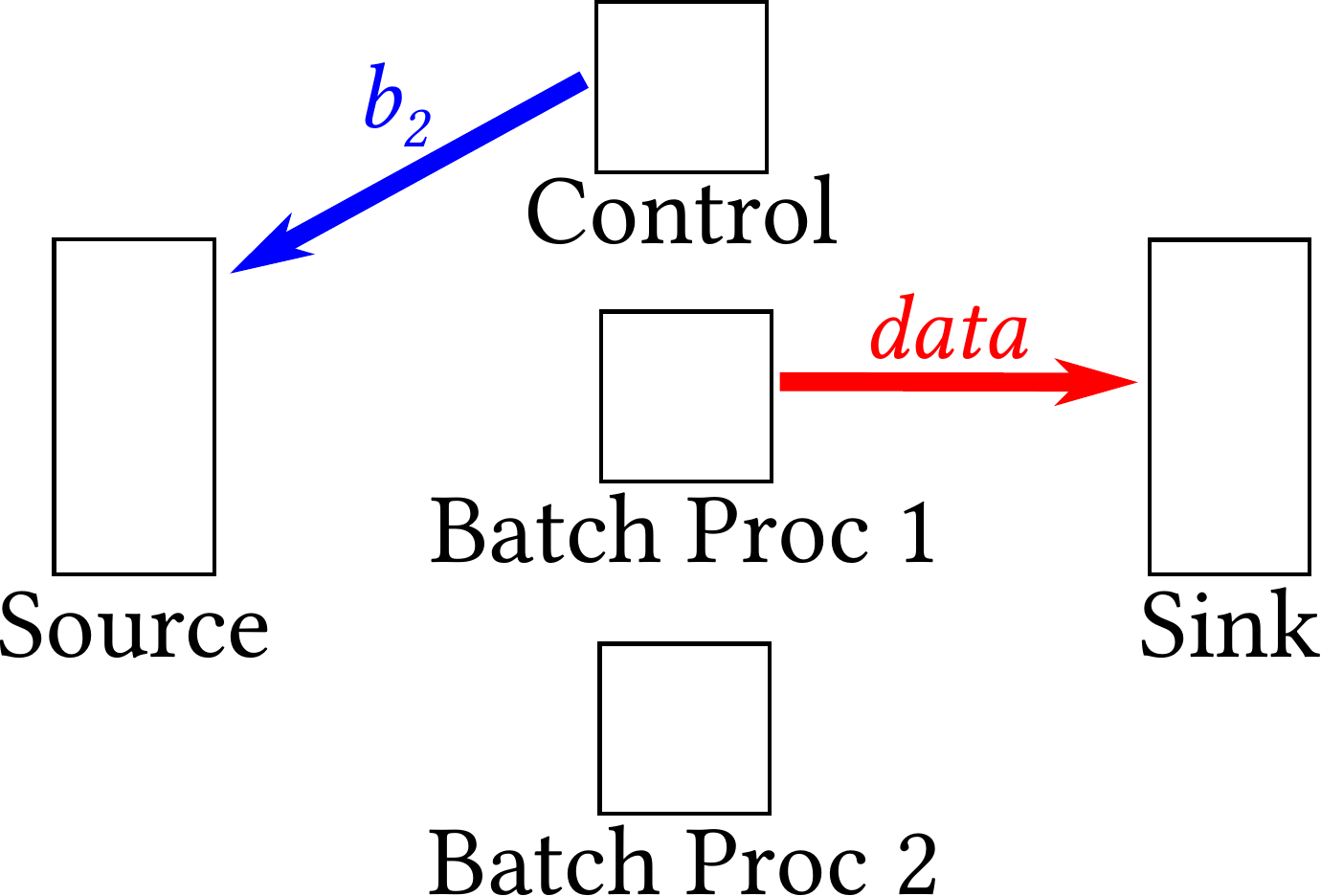}
  }
  \\[3mm]%
  \subcaptionbox{%
    \footnotesize%
    \ProcessorOne has sent all its data to \Sink;
    \Source starts sending data to \ProcessorTwo,
    while \Sink asynchronously
    tells \Control it is ready to receive data from \ProcessorTwo.
    \label{fig:batch-processing-4}%
  }{%
    \includegraphics[width=0.28\textwidth]{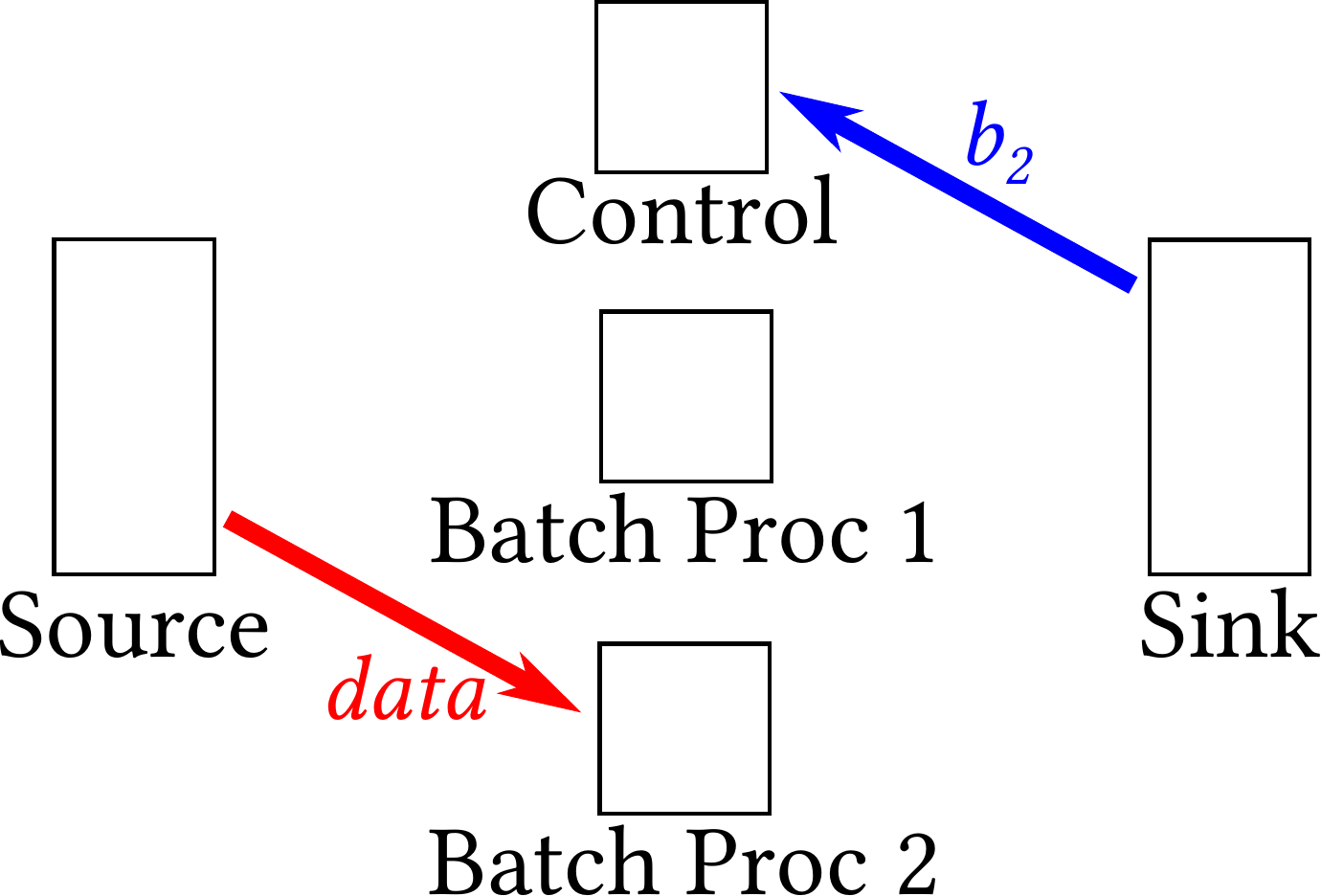}
  }
  \hspace{7mm}
  \subcaptionbox{%
    \footnotesize%
    \ProcessorTwo is sending its resulting data to \Sink.
    \ProcessorTwo completes the transmission,
    we go back to step (\subref{fig:batch-processing-1}).%
    \label{fig:batch-processing-5}%
  }{%
    \includegraphics[width=0.28\textwidth]{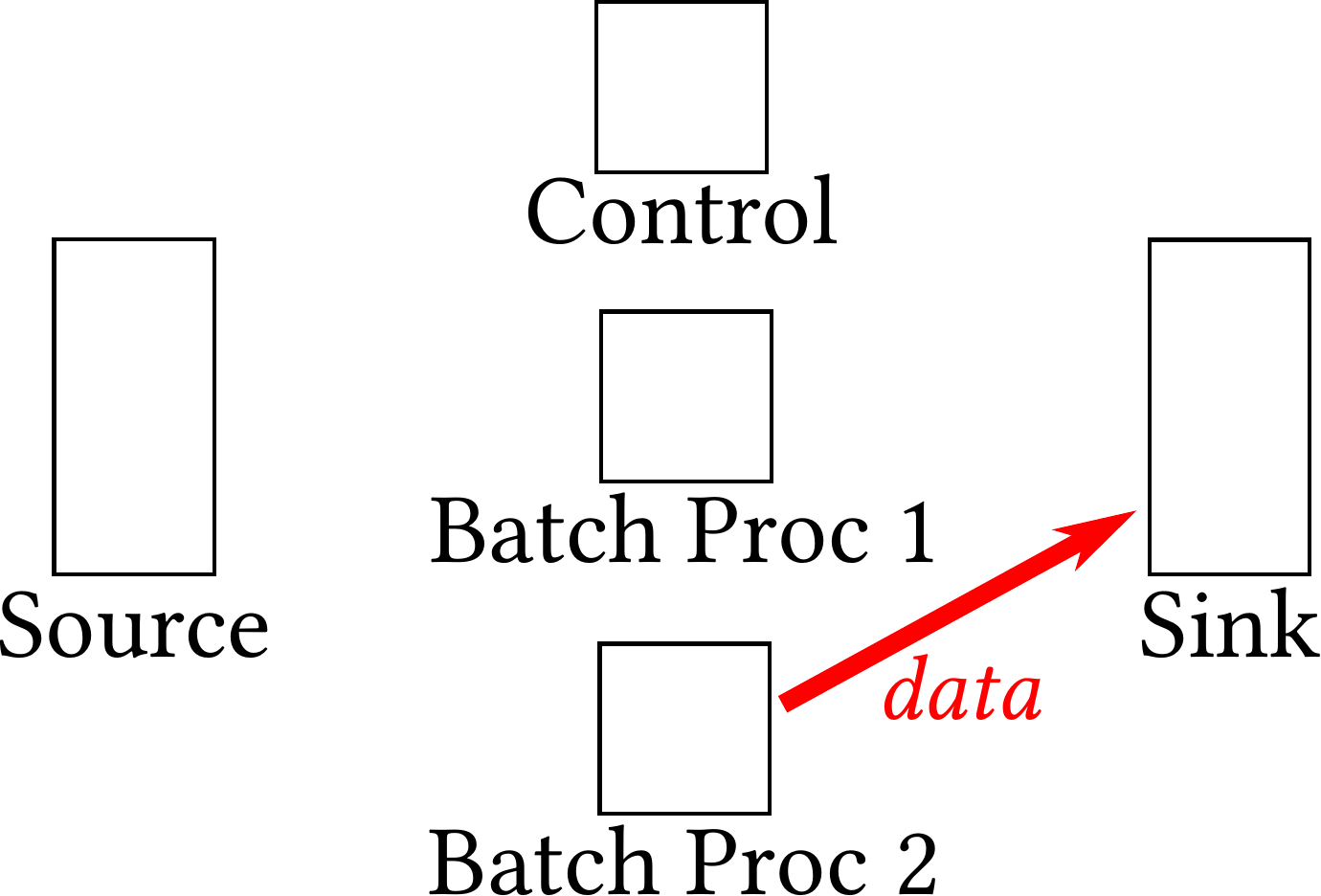}
  }
  \caption{An execution of the basic unoptimised distributed batch processing
    protocol.}
  \label{fig:batch-processing-basic}
\end{figure}

In a distributed processing system,
two \emph{Batch Processors} are always ready to
receive data from a \Source, perform some computation, and send
the results to a \Sink. A \Control process regulates their
interactions, by telling the Source/Sink where to send/receive new data;
consequently, the Control can influence when the Batch \Processor{}s are
running (\eg, all at once, or one a time), and can ensure that the
\Source is not sending too much data (thus overwhelming the Batch
\Processor{}s and the \Sink), and the \Sink is expecting data from an
active \Processor.

\Cref{fig:batch-processing-basic} provides an overview of the
communications between the five parties (\Source, \Sink, \Control, \ProcessorOne and \ProcessorTwo) involved in the system.
Their communication protocols can be formalised as multiparty session
types. %
The types $\T_{\role{bp1}}$ and $T_{\role{bp2}}$ below provide the specification
of the two batch processors. They are very simple:
\change{\#A: we changed the subscripts for ``source'' and ``sink''}{%
  they recursively read data (of sort $\ST$) from the \Source ($\role{src}$)
  and send the processing result to the \Sink ($\role{sk}$).
}%
\[
  \T_{\role{bp1}} \;=\; \T_{\role{bp2}} \;=\;
  \mu\ty.\tin{\role{src}}{\mathit{data}}{\ST}.
  \tout{\role{sk}}{\mathit{data}}{\ST}.\ty
\]

We now present the types
$\T_{\role{src}}, \T_{\role{ctl}}$, and $\T_{\role{sk}}$,
which provide the specifications of \Source, \Control and \Sink, respectively:
\[
\begin{array}{rcl}
    \T_{\role{src}} &=& \mu\ty.\texternal\big\{
       \tin{\role{ctl}}{\mathit{b}_1}{}.\tout{\role{ctl}}{\mathit{b}_1}{}.\tout{\role{bp_1}}{\mathit{data}}{\ST}.\ty\;,\,
       \tin{\role{ctl}}{\mathit{b}_2}{}.\tout{\role{ctl}}{\mathit{b}_2}{}.\tout{\role{bp_2}}{\mathit{data}}{\ST}.\ty\big\} \\
    \T_{\role{sk}} &=& \mu\ty.\tout{\role{ctl}}{\mathit{b}_1}{}.\tin{\role{ctl}}{\mathit{b}_1}{}.\tin{\role{bp_1}}{\mathit{data}}{\ST}.\tout{\role{ctl}}{\mathit{b}_2}{}.\tin{\role{ctl}}{\mathit{b}_2}{}.\tin{\role{bp_2}}{\mathit{data}}{\ST}.\ty \\
    \T_{\role{ctl}} &=& \mu\ty. \tout{\role{src}}{\mathit{b}_1}{}.\tin{\role{src}}{\mathit{b}_1}{}.\tin{\role{sk}}{\mathit{b}_1}{}.\tout{\role{sk}}{\mathit{b}_1}{}.
    \tout{\role{src}}{\mathit{b}_2}{}.\tin{\role{src}}{\mathit{b}_2}{}.\tin{\role{sk}}{\mathit{b}_2}{}.\tout{\role{sk}}{\mathit{b}_2}{}.\ty 
\end{array}
\]
The above type specifications are summarised as follows:
\begin{itemize}
\item
  the \Source (denoted as participant $\role{src}$) 
  expects to be told by the \Control ($\role{ctl}$)
  where to send the next data --- either to the \ProcessorOne or \ProcessorTwo
  (denoted with message labels $\mathit{b}_1$ and $\mathit{b}_2$, respectively).
  Then, the \Source acknowledges
  (by replying $\mathit{b}_1$ or $\mathit{b}_2$ to the Control)
  sends the data to the corresponding \Processor
  ($\role{bp_1}$ or $\role{bp_2}$), and loops;
\item
  instead, the \Sink (denoted as participant $\role{sk}$)
  notifies the \Control that it is willing to read the output from
  \ProcessorOne (message $\mathit{b}_1$), expects an acknowledgement
  from \Control (with the same message $\mathit{b}_1$),
  and proceeds with reading the data.
  Then, it performs a similar sequence of interactions to
  notify \Control and read data from \ProcessorTwo, and loops;
\item
  finally, the \Control (denoted as $\role{ctl}$)
  regulates the interactions of \Source, \Sink, \ProcessorOne and \ProcessorTwo,
  in a loop where it first
  tells Source to send to \ProcessorOne, and then waits for Sink 
  to be ready to receive from \ProcessorTwo,
  and then repeats the above for \ProcessorTwo.
\end{itemize}

\begin{figure}[tpb]
  \centering
  \subcaptionbox{%
    \footnotesize%
    \Control tells \Source it should send the next batch of data
    to \ProcessorOne.%
    \label{fig:double-buffering-1}%
  }{%
    \includegraphics[width=0.28\textwidth]{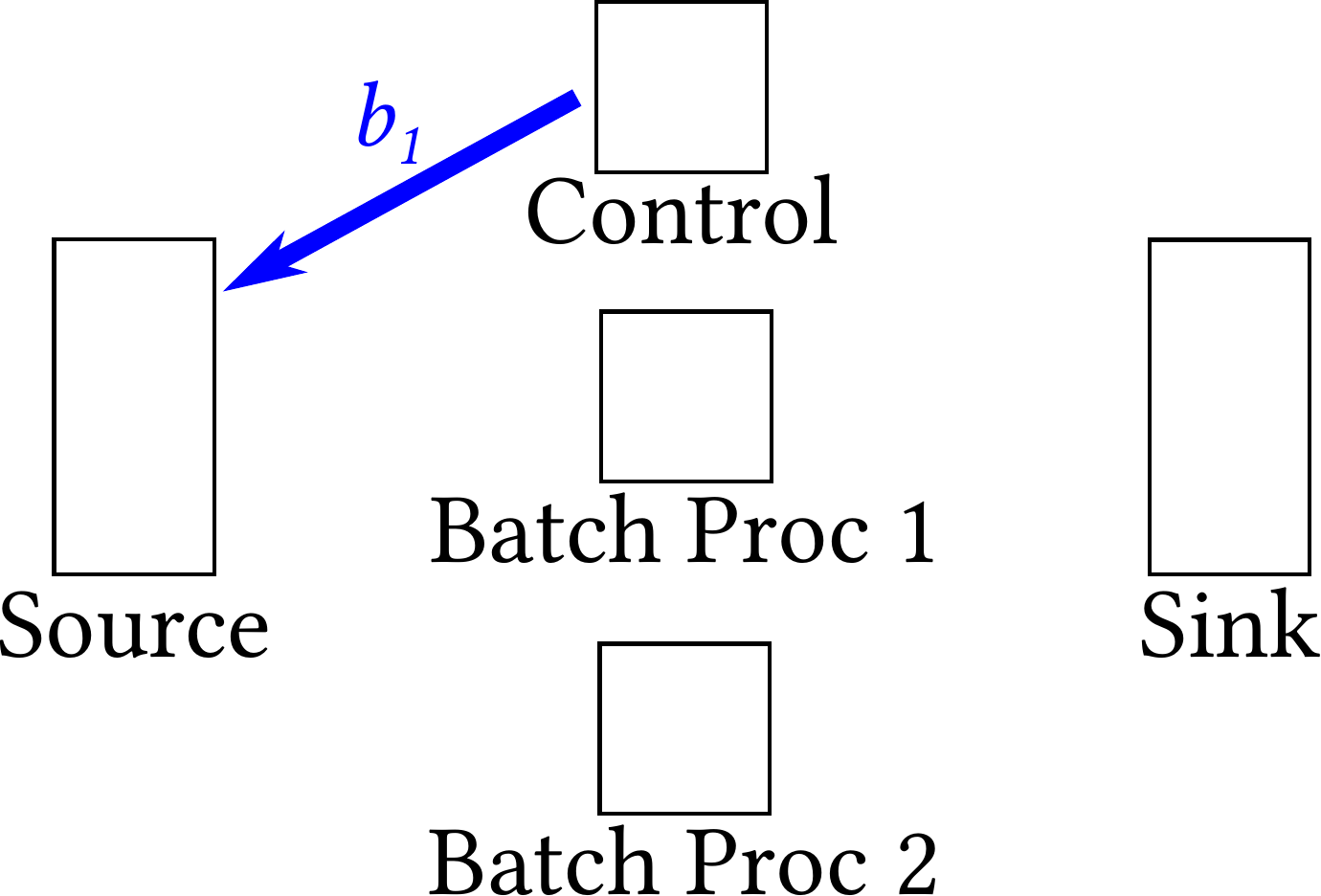}
  }
  \hspace{7mm}
  \subcaptionbox{%
    \footnotesize%
    \Source starts sending data to \ProcessorOne;
    asynchronously, \Control tells \Source it should \emph{also}
    send data to \ProcessorTwo.
    Meanwhile, \Sink asynchronously tells \Control it is ready to receive
    data from \ProcessorOne.%
    \label{fig:double-buffering-2}%
  }{%
    \includegraphics[width=0.28\textwidth]{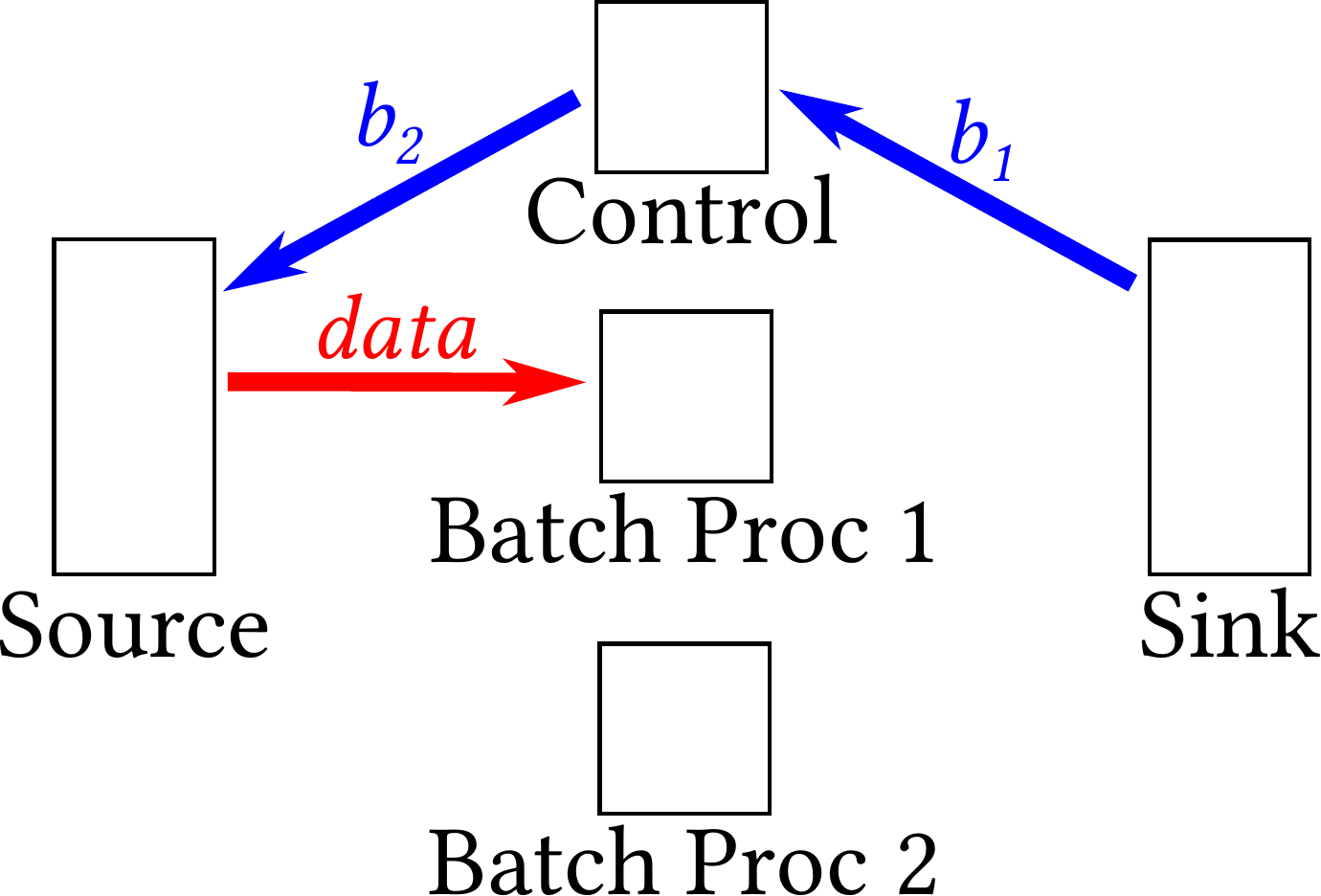}
  }
  \hspace{7mm}
  \subcaptionbox{%
    \footnotesize%
    \ProcessorOne finishes processing its data
    and sends the result to \Sink; meanwhile,
    \Source is sending data to \ProcessorTwo.
    \label{fig:double-buffering-3}%
  }{%
    \includegraphics[width=0.28\textwidth]{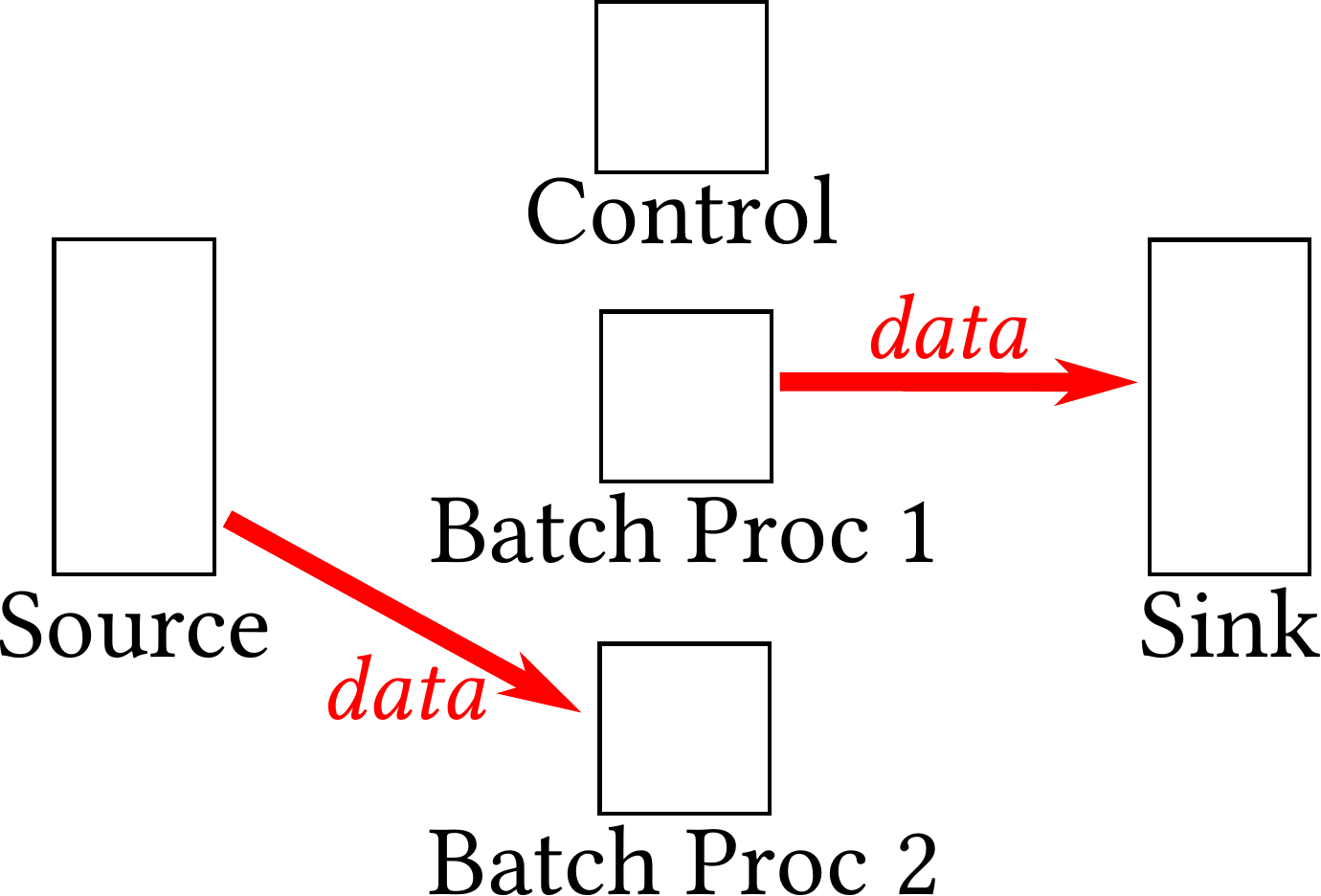}
  }
  \\[3mm]%
  \subcaptionbox{%
    \footnotesize%
    \ProcessorOne has sent its data to \Sink;
    \Source is still sending to \ProcessorTwo;
    \Control asynchronously tells \Source to send
    data to \ProcessorOne; \Sink asynchronously
    tells \Control it can now receive %
    from \ProcessorTwo.
    \label{fig:double-buffering-4}%
  }{%
    \includegraphics[width=0.28\textwidth]{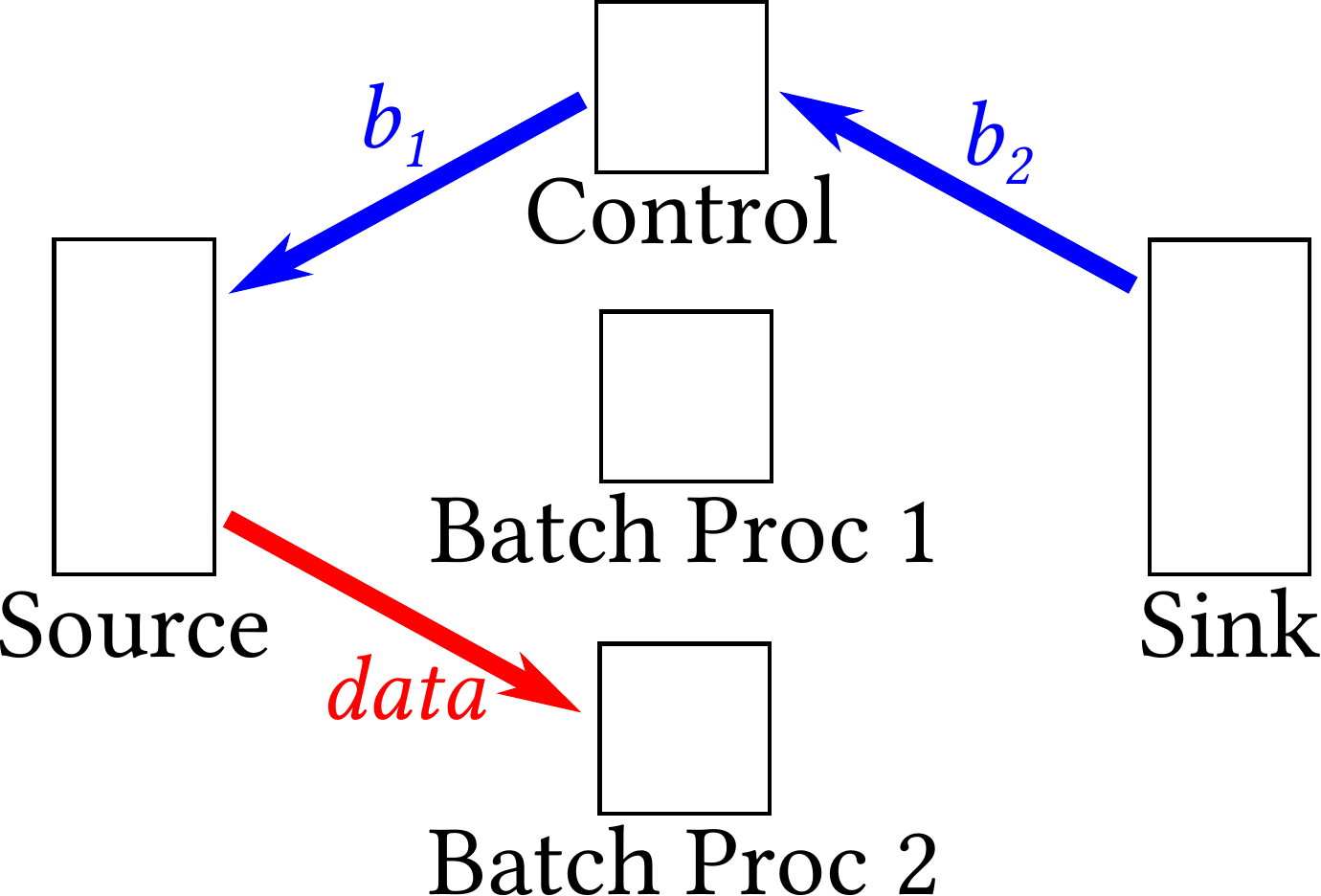}
  }
  \hspace{7mm}
  \subcaptionbox{%
    \footnotesize%
    \Source sends data to \ProcessorOne,
    while \ProcessorTwo is sending its resulting data to Sink.
    After \ProcessorTwo completes the transmission,
    we go back to step (\subref{fig:double-buffering-2}).%
    \label{fig:double-buffering-5}%
  }{%
    \includegraphics[width=0.28\textwidth]{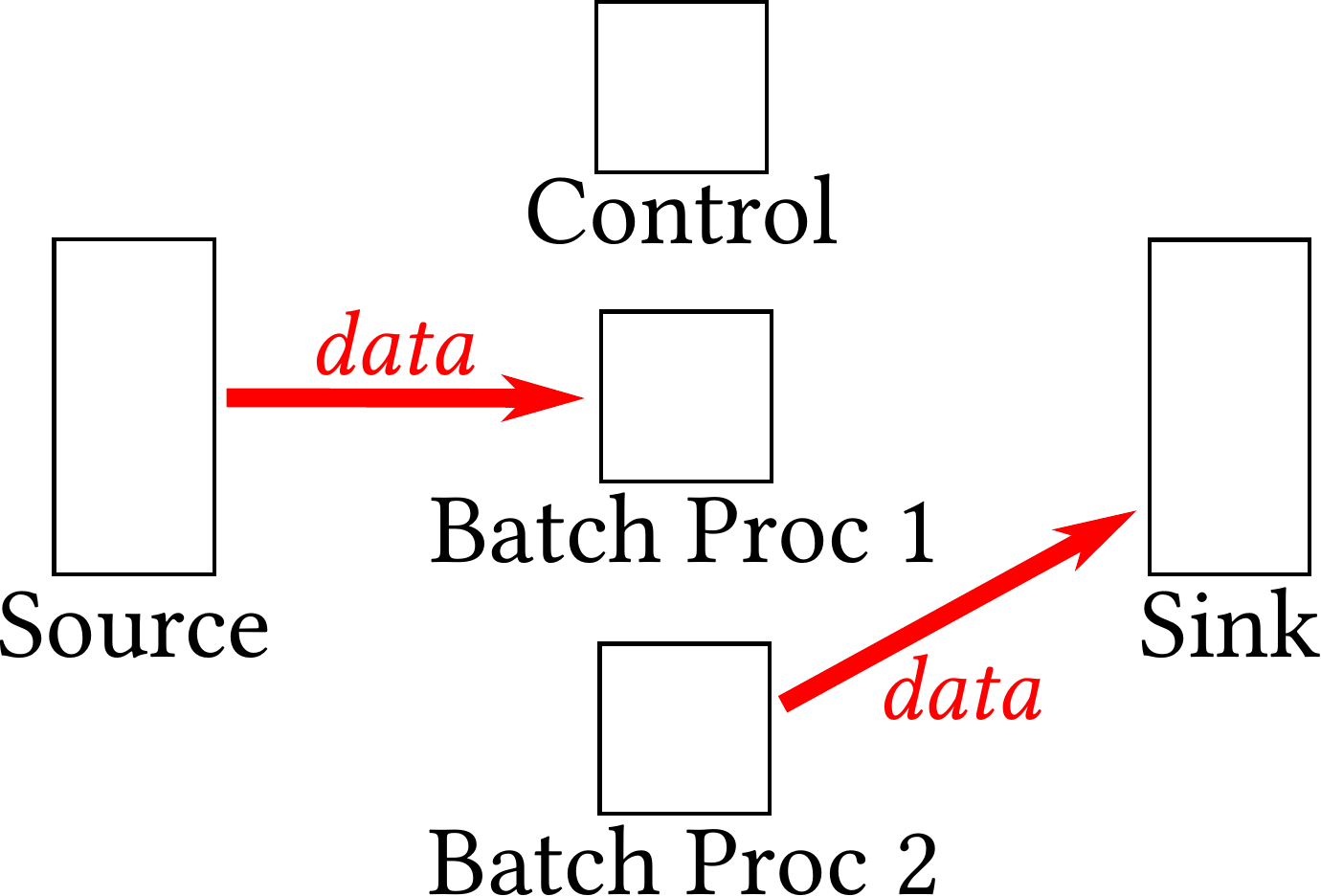}
  }
  \caption{An execution of the optimised distributed batch processing
    protocol. Note that in steps~(\subref{fig:batch-processing-3}) and
    (\subref{fig:batch-processing-5}) the two batch processors are
    active at the same time, and they can perform their computations
    in parallel (depending on the time required by the computation,
    and the time it takes for the source to send its data.)}
  \label{fig:batch-processing-double-buf}
\end{figure}
\vspace{-3mm}

\subsection{Optimised Protocol}
\label{sec:optim-batch-protocol}

The type specifications illustrated in \Cref{sec:basic-batch-protocol}
produce a correct (live) interaction. However, the interaction is also
rather sequential:
\begin{itemize}
\item after \ProcessorOne ($\role{bp_1}$) receives its data, 
  \ProcessorTwo ($\role{bp_2}$) is idle, waiting for data that is only
  made available after \Sink sends a message $\mathit{b}_1$ to \Control;
\item later on, while $\role{bp_2}$ is processing, $\role{bp_1}$
  becomes idle, waiting for data that is only made available after
  \Sink sends a message $\mathit{b}_2$ to \Control.
\end{itemize}
When implementing the system, it can be appropriate to optimise it by
leveraging asynchronous communications: we can aim at a higher degree
of parallelism for $\role{bp_1}$ and $\role{bp_2}$ by making
their input data available earlier --- without
compromising the overall correctness of the system.

One way to achieve such an optimisation is to
take inspiration from the
\emph{double-buffering algorithm}
(a version implemented in C is found in \citeN[Section 3.2]{doublebuffer}),
and implement the following
alternative type $\T^*_{\role{ctl}}$ for the \Control %
that changes the order of some of
the actions of $\T_{\role{ctl}}$:
\[
\T^*_{\role{ctl}} \;=\; \tout{\role{src}}{\mathit{b}_1}{}.\tout{\role{src}}{\mathit{b}_2}{}.\mu\ty. \tin{\role{src}}{\mathit{b}_1}{}.\tin{\role{sk}}{\mathit{b}_1}{}.\tout{\role{sk}}{\mathit{b}_1}{}.\tout{\role{src}}{\mathit{b}_1}{}.
    \tin{\role{src}}{\mathit{b}_2}{}.\tin{\role{sk}}{\mathit{b}_2}{}.\tout{\role{sk}}{\mathit{b}_2}{}.\tout{\role{src}}{\mathit{b}_2}{}.\ty
\]

According to type $\T^*_{\role{ctl}}$,  \Control behaves as follows:
\begin{itemize}
\item first, it signals \Source that it should send data to
  $\role{bp_1}$, then to $\role{bp_2}$,
  and then enters the loop;
\item inside the loop, \Control receives the notifications from Source
  and Sink, on their intention to send/receive data to/from
  $\role{bp_1}$; then, the \Control \emph{immediately notifies Source to
    send again at $\role{bp_1}$}. This refinement allows \Source to
  start sending data earlier, and activate $\role{bp_1}$ immediately.
\end{itemize}
A resulting execution of the protocol is illustrated in
\Cref{fig:batch-processing-double-buf}.

We now want to ensure that $\T^*_{\role{ctl}}$ represents a correct
optimisation of $\T_{\role{ctl}}$ --- \ie, if a process implementing
the former is used in a system where a process implementing the latter is
expected, then the optimisation does not introduce deadlocks, orphan
messages, or starvation errors. To this purpose, we show that
$\T^*_{\role{ctl}}$ is an asynchronous subtype of $\T_{\role{ctl}}$, by
\Cref{def:subtyping}. We take the session trees\;
$\TT^*_{\role{ctl}} = \ttree{\T^*_{\role{ctl}}}$
\;and\; $\TT_{\role{ctl}} = \ttree{\T_{\role{ctl}}}$, and we define:
\[
\begin{array}{c}
\TT_\mathit{R} = \ttree{\T_\mathit{R}}
\\\text{where\; }%
\T_\mathit{R}=\mu\ty. \tin{\role{src}}{\mathit{b}_1}{}.\tin{\role{sk}}{\mathit{b}_1}{}.\tout{\role{sk}}{\mathit{b}_1}{}.\tout{\role{src}}{\mathit{b}_1}{}.\tin{\role{src}}{\mathit{b}_2}{}.\tin{\role{sk}}{\mathit{b}_2}{}.\tout{\role{sk}}{\mathit{b}_2}{}.\tout{\role{src}}{\mathit{b}_2}{}.\ty
\end{array}
\]
Hence, we have $\TT^*_{\role{ctl}}=\tout{\role{src}}{\mathit{b}_1}{}.\tout{\role{src}}{\mathit{b}_2}{}.\TT_\mathit{R}$.
We can now prove $\T^*_{\role{ctl}}\subt\T_{\role{ctl}}$:
since they are both SISO session types, by \Cref{def:subtyping}
it is enough to build a coinductive derivation
showing\; $\TT^*_{\role{ctl}} \subttt \TT_{\role{ctl}}$,
\;with the SISO refinement rules in \Cref{def:ref}. %
By using the abbreviation\;
$\pi=\tin{\role{src}}{\mathit{b}_2}{}.\tin{\role{sk}}{\mathit{b}_2}{}.\tout{\role{sk}}{\mathit{b}_2}{}$, 
\;and \hlight{highlighting} matching pairs of prefixes, %
we have
\change{\#D: clarify coinductive derivations}{%
  the following infinite coinductive derivation:
  (notice that the topmost refinement matches the second from the bottom)
}%
\[
{\footnotesize%
\infer=[\mbox{\rulename{ref-out}}]{%
\TT^*_{\role{ctl}} =
\hlight{$\tout{\role{src}}{\mathit{b}_1}{}$}.\tout{\role{src}}{\mathit{b}_2}{}.\TT_\mathit{R} 
\subttt 
\hlight{$\tout{\role{src}}{\mathit{b}_1}{}$}.\tin{\role{src}}{\mathit{b}_1}{}.\tin{\role{sk}}{\mathit{b}_1}{}.\tout{\role{sk}}{\mathit{b}_1}{}.
    \tout{\role{src}}{\mathit{b}_2}{}.\pi.\TT_{\role{ctl}}
 = \TT_{\role{ctl}}
}
{
\infer=[\mbox{\rulename{ref-$\BC$}}]{
\hlight{$\tout{\role{src}}{\mathit{b}_2}{}$}.\TT_\mathit{R} 
\subttt 
\tin{\role{src}}{\mathit{b}_1}{}.\tin{\role{sk}}{\mathit{b}_1}{}.\tout{\role{sk}}{\mathit{b}_1}{}.
    \hlight{$\tout{\role{src}}{\mathit{b}_2}{}$}.\pi.\TT_{\role{ctl}}
}
{
\infer=[\mbox{\rulename{ref-in}}]{
\hlight{$\tin{\role{src}}{\mathit{b}_1}{}$}.\tin{\role{sk}}{\mathit{b}_1}{}.\tout{\role{sk}}{\mathit{b}_1}{}.\tout{\role{src}}{\mathit{b}_1}{}.\pi.\tout{\role{src}}{\mathit{b}_2}{}.\TT_\mathit{R} 
\subttt 
\hlight{$\tin{\role{src}}{\mathit{b}_1}{}$}.\tin{\role{sk}}{\mathit{b}_1}{}.\tout{\role{sk}}{\mathit{b}_1}{}.\pi.\TT_{\role{ctl}}
}
{
\infer=[\mbox{\rulename{ref-in}}]{
\hlight{$\tin{\role{sk}}{\mathit{b}_1}{}$}.\tout{\role{sk}}{\mathit{b}_1}{}.\tout{\role{src}}{\mathit{b}_1}{}.\pi.\tout{\role{src}}{\mathit{b}_2}{}.\TT_\mathit{R}   
\subttt 
\hlight{$\tin{\role{sk}}{\mathit{b}_1}{}$}.\tout{\role{sk}}{\mathit{b}_1}{}.\pi.\TT_{\role{ctl}}
}
{
\infer=[\mbox{\rulename{ref-out}}]{
\hlight{$\tout{\role{sk}}{\mathit{b}_1}{}$}.\tout{\role{src}}{\mathit{b}_1}{}.\pi.\tout{\role{src}}{\mathit{b}_2}{}.\TT_\mathit{R}
\subttt 
\hlight{$\tout{\role{sk}}{\mathit{b}_1}{}$}.\pi.\TT_{\role{ctl}}
}
{
\infer=[\mbox{\rulename{ref-$\BC$}}]{
\hlight{$\tout{\role{src}}{\mathit{b}_1}{}$}.\pi.\tout{\role{src}}{\mathit{b}_2}{}.\TT_\mathit{R}    
\subttt
\pi.\hlight{$\tout{\role{src}}{\mathit{b}_1}{}$}.\tin{\role{src}}{\mathit{b}_1}{}.\tin{\role{sk}}{\mathit{b}_1}{}.\tout{\role{sk}}{\mathit{b}_1}{}.
    \tout{\role{src}}{\mathit{b}_2}{}.\pi.\TT_{\role{ctl}}
}
{
\infer=[\mbox{\rulename{ref-in}}]{
\hlight{$\tin{\role{src}}{\mathit{b}_2}{}$}.\tin{\role{sk}}{\mathit{b}_2}{}.\tout{\role{sk}}{\mathit{b}_2}{}.\tout{\role{src}}{\mathit{b}_2}{}.\TT_\mathit{R} \subttt 
\hlight{$\tin{\role{src}}{\mathit{b}_2}{}$}.\tin{\role{sk}}{\mathit{b}_2}{}.\tout{\role{sk}}{\mathit{b}_2}{}.\tin{\role{src}}{\mathit{b}_1}{}.\tin{\role{sk}}{\mathit{b}_1}{}.\tout{\role{sk}}{\mathit{b}_1}{}.
\tout{\role{src}}{\mathit{b}_2}{}.\pi.\TT_{\role{ctl}}
}
{
\infer=[\mbox{\rulename{ref-in}}]{
\hlight{$\tin{\role{sk}}{\mathit{b}_2}{}$}.\tout{\role{sk}}{\mathit{b}_2}{}.\tout{\role{src}}{\mathit{b}_2}{}.\TT_\mathit{R}   
\subttt
\hlight{$\tin{\role{sk}}{\mathit{b}_2}{}$}.\tout{\role{sk}}{\mathit{b}_2}{}.\tin{\role{src}}{\mathit{b}_1}{}.\tin{\role{sk}}{\mathit{b}_1}{}.\tout{\role{sk}}{\mathit{b}_1}{}.
    \tout{\role{src}}{\mathit{b}_2}{}.\pi.\TT_{\role{ctl}}
}{
\infer=[\mbox{\rulename{ref-out}}]{
\hlight{$\tout{\role{sk}}{\mathit{b}_2}{}$}.\tout{\role{src}}{\mathit{b}_2}{}.\TT_\mathit{R}   
\subttt
\hlight{$\tout{\role{sk}}{\mathit{b}_2}{}$}.\tin{\role{src}}{\mathit{b}_1}{}.\tin{\role{sk}}{\mathit{b}_1}{}.\tout{\role{sk}}{\mathit{b}_1}{}.
    \tout{\role{src}}{\mathit{b}_2}{}.\pi.\TT_{\role{ctl}}
    }
    {
    \hlight{$\tout{\role{src}}{\mathit{b}_2}{}$}.\TT_\mathit{R} 
\subttt 
\tin{\role{src}}{\mathit{b}_1}{}.\tin{\role{sk}}{\mathit{b}_1}{}.\tout{\role{sk}}{\mathit{b}_1}{}.
    \hlight{$\tout{\role{src}}{\mathit{b}_2}{}$}.\pi.\TT_{\role{ctl}}
    }
}
}
}
}
}
}
}
}
}
\]

Thus, we obtain the following result.

\begin{proposition}%
  \label{ex:lem:subt}%
 $\T^*_{\role{ctl}}\subt\T_{\role{ctl}}$ holds.
\end{proposition}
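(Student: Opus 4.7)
The plan is to reduce the subtyping obligation $\T^*_{\role{ctl}}\subt\T_{\role{ctl}}$ to a single SISO tree refinement, and then construct a (cyclic) coinductive derivation of that refinement using the rules of \Cref{def:ref}. The first observation is that both $\T^*_{\role{ctl}}$ and $\T_{\role{ctl}}$ are SISO session types: every branching and selection occurring in either type is a singleton, since the only constructs used are sequential inputs/outputs interleaved with recursion. Therefore the SO- and SI-decompositions appearing in \Cref{def:subtyping} all collapse to singletons, \ie $\llbracket\TT^*_{\role{ctl}}\rrbracket_\SO = \{\TT^*_{\role{ctl}}\}$, $\llbracket\TT_{\role{ctl}}\rrbracket_\SI = \{\TT_{\role{ctl}}\}$, and similarly for the inner decompositions. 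Consequently, proving $\T^*_{\role{ctl}}\subt\T_{\role{ctl}}$ is equivalent to proving the single SISO refinement $\TT^*_{\role{ctl}} \subttt \TT_{\role{ctl}}$.

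Next, I would construct a coinductive derivation of $\TT^*_{\role{ctl}} \subttt \TT_{\role{ctl}}$. The intuition is that the optimised type anticipates two outputs $\tout{\role{src}}{\mathit{b}_i}{}$ (one of them already at the very beginning, one more inside the loop) that the original type performs only after intervening interactions with $\role{sk}$. These anticipations are precisely captured by rule \rulename{ref-$\BC$} with $\BContext{\role{src}}$ being a finite sequence of inputs and/or outputs directed at participants different from $\role{src}$ (namely $\role{sk}$). The remaining prefixes are identical in both types and will be consumed step by step using \rulename{ref-in} and \rulename{ref-out}. Concretely, I would: (i) apply \rulename{ref-out} once to synchronise the leading $\tout{\role{src}}{\mathit{b}_1}{}$; (ii) apply \rulename{ref-$\BC$} to push the $\tout{\role{src}}{\mathit{b}_2}{}$ of the subtype past the initial $\tin{\role{src}}{\mathit{b}_1}{}.\tin{\role{sk}}{\mathit{b}_1}{}.\tout{\role{sk}}{\mathit{b}_1}{}$ of the supertype; (iii) peel off those three prefixes by \rulename{ref-in}, \rulename{ref-in}, \rulename{ref-out}; (iv) apply \rulename{ref-$\BC$} again to push the remaining $\tout{\role{src}}{\mathit{b}_1}{}$ past the block $\pi = \tin{\role{src}}{\mathit{b}_2}{}.\tin{\role{sk}}{\mathit{b}_2}{}.\tout{\role{sk}}{\mathit{b}_2}{}$; (v) close the loop by \rulename{ref-in}, \rulename{ref-in}, \rulename{ref-out}, reaching a judgement congruent to the one obtained after step~(i), which validates the coinduction. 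This is exactly the shape of the derivation sketched immediately before the proposition.

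Before closing, each application of \rulename{ref-$\BC$} carries two side-conditions: the trivial sort-matching ($\ST\subs\ST$, which holds by reflexivity since all payloads are of unit sort), and the equality of action sets between the continuation of the subtype and the prefixed continuation of the supertype. The latter follows because every participant that appears on one side also appears on the other: $\role{src}$, $\role{sk}$ both occur as input and output partners in the tails of both types, and the anticipated actions are eventually matched on the subtype side inside the same recursive unfolding.

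The main obstacle is making the coinductive derivation genuinely well-founded: after the second $\BC$-anticipation, the pair of trees reached must be congruent (up to folding/unfolding of $\mu$) to a pair appearing earlier in the derivation, so that the infinite derivation collapses to a finite cyclic one. Verifying this requires carefully accounting for how the two anticipated outputs are ``absorbed'' — one by the leading $\tout{\role{src}}{\mathit{b}_1}{}$ of $\TT_{\role{ctl}}$ and the other by the $\tout{\role{src}}{\mathit{b}_2}{}$ appearing midway through the recursion body — and checking that the residual tails on both sides are precisely $\pi.\TT_{\role{ctl}}$ on the right and $\pi.\tout{\role{src}}{\mathit{b}_2}{}.\TT_\mathit{R}$ on the left, which together with $\TT^*_{\role{ctl}} = \tout{\role{src}}{\mathit{b}_1}{}.\tout{\role{src}}{\mathit{b}_2}{}.\TT_\mathit{R}$ closes the cycle. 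Once this closure is verified, the derivation is complete and, by \Cref{def:subtyping}, $\T^*_{\role{ctl}}\subt\T_{\role{ctl}}$ follows.
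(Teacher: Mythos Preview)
Your proposal is correct and follows essentially the same approach as the paper: you observe that both types are SISO so the subtyping collapses to a single refinement $\TT^*_{\role{ctl}}\subttt\TT_{\role{ctl}}$, and you build the cyclic coinductive derivation using one \rulename{ref-out}, a \rulename{ref-$\BC$} over $\tin{\role{src}}{\mathit{b}_1}{}.\tin{\role{sk}}{\mathit{b}_1}{}.\tout{\role{sk}}{\mathit{b}_1}{}$, three peel steps, a second \rulename{ref-$\BC$} over $\pi$, and three more peel steps --- exactly the derivation the paper displays before the proposition, with the cycle closing at the judgement reached after your step~(i).
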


The types
$\T_{\role{bp1}}, \T_{\role{src}}, \T^*_{\role{ctl}}$ 
and $\T_{\role{sk}}$, can be implemented
as processes for \Processor, \Source, \Control and \Sink, such as:
\[
  \begin{array}{r@{\;\;}c@{\;\;}l}
   \PP_{\role{bp1}} &=&     \PP_{\role{bp2}} =
  \mu X.\procin{{\role{src}}}{\mathit{data}(x)}
  \procout{{\role{sk}}}{\mathit{data}}{f(\x)} X \\
    \PP_{\role{src}} &=& \mu X.\sum\big\{
       \procin{\role{ctl}}{\mathit{b}_1()}
       \procout{\role{ctl}}{\mathit{b}_1}{}
       \tout{\role{bp_1}}{\mathit{data}}{\mathit{datum}}. X\;,\,
       \procin{\role{ctl}}{\mathit{b}_2()}
       \procout{\role{ctl}}{\mathit{b}_2}{}
       \procout{\role{bp_2}}{\mathit{data}}{\mathit{datum}} X\big\} \\
    \PP_{\role{ctl}} &=& 
    \procout{\role{src}}{\mathit{b}_1}{} \procout{\role{src}}{\mathit{b}_2}{} \mu X. \procin{\role{src}}{\mathit{b}_1()} \procin{\role{sk}}{\mathit{b}_1()} \procout{\role{sk}}{\mathit{b}_1}{} \procout{\role{src}}{\mathit{b}_1}{}
    \procin{\role{src}}{\mathit{b}_2()} \procin{\role{sk}}{\mathit{b}_2()} \procout{\role{sk}}{\mathit{b}_2}{} \procout{\role{src}}{\mathit{b}_2}{} X \\
    \PP_{\role{sk}}   &=& \mu X.
    \procout{{\role{ctl}}}{\mathit{b}_1}{}
    \procin{{\role{ctl}}}{\mathit{b}_1()}
    \procin{{\role{bp_1}}}{\mathit{data}(\y_1)}
    \procout{{\role{ctl}}}{\mathit{b}_2}{}
    \procin{{\role{ctl}}}{\mathit{b}_2()}
    \procin{{\role{bp_2}}}{\mathit{data}(\y_2)} X
\end{array}
\]
where  $f$ (used by $ \PP_{\role{bp1}}$ and $ \PP_{\role{bp2}}$) represents processing of the data received by $\role{src}$, 
and where the variables $\x,\y_1$ and $y_2$, and objects $f(\x)$ and $\mathit{datum}$ are all of sort $\ST$.

\begin{proposition}[Correctness of Optimisation]
Let:
\begin{itemize}
\item $\Gamma = 
\role{bp_1}:(\temptyqueue, \T_{\role{bp1}}), 
\role{bp_2}:(\temptyqueue, \T_{\role{bp2}}),
\role{src}:(\temptyqueue, \T_{\role{src}}), 
\role{ctl}:(\temptyqueue, \T_{\role{ctl}}), 
\role{sk}:(\temptyqueue, \T_{\role{sk}})$ \;\;and 
\item 
$
\N = 
\pa{\role{bp_1}} {\PP_{\role{bp1}}} \pc \pa{\role{bp_1}} \emptyqueue \pc
\pa{\role{bp_2}} {\PP_{\role{bp2}}} \pc \pa{\role{bp_2}} \emptyqueue \pc
\pa{\role{src}} {\PP_{\role{src}}} \pc \pa{\role{src}} \emptyqueue \pc 
\pa {\role{ctl}} {\PP_{\role{ctl}}} \pc \pa{\role{ctl}} \emptyqueue \pc 
\pa{\role{sk}} {\PP_{\role{sk}}} \pc \pa{\role{sk}} \emptyqueue
$
\end{itemize}
Then, we have that $\Gamma$ is live and $\Gamma \vdash \N$.
Also, $\N$ will never reduce to error.
\end{proposition}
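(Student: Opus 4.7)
The plan is to establish the three claims in the proposition in sequence: first type each component process, then assemble the session and type environment, then prove liveness of $\Gamma$, and finally invoke the Type Safety and Progress corollary (\Cref{thm:error-freedom}) to rule out error reductions.

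First I would check, by a routine induction on the structure of each process, that the typing rules in \Cref{figure:typesystem} derive
$\vdash \PP_{\role{bp1}} : \T_{\role{bp1}}$,\;
$\vdash \PP_{\role{bp2}} : \T_{\role{bp2}}$,\;
$\vdash \PP_{\role{src}} : \T_{\role{src}}$,\;
$\vdash \PP_{\role{sk}} : \T_{\role{sk}}$,\;
and crucially $\vdash \PP_{\role{ctl}} : \T^*_{\role{ctl}}$. These derivations use only \rulename{t-out}, \rulename{t-ext}, \rulename{t-var}, \rulename{t-rec} and \rulename{t-$\inact$}, together with $\vdash\valt{\ST}:\ST$ and $\vdash\mathit{datum}:\ST$. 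Since $\Gamma$ assigns type $\T_{\role{ctl}}$ (not $\T^*_{\role{ctl}}$) to $\role{ctl}$, I would now apply subsumption \rulename{t-sub} with \Cref{ex:lem:subt} ($\T^*_{\role{ctl}} \subt \T_{\role{ctl}}$) to obtain $\vdash \PP_{\role{ctl}} : \T_{\role{ctl}}$. Combined with the trivial queue judgements $\vdash\emptyqueue:\temptyqueue$, rule \rulename{t-sess} yields $\Gamma \vdash \N$.

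The main obstacle is proving that $\Gamma$ is live in the sense of \Cref{def:env-liveness}. The typing environment in $\Gamma$ pairs $\T_{\role{ctl}}$ (not the refined $\T^*_{\role{ctl}}$) with an initially empty queue, so liveness is determined by $\T_{\role{ctl}}$'s sequential structure together with $\T_{\role{src}}, \T_{\role{sk}}, \T_{\role{bp1}}, \T_{\role{bp2}}$. I would characterise the reachable environments as a finite set (up to unfolding) indexed by the current phase of the $\T_{\role{ctl}}$ loop, classifying each reachable state by which messages are in flight in each participant's output queue. Because the Control's type prescribes a strict alternation $\role{src}\to\role{ctl}\to\role{sk}\to\role{ctl}$ for each of the $\mathit{b}_1$ and $\mathit{b}_2$ phases, and because each Batch \Processor{} just relays data from \Source to \Sink, every pending output action has a unique matching input action in some other participant's type. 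Then, for any fair path starting from such a reachable state, fairness conditions \ref{item:fairness:send} and \ref{item:fairness:recv} force each such pending action to be scheduled eventually, so both liveness conditions \ref{item:liveness:send} and \ref{item:liveness:recv} are met. The finite-state nature of the analysis together with the absence of any nondeterministic branching in any of the five types makes the argument conceptually straightforward, but somewhat tedious to write out; I expect this case analysis over reachable environment shapes to be the longest part of the proof.

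Once liveness of $\Gamma$ is established, the final claim is immediate: $\Gamma \vdash \N$ with $\Gamma$ live, hence by Type Safety and Progress (\Cref{thm:error-freedom}), for every $\N'$ with $\N \reds \N'$ we have $\N' \neq \error$. Note also that by \Cref{thm:SR} the intermediate environments produced along reductions of $\N$ can differ from the reachable environments of $\Gamma$ up to some earlier anticipation of inputs/outputs permitted by $\subt$ (this is exactly where $\PP_{\role{ctl}}$ exploits $\T^*_{\role{ctl}}\subt\T_{\role{ctl}}$ to send $\mathit{b}_2$ to $\role{src}$ earlier than $\T_{\role{ctl}}$ prescribes); but since $\subt$ preserves liveness (\Cref{lem:subtyping-preserves-liveness}) and reduction does too (\Cref{lem:move-preserves-liveness}), no reachable configuration of $\N$ can trigger any of the error rules of \Cref{tab:error}.
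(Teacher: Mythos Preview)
Your proposal is correct and follows the same line as the paper: type $\PP_{\role{ctl}}$ at $\T^*_{\role{ctl}}$, use \Cref{ex:lem:subt} with \rulename{t-sub} to obtain $\vdash \PP_{\role{ctl}} : \T_{\role{ctl}}$, apply \rulename{t-sess} to get $\Gamma \vdash \N$, and invoke \Cref{thm:error-freedom} for error-freedom. The paper's own argument is in fact terser than yours: it does not spell out the liveness of $\Gamma$ at all, treating it as evident from the sequential, choice-free structure of the five types, whereas you sketch an explicit finite-state reachability analysis; your extra paragraph on liveness is a welcome expansion of what the paper leaves implicit.
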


Observe that $\Gamma$ uses the basic unoptimised protocol
$\T_{\role{ctl}}$ for the \Control, while $\N$ implements the
optimised protocol $\T^*_{\role{ctl}}$: the typing holds by \Cref{ex:lem:subt},
while the absence of errors is consequence of \Cref{thm:error-freedom}.
 
\section{Additional Preciseness Results}
\label{sec:preciseness-extra}

We now use our main result %
(\Cref{lem:preciseness}) to prove two additional results:
our subtyping $\subt$ is \emph{denotationally precise} (\Cref{sec:denotation})
and also \emph{precise \wrt liveness} (\Cref{sec:preciseness-liveness}).

\subsection{Denotational Preciseness of Subtyping}
\label{sec:denotation}

The approach to preciseness of subtyping described in the previous sections
is nowadays dubbed {\em operational preciseness}.
In prior work, the canonical approach to preciseness of subtyping
for a given calculus has been \emph{denotational}:
a type $\T$ is interpreted (with notation $\dlsqb  \T \drsqb$)
as the set that describes the meaning of $\T$, according
to denotations of the expressions (terms, processes, \etc)
of the calculus. 
\Eg, $\lambda$-calculus types are usually interpreted as
subsets of the domains of $\lambda$-models~\citep{bcd83,H83}.
With this approach, subtyping is interpreted as set-theoretic inclusion.

Under this denotational viewpoint,
a subtyping $\subGen$ is sound when\; $\T \subGen \T' \ \text{ \,implies\, }\ \dlsqb  \T \drsqb \subseteq\dlsqb  \T' \drsqb$, and complete when $\dlsqb  \T \drsqb \subseteq\dlsqb  \T' \drsqb \ \text{ \,implies\, }\  \T \subGen \T' $.
Hence, a subtyping $\subGen$ is precise when:
\[
\T \subGen \T' \ \text{ \;if and only if\; }\ \dlsqb  \T \drsqb \subseteq\dlsqb  \T' \drsqb
\]
This notion is nowadays dubbed {\em denotational preciseness}.

We now adapt the denotational approach to our setting.
Let us interpret a session type $\T$ as the set of closed processes typed by $\T$, i.e.
\[%
\dlsqb  \T \drsqb \;=\; \{\PP~ \mid \vphantom{x} \vdash \PP: \T\}
\] %
We now show that our subtyping is denotationally precise. The denotational soundness follows from the subsumption rule \rln{[t-sub]} (\Cref{figure:typesystem}) and the interpretation of subtyping as set-theoretic inclusion. In order to prove that denotational completeness holds, we show a more general result (along the lines of~\citeN{GhilezanJPSY19}): by leveraging characteristic processes and sessions, we show that operational completeness implies denotational completeness.

\begin{theorem}
  The multiparty asynchronous subtyping $\subt$ is denotationally complete.
\end{theorem}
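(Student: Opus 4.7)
The plan is to prove the contrapositive: assuming $\T \not\subt \T'$, exhibit a closed process witnessing $\dlsqb \T \drsqb \not\subseteq \dlsqb \T' \drsqb$. The natural candidate is the characteristic process built from an SO decomposition of $\T$, which was already constructed and analysed for the operational completeness proof.

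More concretely, from $\T \not\subt \T'$ Proposition~\ref{prop:completeness-maintext} gives, after fixing any $\pr \not\in \participant{\T'}$, session types $\UU$ and $\V'$ with $\ttree{\UU} \in \llbracket \ttree{\T} \rrbracket_\SO$ and $\ttree{\V'} \in \llbracket \ttree{\T'} \rrbracket_\SI$, together with the characteristic process $\CP{\UU}$ and the characteristic session $\M_{\pr,\V'}$, such that: (a) $\vdash \CP{\UU} : \T$, hence $\CP{\UU} \in \dlsqb \T \drsqb$; (b) for every $Q$ with $\vdash Q : \T'$, the composition $\pa\pr Q \pc \pa\pr \EmptyQueue \pc \M_{\pr,\V'}$ is typable in some live $\Gamma$; and (c) $\pa\pr \CP{\UU} \pc \pa\pr \EmptyQueue \pc \M_{\pr,\V'} \reds \error$.

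The crux of the argument is then to rule out $\CP{\UU} \in \dlsqb \T' \drsqb$, which I would do by contradiction: if $\vdash \CP{\UU} : \T'$ held, then instantiating (b) with $Q = \CP{\UU}$ would yield a live typing environment for the very session in (c); Type Safety and Progress (Corollary~\ref{thm:error-freedom}) would then forbid that session from reducing to $\error$, contradicting (c) directly. Hence $\CP{\UU} \in \dlsqb \T \drsqb \setminus \dlsqb \T' \drsqb$, completing the proof.

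I do not expect any real obstacle: all the delicate work---designing characteristic processes and sessions, and showing that plugging $\CP{\UU}$ into $\M_{\pr,\V'}$ provokes an error---has already been carried out. The only ingredient to double-check is that $\M_{\pr,\V'}$ really is a universal test context against $\T'$, i.e.\ that it types every $\T'$-implementation, which is precisely what Proposition~\ref{prop:live-ti} guarantees via the subsumption provided by $\T' \subt \V'$ (itself a consequence of $\ttree{\V'} \in \llbracket \ttree{\T'} \rrbracket_\SI$).
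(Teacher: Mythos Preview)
Your proposal is correct and follows essentially the same route as the paper's proof: both use Proposition~\ref{prop:completeness-maintext} to produce $\CP{\UU}$ and $\M_{\pr,\V'}$, then derive a contradiction from the assumption $\vdash \CP{\UU} : \T'$ via type safety of the live-typed characteristic session. The only cosmetic difference is that the paper invokes Subject Reduction (Theorem~\ref{thm:SR}) directly whereas you cite its corollary (Corollary~\ref{thm:error-freedom}), which is equivalent here.
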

\begin{proof}
We proceed by contradiction.
Suppose that denotational completeness does \emph{not} hold, \ie,
there are $\T$ and $\T'$ such that
$\dlsqb \T \drsqb \subseteq \dlsqb \T' \drsqb$ but $\T \not\subt \T'$. 
\begin{itemize}
\item From $\T \not\subt \T'$, by Step 1 (\Cref{sec:preciseness-step1}, eq.~\eqref{eq:not-subt-def-regular}), there are $\UU$ and $\V'$ satisfying $\UU\not\subt\V'$ and $\ttree{\UU}\in \llbracket \ttree{\T}\rrbracket_\SO$ and $\ttree{\V'}\in \llbracket \ttree{\T'}\rrbracket_\SI$. Then  by Step 4 (\Cref{prop:completeness-maintext}(3)) we have:
   \begin{align}
 \pa\pr\CP{\UU} \pc \pa\pr\EmptyQueue \pc \M_{\pr, \V'} \;\red^*\; \error \label{eqn:dcomp1}
 \end{align}
 where $\M_{\pr, \V'}$, (with $\pr\not\in\participant{\V'}$) is the characteristic  
session of Step 3 (\Cref{def:characteristic-session}) and $\CP{\UU} $ is the characteristic process of $\UU$ of Step
 2 (\Cref{def:characteristic-process}).
\item By Step 2 (\Cref{cp}), the characteristic process of the above $\UU$ is typable by $\T$, \ie, we have $\vdash \CP{\UU}: \T$; therefore, from  $\dlsqb \T \drsqb \subseteq \dlsqb \T' \drsqb$ we also have $\vdash \CP{\UU}: \T'$. Hence, by Step 3 (\Cref{def:characteristic-session}, \Cref{prop:live-ti}) for the above $\V'$ the characteristic session $\M_{\pr,\V'}$ is typable if composed with $\CP{\UU}: \T'$, \ie:
   \begin{align}
\Gamma \vdash \pa\pr\CP{\UU} \pc \pa\pr\EmptyQueue \pc \M_{\pr,\V'} \quad \mbox{for some live} \; \Gamma \label{eqn:dcomp2}
\end{align}
\end{itemize}
But then, \eqref{eqn:dcomp2} and \eqref{eqn:dcomp1}
contradict the soundness of the type system (\Cref{thm:SR}).
Therefore, we conclude that %
for all $\T$ and $\T'$, 
$\dlsqb \T \drsqb \subseteq \dlsqb \T' \drsqb$ implies $\T \subt \T'$,
which is the thesis.
\end{proof}

As a consequence, we obtain the denotational preciseness of subtyping.

\begin{theorem}[Denotational preciseness]\label{thm:dpreciseness}
  The subtyping relation is denotationally precise.
\end{theorem}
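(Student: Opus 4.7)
The plan is to simply combine denotational soundness with the denotational completeness result proved in the immediately preceding theorem. Since denotational preciseness is defined as the conjunction\; $\T \subt \T' \text{ iff } \dlsqb \T \drsqb \subseteq \dlsqb \T' \drsqb$,\; I need to establish both directions of this biconditional.

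For the ``$\Rightarrow$'' direction (denotational soundness), I would argue directly from the subsumption rule \rulename{t-sub} of the type system (\Cref{figure:typesystem}). Assuming $\T \subt \T'$, take any $\PP \in \dlsqb \T \drsqb$, i.e., $\vdash \PP : \T$. By a single application of \rulename{t-sub} with premises $\vdash \PP : \T$ and $\T \subt \T'$, we derive $\vdash \PP : \T'$, hence $\PP \in \dlsqb \T' \drsqb$. Since $\PP$ was arbitrary, $\dlsqb \T \drsqb \subseteq \dlsqb \T' \drsqb$. This direction is essentially immediate from the definition of the type system and of the interpretation $\dlsqb \cdot \drsqb$.

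For the ``$\Leftarrow$'' direction (denotational completeness), I would invoke the theorem just proved, which establishes exactly that $\dlsqb \T \drsqb \subseteq \dlsqb \T' \drsqb$ implies $\T \subt \T'$. That proof (by contraposition, using operational completeness via the characteristic process $\CP{\UU}$ and characteristic session $\M_{\pr, \V'}$) already did the heavy lifting.

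Neither direction presents a real obstacle here, since all machinery --- the typing rules, the interpretation of types as sets of typed processes, and the denotational completeness result --- is in place. The only thing to be careful about is to present the argument concisely as a two-line corollary of the two previous facts (subsumption-based soundness plus the preceding theorem), without inadvertently re-proving denotational completeness. So the proof is essentially: by the subsumption rule we obtain the ``$\Rightarrow$'' implication of the iff, and by the preceding theorem we obtain the ``$\Leftarrow$'' implication; together they yield the biconditional that constitutes denotational preciseness.
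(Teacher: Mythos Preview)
Your proposal is correct and takes essentially the same approach as the paper: the paper explicitly states that denotational soundness follows from the subsumption rule \rulename{t-sub}, proves denotational completeness as the preceding theorem, and then obtains denotational preciseness ``as a consequence'' without further argument. Your two-direction breakdown via subsumption for ``$\Rightarrow$'' and the preceding completeness theorem for ``$\Leftarrow$'' matches this exactly.
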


\subsection{Preciseness of Subtyping \wrt Liveness}
\label{sec:preciseness-liveness}

Our results also provide a stepping stone to show that
that \emph{our multiparty asynchronous subtyping is precise \wrt liveness} %
(\Cref{def:env-liveness}),
as formalised in \Cref{thm:subt-precise-liveness} below.%

Notably, this result can be lifted to the realm of
Communicating Finite-State Machines (CFSMs) \cite{Brand:1983:CFSM},
by the following interpretation. %
\emph{Communicating session automata (CSA)}
(introduced by \citeN{DY12}) %
are a class of CFSMs
with a finite control based on internal/external choices,
corresponding to a session type; %
a typing environment $\Gamma$ corresponds to a system of CSA,
and reduces with the same semantics (\Cref{def:typing-env-reductions}).
Therefore, our notion of typing environment liveness (\Cref{def:env-liveness})
guarantees deadlock-freedom, orphan message-freedom, and starvation-freedom
of the corresponding system of CSA %
--- similarly to the notion of \emph{$\infty$-multiparty compatibility}
between CSA, by \citeN{LangeY19}. %
Correspondingly, by \Cref{thm:subt-precise-liveness},
our subtyping $\subt$ is a sound and complete preorder %
allowing to refine any live system of CSA
(by replacing one or more automata), %
while preserving the overall liveness of the system.

To state the result, we adopt the notation\;
$\Gamma\mapUpdate{\pp}{(\tqueue,\T)}$ %
\;to represent the typing environment obtained from $\Gamma$ %
by replacing the entry for $\pp$ with $(\tqueue,\T)$.

\begin{theorem}
  \label{thm:subt-precise-liveness}%
  For all session types $\T$ and $\T'$, %
  multiparty asynchronous subtyping $\subt$ is:
  \begin{description}
  \item[sound \wrt liveness:]%
    if\, $\T \subt \T'$,\; %
    then $\forall\Gamma$ with $\Gamma(\pr) \!=\! (\temptyqueue,\!\T')$, %
    when\, $\Gamma$ is live, %
    $\Gamma\mapUpdate{\pr\!}{\!(\temptyqueue,\!\T)}$ is live;%
  \item[complete \wrt liveness:]%
    if\, $\T \not\subt \T'$,\; %
    then $\exists \Gamma$ live %
    with $\Gamma(\pr) \!=\! (\temptyqueue,\!\T')$, %
    but $\Gamma\mapUpdate{\pr\!}{\!(\temptyqueue,\!\T)}$ is not live.%
  \end{description}
\end{theorem}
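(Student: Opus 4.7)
The proof proceeds by addressing the two implications separately. For soundness, the claim is essentially immediate from \Cref{lem:subtyping-preserves-liveness}. Given $\T \subt \T'$ and a live $\Gamma$ with $\Gamma(\pr) = (\temptyqueue, \T')$, let $\Gamma' = \Gamma\mapUpdate{\pr}{(\temptyqueue,\T)}$. Then $(\temptyqueue,\T) \subt (\temptyqueue,\T')$ (since $\temptyqueue \equiv \temptyqueue$ and $\T \subt \T'$), while for every other $\pq$ the two environments agree; hence $\Gamma' \subt \Gamma$ by the componentwise definition of environment subtyping on page~\pageref{page:gamma-subt-gammai}, and \Cref{lem:subtyping-preserves-liveness} yields liveness of $\Gamma'$.

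For completeness, I would reuse the characteristic-session machinery of \Cref{sec:op}. Assume $\T \not\subt \T'$, pick a fresh $\pr \not\in \participant{\T'}$, and apply \eqref{eq:not-subt-def-regular} to obtain $\UU, \V'$ with $\ttree{\UU} \in \llbracket \ttree{\T} \rrbracket_\SO$, $\ttree{\V'} \in \llbracket \ttree{\T'} \rrbracket_\SI$, and $\UU \not\subt \V'$. I would then take
\[
  \Gamma \;=\; \{\pr{:}\,(\temptyqueue, \T')\} \cup \{\pp{:}\,(\temptyqueue, \cyclic{\V'}{\pp}) \mid \pp \in \participant{\V'}\}.
\]
The variant $\Gamma^*$ obtained by replacing $\T'$ with $\V'$ in $\pr$'s entry is precisely the environment of \eqref{eq:characteristic-gamma}, and is live by the construction underlying \Cref{prop:live}. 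Since $\ttree{\V'} \in \llbracket \ttree{\T'} \rrbracket_\SI$ implies $\T' \subt \V'$ (the same ingredient implicit in \Cref{prop:live-ti}, used there to apply subsumption to $Q$), we have $\Gamma \subt \Gamma^*$, so $\Gamma$ is live by \Cref{lem:subtyping-preserves-liveness}.

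It then remains to show $\Gamma\mapUpdate{\pr}{(\temptyqueue,\T)}$ is \emph{not} live, which I would do by contraposition against type safety. By \Cref{cp}, $\vdash \CP{\UU} : \T$; each $\CP{\cyclic{\V'}{\pp}}$ types the corresponding component of $\M_{\pr, \V'}$, so
\[
  \Gamma\mapUpdate{\pr}{(\temptyqueue,\T)} \;\vdash\; \pa\pr \CP{\UU} \pc \pa\pr \emptyqueue \pc \M_{\pr, \V'}.
\]
By \Cref{prop:completeness-maintext}(3) this session reduces to $\error$; were $\Gamma\mapUpdate{\pr}{(\temptyqueue,\T)}$ live, \Cref{thm:error-freedom} would forbid any such reduction, a contradiction. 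The main subtle point, which I would package as a separate lemma if not already recorded, is the tree-decomposition fact $\T' \subt \V'$ whenever $\ttree{\V'} \in \llbracket \ttree{\T'} \rrbracket_\SI$; once this auxiliary result is secured, the rest of the argument is a direct orchestration of existing results from Sections~\ref{sec:tsy} and~\ref{sec:op}.
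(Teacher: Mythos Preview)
Your proposal is correct and follows essentially the same route as the paper: soundness via \Cref{lem:subtyping-preserves-liveness}, and completeness by taking the characteristic environment from \eqref{eq:characteristic-gamma} (with $\pr$'s entry set to $\T'$ rather than $\V'$), then arguing that the updated environment types the session of \Cref{prop:completeness-maintext}(\ref{item:prop:completeness-error}) and invoking the contrapositive of \Cref{thm:error-freedom}. The ``subtle point'' you flag---that $\ttree{\V'} \in \llbracket \ttree{\T'} \rrbracket_\SI$ implies $\T' \subt \V'$---is already recorded in the paper as \Cref{lem:sub:properties}(ii), and is precisely what the proof of \Cref{prop:live-ti} uses to pass from the live $\Gamma^*$ with $\pr{:}(\temptyqueue,\V')$ to the live $\Gamma$ with $\pr{:}(\temptyqueue,\T')$; the paper's proof simply cites \Cref{prop:completeness-maintext}(\ref{item:prop:completeness-gamma}) (hence \Cref{prop:live-ti}) rather than unpacking this step.
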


\begin{proof}%
Soundness of $\subt$ follows by \Cref{lem:subtyping-preserves-liveness}.
Completeness, instead, %
descends from \Cref{prop:completeness-maintext}: %
for any $\T,\T'$ such that $\T \!\not\subt\! \T'$, %
we can build a live typing context $\Gamma$ %
(see \eqref{eq:characteristic-gamma}) %
with $\Gamma(\pr) \!=\! (\temptyqueue, \T')$ %
and cyclic communications %
(item~\ref{item:prop:completeness-gamma}). %
Observe that the environment %
$\Gamma\mapUpdate{\pr}{(\temptyqueue, \T)}$ %
types the session of %
\Cref{prop:completeness-maintext}(\ref{item:prop:completeness-error}), %
that reduces to $\error$: %
hence, by the contrapositive of \Cref{thm:error-freedom}, %
we conclude that $\Gamma\mapUpdate{\pr}{(\temptyqueue, \T)}$ is not live.
\end{proof}
 
\section{Related Work, Future Work, and Conclusion}\label{sec:related}
\paragraph{Precise Subtyping for $\lambda$-Calculus and Semantic Subtyping}
The notion of preciseness has been adopted
as a criterion to justify the canonicity of subtyping relations,
in the context of both functional and concurrent calculi. 
Operational preciseness of subtyping %
was first introduced by \citeN{BHLN12} (and later published
in \cite{BHLN17}), %
and applied to $\lambda$-calculus with iso-recursive types. %
Later, \citeN{DG14} adapted the idea of \citeN{BHLN12} %
to the setting of the concurrent $\lambda$-calculus %
with intersection and union types by \citeN{SIAM}, and proved both operational
and denotational preciseness. %
In the context of the $\lambda$-calculus, 
a similar framework, \emph{semantic subtyping}, was proposed by
\citeN{CF05}: each type $T$ is interpreted as the
set of values having type $T$, %
and subtyping is defined as subset inclusion between type interpretations. %
This gives a precise subtyping as long as the calculus 
allows to operationally distinguish values of different types. 
Semantic subtyping was also studied
by~\citeN{CNV08} (for a $\pi$-calculus with a patterned input and IO-types), %
and by \citeN{castagna09foundations} %
(for a $\pi$-calculus with binary session types); %
in both works, types include %
union, intersection and negation.
Semantic subtyping is precise for the
calculi of \cite{CNV08,castagna09foundations,FCB08}: %
this is due to the type case constructor in~\cite{FCB08}, %
and to the blocking of inputs for values of ``wrong'' types %
in~\cite{CNV08,castagna09foundations}.

\paragraph{Precise Subtyping for Session Types}
In the context of \emph{binary} session types, %
the first general formulation of precise subtyping %
(synchronous and asynchronous) %
is given by
\citeN{%
CDSY2017}, %
for a $\pi$-calculus %
typed by assigning session types to channels %
(as in the work by \citeN{HVK}). %
The first result by 
\citeN{%
CDSY2017} is that the well-known
branching-selection subtyping \cite{GH05,DemangeonH11} %
is sound and complete for the \emph{synchronous} binary session $\pi$- calculus.
\citeN{%
CDSY2017} also examine an \emph{asynchronous} %
binary session $\pi$-calculus, %
and introduce a subtyping relation %
(restricting the subtyping %
for the higher-order $\pi$-calculus by \citeN{MostrousY15}) %
that is also proved precise. %
\change{Answer B.1}{
  Our results imply that the binary session subtyping of \citeN{CDSY2017}
  (without delegation) is a subset of our multiparty subtyping, hence
  the subtyping of \citeN{CDSY2017} is
  expressible via tree decomposition.
  We compare our subtyping and theirs
  (specifically, the use $n$-hole contexts)
  in \Cref{remark:prefix-vs-n-hole-ctx};
  further, we prove \Cref{ex:CONCUR19-maintext} via tree decomposition
  --- and such a proof would be more difficult
  with the rules of \citeN{CDSY2017}.
}

In the context of \emph{multiparty} session types, %
\citeN{GhilezanJPSY19}
adopt an approach similar to \citeN{%
CDSY2017} %
to prove that the \emph{synchronous} multiparty extension %
of binary session subtyping \cite{GH05} %
is sound and complete, hence precise.
\change{Answer B.2}{
  Our subtyping becomes equivalent to \citeN{GhilezanJPSY19} by removing rules
  $\rulename{ref-$\AC$}$ and $\rulename{ref-$\BC$}$ 
  from \Cref{def:ref};
  however, such a formulation would not be much simpler than the one by
  \citeN{GhilezanJPSY19}, nor would simplify its algorithm:
  lacking asynchrony, their subtyping is already quite simple, and decidable.
}

An \emph{asynchronous} subtyping relation for multiparty session types %
was proposed by \citeN{mostrous_yoshida_honda_esop09};
notably, the paper claims that the relation is decidable,
but this was later disproved by \citeN{BravettiCZ18}
(see next paragraph). %
The main crucial difference between our subtyping
and the one by \citeN{mostrous_yoshida_honda_esop09}
is that the latter is \emph{unsound} in our setting:
it does not guarantee orphan message freedom,
hence it accepts processes that reduce to $\error$
by rule \rulename{err-ophn} in \Cref{tab:error}
(see the counterexamples in \citeN[page 46]{CDSY2017},
also applicable in the multiparty setting).
A recent work by \citeN{Horne20}
introduces a novel formalisation of subtyping
for \emph{synchronous} multiparty session types
equipped with parallel composition.
Horne's work does not address the problem of precise subtyping,
and its extension to asynchronous communication seems non-trivial:
in fact, the subtyping appears to be unsound in our setting, %
because (as in \citeN{mostrous_yoshida_honda_esop09})
it focuses on deadlock-freedom, and does not address, \eg,
orphan message freedom.
Studying such extensions and issues in the setting of \citeN{Horne20}
can lead to intriguing future work.

\paragraph{Session Types and Communicating Automata}
Asynchronous session subtyping was shown to be undecidable, even for
binary sessions, by \citeN{LangeY17} and \citeN{BravettiCZ18},
by leveraging a correspondence between session types and
communicating automata theories
--- a correspondence first established by \citeN{DY12} %
with the notion of \emph{session automata}
based on Communicating Finite-State Machines \cite{Brand:1983:CFSM}.
This undecidability result prompted the research on
various limited classes of (binary)
session types for which asynchronous subtyping is decidable
\cite{BravettiCLYZ19,BravettiCZ18,LangeY17}:
for example, if we have binary session types without choices
(which, in our setting, corresponds to single-input, single-output types
interacting with a single other participant),
then asynchronous subtyping becomes decidable. 
-%
The aim of our paper 
is \emph{not} finding a decidable approximation %
of asynchronous \emph{multiparty} session subtyping, %
but finding a canonical, \emph{precise} subtyping. 
Interestingly, our SISO decomposition technique leads to: %
\begin{enumerate}[label=\emph{(\arabic*)}]
\item%
  intuitive but general %
  refinement rules (see Example~\ref{ex:CONCUR19-maintext}, %
  where $\subt$ proves an example %
  not supported by the subtyping algorithm in
  \cite{BravettiCLYZ19}); and %
\item%
  preciseness of $\subt$ \wrt
  liveness (Theorem~\ref{thm:subt-precise-liveness})
  which is directly usable to define 
  the precise multiparty asynchronous refinement relation
  \wrt liveness in communicating session automata \cite{DY12,DY13}.%
\end{enumerate}

\paragraph{Future Work}
We plan to investigate
precise subtyping for richer multiparty session $\pi$-calculi, %
\eg with multiple session initiations and delegation,
\change{Answers C.2, D.1, \#A (session creation)}{
  with a formalisation similar to \citeN{ScalasY19}.
  We expect that, even in this richer setting,
  our subtyping relation will remain the same,
  and its preciseness will be proved in the same way
  --- except that session types will support session types as payloads,
  and Table~{\ref{tab:negationW}}
  and Def.~{\ref{def:characteristic-process}} will have more cases.
  This can be observed in the work by {\citeN{CDSY2017}}: they support multiple sessions, process spawning, delegation --- and yet, such features don't impact the precise subtyping relation (except for having session types as payloads).%
}%

\change{Answers A.2, C.1}{
  We also plan to study the precise subtyping for
  typing environment properties
  other than our liveness ({Def.~\ref{def:env-liveness}}).
  For example, we may consider a weaker deadlock freedom property,
  which can be sufficient for correct session typing
  (as shown by \citeN{ScalasY19})
  as long as the session calculus does {\emph{not}} include error reductions
  in case of starvation or orphan messages
  (rules {$\rulename{err-strv}$} and {$\rulename{err-ophn}$}
  in {\Cref{tab:error}}).
  We expect that, if we lift the liveness requirement,
  the resulting precise subtyping relation will be
  somewhat surprising and counter-intuitive, and harder to formulate.
  For example: due to the lack of orphan message errors,
  subtyping should allow processes forget their inputs in some situations,
  hence we should have, {\eg},\; {$\tend \subt \tin\pp{\ell}{\tnat}.\tend$}.%
}%

\change{Answers C.3, E.1, \#A (type checking decidability)}{
  Another compelling topic for future work
  is finding non-trivial decidable approximations of our multiparty
  asynchronous subtyping relation.  As remarked above,
  the relation is inherently undecidable
  --- hence type-checking is also undecidable,
  because rule $\rulename{t-sub}$ (\Cref{figure:typesystem}) is undecidable.
  Our relation is trivially decidable if restricted to
  non-recursive types.
  The known decidable fragments of binary asynchronous subtyping
  \cite{BravettiCLYZ19,BravettiCZ18,LangeY17} are sound \wrt our relation
  --- but they are also limited to two-party sessions,
  and we suspect that they may become undecidable if naively
  generalised to multiparty sessions.
  If a decidable approximation of asynchronous subtyping is adopted
  for rule $\rulename{t-sub}$ (\Cref{figure:typesystem}),
  then type-chcecking becomes decidable.
}

\paragraph{Conclusion} %
Unlike this paper, %
no other published work addresses precise \emph{asynchronous} %
\emph{multiparty} session subtyping. %
A main challenge was the exact formalisation of the subtyping itself, %
which must satisfy many \emph{desiderata}: %
it must capture a wide variety of input/output
reorderings %
performed by different participants, 
without being too strict %
(otherwise, completeness is lost) %
nor too lax (otherwise, soundness is lost); %
moreover, its definition must not be overly complex to understand, %
and tractable in proofs. %
We achieved these \emph{desiderata} with our novel approach, %
based on SISO tree decomposition and refinement, 
which yields a simpler subtyping definition than 
\cite{%
CDSY2017} (see Remark~\ref{rem:context}).  
Moreover, 
our results are much more general than \cite{GhilezanJPSY19}: %
by using live typing environments (Def.~\ref{def:env-liveness}), %
we are not limited to sessions that match some global type; %
our results are also stronger, %
as we prove soundness \wrt a wider range of errors %
(see Table~\ref{tab:error}). %
\begin{acks}                            %
  We thank the POPL reviewers for their insightful comments and suggestions,
  Mariangiola Dezani-Ciancaglini for her initial collaboration,
  and Simon Castellan for the fruitful discussion.
  This work was supported by:
  \grantsponsor{EUH2020}{EU Horizon 2020}{https://ec.europa.eu/programmes/horizon2020/en}
  project \grantnum{EUH2020}{830929}
  (``\href{https://cybersec4europe.eu}{CyberSec4Europe}'');
  \grantsponsor{SP4670}{EU COST}{https://www.cost.eu/} Actions %
  \grantnum{SP4670}{CA15123} %
  (``\href{https://eutypes.cs.ru.nl/}{EUTypes}'')
  and \grantnum{SP4670}{IC1201} %
  (``\href{http://www.dcs.gla.ac.uk/research/betty/www.behavioural-types.eu/}{BETTY}'');
  EPSRC EP/T006544/1, EP/K011715/1, EP/K034413/1, EP/L00058X/1, EP/N027833/1,
  EP/N028201/1, EP/T006544/1, EP/T014709/1 and EP/V000462/1, and NCSS/EPSRC VeTSS; 
    \grantsponsor{MPNTR}{MPNTR}{http://www.mpn.gov.rs/?lng=lat} and
     \grantsponsor{SFRS}{SFRS}{http://fondzanauku.gov.rs/?lang=en} 
      $\#$\grantnum{SFRS}{6526707}
        (``\href{http://www.mi.sanu.ac.rs/novi_sajt/research/projects/AI4TrustBC.php}{AI4TrustBC}''). 
\end{acks}

\bibliographystyle{ACM-Reference-Format}
\bibliography{session}

\newpage 
\appendix
\section{Appendix of Section~\ref{sec:tsu}}
\label{app:types}

\label{sec:app:transitivity}

We say that a binary relation $\Rel$ over single-input-single-output trees is a tree simulation if it complies with the rules given in Definition~\ref{tab:ref}, i.e., if for every $(\WT_1,\WT_2)\in\Rel$ there is a rule with $\WT_1\subttt\WT_2$ in its conclusion and it holds $(\WT'_1,\WT'_2)\in \Rel$ if  $\WT'_1\subttt \WT'_2$  is in the premise of the rule. It is required that all other premises hold as well.

\begin{lemma}
If $\Rel {\subseteq} \subttt$ is a tree simulation and  $(\WT,\WT')\in \Rel$ then $\actions{\WT}=\actions{\WT'}$.
\end{lemma}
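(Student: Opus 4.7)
The plan is to turn the coinductively-defined $\actions{\cdot}$ into a union of finite-depth approximants, and then push the induction through the structure of the simulation rules.

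First, I would observe that on SISO trees the unfolding equations for $\actions{\cdot}$ let me write $\actions{\WT} = \bigcup_{n \geq 0} \actions^{\leq n}(\WT)$, where $\actions^{\leq n}(\WT)$ is the (finite) set of labels $\pp?$ or $\pp!$ occurring at depth at most $n$ along the unique downward path in $\WT$. With this reformulation, equality of action sets reduces to two inclusions that can each be established by induction on $n$.

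The main claim I would then prove is: for every $(\WT,\WT') \in \Rel$ and every $n \geq 0$, both $\actions^{\leq n}(\WT) \subseteq \actions{\WT'}$ and $\actions^{\leq n}(\WT') \subseteq \actions{\WT}$. Taking $\bigcup_n$ gives the conclusion of the lemma. The proof of the claim proceeds by induction on $n$, with a case analysis on the rule from \Cref{def:ref} witnessing $(\WT,\WT') \in \Rel$. The easy cases are \rulename{ref-$\AC$} and \rulename{ref-$\BC$}: there the explicit premise $\actions{\WT_1} = \actions{\ACon\pp{\WT_1'}}$ (resp. $\actions{\WT_1} = \actions{\BCon\pp{\WT_1'}}$) directly forces equality of the full action sets of $\WT$ and $\WT'$, since $\actions{\tin\pp\ell\ST.\WT_1} = \{\pp?\} \cup \actions{\WT_1} = \{\pp?\} \cup \actions{\ACon\pp{\WT_1'}} = \actions{\ACon\pp{\tin\pp\ell{\ST'}.\WT_1'}}$, and analogously for \rulename{ref-$\BC$}. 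The real work is in \rulename{ref-in}, \rulename{ref-out} and \rulename{ref-end}. The base case $n=0$ is immediate by inspecting the top-level constructor enforced by each rule (e.g.\ \rulename{ref-in} puts a matching $\pp?$ at the root of both sides, and \rulename{ref-end} makes both $\tend$, contributing $\emptyset$). For the inductive step $n \to n+1$ in \rulename{ref-in} and \rulename{ref-out}, I peel off the top action (which belongs to $\actions^{\leq 0}$ of the opposite side by the base-case analysis) and apply the induction hypothesis at depth $n$ to the continuation pair, which is in $\Rel$ by the shape of the rule.

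The main obstacle I foresee is the set-theoretic bookkeeping in the \rulename{ref-$\AC$}/\rulename{ref-$\BC$} cases: one has to carefully decompose $\actions{\ACon\pp{\cdot}}$ into the actions contributed by the prefix $\ACon\pp$ and those contributed by its body, and verify that the explicit premise of the rule exactly compensates for the asymmetric shape where the top-level input $\pp?$ on the left appears somewhere inside the context on the right. A small auxiliary observation that $\actions{\ACon\pp{\WT''}} = \actions{\ACon\pp{\tend}} \cup \actions{\WT''}$ (and likewise for $\BContext\pp$) streamlines this bookkeeping. Once this is in place, the induction closes uniformly, and the lemma follows.
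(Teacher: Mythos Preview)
Your proposal is correct. The paper states this lemma without proof (it is evidently treated as routine), so there is no paper-side argument to compare against in detail. Your depth-indexed decomposition $\actions{\WT} = \bigcup_{n} \actions^{\leq n}(\WT)$ and the induction on $n$ is the natural way to convert the coinductive definition of $\actions{\cdot}$ into something amenable to induction; the case analysis you outline is exactly right. In particular, for \rulename{ref-$\AC$}/\rulename{ref-$\BC$} your chain of equalities $\{\pp?\} \cup \actions{\WT_1} = \{\pp?\} \cup \actions{\ACon\pp{\WT_1'}} = \actions{\ACon\pp{\tin\pp{\ell}{\ST'}.\WT_1'}}$ (and the analogous one for $\BContext\pp$) goes through because the prefix contributes the same actions regardless of where $\tin\pp{\ell}{\ST'}$ sits inside it; and for \rulename{ref-in}/\rulename{ref-out} the definition of tree simulation guarantees the continuation pair is again in $\Rel$, so the induction hypothesis applies cleanly. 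One small point worth making explicit in a write-up is that the case split on the witnessing rule is unambiguous: the conclusion shapes of the five rules are pairwise disjoint (the $\AContext\pp$/$\BContext\pp$ prefixes are non-empty and cannot begin with the action being anticipated), so for each pair in $\Rel$ exactly one rule schema matches.
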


\begin{lemma}\label{lemm:trans_contexts_2}
\begin{enumerate}
\item If $\BContextt{\pp}{} \not=\BContextt{\pq}{}$ and there are $\WT_1$ and $\WT_2$ such that 
$\WT= \BCont{\pp}{}{\WT_1}$ and 
$\WT= \BCont{\pq}{}{\WT_2}$, then one of the following holds:
\begin{enumerate}
\item   
$\WT=\BCont{\pp}{}{\BCont{\pq}{1}{\WT_2}}$,
where $\BContextt{\pq}{}=\BCont{\pp}{}{\BContextt{\pq}{1}}$ and $\WT_1=\BCont{\pq}{1}{\WT_2}$;
\item 
$\WT=\BCont{\pq}{}{\BCont{\pp}{1}{\WT_1}}$,
where $\BContextt{\pp}{}=\BCont{\pq}{}{\BContextt{\pp}{1}}$ and $\WT_2=\BCont{\pp}{1}{\WT_1}$.
\end{enumerate}
\item If $\AContextt{\pp}{} \not=\AContextt{\pq}{}$ and there are $\WT_1$ and $\WT_2$ such that
$\WT= \ACont{\pp}{}{\WT_1}$ and 
$\WT=\ACont{\pq}{}{\WT_2}$, then one of the following holds:
\begin{enumerate}
\item  
$\WT=\ACont{\pp}{}{\ACont{\pq}{1}{\WT_2}}$,
where $\AContextt{\pq}{}=\ACont{\pp}{}{\AContextt{\pq}{1}}$ and $\WT_1=\ACont{\pq}{1}{\WT_2}$;
\item 
$\WT=\ACont{\pq}{}{\ACont{\pp}{1}{\WT_1}}$, 
where $\AContextt{\pp}{}=\ACont{\pq}{}{\AContextt{\pp}{1}}$ and $\WT_2=\ACont{\pp}{1}{\WT_1}$.
\end{enumerate}
\item If $\WT=\BCont{\pp}{}{\WT_1}$ and $\WT=\ACont{\pq}{}{\WT_2}$, for some $\WT_1$ and $\WT_2$,
where $\BContextt{\pp}{}$ is not an {I-sequence}, then  $\WT=\ACont{\pq}{}{\BCont{\pp}{1}{\WT_1}}$, 
where $\BContextt{\pp}{}=\AContextt{\pq}{}.{\BContextt{\pp}{1}}$ and $\WT_2=\BCont{\pp}{1}{\WT_1}$.
\end{enumerate}
\end{lemma}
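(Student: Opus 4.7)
The key observation driving the proof is that a SISO tree $\WT$ is essentially a linear sequence of actions (each node has a single continuation), so any prefix decomposition $\WT = \mathcal{X}.\WT'$ fixes an initial segment of actions of $\WT$ of length $|\mathcal{X}|$. Two such decompositions must therefore agree action-by-action on their shared initial segment, which forces one context to be a prefix of the other whenever both start at the root of the same tree.

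For part (1), the plan is as follows. Both $\BContextt{\pp}{}$ and $\BContextt{\pq}{}$ are finite sequences; comparing them action-by-action as prefixes of the same linear $\WT$, one must be a prefix of the other. If $|\BContextt{\pp}{}| \le |\BContextt{\pq}{}|$ we factor $\BContextt{\pq}{} = \BContextt{\pp}{}.\BContextt{\pq}{1}$ for some non-empty remainder $\BContextt{\pq}{1}$ (non-empty because $\BContextt{\pp}{} \neq \BContextt{\pq}{}$). One then checks that $\BContextt{\pq}{1}$ is still a valid $\BContext\pq$: since $\BContextt{\pq}{}$ contains no output to $\pq$, neither does any suffix of it, so the productions defining $\BContext\pq$ remain satisfied. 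Cancelling the common prefix $\BContextt{\pp}{}$ in $\BCont{\pp}{}{\WT_1} = \BCont{\pp}{}{\BCont{\pq}{1}{\WT_2}}$ yields $\WT_1 = \BCont{\pq}{1}{\WT_2}$, giving subcase (a); the opposite length inequality gives subcase (b) symmetrically. Part (2) is proved identically with $\AContext$ in place of $\BContext$: the suffix of an $\AContext$ is still an $\AContext$ (it still consists only of inputs from participants other than the distinguished one), so the same cancellation argument applies.

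For part (3), the proof again starts from the agree-position-by-position principle, now exploiting the asymmetry in what $\AContext$ and $\BContext$ permit. Since $\BContextt{\pp}{}$ is not an I-sequence (i.e.\ not a pure input sequence), it must contain at least one output prefix; let $k$ be the position of its first output. All positions $1,\ldots,k{-}1$ of $\BContextt{\pp}{}$ are inputs, while every position of $\AContextt{\pq}{}$ is an input (from a participant distinct from $\pq$). If $|\AContextt{\pq}{}| \ge k$, then at position $k$ the two prefixes would disagree on the action kind (input vs.\ output) while both being prefixes of the same linear tree, which is impossible. Hence $|\AContextt{\pq}{}| < k \le |\BContextt{\pp}{}|$, so $\AContextt{\pq}{}$ is a strict prefix of $\BContextt{\pp}{}$ and we may write $\BContextt{\pp}{} = \AContextt{\pq}{}.\BContextt{\pp}{1}$ with non-empty $\BContextt{\pp}{1}$. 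A suffix check shows $\BContextt{\pp}{1}$ is still a valid $\BContext\pp$, and cancelling $\AContextt{\pq}{}$ from $\ACont{\pq}{}{\WT_2} = \ACont{\pq}{}{\BCont{\pp}{1}{\WT_1}}$ gives $\WT_2 = \BCont{\pp}{1}{\WT_1}$.

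The only real subtlety — and the step where one has to be careful rather than hand-wave — is verifying that the remainder contexts $\BContextt{\pq}{1}$, $\AContextt{\pq}{1}$, $\AContextt{\pp}{1}$, $\BContextt{\pp}{1}$ still satisfy the syntactic constraints of their respective productions (non-emptiness and the forbidden-participant conditions). These all follow routinely from the observation that suffixes of $\AContext$ and $\BContext$ sequences inherit their defining restrictions, and from the assumption that the two original contexts are distinct (for parts 1--2) or differ in kind (for part 3). No coinductive or infinitary reasoning is needed, because both contexts are by definition finite.
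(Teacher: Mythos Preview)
Your argument is correct. The paper states this lemma without proof (it is treated as an evident structural fact about finite prefixes of a linear SISO tree), so there is no alternative approach to compare against; your length-comparison argument and the suffix-closure checks for $\mathcal{A}^{(\cdot)}$ and $\mathcal{B}^{(\cdot)}$ are exactly the intended routine verification.
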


\begin{lemma}\label{lemm:trans_supertype_of_A/BContexts}
Let $\Rel {\subseteq} \subttt$ be a tree simulation.
\begin{enumerate}
\item If $(\BCont{\pp}{}{\tout\pp{\ell}{\ST}.\WT},\WT')\in \Rel$ then
     \[
     \WT'= \BCont{\pp}{1}{\tout\pp{\ell}{\ST_1}.\WT_1} \text{ or }  \WT'= \tout\pp{\ell}{\ST_1}.\WT_1,  
     \text{ where }  \ST \subs \ST_1.
     \]
\item If $(\ACont{\pp}{}{\tin\pp{\ell}{\ST}.\WT},\WT')\in\Rel$ then 
     \[
     \WT'= \ACont{\pp}{1}{\tin\pp{\ell}{\ST_1}.\WT_1} \text{ or } \WT'= \tin\pp{\ell}{\ST_1}.\WT_1, 
     \text{ where } \ST_1 \subs \ST.
     \]
\end{enumerate}
\end{lemma}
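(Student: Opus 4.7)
My plan is to prove (1) by induction on the length $n$ of the context $\BContextt{\pp}{}$, and then to derive (2) by an entirely analogous but strictly simpler induction on $|\AContextt{\pp}{}|$. Throughout, I will write $\BContextt{\pp}{} = \pi_1.\pi_2.\ldots.\pi_n$, where each $\pi_i$ is either an input $\tin{\pr_i}{\ell'_i}{\ST'_i}$ (from any participant) or an output $\tout{\pq_i}{\ell'_i}{\ST'_i}$ with $\pq_i \neq \pp$. The key leverage point is that, since $\Rel$ is a tree simulation, each pair in $\Rel$ must be justified by one of the rules of Definition~\ref{def:ref}, and the applicable rule is determined by the top action on the LHS.

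In the base case $n = 1$, I will split on whether $\pi_1$ is an input (so the applicable rules are \rulename{ref-in} or \rulename{ref-$\AC$}) or an output to some $\pq \neq \pp$ (so the applicable rules are \rulename{ref-out} or \rulename{ref-$\BC$}). In each of the four resulting subcases, the premise of the applied rule yields a new pair in $\Rel$ whose LHS is the residual $\tout\pp{\ell}{\ST}.\WT$; the only rules then applicable to that pair are \rulename{ref-out} or \rulename{ref-$\BC$}, both of which directly produce $\tout\pp{\ell}{\ST_1}.\WT_1$ (optionally wrapped in a $\BContextt{\pp}{}$) with $\ST \subs \ST_1$, so reassembly gives $\WT'$ in the required form.

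The inductive step reruns the same case analysis on $\pi_1$. When the applied rule is \rulename{ref-in} (resp.\ \rulename{ref-out}), $\WT'$ begins with a prefix $\pi'_1$ matching $\pi_1$, and the premise $(\pi_2.\ldots.\pi_n.\tout\pp{\ell}{\ST}.\WT, \WT'') \in \Rel$ falls under the induction hypothesis; prepending $\pi'_1$ to the guaranteed form of $\WT''$ gives $\WT'$ in the required shape. When the applied rule is instead \rulename{ref-$\AC$} or \rulename{ref-$\BC$}, the RHS already has the form $\ACont{\pr}{}{\tin\pr{\ell'}{\ST''}.\WT''}$ or $\BCont{\pq}{}{\tout\pq{\ell'}{\ST''}.\WT''}$ respectively; I will apply the induction hypothesis to the reduced pair and then invoke Lemma~\ref{lemm:trans_contexts_2} to reconcile the two overlapping context decompositions (one indexed by the rule's participant, the other by $\pp$) into the single $\BContextt{\pp}{1}$-based form of $\WT'$ required by the statement.

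I expect the hardest step to be this last subcase: under \rulename{ref-$\BC$}, the matching $\tout\pp$ may appear \emph{inside} the $\BContextt{\pq}{}$ wrapper rather than after it, since $\pp \neq \pq$ legally allows $\BContextt{\pq}{}$ to contain outputs to $\pp$. Here I will use Lemma~\ref{lemm:trans_contexts_2} together with the $\actions{\WT}=\actions{\BCon\pq{\WT''}}$ side-condition of \rulename{ref-$\BC$} to argue that any $\tout\pp$ occurring inside $\BContextt{\pq}{}$ must coincide with the one guaranteed by the induction hypothesis, so that the accumulated wrappings can still be read off as a single $\BContextt{\pp}{1}$. Part (2) will follow the same template but only invokes the input-side rules \rulename{ref-in} and \rulename{ref-$\AC$}: since $\AContextt{\pp}{}$ contains only inputs from non-$\pp$, no $\BContext$ wrappings arise at all, the overlap analysis collapses to a routine check using Lemma~\ref{lemm:trans_contexts_2}(2), and the sort ordering flips to $\ST_1 \subs \ST$ by the contravariance of input subtyping in \rulename{ref-in}.
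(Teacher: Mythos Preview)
Your proposal is correct and follows essentially the same approach as the paper: induction on the structure (length) of $\BContextt{\pp}{}$, case-splitting on the top prefix to determine which of the four rules can justify the pair, applying the induction hypothesis to the premise, and using Lemma~\ref{lemm:trans_contexts_2} to reconcile the overlapping $\BContextt{\pq}{}$/$\BContextt{\pp}{}$ (or $\AContextt{\pq}{}$/$\BContextt{\pp}{}$) decompositions in the \rulename{ref-$\BC$} and \rulename{ref-$\AC$} subcases. One minor note: the $\actions{\cdot}$ side-condition you mention for the hardest subcase is not actually needed---Lemma~\ref{lemm:trans_contexts_2} alone suffices to locate the $\tout\pp$ inside or after the $\BContextt{\pq}{}$ wrapper, and the paper's proof does not invoke the action equality here.
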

\begin{proof}
\begin{enumerate}
\item
The proof is by induction on the structure of context $\BContextt{\pp}{}$. The basis step is included in the induction step if we notice that the lemma holds  by definition of $\subttt$ for $(\tout\pp{\ell}{\ST}.\WT,\WT')\in \Rel.$ 
For the inductive step, we distinguish two cases:
	\begin{enumerate}
	\item Let $\BContextt{\pp}{}=\tout\pq{\ell'}{\ST'}.\BContextt{\pp}{2}$ and $\pp\neq \pq.$

	Then, $(\tout\pq{\ell'}{\ST'}.\BCont{\pp}{2}{\tout\pp{\ell}{\ST}.\WT},\WT')\in \Rel$ 
	could be derived by rules \rulename{ref-out} or \rulename{ref-${\BC}$}. 
		\begin{enumerate}
		
		\item If  the rule applied is \rulename{ref-out} then we have 
		\[
		\WT'=\tout\pq{\ell'}{\ST''}.\WT''  \text{ and }  
		\ST' \subs \ST'' \text{ and } (\BCont{\pp}{2}{\tout\pp{\ell}{\ST}.\WT},\WT'')\in\Rel.
		\]
		By induction hypothesis  
		\[
		\WT''= \BCont{\pp}{1}{\tout\pp{\ell}{\ST_1}.\WT_1} \text{ or }
		\WT''= \tout\pp{\ell}{\ST_1}.\WT_1,\text{ where } \ST \subs \ST_1.
		\] 
		Both cases follow directly.

		\item If the rule applied is \rulename{ref-${\BC}$} then we have 
		\[
		\WT'= \BCont{\pq}{1}{\tout\pq{\ell'}{\ST''}.\WT''} \text{ and } \ST' \subs \ST'' 
		\text{ and } (\BCont{\pp}{2}{\tout\pp{\ell}{\ST}.\WT},\BCont{\pq}{1}{\WT''})\in\Rel.
		\] 
		By induction hypothesis one of the following holds:  
			\begin{enumerate}
			\item 
			If $\BCont{\pq}{1}{\WT''} = \BCont{\pp}{1}{\tout\pp{\ell}{\ST_1}.\WT_1}$ 
			then, 
			\begin{itemize}
			\item either  $\BContextt{\pq}{1}= \BContextt{\pp}{1} \text{ and } \WT''=\tout\pp{\ell}{\ST_1}.\WT_1  \text{ and }$  \\
				 $\begin{array}{l}				
				 \qquad\qquad
				 \WT'= \BContextt{\pq}{1}.\tout\pq{\ell'}{\ST''}.\tout\pp{\ell}{\ST_1}.\WT_1,
				 \end{array}$
                          \item or, if $\BContextt{\pq}{1}\not= \BContextt{\pp}{1}$, by Lemma~\ref{lemm:trans_contexts_2} \\
                                  $\begin{array}{l}
				 \BCont{\pq}{1}{\WT''}= \BCont{\pq}{2}{\BCont{\pp}{3}{\tout\pp{\ell}{\ST_1}.\WT_1}},
				 \text{ where } 
				 \BCont{\pq}{2}{\BContextt{\pp}{3}}=\BContextt{\pp}{1} \text{ and }\\
				 \qquad \qquad \WT'= \BCont{\pq}{2}{\BCont{\pp}{3}{\tout\pp{\ell}{\ST_1}.\WT_1}},
				 \text{ or } \\
				 \BCont{\pq}{1}{\WT''}= \BCont{\pp}{1}{\tout\pp{\ell}{\ST_1}.\BCont{\pq}{3}{\WT''}},
				 \text{ where } 
				 \BContextt{\pq}{1}=\BCont{\pp}{1}{\tout\pp{\ell}{\ST_1}.\BContextt{\pq}{3}} \text{ and }\\
				 \qquad\qquad \WT' = \BCont{\pp}{1}{\tout\pp{\ell}{\ST_1}.\BContextt{\pq}{3}}.\tout\pq{\ell'}{\ST''}.\WT'' .
                                   \end{array}$
	                   \end{itemize}	

			\item If $\BCont{\pq}{1}{\WT''} = \tout\pp{\ell}{\ST_1}.\WT_1$, where $\ST \subs \ST_1$, 
			then, either 
			\[
			\BContextt{\pq}{1}=\tout\pp{\ell}{\ST_1} \text{ and } \WT''=\WT_1
			\]
			or, by Lemma~\ref{lemm:trans_contexts_2} 
			\[
			\BCont{\pq}{1}{\WT''} = \tout\pp{\ell}{\ST_1}.\BCont{\pq}{2}{\WT''}, \text{ where } 
			\BContextt{\pq}{1} =\tout\pp{\ell}{\ST_1}.\BContextt{\pq}{2}. 
			\]
                         In both cases the proof follows directly.
			\end{enumerate}
		\end{enumerate}

	\item Let $\BContextt{\pp}{}=\tin\pq{\ell'}{\ST'}.\BContextt{\pp}{2}$.
	Then, $(\tin\pq{\ell'}{\ST'}.\BCont{\pp}{2}{\tout\pp{\ell}{\ST}.\WT}, \WT')\in\Rel$ 
	could be derived by rules \rulename{ref-in} or \rulename{ref-${\AC}$}.

		\begin{enumerate}
		\item If \rulename{ref-in} is applied then we get this case by the same reasoning 
		as in the first part of the proof.
		\item If the rule applied is \rulename{ref-${\AC}$} then 
		\[
		\WT'= \ACont{\pq}{1}{\tin\pq{\ell'}{\ST''}.\WT''} \text{ and } \ST'' \subs \ST' \text{ and }
		(\BCont{\pp}{2}{\tout\pp{\ell}{\ST}.\WT},\ACont{\pq}{1}{\WT''})\in\Rel.
		\] 
		By induction hypothesis we have only  the case
		$\ACont{\pq}{1}{\WT''} = \BCont{\pp}{1}{\tout\pp{\ell}{\ST_1}.\WT_1}$, 
		where $\ST \subs \ST_1$, since the case 
		$\ACont{\pq}{1}{\WT''} = \tout\pp{\ell}{\ST_1}.\WT_1$ is not possible. 
		By Lemma~\ref{lemm:trans_contexts_2} we have  
		$\ACont{\pq}{1}{\WT''} = \ACont{\pq}{1}{\BCont{\pp}{3}{\tout\pp{\ell}{\ST_1}.\WT_1}}$, 
		where $\ACont{\pq}{1}{\BContextt{\pp}{3}}= \BContextt{\pp}{1}$. 
		\end{enumerate}
	\end{enumerate}
\item The proof is by induction on the structure of context $\AContextt{\pp}{}$ and follows by similar reasoning.
\end{enumerate}
\end{proof}
By definition of $\subttt$ we consider related pairs by peeling left-hand side trees from left to right, i.e. by matching and eliminating always the leftmost tree. The proof of transitivity requires also to consider actions that are somewhere in the middle of the left-hand side tree. For that purpose, we associate each binary relation $\Rel$ over SISO  trees with its extension $\Rel^+$ as follows:
\[
\begin{array}{lclcl}
  \Rel^+ &   =    &  \Rel & \cup & \{(\BCont{\pp}{}{\WT},\BCont{\pp}{1}{\WT_1}): (\BCont{\pp}{}{\tout\pp{\ell}{\ST}.\WT},\BCont{\pp}{1}{\tout\pp{\ell}{\ST_1}.\WT_1})\in\Rel\}\\
             &         &          & \cup & \{(\ACont{\pp}{}{\WT},\ACont{\pp}{1}{\WT_1}): (\ACont{\pp}{}{\tin\pp{\ell}{\ST}.\WT}, \ACont{\pp}{1}{\tin\pp{\ell}{\ST_1}.\WT_1})\in\Rel\}\\
             &         &          & \cup & \{(\BCont{\pp}{}{\WT},\WT_1): (\BCont{\pp}{}{\tout\pp{\ell}{\ST}.\WT},\tout\pp{\ell}{\ST_1}.\WT_1)\in\Rel\}\\
             &         &          & \cup & \{(\ACont{\pp}{}{\WT},\WT_1): (\ACont{\pp}{}{\tin\pp{\ell}{\ST}.\WT}, \tin\pp{\ell}{\ST_1}.\WT_1)\in\Rel\}.
\end{array}
\]
\begin{lemma}
    If $\Rel {\subseteq} \subttt$ is a tree simulation then $\Rel^+$ is a tree simulation.
\end{lemma}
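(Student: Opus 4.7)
I plan to prove the lemma by case analysis on how a pair $(\WT_1',\WT_2')\in\Rel^+$ is generated. If $(\WT_1',\WT_2')\in\Rel$, there is nothing to do: since $\Rel$ is a tree simulation, some rule from Def.~\ref{tab:ref} derives $\WT_1'\subttt\WT_2'$ with coinductive premises in $\Rel$, and these are automatically in $\Rel^+$ because $\Rel\subseteq\Rel^+$. So the real work is handling the four extension clauses. I will treat the first clause ($\BC$-to-$\BC$) in detail; the $\AC$-to-$\AC$ clause is perfectly dual, and the two ``bare'' clauses are the same arguments with a trivial (empty) right-hand context.

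Assume $(\BCont{\pp}{}{\WT},\BCont{\pp}{1}{\WT_1})\in\Rel^+$, arising from $(\BCont{\pp}{}{\tout\pp{\ell}{\ST}.\WT},\BCont{\pp}{1}{\tout\pp{\ell}{\ST_1}.\WT_1})\in\Rel$. Write the leading prefix of $\BContextt{\pp}{}$ as $\alpha$, so that either $\BContextt{\pp}{}=\alpha$ or $\BContextt{\pp}{}=\alpha.\BContextt{\pp}{2}$. By definition of $\BC$ contexts, $\alpha$ is either an input $\tin\pq{\ell'}{\ST'}$ from some $\pq$, or an output $\tout\pq{\ell'}{\ST'}$ with $\pq\neq\pp$. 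Since $\Rel$ is a tree simulation, the pair in $\Rel$ was derived by a specific rule determined by $\alpha$:
\begin{itemize}
\item If $\alpha=\tin\pq{\ell'}{\ST'}$, the rule was \rulename{ref-in} or \rulename{ref-$\AC$}.
\item If $\alpha=\tout\pq{\ell'}{\ST'}$ with $\pq\neq\pp$, the rule was \rulename{ref-out} or \rulename{ref-$\BC$}.
\end{itemize}
In each sub-case I inspect the coinductive premise delivered by the rule and argue that the \emph{same} rule applies to the new pair $(\BCont{\pp}{}{\WT},\BCont{\pp}{1}{\WT_1})$, with its coinductive premise now an element of $\Rel^+$. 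For instance, in the \rulename{ref-in} sub-case, $\BCont{\pp}{1}{\tout\pp{\ell}{\ST_1}.\WT_1}$ must start with $\tin\pq{\ell'}{\ST''}$, so $\BContextt{\pp}{1}=\tin\pq{\ell'}{\ST''}$ or $\BContextt{\pp}{1}=\tin\pq{\ell'}{\ST''}.\BContextt{\pp}{1'}$, and the continuation pair from the original derivation is either $(\BCont{\pp}{2}{\tout\pp{\ell}{\ST}.\WT},\BCont{\pp}{1'}{\tout\pp{\ell}{\ST_1}.\WT_1})\in\Rel$, or $(\BCont{\pp}{2}{\tout\pp{\ell}{\ST}.\WT},\tout\pp{\ell}{\ST_1}.\WT_1)\in\Rel$ (plus the boundary cases where $\BContextt{\pp}{}$ or $\BContextt{\pp}{1}$ is the single prefix $\alpha$). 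The first extension clause in the definition of $\Rel^+$ places $(\BCont{\pp}{2}{\WT},\BCont{\pp}{1'}{\WT_1})$ in $\Rel^+$, and the third extension clause handles the ``context-to-bare'' boundary; either way, the premise of \rulename{ref-in} applied to the new pair lives in $\Rel^+$.

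The subtler sub-cases are those where \rulename{ref-$\AC$} or \rulename{ref-$\BC$} anticipated a prefix on the right. For example, when \rulename{ref-$\BC$} was applied with $\alpha=\tout\pq{\ell'}{\ST'}$, the right-hand side decomposes as $\BCont{\pp}{1}{\tout\pp{\ell}{\ST_1}.\WT_1}=\BCont{\pq}{3}{\tout\pq{\ell'}{\ST''}.\WT''}$, so I need to reconcile two different context decompositions of the same tree. This is exactly what Lemma~\ref{lemm:trans_contexts_2} is designed for: it tells me the two contexts nest in one of two ways, each of which keeps the continuation in the required form. Lemma~\ref{lemm:trans_supertype_of_A/BContexts} is then invoked to control the residual shape of $\WT''$ once the bottom $\tout\pp{\ell}{\ST_1}$ has been located. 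After this structural reorganisation, I can reassemble the residue as a $\BCont{\pp}{1''}$-shape (or as a bare $\tout\pp{\ell}{\ST_1}.\WT_1$) and again appeal to the appropriate extension clause of $\Rel^+$.

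The main obstacle, as suggested above, is the bookkeeping in these anticipation sub-cases: because \rulename{ref-$\AC$} and \rulename{ref-$\BC$} expose anticipated prefixes that can cross the ``peeled'' output (or input) we are trying to eliminate, I must carefully track the interaction between $\BContextt{\pp}{1}$ (which hides the output to $\pp$) and the anticipation context of the applied rule (which commutes inputs from some other participant $\pq$ or outputs to some other $\pq$). Lemmas~\ref{lemm:trans_contexts_2} and \ref{lemm:trans_supertype_of_A/BContexts} give exactly the commutation and reshaping facts needed, and the side conditions on $\actions{\cdot}$ in the original derivation carry over because the action sets of $\BCont{\pp}{}{\WT}$ and $\BCont{\pp}{1}{\WT_1}$ agree with those of the original pair up to removing one matching output. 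Once all sub-cases are dispatched, the definition of tree simulation is met for $\Rel^+$, completing the proof.
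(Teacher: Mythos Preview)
Your proposal is correct and follows essentially the same approach as the paper: a case analysis on the leading prefix of $\BContextt{\pp}{}$, determining which refinement rule produced the pair in $\Rel$, and then using Lemma~\ref{lemm:trans_contexts_2} (for the anticipation sub-cases) and Lemma~\ref{lemm:trans_supertype_of_A/BContexts} (for the input-led sub-case) to reshape the right-hand side so that the residual pair falls back into one of the extension clauses of $\Rel^+$. The paper's write-up is terser and only sketches ``interesting cases,'' but the structure, the invoked lemmas, and the handling of boundary cases (empty contexts, context-to-bare transitions) are the same as what you outline.
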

\begin{proof}
We discuss some interesting cases.  Let $(\BCont{\pp}{}{\WT},\BCont{\pp}{1}{\WT_1}) \in \Rel^+\setminus\Rel.$
	\begin{enumerate} 
	\item If $\BContextt{\pp}{}=\tout\pq{\ell'}{\ST'}.\BContextt{\pp}{2}$, then, 
	\[
	     (\tout\pq{\ell'}{\ST'}.\BCont{\pp}{2}{\tout\pp{\ell}{\ST}.\WT}, \BCont{\pp}{1}{\tout\pp{\ell}{\ST_1}.\WT_1})\in\Rel
	\]
	could be derived by two rules:

		\begin{enumerate}
		\item if the rule applied is \rulename{ref-out} then 
		\[
		\BContextt{\pp}{1}= \tout\pq{\ell'}{\ST''}.\BContextt{\pp}{3} \text{ and }
		(\BCont{\pp}{2}{\tout\pp{\ell}{\ST}.\WT},\BCont{\pp}{3}{\tout\pp{\ell}{\ST_1}.\WT_1})\in \Rel,
		\text{ where } \ST' \subs \ST''. 
		\]
		By definition of $\Rel^+,$ we get $(\BCont{\pp}{2}{\WT},\BCont{\pp}{3}{\WT_1})\in \Rel^+$. 

		\item if the rule applied is \rulename{ref-${\BC}$} then 
		$\BCont{\pp}{1}{\tout\pp{\ell}{\ST_1}.\WT_1}=\BCont{\pq}{}{\tout\pq{\ell'}{\ST''}.{\WT''}}$ 
		and by Lemma~\ref{lemm:trans_contexts_2} we distinguish two cases
			\begin{enumerate}
			\item  
			\[\BCont{\pp}{1}{\tout\pp{\ell}{\ST_1}.\WT_1}= 
			\BCont{\pq}{}{\tout\pq{\ell'}{\ST''}.\BCont{\pp}{3}{\tout\pp{\ell}{\ST_1}.\WT_1}}, 
			\text{ where }
			\BContextt{\pp}{1}= \BCont{\pq}{}{\tout\pq{\ell'}{\ST''}.\BContextt{\pp}{3}}
			\]
			Then, by \rulename{ref-${\BC}$} we get 
			$(\BCont{\pp}{2}{\tout\pp{\ell}{\ST}.\WT},\BCont{\pq}{}{\BCont{\pp}{3}{\tout\pp{\ell}{\ST_1}.\WT_1}})\in\Rel$ 
			and $\ST' \subs \ST''$.
			By definition of $\Rel^+,$ we have
			$(\BCont{\pp}{2}{\WT}, \BCont{\pq}{}{\BCont{\pp}{3}{\WT_1}})\in\Rel^+.$ 
			\item
			\[
			\BCont{\pp}{1}{\tout\pp{\ell}{\ST_1}.\WT_1}= 
			\BCont{\pp}{1}{\tout\pp{\ell}{\ST_1}.\BCont{\pq}{2}{\tout\pq{\ell'}{\ST''}.\WT''}}, 
			\text{ where } \BCont{\pp}{1}{\tout\pp{\ell}{\ST_1}.\BContextt{\pq}{2}}= \BContextt{\pq}{}.
			\]
		    Then, we apply similar reasoning as in the first case. 
		    \end{enumerate}
		\end{enumerate}

	\item if $\BContextt{\pp}{}=\tin\pq{\ell'}{\ST'}.\BContextt{\pp}{2}$
	and
	$(\tin\pq{\ell'}{\ST'}.\BCont{\pp}{2}{\tout\pp{\ell}{\ST}.\WT}, \BCont{\pp}{1}{\tout\pp{\ell}{\ST_1}.\WT_1})\in \Rel$, then    by  Lemma~\ref{lemm:trans_supertype_of_A/BContexts} we have 
	$\BCont{\pp}{1}{\tout\pp{\ell}{\ST_1}.\WT_1} = \ACont{\pq}{1}{\tin\pq{\ell'}{\ST''}.\WT''}$ 
	and $\ST'' \subs \ST'$.
	By Lemma~\ref{lemm:trans_contexts_2} 
	\[
	\BCont{\pp}{1}{\tout\pp{\ell}{\ST_1}.\WT_1} = 
	\ACont{\pq}{1}{\tin\pq{\ell'}{\ST''}.\BCont{\pp}{3}{\tout\pp{\ell}{\ST_1}.\WT_1}}, \text{ where } 
	\BContextt{\pp}{1}= \ACont{\pq}{1}{\tin\pq{\ell'}{\ST''}.\BContextt{\pp}{3}}. 
	\]
	Then, by \rulename{ref-${\AC}$} we get 
	$(\BCont{\pp}{2}{\tout\pp{\ell}{\ST}.\WT},\ACont{\pq}{1}{\BCont{\pp}{3}{\tout\pp{\ell}{\ST_1}.\WT_1}})\in\Rel$, 
	and consequently\\
	$(\BCont{\pp}{2}{\WT}, \ACont{\pq}{1}{\BCont{\pp}{3}{\WT_1}})\in \Rel^+$. 
	\end{enumerate}
\end{proof}

\paragraph{Proof of Lemma \ref{lem:transitivityW}}
\begin{proof}
  Reflexivity is straightforward: %
  any SISO tree $\WT$ is related to itself by a coinductive derivation %
  which only uses rules %
  \rulename{ref-in}, \rulename{ref-out}, and \rulename{ref-end} %
  in Def.~\ref{def:ref}.

  We now focus on the proof of transitivity. %
  If $\WT_1\subttt\WT_2$ and $\WT_2\subttt\WT_3$ then there are tree simulations $\Rel_1$ and $\Rel_2$ such that $(\WT_1,\WT_2)\in \Rel_1$ and $(\WT_2,\WT_3)\in \Rel_2.$
  We shall prove that relation
  \[
    \Rel = \Rel_1 \circ \Rel_2^+
  \]
 is a tree simulation that contains $(\WT_1,\WT_3)$. It follows directly from definition of $\Rel^+$ that $(\WT_1,\WT_3)\in \Rel$ and it  remains to prove that $\Rel$ is a tree simulation.
 Assuming that $(\WT'_1,\WT_3')\in \Rel$ we consider the following possible cases for $\WT'_1:$
 \begin{enumerate}
 \item  $\WT'_1=\tend:$ By definition of $\Rel$ there is $\WT_2'$ such that $(\tend,\WT'_2)\in\Rel_1$ and $(\WT'_2,\WT'_3)\in \Rel_2^+.$ Since $\Rel_1$ and $\Rel_2^+$ are tree simulations, it holds by \rulename{ref-end} that $\WT'_2=\tend$ and also $\WT'_3=\tend.$
 \item $\WT'_1=\tout\pp\ell\S.\WT:$ By definition of $\Rel$ there is $\WT_2'$ such that $(\tout\pp\ell\S.\WT ,\WT'_2)\in\Rel_1$ and $(\WT'_2,\WT'_3)\in \Rel_2^+.$ Since $\Rel_1$ is tree simulation, using definition of $\BC$-sequence and applying \rulename{ref-out} or \rulename{ref-$\BC$}, we get three possibilities for $\WT'_2$:
 \begin{enumerate}
 \item $\WT'_2= \tout\pp\ell{\S_1}.\WT'$ with $\S\subs \S_1$ and $(\WT,\WT')\in \Rel_1: $       
      Since $\Rel_2^+$ is tree simulation and $(\WT'_2,\WT'_3)\in \Rel_2^+$, there are two possibilities for $\WT'_3$ (by \rulename{ref-out} or \rulename{ref-$\BC$}):
      \begin{enumerate}
      \item $\WT'_3=\tout\pp\ell{\S_2}.\WT''$ and $\S_1\subs \S_2$ and $(\WT',\WT'')\in\Rel_2^+:$ Then, by transitivity of $\subs$ and definition of $\Rel$ we get
      \[
           \S\subs \S_2 \text{ and }(\WT,\WT'')\in \Rel_1\circ \Rel_2^+=\Rel.
      \]
      \item $\WT'_3=\BCon\pp\tout\pp\ell{\S_2}.\WT''$ and $\S_1\subs \S_2$ and $(\WT',\BCon\pp\WT'')\in\Rel_2:$ Then, by transitivity of $\subs$ and definition of $\Rel$ we get
      \[
           \S\subs \S_2 \text{ and }(\WT,\BCon\pp\WT'')\in \Rel_1\circ \Rel_2^+=\Rel.
      \]

      \end{enumerate}
 \item $\WT'_2=\tout\pq{\ell'}{\S'}.\BCon\pp\tout\pp\ell{\S_1}.\WT'$ with $\S\subs \S_1$  and 
      \begin{equation}\label{eqn_trans2_1}
           (\WT,\tout\pq{\ell'}{\S'}.\BCon\pp{\WT'})\in \Rel_1 \text{ and }\actions{\WT}=\actions{\tout\pq{\ell'}{\S'}.\BCon\pp{\WT'}}: 
      \end{equation}
      Since $(\WT'_2,\WT'_3)\in \Rel_2^+,$ we have two cases (by \rulename{ref-out} or \rulename{ref-$\BC$}):
      \begin{enumerate}
      \item $\WT'_3=\BContextt{\pq}{11}.\tout\pq{\ell'}{\S''}.{\WT''}$ with $\S'\subs\S''$ and $(\BCon\pp{\tout\pp\ell{\S_1}.\WT'},\BContextt{\pq}{11}.{\WT''})\in \Rel_2^+.$
           By Lemma~\ref{lemm:trans_supertype_of_A/BContexts}, we have that
           \[
             \WT'_3=\BContextt\pp{1}.\tout\pp\ell{\S_2}.\WT''' \text{ and }\S_1\subs\S_2.
           \]
           By Lemma~\ref{lemm:trans_contexts_2}, there are two possibilities:
           \begin{enumerate}
               \item $\WT'_3 = \BContextt\pq{11}.\tout\pq{\ell'}{\S''}.\BContextt\pp{2}.\tout\pp\ell{\S_2}.\WT'''$ and $\pp!\not\in\actions{\BContext\pq_{11}}:$  
                        \begin{equation}\label{eqn_trans2_2}
                           \begin{split}
                           (\tout\pq{\ell'}{\S'}.\BCon\pp{\WT'},\BContextt\pq{11}.\tout\pq{\ell'}{\S''}.\BContextt\pp{2}.\WT''') \in \Rel_2^+  \text{ and }\\
                           \actions{\tout\pq{\ell'}{\S'}.\BCon\pp{\WT'}} = \actions{\BContextt\pq{11}.\tout\pq{\ell'}{\S''}.\BContextt\pp{2}.\WT'''}
                           \end{split}
                        \end{equation}
                        Hence, we conclude from \eqref{eqn_trans2_1} and \eqref{eqn_trans2_2}  that
                        \[
                           (\WT,\BContextt\pq{11}.\tout\pq{\ell'}{\S''}.\BContextt\pp{2}.\WT''')\in \Rel_1\circ\Rel_2^+ \text{ and }
                           \actions{\WT}=\actions{\BContextt\pq{11}.\tout\pq{\ell'}{\S''}.\BContextt\pp{2}.\WT'''}.
                        \]
               \item $\WT'_3 = \BContextt\pp{1}.\tout\pp{\ell}{\S_2}.\BContextt\pq{12}.\tout\pq{\ell'}{\S''}.\WT''$ and $\pq!\not\in\actions{\BContext\pp_{1}}:$  
                     The proof is similar to the previous case.
           \end{enumerate}
      \item $\WT'_3=\tout\pq{\ell'}{\S''}.{\WT''}$ with $\S'\subs\S''$ and $(\BCon\pp{\tout\pp\ell{\S_1}.\WT'},\WT'')\in \Rel_2^+.$
      \end{enumerate}
 \item $\WT'_2=\tin\pq{\ell'}{\S'}.\BCon\pp\tout\pp\ell{\S_1}.\WT'$ with $\S \subs \S_1$ and
       \begin{equation}\label{eqn_trans2_3}
            (\WT,\tin\pq{\ell'}{\S'}.\BCon\pp\WT')\in \Rel_1 \text{ and } \actions{\WT}=\actions{\tin\pq{\ell'}{\S'}.\BCon\pp\WT'}: 
        \end{equation}
        Since $(\WT'_2,\WT'_3) \in \Rel_2^+,$ we have two cases (by \rulename{ref-in} and \rulename{ref-$\BC$}):
        \begin{enumerate}
            \item $\WT'_3=\tin\pq{\ell'}{\S''}.\WT''$ and $\S''\subs \S'$ and $(\BCon\pp\tout\pp\ell{\S_1}.\WT', \WT'') \in \Rel_2^+)$:
            \item $\WT'_3=\ACon\pq{\tin\pq{\ell'}{\S''}.\WT''}$ and $\S''\subs \S'$ and $(\BCon\pp\tout\pp\ell{\S_1}.\WT', \ACon\pq{\WT''}) \in \Rel_2^+)$:
        \end{enumerate}
 \end{enumerate}
 \item $\WT'_1=\tin\pp\ell\S.\WT:$ The proof in this case follows similarly.
 \end{enumerate}
\end{proof}

\begin{lemma}\label{lemm:Vanja'sLemma21}
Let $\llbracket \TT \rrbracket_\SI = \{ \VT_j : j\in J \}$ and let $Y=\{ \WT_j : j\in J\}$, where $\WT_j\in \llbracket V_j \rrbracket_\SO$ for  $j\in J$. 
Then, there is $\UT \in \llbracket \TT \rrbracket_\SO$ such that $\llbracket \UT \rrbracket_\SI \subseteq Y$.
\end{lemma}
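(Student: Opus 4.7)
I would prove the lemma by coinductively constructing $\UT$ from $\TT$ and the given family $\{\WT_j\}_{j \in J}$, with a case analysis on the top-level shape of $\TT$. For $\TT = \tend$, $\llbracket\tend\rrbracket_\SI = \{\tend\}$ forces $J$ to be a singleton and $\WT_1 = \tend$, so $\UT = \tend$ works.

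For $\TT = \texternal_{i \in I}\tin\pp{\ell_i}{\ST_i}.T_i$ (branching), each $\VT_j$ has the form $\tin\pp{\ell_{i_j}}{\ST_{i_j}}.\VT'_j$ with $\VT'_j \in \llbracket T_{i_j}\rrbracket_\SI$, and correspondingly $\WT_j = \tin\pp{\ell_{i_j}}{\ST_{i_j}}.\WT'_j$. Setting $J_i = \{j \in J : i_j = i\}$, the family $\{\VT'_j : j \in J_i\}$ enumerates $\llbracket T_i\rrbracket_\SI$. For each $i$, I would pick one representative $\WT'_{j}$ per element of $\llbracket T_i \rrbracket_\SI$ (possibly with repetitions) and invoke the coinductive hypothesis on $T_i$ to obtain $\UT_i \in \llbracket T_i\rrbracket_\SO$ with $\llbracket \UT_i\rrbracket_\SI \subseteq \{\WT'_j : j \in J_i\}$. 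Setting $\UT = \texternal_{i \in I}\tin\pp{\ell_i}{\ST_i}.\UT_i$, one checks $\llbracket\UT\rrbracket_\SI = \bigcup_i \{\tin\pp{\ell_i}{\ST_i}.\WT'' : \WT'' \in \llbracket\UT_i\rrbracket_\SI\} \subseteq Y$.

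For $\TT = \tinternal_{i \in I}\tout\pp{\ell_i}{\ST_i}.T_i$ (selection), each $\VT_j$ has shape $\tinternal_{i \in I}\tout\pp{\ell_i}{\ST_i}.\VT_{j,i}$, so the family $\{\VT_j\}$ ranges over all tuples in $\prod_{i \in I}\llbracket T_i\rrbracket_\SI$. Each $\WT_j$ picks some option $k_j \in I$ at the root: $\WT_j = \tout\pp{\ell_{k_j}}{\ST_{k_j}}.\WT'_j$ with $\WT'_j \in \llbracket\VT_{j,k_j}\rrbracket_\SO$. The crux of the proof is the following \emph{key combinatorial claim}: there exists $k^* \in I$ such that $\{\VT_{j,k^*} : j \in J,\, k_j = k^*\} = \llbracket T_{k^*}\rrbracket_\SI$. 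I would prove this by contradiction: if the claim fails, then for every $k \in I$ there is a ``missing'' $\VT^*_k \in \llbracket T_k\rrbracket_\SI$ such that no $j$ with $\VT_{j,k} = \VT^*_k$ satisfies $k_j = k$; but then the combination $(\VT^*_i)_{i \in I}$ corresponds to some $\VT_{j^*} \in \llbracket\TT\rrbracket_\SI$, yielding a value $k_{j^*} \in I$ with $\VT_{j^*,\,k_{j^*}} = \VT^*_{k_{j^*}}$ and therefore $k_{j^*} \neq k_{j^*}$, a contradiction. Given such $k^*$, I pick for each $\VT^* \in \llbracket T_{k^*}\rrbracket_\SI$ a representative $\WT'_{j^{\VT^*}}$ with $k_{j^{\VT^*}} = k^*$ and $\VT_{j^{\VT^*},\,k^*} = \VT^*$, apply the coinductive hypothesis to $T_{k^*}$ to obtain $\UT^* \in \llbracket T_{k^*}\rrbracket_\SO$ with $\llbracket\UT^*\rrbracket_\SI \subseteq \{\WT'_j : j \in J,\, k_j = k^*\}$, and set $\UT = \tout\pp{\ell_{k^*}}{\ST_{k^*}}.\UT^*$. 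Then $\llbracket\UT\rrbracket_\SI \subseteq \{\WT_j : j \in J,\, k_j = k^*\} \subseteq Y$.

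The main obstacle is the combinatorial key claim in the selection case, which requires the ``diagonal'' argument above to show that some option $k^*$ is globally compatible with the given family $\{\WT_j\}$; without such a coherent $k^*$, the naive attempt to recurse into an arbitrary branch of a selection fails because the branching choices made \emph{below} the selection in each $\WT_j$ need not agree. Everything else — productivity of the coinductive definition (each case produces at least one constructor), coverage of $\llbracket T_i \rrbracket_\SI$ in the branching case, and the final containment $\llbracket\UT\rrbracket_\SI \subseteq Y$ — is routine bookkeeping on top of the main construction.
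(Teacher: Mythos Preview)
Your proposal is correct and follows essentially the same approach as the paper: coinduction on the structure of $\TT$, with the crux being the selection case, where a diagonal argument shows that some branch index $k^*$ is ``globally compatible'' with the given family. The paper phrases the key claim slightly differently (for some $i$, every $\VT'\in\llbracket\TT_i\rrbracket_\SI$ has $\{\tout\pp{\ell_i}{\ST_i}\}\cdot\llbracket\VT'\rrbracket_\SO$ meeting $Y$) but proves it by the identical contradiction---assembling the ``bad'' witnesses $(\VT^*_i)_{i\in I}$ into a single $\VT\in\llbracket\TT\rrbracket_\SI$ that must yet cannot contribute to $Y$---and then recurses the same way you do.
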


\begin{proof}
In this proof we consider the coinductive interpretation of  the definition of tree $\TT.$ 
	\begin{enumerate}
	\item $\TT=\tend$: 
		Since $\llbracket \TT \rrbracket_\SI=\{ \tend\}$, by selecting $U=Y=\tend$ the case follows.
	\item $\TT= \texternal_{i\in I}\tin\pp{\ell_i}{\ST_i}.\TT_i$: 
		Assume $\llbracket \TT \rrbracket_\SI = \{ \tin\pp{\ell_i}{\ST_i}.\VT'_{j_i} : \VT'_{j_i}\in \llbracket \TT_i \rrbracket_\SI, j_i\in J_i, i\in I\}$. Then, 
		\[
		Y=\{\WT_j\in \llbracket \VT_j \rrbracket_\SO : j\in J\} = \{ W_{j_i}=\tin\pp{\ell_i}{\ST_i}.\WT'_{j_i}: \WT'_{j_i}\in \llbracket \VT_i \rrbracket_\SO, j_i\in J_i, i\in I\}.
		\]	
		Since for $i\in I$ we have $\llbracket \TT_i \rrbracket_\SI=\{\VT'_{j_i}: j_i\in J_i\}$ and 
		$Y_i=\{ \WT'_{j_i}\in \llbracket \VT'_{j_i} \rrbracket_\SI : j_i\in J_i \}$, we can apply coinductive hypothesis and obtain $\UT_i\in \llbracket \TT_i \rrbracket_\SO$, such that $\llbracket \UT_i \rrbracket_\SI \subseteq Y_i$. 
		Hence, for $\UT = \texternal_{i\in I}\tin\pp{\ell_i}{\ST_i}.U_i$ we have $\UT\in \llbracket \TT \rrbracket_\SO$ and $\llbracket \UT \rrbracket_\SI \subseteq Y$.
	\item $\TT= \tinternal_{i\in I} \tout\pp{\ell_i}{\ST_i}.\TT_i$: 
		Assume $\llbracket \TT \rrbracket_\SI = \{\tinternal_{i\in I} \tout\pp{\ell_i}{\ST_i}.\VT'_{j_i} : \VT'_{j_i}\in \llbracket \TT_i \rrbracket_\SI, j_i\in J_i, i\in I\}$. Then, 
		\[
		Y=\{\WT_j\in \llbracket \VT_j \rrbracket_\SO : j\in J\} = \{ \WT_{k_i}=\tout\pp{\ell_i}{\ST_i}.\WT'_{k_i}: \WT'_{k_i}\in \llbracket \VT_i \rrbracket_\SO, j_i\in K_i\subseteq J_i, i\in N\subseteq I\}
		\]
		We claim that there is $i\in I$ such that for all $j_i\in J_i$ holds $\{\tout\pp{\ell_i}{\ST_i}.\llbracket \VT_{j_i} \rrbracket_\SO\} \cap Y \not= \emptyset$. 
		To prove the claim let us assume the opposite: for all $i\in I$ there is $j_i\in J_i$ such that $\{\tout\pp{\ell_i}{\ST_i}.\llbracket \VT_{j_i} \rrbracket_\SO\} \cap Y = \emptyset$. 
		For such $j_i$'s, let us consider $\VT = \tinternal_{i\in I} \tout\pp{\ell_i}{\ST_i}.\VT_{j_i}$. 
		Since $\VT \in \llbracket \TT \rrbracket_\SI$ and $\llbracket \VT \rrbracket_\SO \cap Y=\emptyset$ we obtain a contradiction with the definition of $Y$.
		
		Let us now fix $i\in I$ for which the above claim holds. 
		Let $Y' = \{ \WT'_{k_i}: \tout\pp{\ell_i}{\ST_i}.\WT'_{k_i}\in Y\}$. For $\TT_i$ we have $\llbracket \TT_i \rrbracket_\SI = \{ \VT'_{j_i}: j_i\in J_i\}$ and for each $j_i\in J_i$ there is $\WT'_{j_i} \in \llbracket \VT'_{j_i} \rrbracket_\SO$ such that $\WT'_{i_j} \in Y'$. 
		Hence, we can apply coinductive hypothesis and get $\UT'\in \llbracket \TT_i \rrbracket_\SO$ for which $\llbracket \UT' \rrbracket_\SI \subseteq Y'$ holds. 
		By taking $\UT = \tout\pp{\ell_i}{\ST_i}.\UT'$ we can conclude $\UT \in \llbracket \TT \rrbracket_\SO$ and $\llbracket \UT \rrbracket_\SI \subseteq \{\tout\pp{\ell_i}{\ST_i}.\WT'_{j_i}: \WT'_{j_i}\in Y'\}\subseteq Y$.
	\end{enumerate}

\end{proof}

\begin{lemma}\label{lem:for-reflexivity-of-subtyping}
For any tree $\TT$ we have 
    \begin{equation*}
    \forall \UT\in\llbracket \TT \rrbracket_\SO  \;
    \forall \VT\in\llbracket \TT \rrbracket_\SI \;\exists\WT
    \quad \text{such that}\quad
    \WT\in\llbracket \UT \rrbracket_\SI \cap \llbracket \VT \rrbracket_\SO.
    \end{equation*}
\end{lemma}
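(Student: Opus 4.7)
The plan is to prove the statement by coinduction on the structure of the session tree $\TT$ (using its coinductive grammar introduced on page~\pageref{page:tree-decomp}). The key observation motivating the proof is that $\UT$ and $\VT$ are decompositions of the \emph{same} tree $\TT$, so they are forced to share the same outermost constructor. The SO-decomposition $\UT$ already resolves every selection occurring at the top of $\TT$ (keeping branchings intact), while the SI-decomposition $\VT$ already resolves every branching (keeping selections intact). Combining these two choices pointwise yields a SISO tree that lies in both $\llbracket \UT \rrbracket_\SI$ and $\llbracket \VT \rrbracket_\SO$.

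More concretely, I would proceed by case analysis on the top-level shape of $\TT$. If $\TT = \tend$, then $\llbracket \TT \rrbracket_\SO = \llbracket \TT \rrbracket_\SI = \{\tend\}$, so $\UT = \VT = \tend$, and we pick $\WT = \tend$. If $\TT = \texternal_{i\in I}\tin\pp{\ell_i}{\ST_i}.\TT_i$, then by the definition of $\llbracket\cdot\rrbracket_\SO$ we must have $\UT = \texternal_{i\in I}\tin\pp{\ell_i}{\ST_i}.\UT_i$ with each $\UT_i \in \llbracket \TT_i \rrbracket_\SO$, while by the definition of $\llbracket\cdot\rrbracket_\SI$ we have $\VT = \tin\pp{\ell_k}{\ST_k}.\VT_k$ for some $k \in I$ with $\VT_k \in \llbracket \TT_k \rrbracket_\SI$. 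Applying the coinductive hypothesis to $\TT_k$, $\UT_k$, and $\VT_k$ yields some $\WT_k \in \llbracket \UT_k \rrbracket_\SI \cap \llbracket \VT_k \rrbracket_\SO$, and we take $\WT = \tin\pp{\ell_k}{\ST_k}.\WT_k$: this is a SISO tree which clearly lies in $\llbracket \UT \rrbracket_\SI$ (by selecting branch $k$) and in $\llbracket \VT \rrbracket_\SO$.

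The selection case $\TT = \tinternal_{i\in I}\tout\pp{\ell_i}{\ST_i}.\TT_i$ is symmetric: now $\UT = \tout\pp{\ell_k}{\ST_k}.\UT_k$ is forced to pick a single $k$, while $\VT = \tinternal_{i\in I}\tout\pp{\ell_i}{\ST_i}.\VT_i$ preserves the whole selection; we again invoke the coinductive hypothesis on $\TT_k$ using $\UT_k$ and $\VT_k$, and prefix with $\tout\pp{\ell_k}{\ST_k}$.

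There is no real obstacle in this argument, since the construction of $\WT$ mirrors the constructions of $\UT$ and $\VT$ in lockstep with $\TT$. The only mild subtlety is the appeal to coinduction: since $\TT$ can be infinite, one must formally justify the construction of $\WT$ as the greatest fixed point of the obvious functional, rather than by well-founded induction on $\TT$. This is standard for the coinductive syntax of session trees adopted in the paper, and can be phrased by defining the relation $\{(\WT, \UT, \VT) \mid \WT \in \llbracket \UT \rrbracket_\SI \cap \llbracket \VT \rrbracket_\SO\}$ and showing it is closed backward under the rules that generate $\llbracket\cdot\rrbracket_\SI$ and $\llbracket\cdot\rrbracket_\SO$, exactly as done in the case analysis above.
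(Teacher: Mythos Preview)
Your proof is correct and follows essentially the same approach as the paper: a coinductive case analysis on the shape of $\TT$, using the fact that $\UT$ resolves selections while $\VT$ resolves branchings, and combining the chosen branch with the coinductive hypothesis applied to the corresponding subtree. The paper's proof is slightly less explicit about the coinductive justification you mention at the end, but the argument is otherwise identical.
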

\begin{proof}
By coinduction on the definition of tree $\TT.$ 
    \begin{enumerate}
	\item $\TT=\tend$: 
		Since $\llbracket \TT \rrbracket_\SO=\llbracket \TT \rrbracket_\SI=\{ \tend\}$, 
		the proof follows directly.
	\item $\TT= \texternal_{i\in I}\tin\pp{\ell_i}{\ST_i}.\TT_i$: 
		Then,
	\begin{align*}
	& \UT \in  \{\texternal_{i\in I}\tin\pp{\ell_i}{\ST_i}.\UT': \UT'\in \llbracket \TT_i\rrbracket_\SO\}\\
	& \VT \in  \{\tin\pp\ell{\ST_i}.\VT': \VT'\in \llbracket \TT_i\rrbracket_\SI , i\in I \}
	\end{align*}
	By coinductive hypothesis for all $i\in I$ we have
    \begin{equation*}
    \forall \UT'\in\llbracket \TT_i \rrbracket_\SO  \;
    \forall \VT'\in\llbracket \TT_i \rrbracket_\SI \;\exists\WT'
    \quad \text{such that}\quad
    \WT'\in\llbracket \UT' \rrbracket_\SI \cap \llbracket \VT' \rrbracket_\SO.
    \end{equation*}
    Thus, for all $\UT\in\llbracket \TT \rrbracket_\SO$, $i\in I$ and
    $\VT\in\{\tin\pp\ell{\ST_i}.\VT': \VT'\in \llbracket \TT_i\rrbracket_\SI\}$ 
    we obtain there exists $\WT=\tin\pp\ell{\ST_i}.\WT'$, such that
    \[
    \WT \in \llbracket \UT \rrbracket_\SI \cap \llbracket \VT \rrbracket_\SO
    \]
	\item $\TT= \tinternal_{i\in I} \tout\pp{\ell_i}{\ST_i}.\TT_i$: Follows by a similar reasoning.
	\end{enumerate}
\end{proof}

\lemTransitivity*
\begin{proof}
  Reflexivity is straightforward from 
  Lemma~\ref{lem:for-reflexivity-of-subtyping} 
  and reflexivity of $\subttt$.
  
  We now focus on the proof of transitivity. %
       Assume that $\TT_1\subt \TT_2$ and $\TT_2\subt \TT_3.$
    From $\TT_1\subt \TT_2$, by Definition~\ref{def:subtyping},  we have
    \begin{equation}\label{eq:t1_subtype_t2}
    \forall \UT_1\in\llbracket \TT_1 \rrbracket_\SO  \;
    \forall \VT_2\in\llbracket \TT_2 \rrbracket_\SI \;
    \exists \WT_1\in\llbracket \UT_1 \rrbracket_\SI \;
    \exists \WT_2\in\llbracket \VT_2 \rrbracket_\SO
    \quad \WT_1 \subttt \WT_2
    \end{equation}
    From $\TT_2\subt \TT_3$, by Definition~\ref{def:subtyping},
    \begin{equation}\label{eq:t2_subtype_t3}
    \forall \UT_2\in\llbracket \TT_2 \rrbracket_\SO \;
    \forall \VT_3\in\llbracket \TT_3 \rrbracket_\SI \;
    \exists \WT'_2\in\llbracket \UT_2 \rrbracket_\SI \;
    \exists \WT_3\in\llbracket \VT_3 \rrbracket_\SO
    \quad \WT'_2 \subttt \WT_3
    \end{equation}
    Let us now fix one $\UT_1\in\llbracket \TT_1 \rrbracket_\SO$. 
    By~(\ref{eq:t1_subtype_t2}) we have that  
    \begin{equation}\label{eq:t1_subtype_t2_with_fixed_U1}
     \forall\VT_2\in\llbracket \TT_2 \rrbracket_\SI \; \exists \WT_2\in\llbracket \VT_2 \rrbracket_\SO \; \exists \WT_1\in\llbracket \UT_1 \rrbracket_\SI \text{ such that } \WT_1 \subttt \WT_2
    \end{equation}
    and let $Y$ be the set of all such $\WT_2$'s. By Lemma~\ref{lemm:Vanja'sLemma21}, there exist $\UT\in\llbracket \TT_2 \rrbracket_\SO$ such that $\llbracket \UT \rrbracket_\SI \subseteq Y$.
    Now from~(\ref{eq:t2_subtype_t3}) we have 
    $\forall \VT_3\in\llbracket \TT_3 \rrbracket_\SI \; \exists \WT_2\in\llbracket \UT \rrbracket_\SI \; \exists \WT_3\in\llbracket \VT_3 \rrbracket_\SO$ such that $\WT_2 \subttt \WT_3$. 
    Then, we conclude by transitivity of $\subttt$ that
    \[
        \forall \UT_1\in\llbracket \TT_1 \rrbracket_\SO \;
        \forall \VT_3\in\llbracket \TT_3 \rrbracket_\SI \;
        \exists \WT_1\in\llbracket \UT_1 \rrbracket_\SI \;
        \exists \WT_3\in\llbracket \VT_3 \rrbracket_\SO
        \quad \WT_1 \subttt \WT_3.
    \]

\end{proof}

\subsection{Further Examples}
\label{sec:subtyping-further-examples}%

We illustrate the refinement relation with an example.

\begin{example}
\label{ex:ref}
Consider $\W_1=\mu \ty.\tout\pp{\ell}{\S}.\tin\pq{\ell'}{\S'}.\ty$ and $\W_2=\mu \ty.\tin\pq{\ell'}{\S'}.\tout\pp{\ell}{\S}.\ty$. %
Their trees are related by the following coinductive derivation:
\[
  \infer=[\mbox{\rulename{ref-$\BC$} with $\BContext\pp = \tin\pq{\ell'}{\S'}$}]{%
    \ttree{\W_1} \subttt \ttree{\W_2}
  }{%
    \infer=[\rulename{ref-in}]{%
      \tin\pq{\ell'}{\S'}.\ttree{\W_1} \subttt \tin\pq{\ell'}{\S'}.\ttree{\W_2}
    }{%
      \ttree{\W_1} \subttt \ttree{\W_2}
    }%
  }%
\]
\end{example}

Next we give a simple asynchronous subtyping example. 
\begin{example}[Asynchronous subtyping]
\label{ex:sub}
Let\; %
\(
  \T = \tout\pp{\ell_1}{\ST_1}.\texternal\pq ?\left\{
  \begin{array}{@{}l@{}}
     \ell_3(\ST_3).\tend\\
     \ell_4(\ST_4).\tend
  \end{array} \right.
  \;\;\text{and}\;\;
  \T' = \tinternal\pp!\left\{
  \begin{array}{@{}l@{}}
     \ell_1(\ST_1).\tin\pq{\ell_3}{\ST_3}.\tend\\
     \ell_2(\ST_2).\tend
  \end{array} \right.
\) %
\;We show that $\T\subt \T'$. %
Notice that $\T$ is a SO type %
and $\T'$ is a SI type: %
hence, by Def.~\ref{def:subtyping}, we only need to show that there are 
$\WT\in\llbracket\ttree{\T}\rrbracket_\SI$ and 
$\WT'\in\llbracket\ttree{\T'}\rrbracket_\SO$ such that
$\WT\subttt\WT'$.
Since:
\[
\begin{array}{ccl}
\llbracket\ttree{\T}\rrbracket_\SI 
&=&
\left\{\,
\tout\pp{\ell_1}{\ST_1}.\tin\pq{\ell_3}{\ST_3}.\tend, \;
\tout\pp{\ell_1}{\ST_1}.\tin\pq{\ell_4}{\ST_4}.\tend
\,\right\}\\
\llbracket\ttree{\T'}\rrbracket_\SO
&=&
\left\{\,
\tout\pp{\ell_1}{\ST_1}.\tin\pq{\ell_3}{\ST_3}.\tend, \;
\tout\pp{\ell_2}{\ST_2}.\tend
\,\right\}
\end{array}
\] 
we have that $\WT\subttt\WT'$ holds for
$\WT=\WT'=\tout\pp{\ell_1}{\ST_1}.\tin\pq{\ell_3}{\ST_3}.\tend$, by reflexivity of $\subttt$.
\end{example}

Next we consider a complex example of asynchronous subtyping which
contains branching, selection and recursion.
This example is undecided by the algorithm in 
\cite{BravettiCLYZ19,BravettiCLYZ19L} (it returns ``unknown'')
but we can reason by our rules using the SISO decomposition method
as demonstrated below. 

\begin{example}
\label{ex:CONCUR19}
Consider the examples of session types $M_1$ and $M_2$ from
\cite[Example 3.21]{BravettiCLYZ19L}, here denoted by  $\T$ and $\T'$, respectively, where %
\[
\small
\begin{array}{ll}
\begin{array}{l}
\T = \mu \ty_1. \texternal\pp?\left\{
  \begin{array}{@{}l@{}}
  {\msgLabel{\ell_1}}{(\ST_1).}\tout\pp{\ell_3}{\ST_3}. \tout\pp{\ell_3}{\ST_3}. \tout\pp{\ell_3}{\ST_3}. \ty_1\\
  {\msgLabel{\ell_2}}{(\ST_2)}.\mu \ty_2. \tout\pp{\ell_3}{\ST_3} .\ty_2
\end{array} \right.
\end{array}
&
\begin{array}{l}
\T' =  \mu \ty_1. \texternal\pp?\left\{
  \begin{array}{@{}l@{}}
  {\msgLabel{\ell_1}}{(\ST_1).}\tout\pp{\ell_3}{\ST_3}. \ty_1\\
  {\msgLabel{\ell_2}}{(\ST_2)}.\mu \ty_2. \tout\pp{\ell_3}{\ST_3} .\ty_2
  \end{array} \right.
\end{array}
\end{array}
 \]

\tikzstyle{place}=[circle,draw=black,thick, inner sep=4pt,minimum size=8mm]
\tikzstyle{pre}=[<-,shorten <=1pt, >=stealth', semithick]
\tikzstyle{post}=[->,shorten >=1pt, >=stealth',  semithick]
\begin{tiny}
\begin{tikzpicture}%
\node[place] (q2) {\normalsize $\pp!$};
\node (M1) [left of = q2, node distance = 1cm] {\large $\T$:};
\node[place] (q1) [right of = q2, node distance=1.7cm] {\normalsize $\pp?$}
	edge [post] node[auto,swap] {$\ell_1(\ST_1)$} (q2);
\node[place] (q5) [right of = q1, node distance=1.7cm] {\normalsize $\pp!$}
	edge [pre] node[auto, swap] {$\ell_2(\ST_2)$} (q1)
	edge   [loop right, shorten >=1pt, >=stealth',  semithick] node[auto, swap, right] {$\ell_3(\ST_3)$} (q5);
\node[place] (q3) [below of = q2, node distance=1.5cm] {\normalsize $\pp!$}
	edge [pre, bend left] node[swap, left] {$\ell_3(\ST_3)$} (q2);
\node[place] (q4) [below of = q1, node distance=1.5cm] {\normalsize $\pp!$}
	edge [post, bend right] node[auto, swap] {$\ell_3(\ST_3)$} (q1)
	edge [pre] node[auto, swap] {$\ell_3(\ST_3)$} (q3);
\node (q1') [above of  = q1] {}
	edge [post]  (q1);
\end{tikzpicture}
\begin{tikzpicture}
\node[place] (q2) {\normalsize $\pp!$};
\node (M2) [left of = q2, node distance = 1cm] {\large $\T'$:};
\node[place] (q1) [right of = q2, node distance = 1.7cm] {\normalsize $\pp?$}
	edge [post, bend right] node[auto, swap] {$\ell_1(\ST_1)$} (q2)
	edge [pre, bend left] node[ swap, below] {$\ell_3(\ST_3)$} (q2);
\node[place] (q3) [right of = q1, node distance = 1.7cm] {\normalsize $\pp!$}
	edge [pre] node[auto, swap] {$\ell_2(\ST_2)$} (q1)
	edge [loop right, shorten >=1pt, >=stealth',  semithick] node[right] {$\ell_3(\ST_3)$} (q1);
\node (q1') [above of = q1] {}
	edge [post] node[auto, swap] {} (q1);
\end{tikzpicture}
\end{tiny}

In \cite{BravettiCLYZ19,BravettiCLYZ19L},
if the algorithm returns ``true'' (``false''), then the considered types are (are not) in the subtyping relation. The algorithm can return ``unknown'', meaning that the algorithm cannot check whether the types are in the subtyping relation or not.

For the considered types,
the algorithm in \cite{BravettiCLYZ19,BravettiCLYZ19L}
returns ``unknown'' and thus it can\emph{not} check that $\T \subt \T'$, which is, according to the authors of \cite{BravettiCLYZ19,BravettiCLYZ19L}, due to the complex accumulation patterns of these types which cannot be recognised by their theory. 

Here our approach comes into the picture and we demonstrate that the two decomposition functions into SO and SI trees are sufficiently 
fine-grained to recognise the complex structure of these types and to prove that $\T \subt \T'$.  We show that types $\T$ and $\T'$ are in the subtyping  relation $\T \subt \T'$, i.e., the corresponding session trees are in the subtyping relation \\$\ttree{\T} \subt \ttree{\T'}$ by showing that 
 \[
   \forall \UT\in\llbracket \TT\rrbracket_\SO \quad \forall \VT' \in \llbracket \TT' \rrbracket_\SI \quad \exists \WT \in\llbracket \UT \rrbracket_\SI \quad \exists \WT' \in \llbracket \VT' \rrbracket_\SO \qquad \WT \subttt\WT'
\]
where  $\TT = \ttree{\T}$ and $\TT' = \ttree{\T'}$.
 For the sake of simplicity, let us use the following notations and abbreviations: 
  \[
 \begin{array}{ll}
  \WT_1 = \ttree{\mu \ty. \tin\pp{\ell_1}{\ST_1}. \tout\pp{\ell_3}{\ST_3}.\ty }& \WT_2 = \tin\pp{\ell_2}{\ST_2}.\ttree{ \mu \ty. \tout\pp{\ell_3}{\ST_3}.\ty} \\
  \WT_3 = \ttree{\mu \ty. \tin\pp{\ell_1}{\ST_1}. \tout\pp{\ell_3}{\ST_3}. \tout\pp{\ell_3}{\ST_3}. \tout\pp{\ell_3}{\ST_3}.\ty} & %
 \end{array} 
\] %
 \[
 \begin{array}{ll}
  \pi_1 \equiv \tin\pp{\ell_1}{\ST_1}. \tout\pp{\ell_3}{\ST_3} & \pi_1^n \equiv \underbrace{\pi_1. \dots . \pi_1.}_\text{n}  \\
  \pi_3 \equiv \tin\pp{\ell_1}{\ST_1}. \tout\pp{\ell_3}{\ST_3}. \tout\pp{\ell_3}{\ST_3}. \tout\pp{\ell_3}{\ST_3} & %
   \pi_3^n \equiv \underbrace{\pi_3. \dots . \pi_3.}_\text{n}
 \end{array} 
\] %
 Then $\llbracket \TT \rrbracket_\SO = \{ \TT \}$ since $\TT$ is a SO tree, whereas
 \[
  \begin{array}{lcl}
 \llbracket \TT' \rrbracket_\SI & = & \{
 \WT_1, \WT_2,   \pi_1. \WT_2, \pi_1^2.\WT_2, \ldots, \pi_1^n.\WT_2, \dots
 \}\\
 \end{array}
 \]
 then $\UT = \TT$ and 
 \[
  \begin{array}{lcl}
  \llbracket \UT \rrbracket_\SI & = & \{
   \WT_3, \WT_2,   \pi_3. \WT_2, \pi_3^2.\WT_2, \ldots, \pi_3^n.\WT_2, \dots
 \}\\
\end{array}
 \] 
 Notice that all $\VT' \in \llbracket \TT' \rrbracket_\SI$ are SISO trees %
 hence $\llbracket \VT' \rrbracket_\SO = \{\VT'\}$ and $\WT' = \VT'$.

 We show now that %
 for all $\WT' \in \llbracket \TT' \rrbracket_\SI$ there is a  $\WT \in \llbracket \UT \rrbracket_\SI$ 
 such that $\WT \subttt \WT'$ by showing:
 
 \begin{enumerate}
 \item $\WT_3 \subttt \WT_1$
 \item $\pi_3^n. \WT_2 \subttt \pi_1^n. \WT_2$, $n \geq 0$  
 \end{enumerate}

\begin{enumerate}
\item The tree simulation for $\WT_3 \subttt \WT_1$ is

\[
\begin{array}{lcl}
\Rel & = & \{(\WT_3,\WT_1), \\
& & 
((\tout\pp{\ell_3}{\ST_3})^3.\WT_3,\tout\pp{\ell_3}{\ST_3}.\WT_1),\\
& & 
 ((\tout\pp{\ell_3}{\ST_3})^2.\WT_3, \WT_1),  \\
& & 
(\tout\pp{\ell_3}{\ST_3}.\WT_3, \tin\pp{\ell_1}{\ST_1}.\WT_1), \\
& & 
(\WT_3,( \tin\pp{\ell_1}{\ST_1}.\tin\pp{\ell_1}{\ST_1})^n.\WT_1),
\\
& & 
((\tout\pp{\ell_3}{\ST_3})^3.\WT_3,( \tin\pp{\ell_1}{\ST_1}.\tin\pp{\ell_1}{\ST_1})^n.\tout\pp{\ell_3}{\ST_3}.\WT_1),\\
& &
 ((\tout\pp{\ell_3}{\ST_3})^2.\WT_3, ( \tin\pp{\ell_1}{\ST_1}.\tin\pp{\ell_1}{\ST_1})^n.\WT_1), 
 \\
& & (\tout\pp{\ell_3}{\ST_3}.\WT_3, ( \tin\pp{\ell_1}{\ST_1}.\tin\pp{\ell_1}{\ST_1})^n.\tin\pp{\ell_1}{\ST_1}.\WT_1) 
 \mid n \geq 1
\}
\end{array}
\]

where $( \tin\pp{\ell_1}{\ST_1}.\tin\pp{\ell_1}{\ST_1})^n \equiv \underbrace{\tin\pp{\ell_1}{\ST_1}.\tin\pp{\ell_1}{\ST_1}. \ldots \tin\pp{\ell_1}{\ST_1}.\tin\pp{\ell_1}{\ST_1}. }_{\text 2n}\; \; n\in \mathbb{N}$ and
 $( \tout\pp{\ell_3}{\ST_3})^i \equiv \underbrace{\tout\pp{\ell_3}{\ST_3}. \ldots \tout\pp{\ell_3}{\ST_3} }_{\text i} \; \; i = 1,2,3$.

{\bf Proof.} The trees $\WT_3$ and $\WT_1$ are related by the following coinductive derivations:\\
for $n \geq 1$
\[
  \infer=[\mbox{\rulename{ref-in}}]{%
   \WT_3 \subttt  (\tin\pp{\ell_1}{\ST_1}.\tin\pp{\ell_1}{\ST_1})^{n}.\WT_1
  }{%
    \infer=[\mbox{\rulename{ref-$\BC$}, {\scriptsize $\BContext\pp = (\tin\pp{\ell_1}{\S_1}.\tin\pp{\ell_1}{\S_1})^{n}$}}]{%
   (\tout\pp{\ell_3}{\ST_3})^3.\WT_3 \subttt  (\tin\pp{\ell_1}{\ST_1}.\tin\pp{\ell_1}{\ST_1})^{n}.\tout\pp{\ell_3}{\ST_3}.\WT_1
  }{%
    \infer=[\mbox{\rulename{ref-$\BC$}, {\scriptsize $\BContext\pp = (\tin\pp{\ell_1}{\S_1}.\tin\pp{\ell_1}{\S_1})^{n} \tin\pp{\ell_1}{\S_1}$}}]{%
  (\tout\pp{\ell_3}{\ST_3})^2.\WT_3 \subttt  (\tin\pp{\ell_1}{\ST_1}.\tin\pp{\ell_1}{\ST_1})^{n}.\WT_1
  }{%
\infer=[\mbox{\rulename{ref-$\BC$}, {\scriptsize $\BContext\pp = (\tin\pp{\ell_1}{\S_1}.\tin\pp{\ell_1}{\S_1})^{n+1}$}}]{%
    \tout\pp{\ell_3}{\ST_3}.\WT_3 \subttt  (\tin\pp{\ell_1}{\ST_1}.\tin\pp{\ell_1}{\ST_1})^{n}.\tin\pp{\ell_1}{\ST_1}.\WT_1
    }{%
       \WT_3 \subttt (\tin\pp{\ell_1}{\ST_1}.\tin\pp{\ell_1}{\ST_1})^{n+1}.\WT_1
    }%
  }%
 }%
}%
\]

\[
  \infer=[\mbox{\rulename{ref-in}}]{%
   \WT_3 \subttt \WT_1
  }{%
  \infer=[\mbox{\rulename{ref-out}}]{%
   (\tout\pp{\ell_3}{\ST_3})^3.\WT_3 \subttt\tout\pp{\ell_3}{\ST_3}.\WT_1
  }{%
   \infer=[\mbox{\rulename{ref-$\BC$}, {\scriptsize $\BContext\pp = \tin\pp{\ell_1}{\S_1}$}}]{%
  (\tout\pp{\ell_3}{\ST_3})^2. \WT_3 \subttt \WT_1
  }{%
    \infer=[\mbox{\rulename{ref-$\BC$}, {\scriptsize $\BContext\pp = \tin\pp{\ell_1}{\S_1}.\tin\pp{\ell_1}{\S_1}$}}]{%
    \tout\pp{\ell_3}{\ST_3}.\WT_3 \subttt \tin\pp{\ell_1}{\ST_1}.\WT_1
    }{%
       \WT_3 \subttt \tin\pp{\ell_1}{\ST_1}.\tin\pp{\ell_1}{\ST_1}.\WT_1
    }%
  }%
 }%
}%
\]

\item  Case $\pi_3^n. \WT_2 \subttt \pi_1^n. \WT_2$, $n \geq 0$.\\
If $n=0$, then $\WT_2 \subttt \WT_2$ holds by reflexivity of $\subttt$, Lemma~\ref{lem:transitivityW}.\\
In case $n > 0$, we first show that 
\begin{equation}\label{eq:sub}
\pi_3^n. \WT_2 \subttt \pi_1.\pi_3^{n-1}.\WT_2
\end{equation}
The tree simulation is
\[
\begin{array}{lcl}
\Rel & = & \{(\pi_3^n. \WT_2, \pi_1.\pi_3^{n-1}.\WT_2), \\
& & 
((\tout\pp{\ell_3}{\ST_3})^3. \pi_3^{n-1}.\WT_2, \tout\pp{\ell_3}{\ST_3}.\pi_3^{n-1}.\WT_2),\\
& & 
((\tout\pp{\ell_3}{\ST_3})^2. \pi_3^{n-1}.\WT_2, \pi_3^{n-1}.\WT_2),\\
& & 
(\tout\pp{\ell_3}{\ST_3}. \pi_3^{n-1}.\WT_2,  \pi_1. \tout\pp{\ell_3}{\ST_3}.\pi_3^{n-2}.\WT_2),\\
& & 
(\pi_3^{n-1}.\WT_2,  \pi_1.\pi_3^{n-2}.\WT_2),\\
&  & \vdots \\
& & 
(\pi_3.\WT_2,  \pi_1.\WT_2),\\
& & 
((\tout\pp{\ell_3}{\ST_3})^3.\WT_2, \tout\pp{\ell_3}{\ST_3}.\WT_2),\\
& & 
((\tout\pp{\ell_3}{\ST_3})^2.\WT_2, \WT_2),\\
& & 
(\tout\pp{\ell_3}{\ST_3}.\WT_2, \WT_2),\\
& & 
(\WT_2, \WT_2)\}\\
\end{array}
\]

The refinement $\pi_3^n. \WT_2 \subttt \pi_1.\pi_3^{n-1}.\WT_2$ is derived by the following coinductive derivation

\[
\infer=[\mbox{\rulename{ref-in}}]{%
\pi_3^n.\WT_2 \subttt \pi_1.\pi_3^{n-1}.\WT_2\\
  }{%
\infer={%
    \vdots\\
    }{%
\infer=[\mbox{\rulename{ref-in}}]{%
\pi_3^2.\WT_2 \subttt \pi_1.\pi_3.\WT_2
    }{%
\infer=[\mbox{\rulename{ref-out}}]{%
    (\tout\pp{\ell_3}{\S_3})^3.\pi_3.\WT_2 \subttt \tout\pp{\ell_3}{\S_3}.\pi_3.\WT_2
    }{%
\infer=[\mbox{\rulename{ref-$\BC$}, {\scriptsize $\BContext\pp = \tin\pp{\ell_1}{\S_1}$}}]{%
    (\tout\pp{\ell_3}{\S_3})^2.\pi_3.\WT_2 \subttt \pi_3.\WT_2
    }{%
\infer=[\mbox{\rulename{ref-$\BC$}, {\scriptsize $\BContext\pp = \tin\pp{\ell_1}{\S_1}$}}]{%
    \tout\pp{\ell_3}{\S_3}.\pi_3.\WT_2 \subttt \pi_1.\tout\pp{\ell_3}{\S_3}.\WT_2
    }{%
  \infer=[\mbox{\rulename{ref-in}}]{%
   \pi_3.\WT_2 \subttt \pi_1.\WT_2
  }{%
  \infer=[\mbox{\rulename{ref-out}}]{%
   (\tout\pp{\ell_3}{\S_3})^3.\WT_2 \subttt\tout\pp{\ell_3}{\S_3}.\WT_2
  }{%
  \infer=[\mbox{\rulename{ref-$\BC$}, {\scriptsize $\BContext\pp = \tin\pp{\ell_1}{\S_1}$}}]{%
  (\tout\pp{\ell_3}{\S_3})^2. \WT_2 \subttt \WT_2
  }{%
    \infer=[\mbox{\rulename{ref-$\BC$}, {\scriptsize $\BContext\pp = \tin\pp{\ell_1}{\S_1}$}}]{%
    \tout\pp{\ell_3}{\S_3}.\WT_2 \subttt \WT_2
    }{%
       \WT_2 \subttt \WT_2
    }%
  }%
 }%
}%
}%
}%
}%
}%
}%
}%
\]

This coinductive derivation proves all refinements
\[
\pi_3^{n-k}.\WT_2 \subttt \pi_1.\pi_3^{n-k-1}.\WT_2, \quad k = 0, \ldots, n-1.
\]
By $k$ consecutive application first of \rulename{ref-out} and then \rulename{ref-in}, it holds that
\[
\pi_1^k\pi_3^{n-k}.\WT_2 \subttt \pi_1^{k+1}.\pi_3^{n-k-1}.\WT_2, \quad k = 0, \ldots, n-1.
\]
which means that 
\[
\begin{array}{ll}
\pi_3^n.\WT_2 \subttt \pi_1.\pi_3^{n-1}.\WT_2, & \\
\pi_1. \pi_3^{n-1}.\WT_2 \subttt \pi_1^2.\pi_3^{n-2}.\WT_2, & \\
\vdots & \\
 \pi_1^k\pi_3^{n-k}.\WT_2 \subttt \pi_1^{k+1}.\pi_3^{n-k-1}.\WT_2, & \\
 \vdots & \\
\pi_1^{n-1}\pi_3.\WT_2 \subttt \pi_1^{n}.\WT_2 &
\end{array}
\]

then $\pi_3^n. \WT_2 \subttt \pi_1^n. \WT_2$ follows by transitivity of $\subttt$, Lemma~\ref{lem:transitivityW}.\\\\

This concludes the proof that $\T \subt \T'$,  which could not be given by
the answer (Yes or No) by the algorithm in \cite[Example 3.21]{BravettiCLYZ19L}. 
\end{enumerate}
\end{example}

\section{Appendix of Section \ref{subsec:typesystem}}
\label{app:typesystem}
\subsection{Proof of Theorem~\ref{lem:subtyping-preserves-liveness}}
\label{app:proofs-subtyping-liveness}%
The main aim of this section is to prove
Theorem~\ref{lem:subtyping-preserves-liveness}.

\begin{lemma}\label{lem:struct-equiv-and-liveness}
If $\Gamma$ is live and $\Gamma \equiv \Gamma'$ then $\Gamma'$ is also live.
\end{lemma}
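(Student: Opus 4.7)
The plan is to transfer every fair path from $\Gamma'$ to a corresponding fair path from $\Gamma$ whose entries are structurally congruent pointwise, and then to transfer liveness back.

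First I would establish a bisimulation-style lemma for $\equiv$ and $\red$: if $\Gamma_1 \equiv \Gamma_1'$ and $\Gamma_1 \redLabel{\alpha} \Gamma_2$, then there exists $\Gamma_2'$ with $\Gamma_1' \redLabel{\alpha} \Gamma_2'$ and $\Gamma_2 \equiv \Gamma_2'$, and symmetrically. This is essentially the content of rule \rulename{e-struct} in Definition~\ref{def:typing-env-reductions}: reductions are already closed under $\equiv$, and the labels $\alpha$ of \rulename{e-rcv} and \rulename{e-send} only inspect the head/leading constructor of the queue type (up to reordering messages for distinct recipients) and the top session-type constructor (up to unfolding), both of which are preserved by $\equiv$.

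Next, given any fair path $(\Gamma_i')_{i\in I}$ starting from $\Gamma_0' = \Gamma'$, I would use the lemma above to inductively build a path $(\Gamma_i)_{i\in I}$ from $\Gamma_0 = \Gamma$ with $\Gamma_i \equiv \Gamma_i'$ and with the same reduction labels at each step. Fairness of $(\Gamma_i)_{i\in I}$ is then immediate: both conditions \ref{item:fairness:send} and \ref{item:fairness:recv} are formulated purely in terms of the labels of the available reductions and the labels actually taken along the path, and these are invariant under the pointwise $\equiv$ by the bisimulation lemma. Since $\Gamma$ is live and $(\Gamma_i)_{i\in I}$ is fair and starts from $\Gamma$, the path $(\Gamma_i)_{i\in I}$ is live.

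Finally, I would transfer liveness back to $(\Gamma_i')_{i\in I}$. Both liveness conditions \ref{item:liveness:send} and \ref{item:liveness:recv} are stated modulo $\equiv$ (their premises use $\equiv$ to match queue/type shapes), so the pointwise equivalence $\Gamma_i \equiv \Gamma_i'$ together with the bisimulation lemma implies that $(\Gamma_i')_{i\in I}$ satisfies \ref{item:liveness:send} and \ref{item:liveness:recv} whenever $(\Gamma_i)_{i\in I}$ does. As the fair path $(\Gamma_i')_{i\in I}$ was arbitrary, $\Gamma'$ is live. The only mildly technical step is the bisimulation lemma, which is routine by case analysis on the rules generating $\equiv$ and on the reduction rules.
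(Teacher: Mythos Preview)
Your proposal is correct and follows essentially the same approach as the paper: the paper's proof is a one-liner observing that $\Gamma \equiv \Gamma'$ implies $\Gamma$ and $\Gamma'$ are strongly bisimilar (perform exactly the same labelled reductions), from which the result follows by Definition~\ref{def:env-liveness}. Your sketch simply spells out this bisimulation transfer in detail---the path-lifting, preservation of fairness, and preservation of the liveness conditions---which is exactly what the paper's terse proof leaves implicit.
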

\begin{proof}
  From the the definition of $\Gamma \equiv \Gamma'$, %
  $\Gamma$ and $\Gamma'$ perform exactly the same reductions %
  (\ie, they are strongly bisimilar). %
  Therefore, the result follows by Definition~\ref{def:env-liveness}.
\end{proof}

\lemMovePreservesLiveness*
\begin{proof}
  By contradiction, assume that $\Gamma$ is live, 
  while $\Gamma'$ is \emph{not} live. Then, by \Cref{def:env-path-fairness},
  there is a fair but non-live path $(\Gamma'_i)_{i \in I}$
  such that $\Gamma'_0 = \Gamma'$.
  But then, consider the path that starts with $\Gamma$,
  reaches $\Gamma'$ as the first step, and then follows $(\Gamma'_i)_{i \in I}$
  --- more formally, the path $(\Gamma_j)_{j \in J}$
  such that $\Gamma_0 = \Gamma$, %
  and $\forall j \in J: j \ge 1$ implies $\Gamma_j = \Gamma'_{j-1}$.
  Such a path is fair, but not live: therefore,
  by \Cref{def:env-path-fairness}, we obtain that $\Gamma$ is not live
  --- contradiction.  Hence, we conclude that $\Gamma'$ is live.
\end{proof}

In the rest of this section, %
we will use %
the following alternative formulation of Def.~\ref{def:env-liveness}, %
that is more handy to construct proofs by coinduction.

\begin{definition}[Coinductive liveness]
  \label{def:coinductive-liveness}%
  \emph{$\varphi$ is a $\pp$-liveness property} iff, %
  whenever $\varphi(\Gamma)$:
  \begin{itemize}
  \item {\textnormal{\rulename{LP$\&$}}}\; %
    $\Gamma(\pp) = (\tqueue_\pp, \TT_\pp)$ %
    with $\TT_\pp = \texternal_{i \in I}{\tin\pq{\ell_i}{\S_i}.{\TT_i}}$ %
    \;implies that, for all fair paths $(\Gamma_j)_{j \in J}$
    such that $\Gamma_0 = \Gamma$, %
    \; $\exists h \in J, k \in I$ such that %
    $\Gamma \reds \Gamma_{h-1} \red \Gamma_h$, %
    with:%
    \begin{enumerate}
    \item%
      $\Gamma_{h-1}(\pp) = (\tqueue_\pp, \TT_\pp)$ \;and\; %
      $\Gamma_{h-1}(\pq) = (\tout\pp{\ell_k}\S\cdot \tqueue', \TT_\pq)$;%
    \item%
      $\Gamma_h(\pp) = (\tqueue_\pp,\TT_k)$ \;and\; %
      $\Gamma_h(\pq) = (\tqueue',\TT_\pq)$;
    \end{enumerate}
  \item {\textnormal{\rulename{LP$\oplus$}}}\; %
    $\Gamma(\pp) = (\tout\pq\ell\S\cdot \tqueue, \TT_\pp)$ %
    \;implies that, for all fair paths $(\Gamma_j)_{j \in J}$
    such that $\Gamma_0 = \Gamma$, %
    \; $\exists h \in J, k \in I$ such that %
    $\Gamma \reds \Gamma_{h-1} \red \Gamma_h$, %
    with:%
    \begin{enumerate}
    \item%
      $\Gamma_{h-1}(\pp) = (\tout\pq\ell\S\cdot \tqueue'_\pp, \TT'_\pp)$ \;and\; %
      $\Gamma_{h-1}(\pq) = (\tqueue_{\pq}, \texternal_{i \in I}{\tin\pp{\ell_i}{\S_i}.{\TT_i}})$;%
    \item%
      $\Gamma_h(\pp) = (\tqueue'_\pp,\TT'_\pp)$ \;and\; %
      $\Gamma_h(\pq) = (\tqueue_\pq,\TT_k)$;
    \end{enumerate}
  \item {\textnormal{\rulename{LP$\red$}}}\; %
    $\Gamma \red \Gamma'$ \;implies\; $\varphi(\Gamma')$.
  \end{itemize}
  We say that \emph{$\Gamma$ is $\pp$-live} iff $\varphi(\Gamma)$ %
  for some $\pp$-liveness property $\varphi$.
  We say that \emph{$\Gamma$ is live} iff $\Gamma$ %
  is $\pp$-live for all $\pp \in \dom{\Gamma}$.
\end{definition}

\begin{definition}[SISO tree projections]
  \label{def:siso-tree-projection}%
  The \emph{projections} of a SISO tree $\WT$ %
  are SISO trees coinductively defined as follows:

  \noindent%
  \begin{minipage}{0.5\linewidth}
    \[\small%
      \begin{array}{@{}r@{\;}c@{\;}l@{}}
        \projOut{\tend}{\pp} &=& \tend%
        \\%
        \projOut{(\tout\pp\ell\S.\WT')}{\pp} &=&%
        \tout\pp\ell\S.(\projOut{\WT'}{\pp})%
        \\%
        \projOut{(\tout\pq\ell\S.\WT')}{\pp} &=&%
        \left\{%
        \begin{array}{@{}l@{\;\;}l@{}}%
          \projOut{\WT'}{\pp} &%
          \text{if $\pq \!\neq\! \pp \land \pp \!\in\! \participant{\WT'}$}%
          \\%
          \tend &%
          \text{if $\pq \!\neq\! \pp \land \pp \!\not\in\! \participant{\WT'}$}%
        \end{array}
        \right.%
        \\%
        \projOut{(\tin\pq\ell\S.\WT')}{\pp} &=&%
        \left\{%
        \begin{array}{@{}l@{\;\;}l@{}}%
          \projOut{\WT'}{\pp} &%
          \text{if $\pp \!\in\! \participant{\WT'}$}%
          \\%
          \tend &%
          \text{if $\pp \!\not\in\! \participant{\WT'}$}%
        \end{array}
        \right.
      \end{array}
    \]
  \end{minipage}
  \quad%
  \begin{minipage}{0.5\linewidth}
    \[\small
      \begin{array}{@{}r@{\;}c@{\;}l@{}}
        \projIn{\tend}{\pp} &=& \tend%
        \\%
        \projIn{(\tin\pp\ell\S.\WT')}{\pp} &=&%
        \tin\pp\ell\S.(\projIn{\WT'}{\pp})%
        \\%
        \projIn{(\tin\pq\ell\S.\WT')}{\pp} &=&%
        \left\{%
        \begin{array}{@{}l@{\;\;}l@{}}%
          \projIn{\WT'}{\pp} &%
          \text{if $\pq \!\neq\! \pp \land \pp \!\in\! \participant{\WT'}$}%
          \\%
          \tend &%
          \text{if $\pq \!\neq\! \pp \land \pp \!\not\in\! \participant{\WT'}$}%
        \end{array}
        \right.%
        \\%
        \projIn{(\tout\pq\ell\S.\WT')}{\pp} &=&%
        \left\{%
        \begin{array}{@{}l@{\;\;}l@{}}%
          \projIn{\WT'}{\pp} &%
          \text{if $\pp \!\in\! \participant{\WT'}$}%
          \\%
          \tend &%
          \text{if $\pp \!\not\in\! \participant{\WT'}$}%
        \end{array}
        \right.
      \end{array}
    \]
  \end{minipage}
\end{definition}

\newcommand{\subtttStrict}{\sqsubseteq}
\begin{definition}
  \label{def:strict-siso-refinement}%
  The \emph{strict refinement $\subtttStrict$} %
  between SISO trees 
  is coinductively defined as:
  {\rm%
  \[
    \cinfer[\rulename{str-in}]{\ST' \subs \ST  \quad  \WT \subtttStrict \WT'}{
      \tin\pp{\ell}{\ST}.\WT
      \subtttStrict
      \tin\pp{\ell}{\ST'}.\WT'
    }
    \qquad%
    \cinfer[\rulename{str-out}]{\ST \subs \ST'  \quad  \WT \subtttStrict \WT'}
    {
      \tout\pp{\ell}{\ST}.\WT
      \subtttStrict
      \tout\pp{\ell}{\ST'}.\WT'
    }
    \qquad%
    \cinfer[\rulename{str-end}]{}
    {\tend \subtttStrict \tend}
  \]
  }%
\end{definition}

By Def.~\ref{def:strict-siso-refinement}, %
$\subtttStrict$ is a sub-relation of $\subttt$ %
that does not allow to change the order of inputs nor outputs. %
And by Prop.~\ref{lem:siso-tree-projections-equal-subt} below, %
the refinement $\subttt$ does not alter the order of inputs nor outputs %
from/to a given participant $\pp$: %
the message reorderings allowed by $\subttt$ can only alter %
interactions targeting different participants, %
or outputs \wrt inputs to a same participant.

\begin{proposition}
  \label{lem:siso-tree-projections-equal-subt}%
  For all $\WT$ and $\WT'$ and $\pp$, %
  if $\WT \subttt \WT'$, %
  then\; $(\projOut{\WT}{\pp}) \subtttStrict (\projOut{\WT'}{\pp})$ 
  \;and\; %
  $(\projIn{\WT}{\pp}) \subtttStrict (\projIn{\WT'}{\pp})$.
\end{proposition}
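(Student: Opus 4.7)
The plan is to prove both conclusions simultaneously by coinduction. I define the candidate relation
$$\Rel \;=\; \bigl\{(\projOut{\WT}{\pp},\, \projOut{\WT'}{\pp}) \;\big|\; \WT \subttt \WT'\bigr\} \;\cup\; \bigl\{(\projIn{\WT}{\pp},\, \projIn{\WT'}{\pp}) \;\big|\; \WT \subttt \WT'\bigr\}$$
and show that $\Rel$ is backward-closed under the rules of Def.~\ref{def:strict-siso-refinement}. This yields $\Rel \subseteq \subtttStrict$ and so the thesis.

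Given any pair in $\Rel$, I proceed by case analysis on the rule of Def.~\ref{def:ref} applied at the root of the derivation of $\WT \subttt \WT'$. The cases for \rulename{ref-end}, \rulename{ref-in}, and \rulename{ref-out} are routine: each top-level projection either retains the prefix action (matching via the corresponding \rulename{str-end}, \rulename{str-in}, or \rulename{str-out}, with the coinductive premise following directly from the premise of the original rule) or eliminates it (collapsing to a pair that again lies in $\Rel$ by the premise on continuations).

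The crux lies in the two asynchronous rules. For \rulename{ref-$\BC$}, write $\WT = \tout\pp{\ell}{\S}.\WT_1$ and $\WT' = \BCon\pp{\tout\pp{\ell}{\S'}.\WT'_1}$, with premises $\S \subs \S'$, $\WT_1 \subttt \BCon\pp{\WT'_1}$, and $\actions{\WT_1}=\actions{\BCon\pp{\WT'_1}}$. Since, by construction, $\BContext\pp$ contains no outputs targeting $\pp$, Def.~\ref{def:siso-tree-projection} yields $\projOut{\WT'}{\pp} = \tout\pp{\ell}{\S'}.\projOut{\WT'_1}{\pp}$ and $\projOut{\BCon\pp{\WT'_1}}{\pp} = \projOut{\WT'_1}{\pp}$; the pair then matches via \rulename{str-out}, whose premise $(\projOut{\WT_1}{\pp},\projOut{\WT'_1}{\pp}) \in \Rel$ is witnessed by $\WT_1 \subttt \BCon\pp{\WT'_1}$. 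For a projection on some $\pq \neq \pp$, the LHS simply discards its first output and the RHS commutes projection with the finite $\BContext\pp$ prefix, peeling off the same sequence of $\pq$-actions on both sides, so rule \rulename{str-in}/\rulename{str-out} (or immediate fall-through) applies and the residual pair lives in $\Rel$ via the same premise. The case \rulename{ref-$\AC$} is handled symmetrically, with $\projIn{\_}{\pp}$ playing the role of $\projOut{\_}{\pp}$.

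The main obstacle is the truncation branches of Def.~\ref{def:siso-tree-projection}: whenever the projected participant is absent from the remaining continuation, the projection returns $\tend$ early, and a priori the two projections of $\WT$ and $\WT'$ could desynchronise (one truncating strictly before the other, breaking the coinductive step). This scenario is ruled out by the fact that $\WT \subttt \WT'$ preserves the set of actions: in the asynchronous rules this is an explicit side condition, and for the other rules it follows by the standard coinductive argument on tree simulations sketched in Appendix~\ref{sec:app:transitivity}. Consequently the participants occurring in $\WT$ and $\WT'$ coincide, the early-$\tend$ branches of Def.~\ref{def:siso-tree-projection} fire simultaneously on both sides, and the coinductive step closes.
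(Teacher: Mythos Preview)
Your approach is the same as the paper's (whose entire proof is ``by coinduction on the derivation of $\WT \subttt \WT'$''); you supply much more detail, correctly isolating the roles of the side conditions on $\actions{\WT}$ and the fact that $\AContext\pp$/$\BContext\pp$ contain no inputs-from-$\pp$/outputs-to-$\pp$.

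There is one point to tighten. In every case where the projection \emph{discards} the top prefix (your ``collapsing'' cases for \rulename{ref-in}/\rulename{ref-out}, and the $\pq \neq \pp$ subcases of \rulename{ref-$\AC$}/\rulename{ref-$\BC$}), simply observing that the resulting pair ``again lies in $\Rel$'' is not a backward-closure step: you must eventually exhibit an instance of a rule from Def.~\ref{def:strict-siso-refinement}. What actually happens in, say, \rulename{ref-$\BC$} with projection onto $\pq \neq \pp$ is that $\projOut{\WT}{\pq} = \projOut{\WT_1}{\pq}$ and $\projOut{\WT'}{\pq} = \projOut{\BCon\pp{\WT'_1}}{\pq}$ (the intermediate $\tout\pp{\ell}{\S'}$ is skipped by the $\pq$-projection), so the pair is \emph{unchanged} but now witnessed by the premise $\WT_1 \subttt \BCon\pp{\WT'_1}$; there is no ``peeling off the same sequence of $\pq$-actions on both sides''. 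To close the argument, add an inner induction on the depth of the first $\tout\pq{\cdots}$ (resp.\ $\tin\pq{\cdots}$) prefix in the LHS witness: each collapse strips one prefix from the LHS, so after finitely many steps the witness begins with a prefix that the projection retains, and \rulename{str-out} (resp.\ \rulename{str-in}) fires. This is a routine productivity argument, and the paper's one-line proof omits it entirely.
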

\begin{proof}
  By coinduction on the derivation of $\WT \subttt \WT'$.
\end{proof}

\begin{proposition}
  \label{lem:subt-reductions}%
  Take any $\pp$-live $\Gamma$ with $\Gamma(\pp) = (\tqueue, \TT)$. %
  Take $\Gamma' = \Gamma\mapUpdate{\pp}{(\tqueue',\TT')}$ %
  with $\tqueue'{\cdot}\TT' \subt \tqueue{\cdot}\TT$. %
  Then, for any fair path $(\Gamma'_j)_{j \in J'}$ with $\Gamma'_0 = \Gamma'$:
  \begin{enumerate}
  \item\label{item:subt-reductions:wi}%
    for all $n$, %
    the first $n$ inputs/outputs of $\pp$ along $(\Gamma'_j)_{j \in J'}$ %
    match the first $n$ input/output actions %
    of some $\WT' \in \llbracket\UT'\rrbracket_{\SI}$, %
    with $\UT' \in \llbracket\TT'\rrbracket_{\SO}$;%
  \item\label{item:subt-reductions:w}%
    there is a fair path $(\Gamma_j)_{j \in J}$ with $\Gamma_0 = \Gamma$ %
    such that, for all $n$, %
    the first $n$ inputs/outputs of $\pp$ along $(\Gamma_j)_{j \in J}$ %
    match the first $n$ input/output actions %
    of some $\WT \in \llbracket\VT\rrbracket_{\SO}$, %
    with $\VT \in \llbracket\TT\rrbracket_{\SI}$; and%
  \item\label{item:subt-reductions:subt}%
    $\tqueue'{\cdot}\WT' \subttt \tqueue{\cdot}\WT$.%
  \end{enumerate}
\end{proposition}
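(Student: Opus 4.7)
The plan is to proceed in three phases, building the witnesses incrementally along the given fair path of $\Gamma'$.

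Phase 1 (Item~1): I would construct $\UT' \in \llbracket\TT'\rrbracket_\SO$ and $\WT' \in \llbracket\UT'\rrbracket_\SI$ coinductively by tracking what $\pp$ actually does along $(\Gamma'_j)_{j \in J'}$. Each time $\pp$'s local type in $\Gamma'_j$ is an internal choice $\tinternal_{i \in I}\tout\pq{\ell_i}{\S_i}.\TT_i$ and $\pp$ fires the output $\ell_k$ into its queue, I record in $\UT'$ that branch $k$ is selected at the corresponding node; each time $\pp$'s local type is an external choice $\texternal_{i \in I}\tin\pq{\ell_i}{\S_i}.\TT_i$ and $\pp$ consumes a message with label $\ell_k$, I record in $\WT'$ that branch $k$ is selected. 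For choice nodes in $\TT'$ never visited along the path I pick arbitrary branches. By construction the first $n$ input/output actions of $\pp$ along the path coincide with the first $n$ input/output prefix of $\WT'$, for every $n$.

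Phase 2 (Item~3): I apply $\tqueue'\cdot\TT' \subt \tqueue\cdot\TT$ via \Cref{def:subtyping}: for the fixed $\UT'$ from Phase~1 and each candidate $\VT \in \llbracket\TT\rrbracket_\SI$ the definition yields $\widehat{\WT}' \in \llbracket\UT'\rrbracket_\SI$ and $\WT \in \llbracket\VT\rrbracket_\SO$ with $\tqueue'\cdot\widehat{\WT}' \subttt \tqueue\cdot\WT$. The delicate point is to choose $\VT$ so that $\widehat{\WT}'$ can be taken equal to the pre-constructed $\WT'$. I would do this by a Koenig/compactness argument over finite prefixes: for each $n$ the set of $(\VT,\WT)$ that is consistent with $\WT'$ up to depth $n$ is non-empty and prefix-closed, and since SISO trees extracted from the syntactically finite types $\TT, \TT'$ are finitely branching, an inverse-limit argument (in the spirit of \Cref{lemm:Vanja'sLemma21}) yields a single global $\VT$ and $\WT$ making $\tqueue'\cdot\WT' \subttt \tqueue\cdot\WT$.

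Phase 3 (Item~2): Using the $\pp$-liveness of $\Gamma$ (\Cref{def:env-liveness}, equivalently \Cref{def:coinductive-liveness}), I would construct $(\Gamma_j)_{j \in J}$ inductively so that $\pp$'s $n$-th input/output coincides with $\WT$'s $n$-th input/output. Since $\VT$ is an SI tree, $\pp$'s internal choices are fully determined by $\WT$; whenever the next prescribed action is a send, I invoke \rulename{LP$\oplus$} of \Cref{def:coinductive-liveness} to see that $\pp$'s pending send is eventually fired along any fair continuation, and when it is a receive I invoke \rulename{LP$\&$} to see that the matching enqueue-then-consume sequence eventually happens. Between successive $\pp$-actions I schedule the other participants by a standard round-robin policy to ensure the path is fair in the sense of \Cref{def:env-path-fairness}.

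The main obstacle is Phase~2: reconciling the \emph{pre-constructed} $\WT'$ of Phase~1 with the \emph{existential} witness delivered by $\subt$, which \Cref{def:subtyping} allows to depend on the choice of $\VT$. The compactness argument relies crucially on \Cref{lem:siso-tree-projections-equal-subt}: any $\subttt$-related SISO refinements must project to the same sequence of $\pp$-actions, so the ``shape'' of $\WT'$ on $\pp$'s actions is already forced, and all that is left is to align the branch selections in $\VT$ with the external-choice selections already fixed in $\WT'$. Phases~1 and~3 are then largely mechanical book-keeping.
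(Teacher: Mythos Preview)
Your Phase~3 has a genuine gap. You propose to build $(\Gamma_j)_{j\in J}$ by invoking the $\pp$-liveness clauses of \Cref{def:coinductive-liveness} plus a round-robin schedule for the other participants. But clause \rulename{LP$\&$} only guarantees that \emph{some} message with \emph{some} label eventually reaches $\pp$'s pending external choice; it gives you no control over \emph{which} label. Your target $\WT$ is a SISO tree, so at each external-choice node of $\TT$ it has already committed to one specific branch $\ell_k$. If the other participants, scheduled round-robin and making their own internal choices, happen to send $\ell_{k'}\neq\ell_k$, then $\pp$ cannot follow $\WT$ and your construction breaks. (Your reference to \rulename{LP$\oplus$} for ``pending sends'' is also off: that clause is about queued outputs being \emph{consumed}, not about selections being fired; selections can always fire immediately by \rulename{e-send}.)

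The paper avoids this entirely by a different mechanism: it constructs $(\Gamma_j)_{j\in J}$ by \emph{replaying the transitions of} $(\Gamma'_j)_{j\in J'}$, not by appealing to liveness. Since $\Gamma\setminus\pp=\Gamma'\setminus\pp$, every participant $\pq\neq\pp$ can attempt to fire in $\Gamma$ exactly the action it fired at the corresponding step of $(\Gamma'_j)$; if that action is not yet enabled (because $\pp$ in $\Gamma'$ anticipated an output or input that $\pp$ in $\Gamma$ has not performed yet), the action is recorded in a ``delayed'' list and fired as soon as it becomes enabled. This guarantees that the messages arriving at $\pp$ in $(\Gamma_j)$ carry the same labels as in $(\Gamma'_j)$, so $\pp$'s inputs are automatically aligned. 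The paper does not first fix $\WT',\WT$ and then build the path; it maintains throughout the procedure a set $\mathcal W(i)$ of \emph{all} pairs $(\WT_a,\WT_b)$ from the subtyping definition that are still consistent with the prefixes of $\pp$-actions observed so far in both paths, and proves this set stays non-empty (your Phase~2 compactness idea is the limit version of this, and is fine in spirit). The replay construction is what makes the result go through; to fix your argument you would need to drop the liveness/round-robin scheme and adopt a replay-with-delay scheme of this kind.
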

\begin{proof}
  Before proceeding, with a slight abuse of notation, %
  we ``rewind'' $\Gamma$ and $\Gamma'$, %
  \ie, we consider %
  $\Gamma$ and $\Gamma'$ in the statement to be defined such that:
  \begin{align}
    \label{eq:subt-reductions:rewind}%
    &\Gamma(\pp) = (\temptyqueue, \tqueue{\cdot}\TT) \qquad%
    \Gamma'(\pp) = (\temptyqueue, \tqueue'{\cdot}\TT')%
  \end{align}
  \ie, the outputs in $\tqueue$ and $\tqueue'$ are not yet queued; %
  instead, the queues of $\Gamma(\pp)$ and $\Gamma'(\pp)$ are empty, %
  and the outputs $\tqueue$ and $\tqueue'$ %
  are prefixes of the respective types, and are about to be sent. %
  We will ``undo'' this rewinding at the end of the proof, %
  to obtain the final result.

  Since $\tqueue'{\cdot}\TT' \subt \tqueue{\cdot}\TT$ (by hypothesis), %
  by Def.~\ref{def:subtyping} we have:
  \begin{align}
    \label{eq:subtt-reductions:subt-pre}%
    \forall \UT'\in\llbracket \tqueue'{\cdot}\TT'\rrbracket_\SO \quad \forall \VT \in \llbracket \tqueue{\cdot}\TT \rrbracket_\SI \quad \exists \WT_2 \in\llbracket \UT' \rrbracket_\SI \quad \exists \WT_1 \in \llbracket \VT \rrbracket_\SO \qquad \WT_2 \subttt\WT_1%
  \end{align}
  Observe that $\UT'$ and $\VT$ in \eqref{eq:subtt-reductions:subt-pre} %
  are quantified over sets of %
  session trees beginning with a same sequence of singleton selections %
  ($\tqueue'$ and $\tqueue$, respectively). %
  Therefore, such sequences of selections appear at the beginning %
  of all SISO trees extracted from any such $\UU'$ and $\V'$, which means:
  \begin{align}
    \label{eq:subtt-reductions:subt}%
    \forall \UT'\in\llbracket \TT'\rrbracket_\SO \quad \forall \VT \in \llbracket \TT \rrbracket_\SI \quad \exists \WT_a \in\llbracket \UT' \rrbracket_\SI \quad \exists \WT_b \in \llbracket \VT \rrbracket_\SO \qquad \tqueue'{\cdot}\WT_a \subttt \tqueue{\cdot}\WT_b%
  \end{align}

  \newcommand{\delayed}[1]{\mathit{delayed({#1})}}%
  \newcommand{\delayedEmpty}{\epsilon}%
  \newcommand{\pathEmpty}{\epsilon}%
  \newcommand{\GammaSteps}[1]{\gamma({#1})}%
  \newcommand{\tryFire}[5]{\operatorname{tryFire}\!\left({#1},{#2},{#3},{#4},{#5}\right)}%
  \newcommand{\treePairs}[1]{\mathcal{W}({#1})}%
  \newcommand{\pActions}[1]{\operatorname{{\pp}-actions}({#1})}%
  \newcommand{\pActionsi}[1]{\operatorname{{\pp}-actions}'({#1})}%
  \newcommand{\pActionsEmpty}{\epsilon}%

  We now define a procedure that, %
  using the first $m$ steps of a fair path $(\Gamma'_j)_{j \in J'}$, %
  constructs the beginning of a fair path $(\Gamma_j)_{j \in J}$; %
  also, the procedure ensures that: %
  \begin{enumerate*}[label=\emph{(\arabic*)}]
  \item%
    in the first $m$ steps of $(\Gamma'_j)_{j \in J'}$, %
    participant $\pp$ follows $\tqueue'$ and then a prefix of some $\WT_a$; %
  \item%
    the path $(\Gamma_j)_{j \in J}$ is constructed so that %
    $\pp$ follows $\tqueue$ and then some $\WT_b$ such that %
    $\tqueue'{\cdot}\WT_a \subttt \tqueue{\cdot}\WT_b$; %
    and %
  \item%
    such $\WT_a$ and $\WT_b$ are quantified %
    in \eqref{eq:subtt-reductions:subt}.
  \end{enumerate*}
  In particular, the procedure lets each participant %
  in $\Gamma'\setminus\pp = \Gamma\setminus\pp$ %
  fire the same sequences of actions %
  along $(\Gamma'_j)_{j \in J'}$ and $(\Gamma_j)_{j \in J}$. %
  The difference is that some actions may be delayed in %
  $(\Gamma_j)_{j \in J}$, %
  because $\pp$ in $\Gamma'$ may anticipate some outputs and inputs %
  (thanks to subtyping) \wrt $\Gamma$, %
  and unlock some participants in $\Gamma'$ earlier than $\Gamma$: %
  the procedure remembers any delayed actions (and their order), %
  and fires them as soon as they become enabled, %
  thus ensuring fairness.%

  For the procedure, we use:
  \begin{itemize}
  \item%
    $\pActionsi{i}$: %
    sequence of input/output actions performed by $\pp$ %
    in $\Gamma'$ when it reaches $\Gamma'_i$. %
    We begin with $\pActionsi{0} = \pActionsEmpty$;
  \item%
    $\GammaSteps{i}$: number of reduction steps %
    constructed along $(\Gamma_j)_{j \in J}$ %
    when $\Gamma'$ has reached $\Gamma'_i$. %
    We begin with $\GammaSteps{0} = 0$;
  \item%
    $\pActions{i}$: %
    sequence of input/output actions performed by $\pp$ %
    in $\Gamma$ when $\Gamma'$ reaches $\Gamma'_i$ %
    (note that, at this stage, $\Gamma$ has reached $\Gamma_{\GammaSteps{i}}$). %
    We begin with $\pActions{0} = \pActionsEmpty$;%
  \item%
    $\treePairs{i}$: %
    set of SISO tree pairs $(\WT_a, \WT_b)$ %
    such that, when $\Gamma'$ has reached $\Gamma'_i$, %
    and $\Gamma$ has reached $\Gamma_{\GammaSteps{i}}$, %
    the sequence $\pActionsi{i}$ %
    matches a prefix of $\tqueue'{\cdot}\WT_a$, %
    and the sequence $\pActions{i}$ %
    matches a prefix of $\tqueue{\cdot}\WT_b$, %
    and $\tqueue{\cdot}\WT_a \subttt \tqueue'{\cdot}\WT_b$. %
    We begin with $\treePairs{0}$ %
    containing all pairs $\WT_a$ and $\WT_b$ quantified in %
    \eqref{eq:subtt-reductions:subt};
  \item%
    $\delayed{i}$: sequence of reduction labels %
    that have been fired by $\Gamma'$ when it reaches $\Gamma'_i$, %
    but have not (yet) been fired by $\Gamma$ %
    when it reaches $\Gamma_{\GammaSteps{i}}$. %
    Labels in this sequence will be fired with the highest priority. %
    We begin with $\delayed{0} = \delayedEmpty$.
  \end{itemize}

  We also use the following function:

  \[
    \tryFire{d}{\Gamma}{d'}{f}{s} =%
    \begin{cases}
      \text{if $d = \delayedEmpty$ then $(d', f, s)$}%
      \\%
      \text{else if $\Gamma \redLabel{head(d)} \Gamma'$ %
        then $\tryFire{tail(d)}{\Gamma'}{d'}{f{\cdot}head(d)}{s{\cdot}\Gamma}$}%
      \\%
      \text{else $\tryFire{tail(d)}{\Gamma}{d'{\cdot}head(d)}{f}{s}$} %
    \end{cases}
  \]

  The function $\tryFire{d}{\Gamma}{d'}{f}{s}$ %
  tries to fire the environment reduction labels in the sequence $d$ %
  from $\Gamma$. %
  The other parameters are used along recursive calls, %
  to build the triplet that is returned by the function:
  \begin{itemize}
  \item%
    $d'$: a sequence of labels that have \emph{not} been fired. %
    It is extended each time the topmost label in $d$ cannot be fired;
  \item%
    $f$: a sequence of labels that \emph{have} been fired. %
    It is extended each time the topmost label in $d$ is fired; and
  \item%
    $s$: a sequence of typing environments %
    reducing from one into another through the sequence of labels $f$.
    It is extended each time $f$ is extended (see above).
  \end{itemize}

  When $(\Gamma'_j)_{j \in J'}$ performs a step $m+1$, %
  with a label $\alpha$ such that %
  $\Gamma'_m \redLabel{\;\alpha\;} \Gamma'_{m+1}$, %
  we proceed as follows:
  \begin{enumerate}
  \item\label{item:subt-reductions:not-p-gammai}
    if $\alpha$ does \emph{not} involve an input/output by $\pp$, %
    \ie, $\alpha = \recvLabel{\pq}{\pr}{\ell}$ %
    or $\alpha = \sendLabel{\pq}{\pr}{\ell}$ for some $\pq,\pr \neq \pp$:%
    \begin{enumerate}        
    \item\label{item:subt-reductions:p-actions-i-gamma-red}%
      $\pActionsi{m+1} = \pActionsi{m}$
    \end{enumerate}
  \item%
    otherwise (\ie, if $\alpha = \sendLabel{\pp}{\pq}{\ell}$ %
    or $\alpha = \recvLabel{\pp}{\pq}{\ell}$): %
    \begin{enumerate}
    \item%
      $\pActionsi{m+1} = \pActionsi{m} {\cdot} \alpha$;
    \end{enumerate}
  \item%
    $(d', f, s) =%
    \tryFire{\delayed{m}}{\Gamma_{\GammaSteps{m}}}{\delayedEmpty}{\delayedEmpty}{\pathEmpty}$
    \hfill{(\ie, try to fire each delayed action)}%
  \item%
    $\Gamma^* = \text{if $|s| > 0$ then $last(s)$ else %
      $\Gamma_{\GammaSteps{m}}$}$
    \hfill(\ie, $\Gamma^*$ is the latest env.~reached from $\Gamma$)
  \item%
    $\forall i = 1..|s|: \Gamma_{\GammaSteps{m}+i} = s(i)$ %
    \hfill(\ie, we extend the path of $\Gamma$ to reach $\Gamma^*$)
  \item%
    if $\Gamma^* \redLabel{\;\alpha\;} \Gamma''$ %
    (for some $\Gamma''$), %
    we add $\Gamma''$ to the path of $\Gamma$, as follows:
    \begin{enumerate}
    \item%
      $\Gamma_{\GammaSteps{m}+|s|+1} = \Gamma''$
    \item%
      $\GammaSteps{m+1} = \GammaSteps{m} + |s| + 1$
    \item%
      $\delayed{m+1} = d'$
    \item\label{item:subt-reductions:not-p}
      if $\alpha$ does \emph{not} involve an input/output by $\pp$, %
      \ie, $\alpha = \recvLabel{\pq}{\pr}{\ell}$ %
      or $\alpha = \sendLabel{\pq}{\pr}{\ell}$ for some $\pq,\pr \neq \pp$:%
      \begin{enumerate}        
      \item%
        $\pActions{m+1} = \pActions{m}$ %
        extended with all labels in $f$ involving $\pp$;
      \end{enumerate}
    \item%
      otherwise (\ie, if $\alpha = \sendLabel{\pp}{\pq}{\ell}$ %
      or $\alpha = \recvLabel{\pp}{\pq}{\ell}$): %
      \begin{enumerate}
      \item%
        $\pActions{m+1} = \pActions{m}$ %
        extended with all labels in $f$ involving $\pp$, followed by $\alpha$;
      \end{enumerate}
    \end{enumerate}
  \item%
    otherwise (\ie, if there is no $\Gamma''$ such that %
    $\Gamma^* \redLabel{\;\alpha\;} \Gamma''$), %
    we add $\alpha$ to the delayed actions, as follows:
    \begin{enumerate}
    \item%
      $\pActions{m+1} = \pActions{m}$ %
      extended with all labels in $f$ involving $\pp$%
    \item%
      $\GammaSteps{m+1} = \GammaSteps{m} + |s|$%
    \item%
      $\delayed{m+1} = d' {\cdot} \alpha$%
    \end{enumerate}
  \item%
    $\treePairs{m+1} = %
    \left\{%
      (\WT_a, \WT_b) \in \treePairs{m}%
      \;\middle|\;%
      \begin{array}{@{}l@{}}
        \text{$\WT_a$ matches $\pActionsi{m+1}$}%
        \\%
        \text{$\WT_b$ matches $\pActions{m+1}$}%
      \end{array}
      \;\right\}$%
  \end{enumerate}

  The procedure has the following invariants, for all $i \ge 0$: %
  \begin{enumerate}[label=\emph{(i\arabic*)}]
  \item\label{item:subt-reductions:inv-pactionsi}%
    $\pActionsi{i}$ is the sequence of inputs/outputs of $\pp$ %
    fired along the transitions from $\Gamma'_0 = \Gamma'$ to $\Gamma'_i$ %
    (by construction);
  \item\label{item:subt-reductions:inv-pactions}%
    $\pActions{i}$ is the sequence of inputs/outputs of $\pp$ %
    fired along the transitions from %
    $\Gamma_0 = \Gamma$ to $\Gamma_{\GammaSteps{i}}$ %
    (by construction);
  \item\label{item:subt-reductions:inv-treepairs}%
    $\forall (\WT_a, \WT_b) \in \treePairs{i}$: %
    $\pActionsi{i}$ matches the beginning of $\WT_a$
    and $\pActions{i}$ matches the beginning of $\WT_b$
    (by construction);
  \item\label{item:subt-reductions:inv-pairs-exist}%
    $\treePairs{i} \neq \emptyset$. %
    We prove this claim by induction on $i$. %
    When $i=0$, the result is immediate: %
    by definition, %
    $\treePairs{0}$ contains all pairs $(\WT_a, \WT_b)$ quantified in %
    \eqref{eq:subtt-reductions:subt}, and at least one such pair exists %
    (otherwise, the hypothesis $\tqueue'{\cdot}\TT' \subt \tqueue{\cdot}\TT$ 
    would be contradicted). 
    
    In  the inductive case $i = m+1$ (for some $m$), %
    by the induction hypothesis we have that $\treePairs{m} \neq \emptyset$, %
    and (by item~\ref{item:subt-reductions:inv-treepairs} above) %
    for each $(\WT_a, \WT_b) \in \treePairs{i}$, %
    $\pActionsi{m}$ matches the beginning of $\WT_a$
    and $\pActions{m}$ matches the beginning of $\WT_b$. %
    Then, by contradiction, assume that $\treePairs{m+1} = \emptyset$:
    this means that $\Gamma(\pp)$ or $\Gamma'(\pp)$,
    has performed an input or output that cannot be matched by
    any pair in of trees $\treePairs{i}$,
    which means that
    $\tqueue'{\cdot}\TT' \not\subt \tqueue{\cdot}\TT$ --- contradiction.
    Therefore, we conclude $\treePairs{m+1} \neq \emptyset$.%
  \end{enumerate}

  We now obtain our thesis, %
  by invariants %
  \ref{item:subt-reductions:inv-pactionsi}--\ref{item:subt-reductions:inv-pairs-exist} above: %
  by taking any fair path $(\Gamma'_j)_{j \in J'}$ with $\Gamma'_0 = \Gamma'$, %
  and applying the procedure above for any %
  $n = |\pActionsi{m}|$ for some $m \ge n$ such that $m \in J'$ %
  (\ie, for any number $n$ of reductions of $\pp$ that are performed %
  along the path, within $m$ steps), %
  we find some $\WT'$ such that $\tqueue'{\cdot}\WT'$ matches $\pActionsi{m}$, %
  and we construct the beginning of a fair path $(\Gamma_j)_{j \in J}$ %
  where $\pp$ behaves according to some $\WT$ %
  such that $\tqueue{\cdot}\WT$ matches $\pActions{m}$, %
  and such that $\tqueue'{\cdot}\WT' \subttt \tqueue{\cdot}\WT$; %
  and by increasing $n$ and $m$, %
  we correspondingly extend the sequences %
  $\pActionsi{m}$ and $\pActions{m}$. %

  To conclude the proof, we need to undo %
  the ``rewinding'' in~\eqref{eq:subt-reductions:rewind}. %
  Consider any path %
  $(\Gamma'_j)_{j \in J'}$,
  and the corresponding $(\Gamma_j)_{j \in J}$ %
  obtained with $\Gamma',\Gamma$ rewinded as %
  in~\eqref{eq:subt-reductions:rewind}: %
  we can undo the rewinding of $\tqueue'$ and $\tqueue$ by:
  \begin{enumerate}
  \item%
    choosing a path $(\Gamma'_j)_{j \in J'}$ %
    that fires the outputs in $\tqueue'$ in its first reductions, %
    thus reaching the ``original'' $\Gamma'$ from the statement;%
  \item%
    then, for such a path of $\Gamma'$, %
    the procedure gives us the beginning of a corresponding live path %
    $(\Gamma_j)_{j \in J}$ that fires all outputs in $\tqueue$, %
    within $k$ steps (for some $k$). %
    By induction on $k$ and $\tqueue$, we can reorder the first $k$ actions %
    of $(\Gamma_j)_{j \in J}$ %
    so that the outputs in $\tqueue$ are fired first, %
    thus reaching the ``original'' $\Gamma$ from the statement;%
  \item%
    after the outputs in $\tqueue$ and $\tqueue'$ are fired %
    along such paths of $\Gamma'$ and $\Gamma$, %
    we have that %
    $\pp$ follows SISO trees $\WT_a$ and $\WT_b$ %
    extracted respectively from $\TT'$ and $\TT$, %
    and we have $\tqueue{\cdot}\WT_a \subttt \tqueue'{\cdot}\WT_b$: %
    therefore, such paths satisfy the statement.
  \end{enumerate}
\end{proof}

\begin{proposition}
  \label{lem:subttt-no-output}%
  Take any $\pp$-live $\Gamma$ with $\Gamma(\pp) = (\tqueue, \TT)$. %
  Take $\Gamma' = \Gamma\mapUpdate{\pp}{(\tqueue',\TT')}$ %
  with $\tqueue'{\cdot}\TT' \subt \tqueue{\cdot}\TT$. %
  Assume that there is a fair path $(\Gamma'_j)_{j \in J'}$ %
  with $\Gamma'_0 = \Gamma'$ such that, %
  for some $\pq, \ell, \S_\pr$:%
  \begin{align}
    \label{eq:lem:subt-no-output:no-out-gammai}%
    &\forall j \in J', \tqueue_\pr, \TT_\pr:\;%
    \Gamma'_j(\pr) \neq (\tout{\pq}{\ell}{\S_\pr}{\cdot}\tqueue_{\pr}, \TT_\pr)%
  \end{align}
  Then, there is a fair path $(\Gamma_j)_{j \in J}$ %
  with $\Gamma_0 = \Gamma$ such that, %
  \begin{align}
    \label{eq:lem:subt-no-output:no-out-gamma}%
    &\forall j \in J, \tqueue_\pr, \TT_\pr:\;%
    \Gamma_j(\pr) \neq (\tout{\pq}{\ell}{\S_\pr}{\cdot}\tqueue_{\pr}, \TT_\pr)%
  \end{align} 
\end{proposition}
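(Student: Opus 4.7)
The plan is to leverage \Cref{lem:subt-reductions} as the workhorse: given the fair path $(\Gamma'_j)_{j\in J'}$ from $\Gamma'$, item~(\ref{item:subt-reductions:w}) of that proposition delivers a fair path $(\Gamma_j)_{j\in J}$ from $\Gamma$, together with SISO trees $\WT_a,\WT_b$ (matching prefixes of $\pp$'s actions along each path) such that $\tqueue'{\cdot}\WT_a \subttt \tqueue{\cdot}\WT_b$. The key observation is that, by inspection of the construction in the proof of \Cref{lem:subt-reductions}, the only participant whose action ordering can differ between the two paths is $\pp$: every reduction label fired along $(\Gamma'_j)$ that does not involve $\pp$ is fired at some (possibly delayed) point along $(\Gamma_j)$, and vice versa. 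My proof would first make this correspondence precise by exhibiting a monotone map $\sigma\colon J'\to J$ such that the multiset of reductions not involving $\pp$ fired within the first $j{+}1$ steps of $(\Gamma'_j)$ equals the multiset fired within the first $\sigma(j){+}1$ steps of $(\Gamma_j)$ (modulo delayed labels).

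Next I would split on whether $\pr=\pp$ or $\pr\neq\pp$. For $\pr\neq\pp$, the queue $\tqueue_\pr$ of $\pr$ evolves solely by $\pr$ appending its own outputs at the tail and the respective recipients consuming messages at the head; since $\pr$ performs the very same sequence of sends and receives in both paths, and since the messages consumed from $\pr$'s queue are consumed by the same (non-$\pp$) participants in the same order, $\pr$'s queue at step $\sigma(j)$ of $(\Gamma_j)$ coincides structurally with $\pr$'s queue at step $j$ of $(\Gamma'_j)$. Hence the absence of the head $\tout\pq\ell{\S_\pr}$ transfers directly. For $\pr=\pp$, I would apply \Cref{lem:siso-tree-projections-equal-subt} to deduce $(\projOut{\tqueue'{\cdot}\WT_a}{\pq}) \subtttStrict (\projOut{\tqueue{\cdot}\WT_b}{\pq})$; by inspection of the rules for $\subtttStrict$, the two projections agree on labels (and their sorts are related by $\subs$). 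Because messages addressed to $\pq$ are queued in the order they are emitted and consumed in FIFO order by $\pq$ (whose behaviour is identical in both paths, being in $\dom\Gamma\setminus\{\pp\}$), the head of $\pp$'s queue restricted to $\pq$-addressed messages evolves identically in both paths; since no non-$\pq$ message can be the head "$\tout\pq\ell{\S_\pr}$", the property lifts.

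The main obstacle, and the part that requires the most care, is showing that at intermediate instants (in particular at the freshly inserted reductions used to fire delayed actions) the queue of $\pr$ cannot transiently expose a $\tout\pq\ell{\S_\pr}$ head that was invisible along $(\Gamma'_j)$. For $\pr\neq\pp$ this is immediate because the delayed actions do not involve $\pr$'s queue. For $\pr=\pp$, however, the procedure may fire a delayed output-by-$\pp$ of a different message before, after, or in the middle of an output to $\pq$; I would therefore strengthen the induction by threading through, at each step, the invariant that the portion of $\pp$'s output queue addressed to $\pq$ — as a sequence of $(\ell,\S)$ pairs — matches the corresponding portion of $\pp$'s queue in the $\Gamma'$ path (up to $\subs$ on sorts), using $\subtttStrict$ as the invariant-preserving relation. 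If at some step $\pp$ in $\Gamma$ had $\tout\pq\ell{\S_\pr}$ at the head of its $\pq$-projected queue while $\pp$ in $\Gamma'$ did not, the two $\pq$-projections would diverge on their first label, contradicting $\subtttStrict$; this yields the desired property and completes the argument.
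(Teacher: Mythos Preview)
Your approach is essentially the paper's: both invoke \Cref{lem:subt-reductions} to build the fair path of $\Gamma$ and then transfer the property using the fact that participants other than $\pp$ fire (at most) the same actions along both paths. The paper's argument is shorter, however: rather than matching $\pr$'s queue states pointwise via a map $\sigma$, it simply observes that the hypothesis means $\pr$ never \emph{enqueues} the offending output along the $\Gamma'$-path, and---since by the construction in \Cref{lem:subt-reductions} the actions of each non-$\pp$ participant along the $\Gamma$-path are a subset of those along the $\Gamma'$-path---$\pr$ never enqueues it along the $\Gamma$-path either (the paper also tacitly takes $\pr \neq \pp$, which holds in all its uses, so your extra case and the queue-projection invariant are not needed).
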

\begin{proof}
  Take the path $(\Gamma'_j)_{j \in J'}$. %
  By Prop.~\ref{lem:subt-reductions} %
  there is a SISO tree $\WT'$ %
  describing the first $n$ reductions %
  (for any $n$) %
  of $\pp$ in $\Gamma'$, %
  and a corresponding SISO tree $\WT$ %
  such that $\tqueue'{\cdot}\WT' \subttt \tqueue{\cdot}\WT$ %
  describing the reductions of $\pp$ in $\Gamma$ %
  along a fair path $(\Gamma_j)_{j \in J}$. %
  By Prop.~\ref{lem:siso-tree-projections-equal-subt}, %
  $\tqueue'{\cdot}\WT'$ contains the same sequences of per-participant %
  inputs and outputs of $\tqueue{\cdot}\WT$; %
  moreover, by Def.~\ref{def:ref}, $\WT'$ %
  can perform the outputs appearing in $\WT$, possibly earlier. %
  And by the path construction in Prop.~\ref{lem:subt-reductions}, %
  each participant $\pq \in \dom{\Gamma}=\dom{\Gamma'}$ %
  (with $\pq \neq \pp$) %
  can fire along $(\Gamma'_j)_{j \in J'}$ %
  at least the same outputs and the same inputs %
  (in the same respective order) %
  that it fires along $(\Gamma_j)_{j \in J}$.
  Now, observe that by hypothesis %
  \eqref{eq:lem:subt-no-output:no-out-gammai}, %
  along the fair path $(\Gamma'_j)_{j \in J'}$, %
  participant $\pr$ %
  never produces an output $\tout{\pq}{\ell}{\S_\pr}$; %
  but then, %
  along $(\Gamma_j)_{j \in J}$, %
  participant $\pr$ never produces the output %
  $\tout{\pq}{\ell}{\S_\pr}$, either. %
  Therefore, we obtain \eqref{eq:lem:subt-no-output:no-out-gamma}.%
\end{proof}

\begin{proposition}
  \label{lem:subttt-no-input}%
  Take any $\pp$-live $\Gamma$ with $\Gamma(\pp) = (\tqueue, \TT)$. %
  Take $\Gamma' = \Gamma\mapUpdate{\pp}{(\tqueue',\TT')}$ %
  with $\tqueue'{\cdot}\TT' \subt \tqueue{\cdot}\TT$. %
  Assume that there is a fair path $(\Gamma'_j)_{j \in J'}$ %
  with $\Gamma'_0 = \Gamma$ such that, %
  for some $\pq, I, \ell_i, \S_{\pr, i}, \TT_{\pr, i} \;(i \in I)$:%
  \begin{align}
    \label{eq:lem:subt-no-input:no-in-gammai}%
    &\forall j \in J', \tqueue_\pr:\; %
      \Gamma'_j(\pr) \neq \left(\tqueue_{\pr}, \texternal_{i\in I}\tin{\pq}{\ell_i}{\S_{\pr, i}}.\TT_{\pr, i}\right)%
  \end{align}
  Then, there is a fair path $(\Gamma_j)_{j \in J}$ %
  with $\Gamma_0 = \Gamma$ such that:
  \begin{align}
    \label{eq:lem:subt-no-input:no-in-gamma}%
    &\forall j \in J, \tqueue_\pr:\; %
      \Gamma_j(\pr) \neq \left(\tqueue_{\pr}, \texternal_{i\in I}\tin{\pq}{\ell_i}{\S_{\pr, i}}.\TT_{\pr, i}\right)%
  \end{align} 
\end{proposition}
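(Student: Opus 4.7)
The proof dualises that of Proposition~\ref{lem:subttt-no-output}: where the latter reasons about outputs sitting in a participant's queue, here I reason about external-choice types appearing in a participant's session-type component. The same two tools are sufficient, namely Proposition~\ref{lem:subt-reductions} (to transfer a fair path from $\Gamma'$ to $\Gamma$) and Proposition~\ref{lem:siso-tree-projections-equal-subt} (to match per-participant input subsequences through the strict refinement $\subtttStrict$).

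Starting from the fair path $(\Gamma'_j)_{j \in J'}$, I invoke Proposition~\ref{lem:subt-reductions} to build a fair path $(\Gamma_j)_{j \in J}$ of $\Gamma$, together with, for every finite prefix, SISO witnesses $\WT' \in \llbracket U'\rrbracket_\SI$ (with $U' \in \llbracket \TT'\rrbracket_\SO$) and $\WT \in \llbracket V\rrbracket_\SO$ (with $V \in \llbracket \TT\rrbracket_\SI$) describing the actions of $\pp$ along the two paths, with $\tqueue'{\cdot}\WT' \subttt \tqueue{\cdot}\WT$. By Proposition~\ref{lem:siso-tree-projections-equal-subt} the input projections onto $\pq$ satisfy $\projIn{\tqueue'{\cdot}\WT'}{\pq} \subtttStrict \projIn{\tqueue{\cdot}\WT}{\pq}$: since $\subtttStrict$ preserves order and structure exactly, the sequences of $\pq$-inputs for $\pp$ in the two paths have matching labels and continuations (only the payload sorts may widen contravariantly). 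Moreover, by the explicit procedure in the proof of Proposition~\ref{lem:subt-reductions}, any $\pq' \neq \pp$ fires the identical sequence of input and output actions along $(\Gamma'_j)$ and $(\Gamma_j)$.

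To conclude~\eqref{eq:lem:subt-no-input:no-in-gamma}, I split on $\pr$. If $\pr \neq \pp$, then $\Gamma(\pr) = \Gamma'(\pr)$ and $\pr$ performs the same sequence of actions in both paths, so its session-type component evolves in lock-step; hence \eqref{eq:lem:subt-no-input:no-in-gammai} directly entails \eqref{eq:lem:subt-no-input:no-in-gamma}. If instead $\pr = \pp$, I argue by contradiction: suppose at some step $j \in J$ we have $\Gamma_j(\pp) = (\tqueue_\pp, \texternal_{i\in I}\tin{\pq}{\ell_i}{\S_{\pr, i}}.\TT_{\pr, i})$. This branching point is witnessed in every $\pq$-input projection of a SISO extension of $\TT$'s current state, and in particular in $\projIn{\tqueue{\cdot}\WT}{\pq}$; by the $\subtttStrict$-relation between the two input projections, the same branching is reached in the $\pq$-input projection of $\tqueue'{\cdot}\WT'$, with identical labels $\ell_i$ and identical continuations $\TT_{\pr,i}$ (and contravariantly related sorts which, by the induced correspondence, must coincide with $\S_{\pr,i}$). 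Hence $\Gamma'$ at the corresponding step also has $\pp$'s type of the form forbidden by~\eqref{eq:lem:subt-no-input:no-in-gammai} --- a contradiction.

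The main obstacle is the $\pr = \pp$ case: the types $\TT$ and $\TT'$ are only related by the asynchronous subtyping $\subt$, which in general permits reorderings and different exposed branching points. The exact matching of the type $\texternal_{i\in I}\tin{\pq}{\ell_i}{\S_{\pr, i}}.\TT_{\pr, i}$ therefore rests entirely on the strict-refinement lemma (Proposition~\ref{lem:siso-tree-projections-equal-subt}), and on a careful correspondence between $\pp$'s branching points and the $\pq$-input projections of the SISO witnesses produced by Proposition~\ref{lem:subt-reductions}. Once this correspondence is made precise, the argument mirrors dually the output-side reasoning of Proposition~\ref{lem:subttt-no-output}.
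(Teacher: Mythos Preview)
Your overall approach matches the paper exactly: the paper's proof is literally ``Similar to Prop.~\ref{lem:subttt-no-output}'', and you correctly dualise that argument by invoking Prop.~\ref{lem:subt-reductions} to transfer the fair path and Prop.~\ref{lem:siso-tree-projections-equal-subt} to align per-participant input sequences. The case $\pr \neq \pp$ is handled exactly as the paper intends: by the path construction in Prop.~\ref{lem:subt-reductions}, every participant other than $\pp$ fires the same sequence of actions along both paths, so its session-type component evolves identically, and the conclusion is immediate.

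The case $\pr = \pp$, however, contains a real gap. You argue that if $\Gamma_j(\pp)$ exposes the forbidden external choice $\texternal_{i\in I}\tin{\pq}{\ell_i}{\S_{\pr,i}}.\TT_{\pr,i}$, then the $\subtttStrict$ relation on $\pq$-input projections forces the ``same branching'' with ``identical labels $\ell_i$ and identical continuations $\TT_{\pr,i}$'' to appear in $\TT'$'s evolution. But SISO projections (Def.~\ref{def:siso-tree-projection}) are single-input trees: $\projIn{\WT}{\pq}$ is a \emph{sequence} of $\pq$-inputs, not a branching, and it witnesses only one label out of the index set $I$. The relation $\projIn{\tqueue'{\cdot}\WT'}{\pq} \subtttStrict \projIn{\tqueue{\cdot}\WT}{\pq}$ therefore cannot recover the full index set $I$, nor the exact continuations $\TT_{\pr,i}$ (which are general session trees, not SISO). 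Concretely, $\TT'$ could add extra branches to an external choice of $\TT$ and still satisfy $\TT' \subt \TT$; then $\Gamma_0(\pp)$ has the forbidden type while $\Gamma'_0(\pp)$ does not.

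That said, this gap is harmless for the paper's purposes: the proposition is only invoked (in the proof of Lemma~\ref{lem:subtyping-preserves-p-live}, case \rulename{ref-out}\rulename{LP$\oplus$}) with $\pr$ being the \emph{recipient} of a message from $\pp$, hence $\pr \neq \pp$. The paper's own one-line proof does not address the $\pr = \pp$ case either. You may simply note that the argument is only needed for $\pr \neq \pp$, and drop the problematic paragraph.
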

\begin{proof}
  Similar to Prop.~\ref{lem:subttt-no-output}. %
\end{proof}

\begin{definition}[Queue output prefixing]
  \label{notation:output-prefixing}%
  We write $\tqueue{\cdot}\TT$ %
  for the session tree obtained %
  by prefixing $\TT$ with the sequence of singleton internal choices %
  matching the sequence of outputs in $\tqueue$.
\end{definition}

\newcommand{\liveSubP}{\mathcal{P}}%
\begin{lemma}
  \label{lem:subtyping-preserves-p-live}%
  If $\Gamma,\pp{:}(\tqueue, \TT)$ is $\pp$-live %
  and $\tqueue'{\cdot}\TT' \subt \tqueue{\cdot}\TT$, %
  then $\Gamma,\pp{:}(\tqueue', \TT')$ is $\pp$-live.
\end{lemma}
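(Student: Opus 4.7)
The plan is to prove $\pp$-liveness coinductively, by exhibiting a suitable $\pp$-liveness property $\varphi$ (per \Cref{def:coinductive-liveness}) that contains $\Gamma, \pp{:}(\tqueue',\TT')$. A natural first attempt is to define
\[
  \varphi(\Gamma^*) \;\iff\; \exists \Gamma^{**} \text{ $\pp$-live and $\tqueue^*{\cdot}\TT^* \subt \tqueue^{**}{\cdot}\TT^{**}$},
\]
where $\Gamma^* = \Gamma^{**\prime}, \pp{:}(\tqueue^*,\TT^*)$ and $\Gamma^{**} = \Gamma^{**\prime}, \pp{:}(\tqueue^{**},\TT^{**})$ share the same non-$\pp$ part. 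By hypothesis, taking $\Gamma^{**} = \Gamma, \pp{:}(\tqueue,\TT)$ witnesses $\varphi(\Gamma, \pp{:}(\tqueue',\TT'))$, so membership follows immediately. The work is in verifying the three clauses of \Cref{def:coinductive-liveness}.

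For \rulename{LP$\&$}, suppose $\Gamma^*(\pp) = (\tqueue^*_\pp, \texternal_{i\in I}\tin\pq{\ell_i}{\S_i}.\TT_i)$, and fix a fair path $(\Gamma^*_j)_{j \in J}$ from $\Gamma^*$. I would argue by contradiction: assume no step along the path has $\pq$ sending a matching $\ell_k$ message consumed by $\pp$. By fairness of the path and inspection of the SISO refinement rules (\Cref{def:ref}), the only way for $\pp$ in $\Gamma^*$ to exhibit a branching from $\pq$ is for the corresponding $\Gamma^{**}$ to also exhibit (now or eventually, after anticipated prefixes) a branching from $\pq$ with the same labels --- by \Cref{lem:siso-tree-projections-equal-subt}, $\subttt$ preserves per-participant input/output sequences. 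Then \Cref{lem:subttt-no-input} (\emph{subttt-no-input}) transfers the absence of the matching output of $\pq$ back to a fair path of $\Gamma^{**}$, contradicting the $\pp$-liveness of $\Gamma^{**}$. The clause \rulename{LP$\oplus$} is handled symmetrically, using \Cref{lem:subttt-no-output} instead.

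The main obstacle is \rulename{LP$\red$}: given $\Gamma^* \red \Gamma^{*\prime}$, I must show $\varphi(\Gamma^{*\prime})$, i.e.\ produce a $\pp$-live $\Gamma^{**\prime}$ related to $\Gamma^{*\prime}$ by subtyping in the $\pp$ position. This is delicate because the subtype $\Gamma^*$ may anticipate inputs/outputs of $\pp$ that the supertype $\Gamma^{**}$ cannot yet fire, so there is no obvious single-step $\Gamma^{**} \red \Gamma^{**\prime}$ matching the step on $\Gamma^*$. I plan to handle this using \Cref{lem:subt-reductions}: given any fair continuation of $\Gamma^{*\prime}$, extend it (on the left) with the step $\Gamma^* \red \Gamma^{*\prime}$, apply \Cref{lem:subt-reductions} to obtain a fair path from $\Gamma^{**}$ realising the same $\pp$-behaviour up to subtyping, and pick as $\Gamma^{**\prime}$ the earliest typing environment along that path where (i) every non-$\pp$ action performed in $\Gamma^* \red \Gamma^{*\prime}$ has been mirrored (possibly with delays) and (ii) the $\pp$ entry is in a state refining the $\pp$ entry of $\Gamma^{*\prime}$. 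The strengthened refinement witnessed by \Cref{lem:siso-tree-projections-equal-subt} guarantees that the $\pp$ entries stay in the $\subt$ relation after the step. The $\pp$-liveness of $\Gamma^{**\prime}$ then follows from that of $\Gamma^{**}$ by (a $\pp$-local instance of) \Cref{lem:move-preserves-liveness} applied along the intermediate reductions.

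To keep the coinduction well-formed I would, if needed, strengthen $\varphi$ by parameterising over a finite ``delay'' sequence of non-$\pp$ reductions that $\Gamma^{**}$ owes to $\Gamma^*$ (mirroring the $\delayed{\cdot}$ bookkeeping in the proof of \Cref{lem:subt-reductions}); this ensures that at each step I can concretely exhibit the witness $\Gamma^{**\prime}$, rather than relying on an existential after-the-fact. The rest of the argument is then a direct discharge of the three clauses, with \Cref{lem:subttt-no-input,lem:subttt-no-output,lem:subt-reductions} providing all the transfer results needed between the subtype and supertype sides.
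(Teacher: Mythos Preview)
Your coinductive candidate $\varphi$ coincides with the paper's $\liveP'$ (and hence with $\liveSubP$, since $\liveP \subseteq \liveP'$ by reflexivity of $\subt$), and your treatment of \rulename{LP$\&$} and \rulename{LP$\oplus$} by contradiction via the ``no-output/no-input'' transfer propositions is essentially the paper's argument --- though you have the two references swapped: for \rulename{LP$\&$} you need \Cref{lem:subttt-no-output} (absence of $\pq$'s \emph{output} transfers back), and for \rulename{LP$\oplus$} you need \Cref{lem:subttt-no-input} (absence of $\pq$'s \emph{branching} transfers back).

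The genuine gap is in \rulename{LP$\red$}. Your plan is to invoke \Cref{lem:subt-reductions} on a fair path extended by the step $\Gamma^* \red \Gamma^{*\prime}$ and then ``pick as $\Gamma^{**\prime}$ the earliest environment along the resulting path'' satisfying your (i)--(ii). But the witness $\Gamma^{**\prime}$ that $\varphi$ demands must be independent of the fair continuation you chose, and --- more seriously --- $\varphi$ requires $\Gamma^{*\prime}$ and $\Gamma^{**\prime}$ to agree on \emph{all} non-$\pp$ entries. The path construction of \Cref{lem:subt-reductions} in general fires additional actions on the supertype side before it can mirror the single step of $\Gamma^*$, so the non-$\pp$ parts diverge and no such $\Gamma^{**\prime}$ lies on that path. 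Your fallback of strengthening $\varphi$ with a delay sequence does not rescue this: \Cref{lem:subttt-no-output,lem:subttt-no-input} are stated for $\Gamma' = \Gamma\mapUpdate{\pp}{\ldots}$, i.e., precisely under the hypothesis that the non-$\pp$ parts coincide, so you would lose the \rulename{LP$\&$}/\rulename{LP$\oplus$} arguments you already have.

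The paper closes this gap by a different route: it performs a case analysis on the \emph{top-level refinement rule} relating $\tqueue'{\cdot}\TT'$ and $\tqueue{\cdot}\TT$ (\rulename{ref-end}, \rulename{ref-in}, \rulename{ref-$\AC$}, \rulename{ref-out}, \rulename{ref-$\BC$}). The hard sub-cases of \rulename{LP$\red$} are those where the step involves a $\pp$-action that has been anticipated by subtyping (e.g., $\pp$ receives from some $\pr$ whose input is buried in $\TT$ under a prefix $\AContext\pr$, or a queued output of $\pp$ is consumed). There the paper does \emph{not} look for $\Gamma^{**\prime}$ along a path of $\Gamma^{**}$; instead it \emph{constructs an intermediate tree} $\TT^\dagger$ obtained from $\TT$ by commuting the anticipated action to the front (via the SISO-level facts $\WT' \subttt \ACon\pr{\WT}$ implies $\tin\pr{\ell_i}{\ST'_i}.\WT' \subttt \tin\pr{\ell_i}{\ST_i}.\ACon\pr{\WT}$, and analogously for $\BContext\pr$), argues directly that $\Gamma^{**}\mapUpdate{\pp}{(\tqueue,\TT^\dagger)}$ is still $\pp$-live (it realises a subset of the fair paths of $\Gamma^{**}$, only consuming the relevant message earlier), and then fires the single matching step from there. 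This keeps the non-$\pp$ parts equal throughout and stays inside the original $\varphi$. That intermediate-type construction is the key idea missing from your plan.
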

\begin{proof}
  \newcommand{\powerset}[1]{\mathscr{P}\!\mathit{owerset}\!\left({#1}\right)}%
  \newcommand{\liveP}{\mathord{\mathcal{L}}}%
  \newcommand{\liveSubPi}{\mathcal{P}'}%
  \newcommand{\redNotPStar}[1]{\mathrel{\xRightArrow{#1}{}^{\!\!*}}}%

  Let $\liveP$ be the set of all $\pp$-live typing contexts, %
  i.e., the largest $\pp$-liveness property %
  by Def.~\ref{def:coinductive-liveness}. %
  Consider the following property:%

  \begin{align}
    &\label{eq:subttt-liveness:prop:def}%
    \liveSubP = \liveP \,\cup\, \liveP'%
    \quad%
    \text{where}\;%
    \liveP' \,=\,%
    \left\{%
      \Gamma\mapUpdate{\pp}{(\tqueue', \TT')}
    \;\middle|\;%
      \begin{array}{@{}l@{}}
        \Gamma \in \liveP\\%
        \Gamma(\pp) = (\tqueue, \TT)\\%
        \tqueue'{\cdot}\TT' \subt \tqueue{\cdot}\TT%
      \end{array}
    \right\}%
  \end{align}

  We now prove that $\liveSubP$ is a $\pp$-liveness property %
  --- i.e., %
  we prove that each element of $\liveSubP$ %
  satisfies the clauses of Def.~\ref{def:coinductive-liveness}. %
  Since all elements of $\liveP$ trivially satisfy the clauses, %
  we only need to examine the elements of $\liveP'$: %
  to this purpose, %
  we consider each $\Gamma' \in \liveP'$, %
  and observe that, by \eqref{eq:subttt-liveness:prop:def}, there exist %
  $\Gamma, \tqueue, \TT, \tqueue', \TT'$ such that:
  \begin{align}
    &\label{eq:subttt-liveness:prop:gammai}%
      \Gamma' = \Gamma\mapUpdate{\pp}{(\tqueue', \TT')}%
    \\%
    &\label{eq:subttt-liveness:prop:gammap}%
      \Gamma(\pp) = (\tqueue, \TT)%
    \\%
    &\label{eq:subttt-liveness:prop:sub}
      \tqueue'{\cdot}\TT' \subt \tqueue{\cdot}\TT%
  \end{align}
  By \eqref{eq:subttt-liveness:prop:sub} and Def.~\ref{def:subtyping},
  we also have:
  \begin{align}
    &\label{eq:subttt-liveness:prop:siso-trees}%
      \forall \UT'\in\llbracket \tqueue'{\cdot}\TT'\rrbracket_\SO \quad
      \forall \VT \in \llbracket \tqueue{\cdot}\TT \rrbracket_\SI \quad
      \exists \WT_1 \in\llbracket \UT' \rrbracket_\SI \quad
      \exists \WT_2 \in \llbracket \VT \rrbracket_\SO \qquad \WT_1 \subttt \WT_2
  \end{align}
  Observe that %
  by the definitions of $\llbracket \cdot\rrbracket_\SO$ %
  and $\llbracket \cdot\rrbracket_\SI$ on page~\pageref{page:tree-decomp}, %
  and by Def.~\ref{def:ref}, %
  all relations $\WT_1 \subttt \WT_2$ %
  in \eqref{eq:subttt-liveness:prop:siso-trees} %
  are yielded by a same refinement rule. %
  We proceed by cases on such a rule.

  \begin{itemize}
  \item%
    Case \rulename{ref-end}.\quad%
    In this case, we have %
    $\tqueue'{\cdot}\TT' \subttt \tqueue{\cdot}\TT = \tend$, %
    which means %
    $\tqueue' = \tqueue = \temptyqueue$ %
    and $\TT' = \TT = \tend$. %
    Therefore, by \eqref{eq:subttt-liveness:prop:gammap} and \eqref{eq:subttt-liveness:prop:gammai}, %
    $\Gamma' = \Gamma \in \liveP$, and thus, %
    we conclude that %
    $\Gamma'$ satisfies the clauses of Def.~\ref{def:coinductive-liveness}.
  \item%
    Case \rulename{ref-in}.\quad%
    In this case, we have:
    \begin{align}
      &\label{eq:subttt-liveness:sub2-in:tqueue}%
        \tqueue = \tqueue' = \temptyqueue%
        \\%
      &\nonumber%
        \exists \pq, I, I', \ell_i, \ST_i, \ST'_i, \TT_i, \TT'_i%
        \;\text{such that:}%
      \\%
      &\label{eq:subttt-liveness:sub2-in:tt}%
        \qquad%
        \TT' = \texternal_{i \in I \cup I'}\tin\pq{\ell_i}{\ST'_i}.\TT'_i \quad\text{and}\quad%
        \TT = \texternal_{i \in I}\tin\pq{\ell_i}{\ST_i}.\TT_i%
      \\%
      &\label{eq:subttt-liveness:sub2-in:st-wt}%
        \qquad%
        \forall i \in I: \ST_i \subs \ST'_i  \quad\text{and}\quad  \TT'_i \subttt \TT_i%
    \end{align}
    We now show that $\Gamma'$ satisfies all clauses %
    of Def.~\ref{def:coinductive-liveness}:
    \begin{itemize}
    \item%
      clause \rulename{LP$\&$}.\quad%
      Since $\Gamma$ is $\pp$-live, %
      we know that, %
      for all fair paths $(\Gamma_j)_{j \in J}$
      such that $\Gamma_0 = \Gamma$, %
      \; $\exists h \in J, k \in I$ such that %
      $\Gamma \reds \Gamma_{h-1} \red \Gamma_h$, %
      with:%
      \begin{enumerate}
      \item%
        $\Gamma_{h-1}(\pp) = (\tqueue, \TT)$ \;and\; %
        $\Gamma_{h-1}(\pq) = (\tout\pp{\ell}{\ST_\pq}\cdot \tqueue_\pq, \TT_\pq)$ %
        with $\ell = \ell_i$ and $\ST_\pq \subt \ST_i$ for some $i \in I$
      \item%
        $\Gamma_h(\pp) = (\tqueue,\TT_i)$ \;and\; %
        $\Gamma_h(\pq) = (\tqueue_\pq,\TT_\pq)$.
      \end{enumerate} 
      Now, for all such $(\Gamma_j)_{j \in J}$, %
      we can construct a path corresponding $(\Gamma'_j)_{j \in J'}$ %
      such that: %
      \begin{itemize}
      \item%
         $\Gamma'_0 = \Gamma'$
      \item
        $\forall n \in 0..h-1: %
        \Gamma'_n = \Gamma_n\mapUpdate{\pp}{(\tqueue', \TT')}$; %
      \item%
        $\Gamma'_h = \Gamma_h\mapUpdate{\pp}{(\tqueue', \TT'_i)}$ %
        (i.e., the queue of $\Gamma_h(\pp)$ %
        is preserved in $\Gamma'_h(\pp)$);
      \item%
        the rest of the path after the $h$-th reduction is arbitrary (but fair).
      \end{itemize}
      Observe that $(\Gamma'_j)_{j \in J'}$ %
      is fair, reproduces the first $h$ steps of $(\Gamma_j)_{j \in J}$, %
      and triggers the top-level input of $\Gamma'(\pp)$. %
      Also, observe that \emph{all} fair paths from $\Gamma'$ %
      eventually trigger the top-level input of $\Gamma'(\pp)$: %
      in fact, if (by contradiction) we assume that %
      there is a fair path from $\Gamma'$ %
      that never triggers $\Gamma'(\pp)$'s input, %
      then (by inverting the construction above) %
      we would find a corresponding fair path of $\Gamma$ %
      that never triggers $\Gamma(\pp)$'s input %
      --- i.e., we would conclude that $\Gamma$ is \emph{not} $\pp$-live %
      (contradiction).
      Thus, we conclude that $\Gamma'$ satisfies %
      clause~\rulename{LP$\&$} of Def.~\ref{def:coinductive-liveness};
    \item%
      clause \rulename{LP$\oplus$}.\quad%
      The clause is vacuously satisfied;
    \item%
      clause \rulename{LP$\red$}.\quad%
      Assume $\Gamma' \red \Gamma''$.  We have two possibilities:
      \begin{enumerate}[label=(\alph*)]
      \item\label{item:subttt-livenes:move-not-p}%
        the reduction does \emph{not} involve $\pp$. %
        Then, there is a corresponding reduction $\Gamma \red \Gamma'''$ %
        with $\Gamma'' = \Gamma'''\mapUpdate{\pp}{(\tqueue', \TT')}$. %
        Observe that $\Gamma'''$ is $\pp$-live, %
        and thus, $\Gamma''' \in \liveP$; %
        therefore, by \eqref{eq:subttt-liveness:prop:def}, %
        we have $\Gamma'' \in \liveP' \subseteq \liveSubP$. %
        Thus, we conclude that $\Gamma'$ satisfies %
        clause~\rulename{LP$\red$} of Def.~\ref{def:coinductive-liveness};
      \item\label{item:subttt-livenes:move-p}%
        the reduction \emph{does} involve $\pp$. %
        There are three sub-cases:
        \begin{enumerate}[label=(\roman*)]
        \item%
          $\pp$ is enqueuing an output toward some participant $\pr$. %
          This case is impossible, by \eqref{eq:subttt-liveness:sub2-in:tt};
        \item%
          $\pp$ is receiving an input from $\pq$, %
          i.e., %
          $\Gamma'(\pq) = \Gamma(\pq) = %
          (\tout\pp{\ell_i}{\S_\pq}\cdot \tqueue_\pq, \TT_\pq)$ %
          with $\S_\pq \subs \S'_i$ and %
          $\Gamma''(\pp) = (\tqueue', \TT'_i)$ (for some $i \in I$). %
          Notice that, since $\Gamma$ is $\pp$-live, %
          the same output from $\pq$ can be received by %
          $\pp$ in $\Gamma$, %
          and thus, there is a corresponding reduction $\Gamma \red \Gamma'''$ %
          where $\Gamma'''(\pp) = (\tqueue, \TT_i)$ %
          and $\Gamma'' = \Gamma'''\mapUpdate{\pp}{(\tqueue', \TT'_i)}$. %
          Observe that $\Gamma'''$ is also $\pp$-live, %
          and thus, $\Gamma''' \in \liveP$; %
          therefore, by \eqref{eq:subttt-liveness:sub2-in:st-wt} %
          and \eqref{eq:subttt-liveness:prop:def}, %
          we have $\Gamma'' \in \liveP' \subseteq \liveSubP$. %
          Hence, we conclude that $\Gamma'$ satisfies %
          clause~\rulename{LP$\red$} of Def.~\ref{def:coinductive-liveness};
        \item%
          one of $\pp$'s queued outputs in $\tqueue'$ %
          is received by another participant. %
          This case is impossible, %
          by \eqref{eq:subttt-liveness:sub2-in:tqueue}.
        \end{enumerate}
      \end{enumerate}
    \end{itemize}
  \item%
    Case \rulename{ref-$\AC$}.\quad%
    In this case, we have:
    \begin{align}
      &\label{eq:subttt-liveness:sub2-a:tqueue}%
        \tqueue = \tqueue' = \temptyqueue%
        \\%
      &\label{eq:subttt-liveness-sub2-a:tti}%
        \exists I, I', \ell_i, \ST'_i, \TT'_i:\; %
        \TT' = \texternal_{i \in I \cup I'}\tin\pq{\ell_i}{\ST'_i}.\TT'_i%
        \\%
      &\nonumber%
        \text{for all $\WT_1, \WT_2$ in \eqref{eq:subttt-liveness:prop:siso-trees}, }
        \exists i \in I, \AContext{\pq}, \ell_i, \ST_i, \ST'_i, \WT, \WT', \;\text{such that:}%
      \\%
      &\label{eq:subttt-liveness:sub2-a:tt}%
        \qquad%
        \WT_1 = \tin\pq{\ell_i}{\ST'_i}.\WT' \quad\text{and}\quad%
        \WT_2 = \ACon{\pq}{\tin\pq{\ell_i}{\ST_i}.\WT}%
      \\%
      &\label{eq:subttt-liveness:sub2-a:st-wt-actions}%
        \qquad%
        \ST_i \subs \ST'_i \quad\text{and}\quad  \WT' \subttt \ACon{\pq}{\WT}%
        \quad\text{and}\quad%
        \actions{\WT'}=\actions{\ACon\pq{\WT}}
    \end{align}
    We now show that $\Gamma'$ satisfies all clauses %
    of Def.~\ref{def:coinductive-liveness}:
    \begin{itemize}
    \item%
      clause \rulename{LP$\&$}.\quad%
      We proceed by contradiction: %
      we show that if there is a fair path $(\Gamma'_j)_{j \in J'}$
      (with $\Gamma'_0 = \Gamma'$) %
      that violates clause \rulename{LP$\&$}, %
      then there is a corresponding fair path $(\Gamma_j)_{j \in J}$
      (with $\Gamma_0 = \Gamma$) %
      that violates the same clause, %
      which would lead to the absurd conclusion that %
      $\Gamma$ is \emph{not} $\pp$-live. %
      Such a hypothetical path $(\Gamma'_j)_{j \in J'}$ %
      consists of a series of transitions %
      $\Gamma'_{j-1} \redLabel{\;\alpha_j\;} \Gamma'_j$ (for $j \in J'$) %
      where, $\forall j \in J'$, $\alpha_j$ does \emph{not} involve $\pp$ %
      (by \eqref{eq:subttt-liveness:sub2-a:tqueue}). %
      This means that:
      \begin{align}
        &\label{eq:subttt-liveness:sub2-a:no-in-from-q}
          \forall j \in J': \not\exists \TT_\pq, \tqueue_\pq, \S_\pq:%
          \Gamma'_j(\pq) = (\tout{\pp}{\ell_i}{\S_\pq}{\cdot}\tqueue_{\pq}, \TT_\pq)%
          \text{ with } \S_\pq \subs \S'_i \text{ (for any $i \in I$)}%
      \end{align}
      But then, by Prop.~\ref{lem:subttt-no-output}, %
      there is a fair path $(\Gamma_j)_{j \in J}$ with $\Gamma_0 = \Gamma$ %
      where $\pp$ reduces according %
      to some $\WT_2$ in \eqref{eq:subttt-liveness:sub2-a:tt}, %
      and such that:
      \begin{align}
        &\label{eq:subttt-liveness:sub2-a:no-in-from-q-gamma}%
          \forall j \in J: \not\exists \TT_\pq, \tqueue_\pq, \S_\pq:%
          \Gamma_j(\pq) = (\tout{\pp}{\ell_i}{\S_\pq}{\cdot}\tqueue_{\pq}, \TT_\pq)%
          \text{ with } \S_\pq \subs \S_i \text{ (for any $i \in I$)}%
      \end{align}
      Now, observe that %
      the fair path $(\Gamma_j)_{j \in J}$ is constructed %
      using Prop.~\ref{lem:subt-reductions}, %
      and therefore, will eventually attempt to fire %
      the input $\tin{\pq}{\ell_i}{\ST_i}$ of $\WT_2$ %
      in \eqref{eq:subttt-liveness:sub2-a:tt} %
      --- but no suitable output will be available, by %
      \eqref{eq:subttt-liveness:sub2-a:no-in-from-q-gamma}: %
      thus, we obtain that $\Gamma$ is not $\pp$-live --- contradiction. %
        Therefore, we conclude that $\Gamma'$ %
        satisfies clause $\rulename{LP$\&$}$ %
        of Def.~\ref{def:coinductive-liveness};%
    \item%
      clause \rulename{LP$\oplus$}.\quad%
      The clause is vacuously satisfied;
    \item%
      clause \rulename{LP$\red$}.\quad%
      Assume $\Gamma' \red \Gamma''$.  We have two possibilities:
      \begin{enumerate}[label=(\alph*)]
      \item%
        the reduction does \emph{not} involve $\pp$. %
        The proof is similar to case %
        \rulename{ref-in}\rulename{LP$\red$}\ref{item:subttt-livenes:move-not-p} above;
      \item\label{item:subttt-livenes:sub2-ina-move-p}%
        the reduction \emph{does} involve $\pp$. %
        There are three sub-cases:
        \begin{enumerate}[label=(\roman*)]
        \item%
          $\pp$ is enqueuing an output toward some participant $\pr$. %
          This case is impossible, by \eqref{eq:subttt-liveness-sub2-a:tti};%
        \item%
          $\pp$ is receiving an input from $\pq$, %
          i.e., for some $i \in I$:
          \begin{align}
            \label{eq:subttt-liveness:sub2-a:red:pq}%
            &\Gamma'(\pq) = \Gamma(\pq) = %
            (\tout\pp{\ell_i}{\S_\pq}\cdot \tqueue_\pq, \TT_\pq)%
            \;\text{ with $\S_\pq \subs \S'_i$}%
            \\%
            \label{eq:subttt-liveness:sub2-a:red:gammaii}%
            &\Gamma' \redRecv{\pp}{\pq}{\ell_i} \Gamma'' %
            \;\text{ with $\Gamma''(\pp) = (\tqueue', \TT'_i)$}
          \end{align}%
          By Prop.~\ref{lem:subt-reductions}, %
          the fair paths of $\Gamma$ %
          that match some witness %
          $\WT_2$ in \eqref{eq:subttt-liveness:sub2-a:tt} %
          eventually perform %
          the reduction $\alpha$ above. %
          But then, consider the session tree $\TT^*$ %
          that only has SISO trees %
          similar to $\WT_2$ in \eqref{eq:subttt-liveness:prop:siso-trees}, %
          except that each one performs the input $\tin{\pq}{\ell_j}{\ST}$ %
          immediately --- \ie:
          \begin{align}
            &\label{eq:subttt-liveness:sub2-a:siso-trees-2}%
              \hspace{-5mm}\begin{array}{@{}l@{}}
                \forall \UT'\in\llbracket \tqueue'{\cdot}\TT'\rrbracket_\SO \quad
                \forall \VT^* \in \llbracket \tqueue{\cdot}\TT^* \rrbracket_\SI \\
                \exists \WT_1=\tin\pq{\ell_i}{\ST'_i}.\WT' \in\llbracket \UT' \rrbracket_\SI \quad
                \exists \WT^*=\tin\pq{\ell_i}{\ST_i}.\ACon{\pq}{\WT} \in \llbracket \VT^* \rrbracket_\SO \quad \WT_1 \subttt \WT^*
                \end{array}
          \end{align}
          And now, consider the typing environment %
          $\Gamma^* = \Gamma\mapUpdate{\pp}{(\tqueue, \TT^*)}$. %
          Such $\Gamma^*$ is $\pp$-live: %
          it realises (part of) the fair paths of $\Gamma$, %
          except that it consumes $\pq$'s queued output earlier %
          --- and thus, we have $\Gamma^* \in \liveP$. %
          Now, consider $\Gamma'''$ such that %
          $\Gamma^* \redRecv{\pp}{\pq}{\ell_i} \Gamma'''$: %
          by clause \rulename{LP$\red$} of Def.~\ref{def:coinductive-liveness},
          we have $\Gamma''' \in \liveP$. %
          Also, we have $\Gamma'''(\pp) = (\tqueue, \TT^{**})$ %
          such that:
          \begin{align}
            \nonumber%
             &\hspace{-5mm}\begin{array}{@{}l@{}}
                \forall \UT'\in\llbracket \tqueue'{\cdot}\TT'_i\rrbracket_\SO \quad
                \forall \VT^{**} \in \llbracket \tqueue{\cdot}\TT^{**} \rrbracket_\SI \\
                \exists \WT_1=\WT' \in\llbracket \UT' \rrbracket_\SI \quad
                \exists \WT^{*}=\ACon{\pq}{\WT} \in \llbracket \VT^{**} \rrbracket_\SO \quad \WT_1 \subttt \WT^*
                \end{array}
            &\text{(by Def.~\ref{def:typing-env-reductions} %
              and  \eqref{eq:subttt-liveness:sub2-a:st-wt-actions})}%
          \end{align}
          Therefore, by Def.~\ref{def:subtyping}, %
          we have $\Gamma'''(\pp) = (\tqueue, \TT^{**})$, %
          and %
          $\Gamma'' = \Gamma'''\mapUpdate{\pp}{(\tqueue', \TT'_i)}$, %
          with $\tqueue'{\cdot}\TT'_i \subt \tqueue{\cdot}\TT^{**}$
          --- hence, by \eqref{eq:subttt-liveness:sub2-a:st-wt-actions} %
          and \eqref{eq:subttt-liveness:prop:def}, %
          we also have $\Gamma'' \in \liveP' \subseteq \liveSubP$. %
          Thus, we conclude that $\Gamma'$ satisfies %
          clause~\rulename{LP$\red$} of Def.~\ref{def:coinductive-liveness};
        \item%
          one of $\pp$'s queued outputs in $\tqueue'$ %
          is received by another participant. %
          This case is impossible, %
          by \eqref{eq:subttt-liveness:sub2-a:tqueue}.
        \end{enumerate}
      \end{enumerate}
    \end{itemize}
  \item%
    Case \rulename{ref-out}.\quad%
    In this case, we have:
    \begin{align}
      &\nonumber%
        \exists \pq, \ell, \tqueue, \tqueue', \ST, \ST', \TT_1, \TT_2 \;\text{such that:}%
      \\%
      &\label{eq:subttt-liveness:sub2-out:tt-tqueue}%
        \qquad%
        \tqueue'{\cdot}\TT' = \tout\pq{\ell}{\ST'}.\TT_1 \quad\text{and}\quad%
        \tqueue{\cdot}\TT = \tout\pq{\ell}{\ST}.\TT_2%
      \\%
      &\label{eq:subttt-liveness:sub2-out:st-wt}%
        \qquad%
        \ST' \subs \ST  \quad\text{and}\quad  \TT_1 \subt \TT_2%
    \end{align}
    We now show that $\Gamma'$ satisfies all clauses %
    of Def.~\ref{def:coinductive-liveness}:
    \begin{itemize}
    \item%
      clause \rulename{LP$\&$}.\quad%
      If $\TT'$ does \emph{not} begin with an external choice %
      $\tin{\pr}{\ell_\pr}{\ST'_\pr}$, the clause is vacuously satisfied. %
      In the case where %
      $\TT' = \texternal_{i \in I}\tin{\pr}{\ell_{\pr, i}}{\ST'_{\pr, i}}.\TT_i$, %
      the proof is similar to case %
      \rulename{ref-$\AC$}\rulename{LP$\&$} above;
      Thus, we conclude that $\Gamma'$ satisfies %
      clause~\rulename{LP$\&$} of Def.~\ref{def:coinductive-liveness};
    \item%
      clause \rulename{LP$\oplus$}.\quad%
      We proceed by contradiction: %
      we show that if there is a fair path $(\Gamma'_j)_{j \in J'}$
      (with $\Gamma'_0 = \Gamma'$) %
      that violates clause \rulename{LP$\oplus$}, %
      then there is a corresponding fair path $(\Gamma_j)_{j \in J}$
      (with $\Gamma_0 = \Gamma$) %
      that violates the same clause, %
      which would lead to the absurd conclusion that %
      $\Gamma$ is \emph{not} $\pp$-live. %
      We need to consider any output at the head of %
      the queue of $\Gamma'(\pp)$ up-to reordering via $\equiv$: %
      let such an output be %
      $\tout\pr{\ell_\pr}{\ST'_\pr}$ (for some $\pr$). %
      Consider the hypothetical non-$\pp$-live path $(\Gamma'_j)_{j \in J'}$: %
      it must consist of a series of transitions %
      $\Gamma'_{j-1} \redLabel{\;\alpha_j\;} \Gamma'_j$ (for $j \in J'$) %
      where, $\forall j \in J'$, $\alpha_j \neq \recvLabel{\pr}{\pp}{\ell}$. %
      Hence:
      \begin{align}
        &\label{eq:subttt-liveness:sub2-out:no-in-from-r}
          \forall j \in J':%
          \not\exists \tqueue_\pr, I, \ell_i, \S_{\pr,i}, \TT_{\pr, i} (i \in I):\; %
          \Gamma'_j(\pr) = \left(\tqueue_{\pr}, \texternal_{i\in I}\tin{\pp}{\ell_i}{\S_{\pr, i}}.\TT_{\pr, i}\right)%
        \\%
        &\nonumber%
          \qquad\qquad\text{with } %
          \ell_i = \ell_\pr \text{ and } \S'_\pr \subs \S_{\pr, i} %
          \text{ (for some $i \in I$})%
      \end{align}
      But then, by \eqref{eq:subttt-liveness:sub2-out:no-in-from-r} %
      and Prop.~\ref{lem:subttt-no-input}, %
      there is a fair path $(\Gamma_j)_{j \in J}$ with $\Gamma_0 = \Gamma$ %
      such that:
      \begin{align}
        &\label{eq:subttt-liveness:sub2-out:no-in-from-r2}%
          \forall j \in J:%
          \not\exists \tqueue_\pr, I, \ell_i, \S_{\pr,i}, \TT_{\pr, i} (i \in I):\; %
          \Gamma_j(\pr) = \left(\tqueue_{\pr}, \texternal_{i\in I}\tin{\pp}{\ell_i}{\S_{\pr, i}}.\TT_{\pr, i}\right)%
        \\%
        &\nonumber%
          \qquad\qquad\text{with } %
          \ell_i = \ell_\pr \text{ and } \S'_\pr \subs \S_{\pr, i} %
          \text{ (for some $i \in I$})%
      \end{align}
      Now, observe that %
      the path $(\Gamma_j)_{j \in J}$ is constructed %
      using Prop.~\ref{lem:subt-reductions}; %
      therefore, by \eqref{eq:subttt-liveness:prop:gammai}, %
      \eqref{eq:subttt-liveness:prop:gammap}, %
      \eqref{eq:subttt-liveness:prop:sub}, %
      and Prop.~\ref{lem:siso-tree-projections-equal-subt}, %
      an output $\tout\pr{\ell_\pr}{\ST_\pr}$ (with $\ST'_\pr \subt \ST_\pr$) %
      occurs along $(\Gamma_j)_{j \in J}$; %
      and by \eqref{eq:subttt-liveness:sub2-out:no-in-from-r2}, %
      such an output will never be consumed: %
      this means that $\Gamma$ is not $\pp$-live --- contradiction. %
      Thus, we conclude that $\Gamma'$ satisfies %
      clause~\rulename{LP$\oplus$} of Def.~\ref{def:coinductive-liveness};

    \item%
      clause \rulename{LP$\red$}.\quad%
      Assume $\Gamma' \red \Gamma''$.  We have two possibilities:
      \begin{enumerate}[label=(\alph*)]
      \item%
        the reduction does \emph{not} involve $\pp$. %
        The proof is similar to case %
        \rulename{ref-in}\rulename{LP$\red$}\ref{item:subttt-livenes:move-not-p} above;
      \item\label{item:subttt-liveness:sub2-out:red-p}%
        the reduction \emph{does} involve $\pp$. %
        There are three sub-cases:
        \begin{enumerate}[label=(\roman*)]
        \item\label{item:subttt-liveness:sub2-out:red-p:p-enq-r}%
          $\pp$ is enqueuing an output toward some participant $\pr$. %
          In this case, we have: %
          \begin{align}
            \label{item:subttt-liveness:sub2-out:red-p:enq}%
            &\TT' \;=\;%
              \tinternal_{i \in I}\tout{\pr}{\ell_{i}}{\ST'_{i}}.\TT'_i %
            \\%
            \label{item:subttt-liveness:sub2-out:red-p:enq:gammaii}%
            &\Gamma' \redSend{\pp}{\pr}{\ell_i} \Gamma'' = %
            \Gamma'\mapUpdate{\pp}{%
              (\tqueue'{\cdot}\tout{\pr}{\ell_{i}}{\ST'_{i}}, \TT'_i)
            }%
            \;\text{ (for some $i \in I$)}%
            \hspace{-50mm}%
            &\text{(by Def.~\ref{def:typing-env-reductions})}%
            \\%
            \label{item:subttt-liveness:sub2-out:red-p:enq:gammaii-gamma}%
            &\Gamma'' = %
            \Gamma''\mapUpdate{\pp}{%
              (\tqueue'{\cdot}\tout{\pr}{\ell_{i}}{\ST'_{i}}, \TT'_i)
            }%
            \hspace{-20mm}%
            &\text{(by \eqref{item:subttt-liveness:sub2-out:red-p:enq:gammaii} %
              and \eqref{eq:subttt-liveness:prop:gammai})}
            \\%
            \label{item:subttt-liveness:sub2-out:red-p:enq:subt}
            &
              \tqueue'{\cdot}\tout{\pr}{\ell_{i}}{\ST'_{i}}{\cdot}\TT'_i%
              \subt \tqueue'{\cdot}\TT'%
            &\text{(by induction on $\tqueue'$, %
              using \eqref{item:subttt-liveness:sub2-out:red-p:enq} %
              and Def.~\ref{def:subtyping})}%
            \\%
            \label{item:subttt-liveness:sub2-out:red-p:enq:subt-tra}
            &\tqueue'{\cdot}\tout{\pr}{\ell_{i}}{\ST'_{i}}{\cdot}\TT'_i%
              \subt \tqueue{\cdot}\TT%
            &\text{(by \eqref{item:subttt-liveness:sub2-out:red-p:enq:subt}, %
              \eqref{eq:subttt-liveness:prop:sub}, %
              and Lemma~\ref{lem:transitivity})}
            \\%
            \label{item:subttt-liveness:sub2-out:red-p:enq:gammaii-in-l}
            &\Gamma'' \in \liveP' \subseteq \liveSubP%
            &\text{(by \eqref{item:subttt-liveness:sub2-out:red-p:enq:subt-tra} and \eqref{item:subttt-liveness:sub2-out:red-p:enq:gammaii-gamma})}
          \end{align}
          Therefore, by \eqref{item:subttt-liveness:sub2-out:red-p:enq:gammaii-in-l}, %
          we conclude that $\Gamma'$ satisfies %
          clause~\rulename{LP$\red$} of Def.~\ref{def:coinductive-liveness};
        \item\label{item:subttt-liveness:sub2-out:red-p:recv}%
          $\pp$ is receiving an input from some $\pr$. %
          In this case, we have:%
          \begin{align}
            \label{eq:subttt-liveness:sub2-out:red:p-in:r}%
            &\exists I, \ell_i, \ST'_i, \TT'_i:\; %
            \TT' = \texternal_{i \in I}\tin\pr{\ell_i}{\ST'_i}.\TT'_i%
            \\%
            \label{eq:subttt-liveness:sub2-out:red:alpha}%
            &\Gamma' \redLabel{\alpha} \Gamma''%
            \;\text{ with $\alpha = \recvLabel{\pp}{\pr}{\ell_i}$ %
            (for some $i \in I$)}%
            \\%
            \label{eq:subttt-liveness:sub2-out:red:p-in:queues}%
            &\tqueue = \temptyqueue \;\text{ or }\; %
              head(\tqueue) = \tout\pq{\ell}{\ST'}
            &\text{(by \eqref{eq:subttt-liveness:sub2-out:tt-tqueue})}%
          \end{align}
          Now, from %
          \eqref{eq:subttt-liveness:sub2-out:red:p-in:queues}, %
          \eqref{eq:subttt-liveness:sub2-out:tt-tqueue} %
          and \eqref{eq:subttt-liveness:sub2-out:st-wt}, %
          \begin{align}
            &\nonumber%
              \text{letting }\;%
              \tqueue_0' \text{ such that } %
              \tqueue' = \tout\pq{\ell}{\ST'}.\tqueue_0' %
              \;\text{ and }\; %
              \tqueue_0 = \begin{cases}
                \temptyqueue & \text{if $\tqueue = \temptyqueue$} \\
                tail(\tqueue) & \text{ otherwise}%
              \end{cases}
            \\%
            &\nonumber%
              \text{for all $\WT_1, \WT_2$ in \eqref{eq:subttt-liveness:prop:siso-trees}, }
              \exists i \in I, \DContext{\pr}, \ell_i, \ST_i, \ST'_i, \WT, \WT', \;\text{such that:}%
            \\%
            &\label{eq:subttt-liveness:sub2-out:tt}%
              \qquad%
              \WT_1 = \tout\pq{\ell}{\ST'}.\tqueue_0'.\tin\pr{\ell_i}{\ST'_i}.\WT' \quad\text{and}\quad%
              \WT_2 = \tout\pq{\ell}{\ST}.\tqueue_0.\DCon{\pr}{\tin\pr{\ell_i}{\ST_i}.\WT}%
            \\%
            &\label{eq:subttt-liveness:sub2-out:st-wt-d}%
              \qquad%
              \ST_i \subs \ST'_i \quad\text{and}\quad  %
              \tqueue_0'.\tin\pr{\ell_i}{\ST'_i}.\WT'%
              \subttt \tqueue_0.\DCon{\pr}{\tin\pr{\ell_i}{\ST_i}.\WT}%
          \end{align}
          where $\DContext{\pr}$ is a sequence of outputs to any participant, %
          or inputs from any participant except $\pr$, %
          for which \eqref{eq:subttt-liveness:sub2-out:st-wt-d} %
          can be derived with $0$ or more instances of %
          \rulename{ref-${\AC}$} or \rulename{ref-$\BC$}. %
          Notice that, by induction on $\tqueue'_0$ and %
          $\tqueue_0.\DContext{\pr}$, %
          for each pair of SISO trees %
          related in \eqref{eq:subttt-liveness:sub2-out:st-wt-d} %
          we prove that:
          \begin{align}
            &\label{eq:subttt-liveness:sub2-out:wt-sub-swap}%
              \qquad%
              \tqueue_0'.\tin\pr{\ell_i}{\ST'_i}.\WT'%
              \subttt \tqueue_0.\tin\pr{\ell_i}{\ST_i}.\DCon{\pr}{\WT}%
          \end{align}

          Now, by Prop.~\ref{lem:subt-reductions}, %
          the fair paths of $\Gamma$ %
          that match some witness %
          $\WT_2$ in \eqref{eq:subttt-liveness:sub2-out:tt} %
          eventually perform %
          the reduction $\alpha$ %
          in \eqref{eq:subttt-liveness:sub2-out:red:alpha}. %
          But then, consider the session tree $\TT^*$ %
          that only has SISO trees %
          like the RHS of \eqref{eq:subttt-liveness:sub2-out:wt-sub-swap}: %
          such trees are similar to $\WT_2$ in %
          \eqref{eq:subttt-liveness:sub2-out:tt}, %
          except that each one performs the input $\tin{\pr}{\ell_i}{\S_i}$ %
          earlier. %
          And now, consider the typing environment %
          $\Gamma^* = \Gamma\mapUpdate{\pp}{(\tqueue, \TT^*)}$. %
          Such $\Gamma^*$ is $\pp$-live: %
          it realises (part of) the fair paths of $\Gamma$, %
          except that it consumes $\pr$'s queued output earlier %
          --- and thus, we have $\Gamma^* \in \liveP$. %
          Now, consider $\Gamma'''$ such that %
          $\Gamma^* \redRecv{\pp}{\pr}{\ell_i} \Gamma'''$: %
          by clause \rulename{LP$\red$} of Def.~\ref{def:coinductive-liveness},
          we have $\Gamma''' \in \liveP$. %
          Also, we have $\Gamma'''(\pp) = (\tqueue, \TT^{**})$ %
          such that:
          \begin{align}
            \nonumber%
             &\hspace{-5mm}\begin{array}{@{}l@{}}
                \forall \UT'\in\llbracket \tqueue'{\cdot}\TT'_i\rrbracket_\SO \quad
                \forall \VT^{**} \in \llbracket \tqueue{\cdot}\TT^{**} \rrbracket_\SI \\
                \exists \WT_1=\tqueue_0'.\WT' \in\llbracket \UT' \rrbracket_\SI \quad
                \exists \WT^{*}=\tqueue_0.\DCon{\pr}{\WT} \in \llbracket \VT^{**} \rrbracket_\SO \quad \WT_1 \subttt \WT^*
                \end{array}
            &\text{(by Def.~\ref{def:typing-env-reductions} %
              and  \eqref{eq:subttt-liveness:sub2-out:wt-sub-swap})}%
          \end{align}
          Therefore, by Def.~\ref{def:subtyping}, %
          we have $\Gamma'''(\pp) = (\tqueue, \TT^{**})$, %
          and %
          $\Gamma'' = \Gamma'''\mapUpdate{\pp}{(\tqueue', \TT'_i)}$, %
          with $\tqueue'{\cdot}\TT'_i \subt \tqueue{\cdot}\TT^{**}$
          --- hence, by \eqref{eq:subttt-liveness:sub2-a:st-wt-actions} %
          and \eqref{eq:subttt-liveness:prop:def}, %
          we also have $\Gamma'' \in \liveP' \subseteq \liveSubP$. %
          Thus, we conclude that $\Gamma'$ satisfies %
          clause~\rulename{LP$\red$} of Def.~\ref{def:coinductive-liveness};
        \item\label{item:subttt-liveness:sub2-out:red-p:recv-qr}%
          one of $\pp$'s queued outputs in $\tqueue'$ %
          is received by another participant. %
          We have two possibilities:
          \begin{enumerate}[label=\emph{(\alph*)},start=3]
          \item\label{item:subttt-liveness:sub2-out:red-p:recv-r}%
            the received output is $\tout{\pr}{\ell_\pr}{\ST'_\pr}$ %
            (for some $\pr \neq \pq$).\quad %
            In this case, we have:%
            \begin{align}
              \label{eq:subttt-liveness:sub2-out:red:recv-r:out-r-cong}%
              &\exists I, \ell_\pr, \ST'_\pr \tqueue'':%
                \tqueue' \equiv \tout\pr{\ell_\pr}{\ST'_\pr}{\cdot}\tqueue''
              \\%
              \label{eq:subttt-liveness:sub2-out:red:recv-r:out-r}%
              &\exists I, \ell_\pr, \ST'_\pr, \tqueue'_0, \tqueue'_1:%
                \tqueue' = \tqueue'_0 {\cdot} \tout\pr{\ell_\pr}{\ST'_\pr}{\cdot}\tqueue'_1 \;\text{with $\pr! \not\in \actions{\tqueue'_0}$}%
              &\text{(by \eqref{eq:subttt-liveness:sub2-out:red:recv-r:out-r-cong} and queue congruence)}%
              \\%
              \label{eq:subttt-liveness:sub2-out:red:recv-r:alpha}%
              &\Gamma' \redLabel{\alpha} \Gamma''%
                \;\text{ with $\alpha = \recvLabel{\pr}{\pp}{\ell_\pr}$}%
              \\%
              &\nonumber%
                \text{for all $\WT_1, \WT_2$ in \eqref{eq:subttt-liveness:prop:siso-trees}, }
                \BContext{\pr}, \ell_\pr, \ST_\pr, \ST'_\pr, \WT, \WT'', \;\text{such that:}%
              \\%
              &\label{eq:subttt-liveness:sub2-out:recv-r:tt}%
                \qquad%
                \WT_1 = \tqueue'_0.\tout\pr{\ell_\pr}{\ST'_\pr}.\tqueue_1'.\WT' \quad\text{and}\quad%
                \WT_2 = \BCon{\pr}{\tout\pr{\ell_\pr}{\ST_\pr}.\WT''}%
              \\%
              &\label{eq:subttt-liveness:sub2-out:recv-r:st-wt}%
                \qquad%
                \ST'_\pr \subs \ST_\pr \quad\text{and}\quad  %
                \tqueue'_0.\tout\pr{\ell_\pr}{\ST'_\pr}.\tqueue_1'.\WT'%
                \subttt \BCon{\pr}{\tout\pr{\ell_\pr}{\ST_\pr}.\WT''}%
            \end{align}
            Therefore, by \eqref{eq:subttt-liveness:sub2-out:recv-r:st-wt}, %
            and by induction on $\tqueue'_0$ %
            using \eqref{eq:subttt-liveness:sub2-out:red:recv-r:out-r}, %
            \begin{align}
              &\label{eq:subttt-liveness:sub2-out:recv-r:st-wt-pr1}%
                \tout\pr{\ell_\pr}{\ST'_\pr}.\tqueue'_0.\tqueue_1'.\WT'%
                \subttt%
                \tqueue'_0.\tout\pr{\ell_\pr}{\ST'_\pr}.\tqueue_1'.\WT'%
                \hspace{-10mm}%
            \end{align}
          Notice that, by induction on $\BContext{\pr}$, %
          for each pair of SISO trees %
          related in \eqref{eq:subttt-liveness:sub2-out:recv-r:st-wt} %
          we prove that:
          \begin{align}
            &\label{eq:subttt-liveness:sub2-out:recv-r:wt-sub-swap}%
              \qquad%
              \tqueue_0'.\tout\pr{\ell_\pr}{\ST'_\pr}.\tqueue_1'.\WT'%
              \subttt \tout\pr{\ell_\pr}{\ST_\pr}.\BCon{\pr}{\WT''}%
          \end{align}
          From which we get:
          \begin{align}
            &\label{eq:subttt-liveness:sub2-out:recv-r:st-wt-pr2}%
              \tout\pr{\ell_\pr}{\ST'_\pr}.\tqueue'_0.\tqueue_1'.\WT'%
              \subttt \tout\pr{\ell_\pr}{\ST_\pr}.\BCon{\pr}{\WT''}%
              \hspace{-3mm}%
            &\text{(by \eqref{eq:subttt-liveness:sub2-out:recv-r:st-wt-pr1},
              \eqref{eq:subttt-liveness:sub2-out:recv-r:wt-sub-swap}, %
              and Lemma~\ref{lem:transitivity})}%
            \\%
            &\label{eq:subttt-liveness:sub2-out:recv-r:st-wt-pr2-cont}%
              \tqueue'_0.\tqueue_1'.\WT'%
              \subttt \BCon{\pr}{\WT''}%
            &\text{(by \eqref{eq:subttt-liveness:sub2-out:recv-r:st-wt-pr2} %
              and \rulename{ref-out})}
          \end{align}

          Now, by Prop.~\ref{lem:subt-reductions}, %
          the fair paths of $\Gamma$ %
          that match some witness %
          $\WT_2$ in \eqref{eq:subttt-liveness:sub2-out:recv-r:tt} %
          eventually perform %
          the reduction $\alpha$ %
          in \eqref{eq:subttt-liveness:sub2-out:red:recv-r:alpha}. %
          But then, consider the session tree $\TT^*$ %
          that only has SISO trees %
          like the RHS of \eqref{eq:subttt-liveness:sub2-out:recv-r:wt-sub-swap}: %
          such trees are similar to %
          \eqref{eq:subttt-liveness:sub2-out:recv-r:tt}, %
          except that each one performs the output $\tout{\pr}{\ell_\pr}{\S_\pr}$ %
          earlier. %
          And now, consider the typing environment %
          $\Gamma^* = \Gamma\mapUpdate{\pp}{(\tqueue, \TT^*)}$. %
          Such $\Gamma^*$ is $\pp$-live: %
          it realises (part of) the fair paths of $\Gamma$, %
          except that performs the output to $\pr$ earlier %
          --- and thus, we have $\Gamma^* \in \liveP$. %
          Now, consider $\Gamma'''$ such that %
          $\Gamma^* \redRecv{\pr}{\pp}{\ell_\pr} \Gamma'''$: %
          by clause \rulename{LP$\red$} of Def.~\ref{def:coinductive-liveness},
          we have $\Gamma''' \in \liveP$. %
          Also, we have $\Gamma'''(\pp) = (\tqueue, \TT^{**})$ %
          such that:
          \begin{align}
            \nonumber%
             &\hspace{-5mm}\begin{array}{@{}l@{}}
                \forall \UT'\in\llbracket \tqueue'{\cdot}\TT'_i\rrbracket_\SO \quad
                \forall \VT^{**} \in \llbracket \tqueue{\cdot}\TT^{**} \rrbracket_\SI \\
                \exists \WT_1=\tqueue_0'.\tqueue_1'.\WT' \in\llbracket \UT' \rrbracket_\SI \quad
                \exists \WT^{*}=\BCon{\pr}{\WT''} \in \llbracket \VT^{**} \rrbracket_\SO \quad \WT_1 \subttt \WT^*
                \end{array}
            &\text{(by Def.~\ref{def:typing-env-reductions} %
              and  \eqref{eq:subttt-liveness:sub2-out:recv-r:st-wt-pr2-cont})}%
          \end{align}
          Therefore, by Def.~\ref{def:subtyping}, %
          we have $\Gamma'''(\pp) = (\tqueue, \TT^{**})$, %
          and %
          $\Gamma'' = \Gamma'''\mapUpdate{\pp}{(\tqueue', \TT'_i)}$, %
          with $\tqueue'{\cdot}\TT'_i \subt \tqueue{\cdot}\TT^{**}$
          --- hence, by \eqref{eq:subttt-liveness:sub2-a:st-wt-actions} %
          and \eqref{eq:subttt-liveness:prop:def}, %
          we also have $\Gamma'' \in \liveP' \subseteq \liveSubP$. %
          Thus, we conclude that $\Gamma'$ satisfies %
          clause~\rulename{LP$\red$} of Def.~\ref{def:coinductive-liveness};

          \item%
            the received output is $\tout{\pq}{\ell}{\ST'}$ %
            from \eqref{eq:subttt-liveness:sub2-out:tt-tqueue}.\quad%
            The proof is similar to %
            case~\ref{item:subttt-liveness:sub2-out:red-p:recv-r} above, %
            letting $\pr = \pq$ --- %
            but the development is simpler, %
            since we have $\tqueue'_0 = \temptyqueue$, %
            and $\BContext{\pq}$ is empty;
          \end{enumerate}
        \end{enumerate}
      \end{enumerate}
    \end{itemize}
  \item%
    case \rulename{ref-$\BC$}.\quad%
    In this case, we have:
    \begin{align}
      &\label{eq:subttt-liveness-ref-b:tti}%
        \exists I, \ell_i, \ST'_i, \TT'_i:\; %
        \tqueue'{\cdot}\TT' = \tinternal_{i \in I}\tout\pq{\ell_i}{\ST'_i}.\TT'_i%
        \\%
      &\nonumber%
        \text{for all $\WT_1, \WT_2$ in \eqref{eq:subttt-liveness:prop:siso-trees}, }
        \exists i \in I, \BContext{\pq}, \ell_i, \ST_i, \ST'_i, \WT, \WT', \;\text{such that:}%
      \\%
      &\label{eq:subttt-liveness:ref-b:tt}%
        \qquad%
        \WT_1 = \tout\pq{\ell_i}{\ST'_i}.\WT' \quad\text{and}\quad%
        \WT_2 = \BCon{\pq}{\tout\pq{\ell_i}{\ST_i}.\WT}%
      \\%
      &\label{eq:subttt-liveness:ref-b:st-wt-actions}%
        \qquad%
        \ST'_i \subs \ST_i \quad\text{and}\quad  \WT' \subttt \BCon{\pq}{\WT}%
        \quad\text{and}\quad%
        \actions{\WT'}=\actions{\BCon\pq{\WT}}%
    \end{align}
    We now show that $\Gamma'$ satisfies all clauses %
    of Def.~\ref{def:coinductive-liveness}:
    \begin{itemize}
    \item%
      clause \rulename{LP$\&$}.\quad%
      The proof is similar to case \rulename{ref-out}\rulename{LP$\&$} above;%
    \item%
      clause \rulename{LP$\oplus$}.\quad%
      The proof is similar to case \rulename{ref-out}\rulename{LP$\oplus$} above;%
    \item%
      clause \rulename{LP$\red$}.\quad%
      Assume $\Gamma' \red \Gamma''$.  We have two possibilities:
      \begin{enumerate}[label=(\alph*)]
      \item%
        the reduction does \emph{not} involve $\pp$. %
        The proof is similar to case %
        \rulename{ref-in}\rulename{LP$\red$}\ref{item:subttt-livenes:move-not-p} above;
      \item%
        the reduction \emph{does} involve $\pp$. %
        There are three sub-cases:
        \begin{enumerate}[label=(\roman*)]
        \item%
          $\pp$ is enqueuing an output toward some participant $\pr$. %
          In this case, we have: %
          \begin{align}
            \nonumber%
            &\TT' \;=\;%
              \tinternal_{i \in J}\tout{\pr}{\ell_{j}}{\ST'_{j}}.\TT''_j%
          \end{align}
          and the proof is similar to case %
          \rulename{ref-out}\rulename{LP$\red$}\ref{item:subttt-liveness:sub2-out:red-p}\ref{item:subttt-liveness:sub2-out:red-p:p-enq-r} above;
        \item%
          $\pp$ is receiving an input from some $\pr$. %
          The proof is similar to case %
          \rulename{ref-out}\rulename{LP$\red$}\ref{item:subttt-liveness:sub2-out:red-p}\ref{item:subttt-liveness:sub2-out:red-p:recv} above;
        \item%
          one of $\pp$'s queued outputs in $\tqueue'$
          is received by another participant. %
          The proof is similar to case %
          \rulename{ref-out}\rulename{LP$\red$}\ref{item:subttt-liveness:sub2-out:red-p}\ref{item:subttt-liveness:sub2-out:red-p:recv-qr} above.
        \end{enumerate}
      \end{enumerate}
    \end{itemize}
  \end{itemize}
  Summing up: %
  we have proved that each element of $\liveSubP$ %
  in \eqref{eq:subttt-liveness:prop:def} %
  satisfies the clauses of Def.~\ref{def:coinductive-liveness} %
  for participant $\pp$, %
  which implies that $\liveSubP$ %
  is a $\pp$-liveness property. %
  Moreover, %
  by \eqref{eq:subttt-liveness:prop:def}, %
  we have that for any $\Gamma$, %
  if $\Gamma,\pp{:}(\tqueue, \TT)$ is $\pp$-live %
  and $\TT' \subt \TT$, %
  then $\Gamma,\pp{:}(\tqueue, \TT') \in \liveSubPi \subseteq \liveSubP$. %
  Therefore, we conclude that $\Gamma,\pp{:}(\tqueue, \TT')$ is $\pp$-live.
\end{proof}

\begin{proposition}
  \label{lem:subtyping-preserves-live}%
  Assume that $\Gamma$ is live, %
  and that $\Gamma(\pp) = (\tqueue, \TT)$. %
  If $\tqueue'{\cdot}\TT' \subt \tqueue{\cdot}\TT$, %
  then $\Gamma\mapUpdate{\pp}{(\tqueue', \TT')}$ is live.
\end{proposition}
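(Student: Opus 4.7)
The plan is to extend Lemma \ref{lem:subtyping-preserves-p-live} from $\pp$-liveness to full liveness. By Definition \ref{def:env-liveness}, it suffices to show that $\Gamma' = \Gamma\mapUpdate{\pp}{(\tqueue', \TT')}$ is $\pq$-live for every $\pq \in \dom{\Gamma}$. For $\pq = \pp$, this is exactly Lemma \ref{lem:subtyping-preserves-p-live}. The remaining work is the case $\pq \neq \pp$, for which I would follow the same coinductive template used for Lemma \ref{lem:subtyping-preserves-p-live}.

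Concretely, for each fixed $\pq \neq \pp$, I would define the property
\[
  \liveSubP_\pq \;=\; \liveP_\pq \,\cup\, \left\{
    \Gamma_0\mapUpdate{\pp}{(\tqueue', \TT')}
  \;\middle|\;
    \begin{array}{@{}l@{}}
      \Gamma_0 \in \liveP_\pq,\;
      \Gamma_0(\pp) = (\tqueue, \TT),\\
      \pp \neq \pq,\;\;
      \tqueue'{\cdot}\TT' \subt \tqueue{\cdot}\TT
    \end{array}
  \right\}
\]
where $\liveP_\pq$ denotes the set of all $\pq$-live typing environments, and prove that $\liveSubP_\pq$ is a $\pq$-liveness property in the sense of Definition \ref{def:coinductive-liveness}. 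The three clauses \rulename{LP$\&$}, \rulename{LP$\oplus$}, and \rulename{LP$\red$} must be verified for every element of the right-hand set. Clause \rulename{LP$\red$} follows by a case analysis mirroring the one in Lemma \ref{lem:subtyping-preserves-p-live}: a reduction from $\Gamma'$ either does not involve $\pp$ (and then lifts immediately to a reduction from $\Gamma_0$, preserving membership in $\liveSubP_\pq$), or it involves $\pp$ (and then, as in that lemma, it corresponds via subtyping to a sequence of $\Gamma_0$-reductions leading to some $\Gamma_0''$ which is still $\pq$-live, with the new $\pp$-entries still related by $\subt$).

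The crux is clauses \rulename{LP$\&$} and \rulename{LP$\oplus$}: they require that, for every fair path starting from $\Gamma'$, the external choice (resp.\ queued output) in $\Gamma'(\pq)$ eventually fires. I would argue this by contradiction, using Propositions \ref{lem:subttt-no-input} and \ref{lem:subttt-no-output} as the main technical tool (entirely analogously to their use in the \rulename{ref-$\AC$}\rulename{LP$\&$} and \rulename{ref-out}\rulename{LP$\oplus$} cases of Lemma \ref{lem:subtyping-preserves-p-live}). If some fair path $(\Gamma'_j)_{j\in J'}$ from $\Gamma'$ never fires $\pq$'s pending action, then since $\pq \neq \pp$ the offending hypothesis concerns a participant distinct from $\pp$, and Propositions \ref{lem:subttt-no-input}/\ref{lem:subttt-no-output} yield a fair path $(\Gamma_j)_{j\in J}$ from $\Gamma_0$ in which the same action is also never fired; but this contradicts the $\pq$-liveness of $\Gamma_0$.

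The main obstacle will be the careful bookkeeping of the path correspondence provided by Proposition \ref{lem:subt-reductions}: participants $\pr \notin \{\pp,\pq\}$ may fire their actions at different indices along the two paths, and $\pq$'s pending message may be produced by $\pp$ itself, in which case the subtyping refinement changes \emph{when}, though not \emph{whether}, the relevant output is enqueued. These cases are already handled by Propositions \ref{lem:siso-tree-projections-equal-subt}, \ref{lem:subttt-no-output}, and \ref{lem:subttt-no-input}, so no new machinery is needed; the remaining work is a faithful adaptation of the case analysis of Lemma \ref{lem:subtyping-preserves-p-live} with $\pq$ playing the role that $\pp$ played there, while $\pp$'s entry is the one being varied by subtyping.
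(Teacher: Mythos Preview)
Your approach is correct in spirit and uses the same key ingredients (Propositions~\ref{lem:subttt-no-output} and~\ref{lem:subttt-no-input} for the contradiction), but the paper takes a shorter route that avoids most of the work you outline.

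Rather than building a fresh coinductive property $\liveSubP_\pq$ for each $\pq\neq\pp$ and re-running the entire \rulename{LP$\red$} case analysis, the paper simply \emph{reuses} the $\pp$-liveness property $\liveSubP$ already constructed in Lemma~\ref{lem:subtyping-preserves-p-live}. Since $\liveSubP$ is closed under \rulename{LP$\red$}, any $\Gamma'''$ reachable from $\Gamma'$ by reductions still lies in $\liveSubP$, and therefore (by the shape of $\liveSubP$ in~\eqref{eq:subttt-liveness:prop:def}) comes equipped with a witness $\Gamma''$ such that $\Gamma'''\setminus\pp = \Gamma''\setminus\pp$ and the $\pp$-entries are related by subtyping. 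The contradiction argument is then applied \emph{once} at this $\Gamma'''$, exactly as in the cases \rulename{ref-$\AC$}\rulename{LP$\&$} and \rulename{ref-out}\rulename{LP$\oplus$} of the lemma, to conclude that $\Gamma''$ is not $\pq$-live --- contradicting the liveness of the witness. Your plan, by contrast, would re-verify the lengthy \rulename{LP$\red$} case split for every $\pq$.

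There is also a small wrinkle in your formulation: in the \rulename{LP$\red$} case analysis of Lemma~\ref{lem:subtyping-preserves-p-live}, some sub-cases construct the new witness not as a reduct of $\Gamma_0$ but as $\Gamma_0\mapUpdate{\pp}{(\tqueue,\TT^*)}$ (with $\TT^*$ anticipating an input/output) followed by a reduction. That argument establishes $\pp$-liveness of the new witness; adapting it verbatim to establish $\pq$-liveness (as your $\liveSubP_\pq$ requires) is not automatic, since changing $\pp$'s entry can in principle affect $\pq$-liveness. You would need to strengthen the witness condition (e.g., require $\Gamma_0$ fully live rather than merely $\pq$-live) and check that the anticipation constructions preserve it. The paper's route sidesteps this entirely by keeping all the \rulename{LP$\red$} bookkeeping inside the already-proven $\liveSubP$.
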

\begin{proof}
  By Def.~\ref{def:coinductive-liveness}, %
  we need to show that $\Gamma'$ is $\pq$-live for all participants %
  $\pq \in \dom{\Gamma'} = \dom{\Gamma}$. %
  By hypothesis, we know that $\Gamma$ is $\pq$-live for all participants %
  $\pq \in \dom{\Gamma}$. %
  By Lemma~\ref{lem:subtyping-preserves-p-live}, %
  we know that $\Gamma'$ is $\pp$-live %
  --- and in particular, %
  $\Gamma'$ is in the $\pp$-liveness property $\liveSubP$ %
  defined in \eqref{eq:subttt-liveness:prop:def}. %
  We are left to prove that $\Gamma'$ is $\pq$-live %
  for all other participants $\pq \neq \pp$.
  
  By contradiction, assume that $\Gamma'$ is \emph{not} $\pq$-live %
  for some $\pq \in \dom{\Gamma'}$. %
  This means that we can find some $\Gamma'''$ such that %
  $\Gamma' \reds \Gamma'''$ %
  and $\Gamma'''$ is not $\pq$-live because %
  it violates clause~\rulename{LP$\&$} or $\rulename{LP$\oplus$}$ %
  of Def.~\ref{def:coinductive-liveness}. %
  Now, consider the $\pp$-liveness property $\liveSubP$ %
  in \eqref{eq:subttt-liveness:prop:def}: %
  since $\liveSubP$ contains $\Gamma$, %
  it also contains $\Gamma'''$ %
  (by iteration of clause~\rulename{LP$\red$} %
  of Def.~\ref{def:coinductive-liveness}), %
  and by \eqref{eq:subttt-liveness:prop:def}, %
  there exists a corresponding $\pp$-live $\Gamma''$ such that:
  \begin{align}
    \label{eq:subttt-preserves-live:gammaiii}%
    &\Gamma''' \setminus \pp = \Gamma'' \setminus \pp%
    \\%
    \label{eq:subttt-preserves-live:sub}%
    &\Gamma'''(\pp) = (\tqueue''', \TT''') \;\text{ and }\; %
    \Gamma''(\pp) = (\tqueue'', \TT'') \;\text{ such that }\; %
    \tqueue'''{\cdot}\TT''' \subt \tqueue''{\cdot}\TT''%
  \end{align}
  Let us examine the two (non-mutually exclusive) cases %
  that can make $\Gamma'''$ not $\pq$-live:
  \begin{itemize}
  \item%
    $\Gamma'''$ violates clause~\rulename{LP$\&$}
    of Def.~\ref{def:coinductive-liveness}.\quad %
    This means that in $\Gamma'''$ there is a participant $\pr$ %
    with a top-level external choice from some participant $\pq$, %
    but some fair path of $\Gamma'''$ never enqueues a corresponding %
    output by $\pq$. %
    In this case, similarly to %
    the proof of Prop.~\ref{lem:subtyping-preserves-p-live} %
    (case \rulename{ref-$\AC$}\rulename{LP$\&$}), %
    we conclude that $\Gamma''$ is \emph{not} $\pr$-live, %
    hence not live --- contradiction;%
  \item%
    $\Gamma'''$ violates clause~\rulename{LP$\oplus$}
    of Def.~\ref{def:coinductive-liveness}.\quad %
    This means that in $\Gamma'''$ there is a participant $\pr$ %
    with a top-level queued output toward participant $\pq$, %
    but some fair path of $\Gamma'''$ where $\pq$ never consumes the message. %
    In this case, similarly to %
    the proof of Prop.~\ref{lem:subtyping-preserves-p-live} %
    (case \rulename{ref-out}\rulename{LP$\oplus$}), %
    we conclude that $\Gamma''$ is \emph{not} $\pr$-live, %
    hence not live --- contradiction.%
  \end{itemize}
  Summing up: if we assume that $\Gamma'$ is \emph{not} $\pq$-live %
  for some $\pq \in \dom{\Gamma'}$, %
  then we derive a contradiction. %
  Therefore, we obtain that $\Gamma'$ is $\pq$-live %
  for all $\pq \in \dom{\Gamma'}$. %
  Thus, by Def.~\ref{def:coinductive-liveness}, %
  we conclude that $\Gamma'$ is live.
\end{proof}

\lemSubPreservesLiveness*
\begin{proof}
  Assume $\dom{\Gamma} \cap \dom{\Gamma'} = \{\pp_1,\pp_2,\ldots,\pp_n\}$. %
  We first first show that:%
  \[
    \Gamma_i \;=\; \Gamma\mapUpdate{\pp_1}{\Gamma'(\pp_1)}\ldots\mapUpdate{\pp_2}{\Gamma'(\pp_2)}
    \ldots\mapUpdate{\pp_i}{\Gamma'(\pp_i)} %
    \;\text{ is live, for all $i \in 0..n$}%
  \]
  
  We proceed by induction on $i \in 0..n$. %
  The base case $i = 0$ is trivial: %
  we apply no updates to $\Gamma$, which is live by hypothesis.

  In the inductive case $i = m+1$, %
  we have (by the induction hypothesis) that %
  $\Gamma_m$ is live. %
  By Definition of $\Gamma' \subt \Gamma$ %
  (see page~\pageref{page:gamma-subt-gammai}), %
  we know that %
  $\Gamma'(\pp_i) \subt \Gamma(\pp_i)$.
  Therefore, by Prop.~\ref{lem:subtyping-preserves-live}, %
  we obtain that $\Gamma_i$ is live.

  To conclude the proof, %
  consider the set %
  $\dom{\Gamma'} \setminus \dom{\Gamma_n} %
  = \dom{\Gamma'} \setminus \dom{\Gamma} = \{\pq_1,\pq_2,\ldots,\pq_k\}$: %
  it contains all participants that are in $\Gamma'$, but not in $\Gamma$. %
  By Definition of $\Gamma' \subt \Gamma$ %
  (see page~\pageref{page:gamma-subt-gammai}) %
  we know that %
  $\forall i \in 1..k: \Gamma'(\pq_i) \equiv (\temptyqueue, \tend)$. %
  Therefore, if we extend $\Gamma_n$ by adding %
  an entry $\pq_i: (\temptyqueue, \tend)$ for each $i \in 1..k$, %
  we obtain an environment $\Gamma''$ such that %
  $\Gamma'' \equiv \Gamma_n$ and $\Gamma'' \equiv \Gamma'$ %
  --- hence, $\Gamma_n \equiv \Gamma'$. %
  Therefore, since $\Gamma_n$ is live, %
  by Lemma~\ref{lem:struct-equiv-and-liveness} %
  we conclude that $\Gamma'$ is live.
\end{proof}
\subsection{
Proofs of Subject Reduction and Type Safety}
\label{subsec:app:typesystem}

\begin{lemma}[Typing Inversion]
\label{lemma:typeinversion}
Let $\Theta \vdash \PP:\TT$: Then,
\begin{enumerate}
\item\label{proc1} $\PP=\mu X.\PP_1$ implies $\Theta, X:\TT_1 \vdash \PP_1:\TT_1$ and $\TT_1\subt \TT$ for some $\TT_1;$
\item\label{proc2} $\PP=\sum_{i\in I}\tin\pq{\ell_i}{\x_i}.{\PP_i}$ implies
\begin{enumerate}
  \item $\texternal_{i\in I}\tin\pq{\ell_i}{\S_i}.{\TT_i} \subt \TT$ and 
  \item $\forall i\in I$   $\Theta, x_i:\S_i   \vdash \PP_i: \TT_i$;
\end{enumerate}
\item\label{proc3} $\PP=\procout\pq\ell\e\PP_1$ implies
\begin{enumerate}
  \item $\tout\pq\ell{\S_1}.\TT_1 \subt \TT$ and 
  \item $\Theta \vdash \e:\S$ and $\S_1\subs \S$ and
  \item $\Theta \vdash \PP_1: \TT_1;$
\end{enumerate}
\item\label{proc4} $\PP=\cond\e{\PP_1}{\PP_2}$ implies 
\begin{enumerate}
  \item $\Theta  \vdash \e:\tbool$ and 
  \item $\Theta  \vdash  \PP_1:  \TT$
  \item $\Theta  \vdash  \PP_2:  \TT$
\end{enumerate}
\end{enumerate}

\noindent%
Let\; $\vdash \h:\tqueue$. Then:
\begin{enumerate}[resume]
\item\label{lem:inversion-queue-empty}%
  $\h = \emptyqueue$ \;implies\; $\tqueue = \temptyqueue$
\item\label{lem:inversion-queue-not-empty}%
  $\h = (\pq,\ell(\val)) \!\cdot\! \h_1$ \;implies\; %
  $\vdash \val:\S$ \;and\; $\tqueue \equiv \tout\pq\ell\S \!\cdot\! \tqueue'$ \;and\; $\vdash \h_1:\tqueue'.$
\end{enumerate}

\noindent%
Let $\Gamma\vdash \N$. Then:
\begin{enumerate}[resume]
\item\label{sess} If  $\Gamma \vdash \prod_{i\in I} (\pa{\pp_i}{\PP_i}\pc \pa{\pp_i}{\h_i})$ then
  \begin{enumerate}
    \item $ \forall i\in I:$\; $\vdash \PP_i:\TT_i$ and
    \item $ \forall i\in I:$\; $\vdash \h_i:\tqueue_i$ and
    \item $\Gamma=\{\pp_i:(M_i,\TT_i): i\in I\}$
  \end{enumerate}
\end{enumerate}
\end{lemma}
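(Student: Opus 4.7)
The plan is to prove each item by induction on the derivation of the corresponding typing judgement, with the main subtlety being the non-syntax-directed subsumption rule $\rulename{t-sub}$ for processes, which must be absorbed into the subtyping witness in each case.

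For items~\ref{proc1}--\ref{proc4} concerning processes, I would proceed by induction on the derivation of $\Theta \vdash \PP : \TT$. The key observation is that the typing rules for processes (\Cref{figure:typesystem}) fall into two families: the syntax-directed rules $\rulename{t-}\inact$, $\rulename{t-var}$, $\rulename{t-rec}$, $\rulename{t-out}$, $\rulename{t-ext}$, $\rulename{t-cond}$, whose conclusions determine the top-level shape of $\PP$; and the subsumption rule $\rulename{t-sub}$, which preserves the shape of $\PP$ but weakens the type via $\subt$. In the base case, when the last rule applied is syntax-directed, we read off the required premises, and we set the witness subtyping relation to hold by reflexivity of $\subt$ (\Cref{lem:transitivity}). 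In the inductive step, when the last rule is $\rulename{t-sub}$, we have $\Theta \vdash \PP : \TT'$ with $\TT' \subt \TT$; the induction hypothesis yields some type $\TT''$ with the required structural components and $\TT'' \subt \TT'$, and we conclude $\TT'' \subt \TT$ by transitivity of $\subt$ (\Cref{lem:transitivity}). For item~\ref{proc4}, where no subtyping witness appears in the statement, the same induction works because the conclusion type is carried unchanged through $\rulename{t-cond}$, and subsumption can be pushed onto both branches.

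For items~\ref{lem:inversion-queue-empty} and~\ref{lem:inversion-queue-not-empty} concerning queues, the typing rules are fully syntax-directed (no queue-level subsumption rule is present), so a straightforward case analysis on the last rule applied in the derivation of $\vdash \h : \tqueue$ suffices. Item~\ref{lem:inversion-queue-empty} follows from the single rule typing $\emptyqueue$. Item~\ref{lem:inversion-queue-not-empty} requires a mild additional step: because the queue concatenation rule decomposes $\h$ in terms of $\cdot$, we may need to inspect how $(\pq,\ell(\val))\cdot \h_1$ is parsed, possibly using queue structural congruence from \Cref{tab:congruence} to align the decomposition with the head message, which is why the statement uses $\equiv$.

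For item~\ref{sess} concerning sessions, the only rule that types a parallel composition of participants with their queues is $\rulename{t-sess}$, so inversion is immediate: reading off the premises of $\rulename{t-sess}$ yields the three conjuncts directly.

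The main obstacle, as noted, is the interaction between $\rulename{t-sub}$ and the inversion: one must show that interposing a subsumption step does not invalidate the structural decomposition, which is exactly where transitivity of $\subt$ (\Cref{lem:transitivity}) is essential. No new reasoning about SISO decompositions is needed here, since the subtyping witness is carried abstractly through the induction.
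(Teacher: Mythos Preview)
Your proposal is correct and follows essentially the same approach as the paper, which states only that ``the proof is by induction on type derivations.'' Your elaboration of how $\rulename{t-sub}$ is absorbed via transitivity of $\subt$ (\Cref{lem:transitivity}), how subsumption is pushed onto the branches in item~\ref{proc4}, and how the syntax-directed queue and session rules admit direct inversion, is exactly the expected unfolding of that one-line proof.
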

\begin{proof}
The proof is by induction on type derivations.
\end{proof}

\begin{lemma}[Typing congruence]
\label{lemma:type_congruence}
\begin{enumerate}
\item
  If $\Theta \vdash \PP: \TT$ and $\PP\equiv\PQ$ then $\Theta\vdash \PQ:\TT.$ %
\item 
  If $\vdash \h_1:\tqueue_1$ and $\h_1\equiv \h_2$ then there is $\tqueue_2$ such that $\tqueue_1\equiv\tqueue_2$ and $ \vdash\h_2:\tqueue_2.$
\item
  If $\Gamma' \vdash \N'$ and $\N\equiv \N'$, then there is $\Gamma$ such that $\Gamma \equiv\Gamma'$  and $\Gamma \vdash \N$.
\end{enumerate}
\end{lemma}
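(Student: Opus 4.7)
The proof plan is to handle each of the three parts by induction on the derivation of the structural congruence relation $\equiv$ (as defined in Table~\ref{tab:congruence}), inspecting each congruence rule and showing that typability is preserved, possibly up to $\equiv$ on types.

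For part (1), I would proceed by induction on the derivation of $\PP \equiv \PQ$. Most rules in Table~\ref{tab:congruence} do not apply to processes in isolation, so the only interesting base case is the recursion unfolding rule $\mu X.\PP \equiv \PP\subst{\mu X.\PP}{X}$. For this case, from $\Theta \vdash \mu X.\PP : \TT$, Lemma~\ref{lemma:typeinversion}(\ref{proc1}) gives some $\TT_1$ with $\Theta, X:\TT_1 \vdash \PP : \TT_1$ and $\TT_1 \subt \TT$; then a standard substitution lemma (provable by induction on the derivation of $\Theta, X:\TT_1 \vdash \PP:\TT_1$) yields $\Theta \vdash \PP\subst{\mu X.\PP}{X} : \TT_1$, from which \rulename{t-sub} gives $\Theta \vdash \PP\subst{\mu X.\PP}{X} : \TT$. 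The reverse direction is similar, applying \rulename{t-rec} once. The reflexive, symmetric, transitive and congruence closures follow routinely.

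For part (2), I would proceed by induction on the derivation of $\h_1 \equiv \h_2$, using the queue typing rules and inversion (Lemma~\ref{lemma:typeinversion}(\ref{lem:inversion-queue-empty},\ref{lem:inversion-queue-not-empty})). The commutativity case, $\h \!\cdot\! (\pq_1,\ell_1(\val_1)) \!\cdot\! (\pq_2,\ell_2(\val_2)) \!\cdot\! \h' \equiv \h \!\cdot\! (\pq_2,\ell_2(\val_2)) \!\cdot\! (\pq_1,\ell_1(\val_1)) \!\cdot\! \h'$ with $\pq_1 \neq \pq_2$, is matched exactly by the corresponding commutativity rule on queue types from Section~\ref{subsec:typesystem}, which allows swapping $\tout{\pq_1}{\ell_1}{\S_1} \!\cdot\! \tout{\pq_2}{\ell_2}{\S_2}$ when $\pq_1 \neq \pq_2$; hence we can rearrange the type so that $\tqueue_2 \equiv \tqueue_1$. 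The associativity and identity rules are handled analogously.

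For part (3), I would again proceed by induction on the derivation of $\N \equiv \N'$. The commutativity and associativity of $\pc$ give corresponding permutations of the entries of $\Gamma$, which are absorbed by the fact that $\Gamma$ is an unordered map. The rule $\pa\pp\inact \pc \pa\pp\EmptyQueue \pc \N \equiv \N$ is handled by extending $\Gamma$ with an entry $\pp:(\temptyqueue,\tend)$ (typable via \rulename{t-$\inact$} and the empty-queue rule), which is equivalent up to $\equiv$ on environments via the $(\temptyqueue,\tend)$ clause on page~\pageref{page:gamma-subt-gammai}. The final rule, which lifts process and queue congruence into the session, is dispatched by appealing directly to parts (1) and (2), possibly followed by \rulename{t-sub} to reconcile the process type with the type stored in $\Gamma$.

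The main obstacle I anticipate is part (1), specifically the substitution lemma underlying the recursion-unfolding case: because the typing system is equi-recursive (via \rulename{t-sub} plus Definition~\ref{def:types} of $\equiv$ on types up to $\mu$-unfolding), some care is needed to ensure that substituting $\mu X.\PP$ for $X$ in a process whose typing uses applications of \rulename{t-sub} still yields a derivation of the same final type; this is a routine but tedious induction that must carefully track the interplay between recursion variables in $\Theta$ and the subtyping relation $\subt$.
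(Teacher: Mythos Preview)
Your proposal is correct and matches the paper's approach: the paper's own proof consists only of the sentence ``The proof is by case analysis,'' and your plan of inducting on the derivation of $\equiv$ and inspecting each congruence rule is exactly the case analysis being alluded to. Your identification of the recursion-unfolding case in part~(1) as the only nontrivial step, and of the need for a process-variable substitution lemma there, is the standard way to fill in the details the paper omits.
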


\begin{proof}
  The proof is by case analysis.
\end{proof}

\begin{lemma}[Substitution]
  \label{lem:substitution}%
  If $\Theta, x:\S \vdash \PP: \TT$ %
  and\; $\Theta\vdash \val: \S$, %
  \;then\; %
  $\Theta \vdash \PP\sub{\val}{\x}: \TT$. %
\end{lemma}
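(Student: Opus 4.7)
The plan is a standard substitution argument by induction on the derivation of $\Theta, x:\S \vdash \PP: \TT$, preceded by an auxiliary substitution lemma for expressions. Specifically, I would first establish: if $\Theta, x:\S \vdash \e: \S'$ and $\Theta \vdash \val: \S$, then $\Theta \vdash \e\sub{\val}{\x}: \S'$, by straightforward induction on $\e$, using the expression typing rules of \Cref{figure:typesystem} (top part). The variable case $\e = x$ is where the hypothesis $\Theta \vdash \val: \S$ is used directly; the case $\e = y \neq x$ follows because $y:\S'$ must already be in $\Theta$; all other cases (successor, inverse, negation, comparisons, subsumption on sorts) follow by the induction hypothesis.

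Next, for the process judgment, I would proceed by induction on the derivation of $\Theta, x:\S \vdash \PP: \TT$, with a case analysis on the last rule applied. The cases $\rulename{t-$\inact$}$ and $\rulename{t-var}$ are trivial, since $\inact\sub{\val}{\x} = \inact$ and $X\sub{\val}{\x} = X$ (the substitution only affects value variables, not process variables). For $\rulename{t-rec}$ and $\rulename{t-cond}$, the result follows by the induction hypothesis applied to the premises. For $\rulename{t-out}$, I apply the expression substitution lemma to the payload expression and the induction hypothesis to the continuation. For $\rulename{t-ext}$, I adopt the Barendregt convention (\ie, assume the bound variables $x_i$ are renamed to avoid clashing with $x$ and with $\fv(\val)$), so that $\left(\sum_{i\in I}\procin\pq{\ell_i(x_i)}\PP_i\right)\sub{\val}{\x} = \sum_{i\in I}\procin\pq{\ell_i(x_i)}(\PP_i\sub{\val}{\x})$; then, since $\Theta, x:\S, x_i:\S_i \vdash \PP_i: \TT_i$ and $\Theta, x_i:\S_i \vdash \val: \S$ (by weakening, as $x_i \notin \fv(\val)$), the induction hypothesis yields $\Theta, x_i:\S_i \vdash \PP_i\sub{\val}{\x}: \TT_i$, and $\rulename{t-ext}$ reassembles the result. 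Finally, $\rulename{t-sub}$ follows by the induction hypothesis and reapplying $\rulename{t-sub}$.

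The only mildly delicate step is the handling of bound variables in $\rulename{t-ext}$, which requires the standard alpha-renaming convention to avoid variable capture; everything else is routine. There is no interaction with the asynchronous subtyping or liveness machinery, because substitution operates purely on syntactic positions that have no effect on the typing environment $\Gamma$ for sessions. Consequently, I do not expect any significant obstacle beyond carefully tracking which typing rules can yield each process form.
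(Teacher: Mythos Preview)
Your proposal is correct and matches the paper's approach: the paper simply states ``By structural induction on $\PP$,'' and your detailed case analysis (with the auxiliary expression-substitution lemma and the Barendregt convention for \rulename{t-ext}) is exactly how one would unfold that one-liner. Your choice to induct on the typing derivation rather than on $\PP$ is arguably slightly cleaner given the non-syntax-directed rule \rulename{t-sub}, but the two are interchangeable here.
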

\begin{proof}
By structural induction on $\PP$.
\end{proof}

\begin{lemma}\label{lem:sr:eval}
 If $\emptyset \vdash \e:\S$ then there is $\val$ such that $\eval\e\val.$
\end{lemma}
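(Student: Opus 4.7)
The plan is to prove the statement by structural induction on the expression $\e$. Since $\e$ is closed (typed under the empty context), the variable case $\e = x$ is excluded by typing inversion: no typing rule allows deriving $\emptyset \vdash x : \S$. The base case is $\e = \val$, for which the evaluation rule $\eval\val\val$ gives the desired witness immediately.

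Before tackling the inductive cases, I would first prove a short canonical forms lemma: if $\emptyset \vdash \val : \S$, then (i) $\S = \tnat$ implies $\val = \valn$ for some positive integer $\valn$; (ii) $\S = \tint$ implies $\val$ is $\valn$, $-\valn$, or $\zero$; (iii) $\S = \tbool$ implies $\val \in \{\true, \false\}$; (iv) $\S = \tunit$ implies $\val = ()$. This follows by induction on the typing derivation, observing that the only nontrivial use of subsumption is $\tnat \subs \tint$, which is compatible with the above characterisation because every $\valn$ is an integer.

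For the inductive cases $\e \in \{\fsucc{\e_1}, \finv{\e_1}, \neg{\e_1}, \e_1 > \zero, \e_1 \approx ()\}$, typing inversion (up to subsumption) yields that $\emptyset \vdash \e_1 : \S'$ for the appropriate $\S'$ ($\tnat$, $\tint$, $\tbool$, $\tint$, $\tunit$, respectively). By the induction hypothesis, $\eval{\e_1}{\val_1}$ for some $\val_1$. Applying the canonical forms lemma, $\val_1$ has the expected shape, so one of the top-level evaluation rules in Table~\ref{tab:evaluation} applies to produce a value $\val'$ from the outer operator. Finally, the evaluation-context rule $\eval{\Econtext(\e)}{\val'}$ composes the two derivations to give $\eval{\e}{\val'}$.

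The only mild obstacle is threading subsumption through the induction: typing inversion for expressions produces a sort $\S'$ that may be strictly smaller than the sort at which the subterm was used (e.g.\ a $\tnat$ argument used where $\tint$ is expected in $\e_1 > \zero$). The canonical forms lemma resolves this uniformly, since the subsort relation $\subs$ relates only $\tnat$ and $\tint$, and positive integers are valid inputs for $\finv{\cdot}$ and $\cdot > \zero$. All other cases are routine inspection of the evaluation rules.
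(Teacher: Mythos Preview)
Your approach is sound and matches the spirit of the paper's one-line sketch (``by induction on the derivation of $\emptyset \vdash \e : \S$''), but there is one place where the uniform treatment you describe does not go through: the case $\e = \e_1 \approx ()$. The evaluation-context grammar in the paper is
\[
\Econtext \;\Coloneqq\; [\,] \sep \fsucc{\Econtext} \sep \finv{\Econtext} \sep \neg{\Econtext} \sep \Econtext > \zero
\]
and does \emph{not} include a form $\Econtext \approx ()$; the only evaluation rule for $\approx$ is the axiom $\eval{(()\approx())}{\true}$. (The paper even says this explicitly: ``the evaluation of $\e\approx ()$ is defined only if $\e$ is exactly the unit value.'') So your final step, ``the evaluation-context rule composes the two derivations,'' is unavailable here: knowing only that $\eval{\e_1}{()}$ does not yield $\eval{(\e_1 \approx ())}{\true}$.

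The fix is simple and is already latent in your canonical-forms reasoning, but it must be stated at the level of \emph{expressions}, not values: if $\emptyset \vdash \e_1 : \tunit$ then $\e_1 = ()$ syntactically. This holds because no compound expression form can be typed at $\tunit$ (the result sorts of $\fsucc{\cdot}$, $\finv{\cdot}$, $\neg{\cdot}$, $\cdot > \zero$, $\cdot \approx ()$ are $\tnat$, $\tint$, $\tbool$, $\tbool$, $\tbool$ respectively), variables are excluded by the empty context, and $()$ is the only value of sort $\tunit$ (subsorting does not touch $\tunit$). With $\e_1 = ()$ the axiom applies directly. All other cases go through exactly as you wrote.
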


\begin{proof}
The proof is by induction on the derivation of $\emptyset \vdash \e:\S.$
\end{proof}

\begin{lemma}\label{lem:queue->queue-type}
Let $\vdash \h: \tqueue$. If 
$\h \not\equiv \msg{\pp}{-}{-} \cdot \h'$ 
then 
$\tqueue\not\equiv\tout\pp{-}{-}\cdot{\tqueue'}$.
\end{lemma}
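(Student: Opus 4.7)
The plan is to prove the contrapositive: assuming $\tqueue \equiv \tout\pp{\ell}{\S}\cdot\tqueue'$ for some $\ell,\S,\tqueue'$, show that $\h \equiv \msg\pp{\ell'}{\val}\cdot \h'$ for some $\ell',\val,\h'$. The proof hinges on the fact that the two structural congruences at play (on queues in Table~\ref{tab:congruence}, and on queue types on page~\pageref{def:typing-env-reductions}) are ``in sync'': both are generated by the unit and associativity laws of concatenation together with a swap of adjacent elements whenever their recipients differ. Hence both preserve the multiset of recipients, and only reorder messages/outputs destined to distinct participants.

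First, using the associativity and identity axioms of $\equiv$ on queues, I would place $\h$ in a canonical flat form $\msg{\pq_1}{\ell_1}{\val_1}\cdot\ldots\cdot\msg{\pq_n}{\ell_n}{\val_n}$ (taking $n=0$ to mean $\h \equiv \emptyqueue$). Then, by induction on $n$, combining the inversion Lemma~\ref{lemma:typeinversion}(\ref{lem:inversion-queue-empty},\ref{lem:inversion-queue-not-empty}) with the typing congruence Lemma~\ref{lemma:type_congruence}(2), I would derive that $\tqueue \equiv \tout{\pq_1}{\ell_1}{\S_1}\cdot\ldots\cdot\tout{\pq_n}{\ell_n}{\S_n}$ for suitable sorts $\S_i$ with $\vdash \val_i:\S_i$.

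Next, the assumption $\tqueue \equiv \tout\pp{\ell}{\S}\cdot\tqueue'$ together with the preceding canonical form implies that some $\pq_i$ equals $\pp$: otherwise the type-level congruence (which cannot conjure a new recipient and cannot swap two outputs with the same recipient) could never bring an output to $\pp$ to the head. Picking the \emph{least} such $i$, all earlier recipients $\pq_1,\ldots,\pq_{i-1}$ differ from $\pp$, so the queue-level swap rule applies $i{-}1$ times to move $\msg{\pp}{\ell_i}{\val_i}$ in front of all its predecessors, giving $\h \equiv \msg\pp{\ell_i}{\val_i}\cdot \h'$ for some $\h'$, which contradicts the hypothesis.

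The only real obstacle is the bookkeeping step that turns the single abstract equivalence $\tqueue \equiv \tout\pp{-}{-}\cdot\tqueue'$ into the concrete existence of an index $i$ with $\pq_i = \pp$. I would handle this by proving a small auxiliary observation: the queue-type congruence preserves, for each participant $\pq$, the total number of outputs with recipient $\pq$; hence if the head of some representative is $\tout\pp{-}{-}$, then at least one of the $\tout{\pq_i}{\ell_i}{\S_i}$ has $\pq_i=\pp$. Once this combinatorial fact is in place, the swap argument on $\h$ is immediate.
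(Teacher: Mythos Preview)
Your proof is correct and, like the paper, argues by contrapositive. The paper's own proof is a one-liner: it directly performs induction on the derivation of the queue-type congruence $\tqueue \equiv \tout\pp{-}{-}\cdot\tqueue'$ and, at each step, reflects the same rewriting onto $\h$. Your route is slightly different: you first normalise $\h$ to a flat list, transport this to a flat form of $\tqueue$ via inversion and typing congruence, and then invoke a separate invariant (congruence preserves the per-recipient multiset) to locate an occurrence of $\pp$ and swap it to the head. Both arguments ultimately rely on the same structural facts about the two congruences; the paper's direct induction is shorter, while your factorisation through the multiset invariant makes explicit the reason the argument works and would scale more easily if one later needed finer information (e.g., that the matched message is the \emph{first} $\pp$-message in $\h$).
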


\begin{proof}
The proof is by contrapositive: assume $\tqueue\equiv\tout\pp{-}{-}\cdot{\tqueue'}$. 
Then, by induction on the derivation of $\tqueue\equiv\tout\pp{-}{-}\cdot{\tqueue'}$, 
we may show that $\h \equiv \msg{\pp}{-}{-} \cdot \h'$ .
\end{proof}

\begin{lemma}\label{lem:minimal-type-of-process}
If $\Theta \vdash \PP: \TT$ then there is $\TT'$ such that 
$\TT'\subt \TT$ and $\Theta \vdash \PP: \TT'$ and $\actions{\TT'}\subseteq\actions{\PP}$.
\end{lemma}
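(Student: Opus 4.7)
The proof proceeds by induction on the derivation of $\Theta \vdash \PP : \TT$, dispatching on the last typing rule. The case \rulename{t-sub} is immediate: the induction hypothesis on the smaller derivation returns $\TT'$ satisfying the required properties with respect to the intermediate type, and transitivity of $\subt$ (\Cref{lem:transitivity}) closes the case. The leaf cases $\PP = \inact$ and $\PP = X$ are handled by picking $\TT' = \tend$ and $\TT' = \TT$ respectively. For \rulename{t-out}, I would invoke the induction hypothesis on the continuation to obtain $\TT_1' \subt \TT_1$ typing $\PP_1$ with $\actions{\TT_1'} \subseteq \actions{\PP_1}$, and then take $\TT' = \tout\pq\ell{\S_1}.\TT_1'$ where $\S_1$ is the sort used at the \rulename{t-out} application (supplied by inversion); $\TT'$ types $\PP$ by \rulename{t-out}, is a subtype of $\TT$ by congruence of $\subt$ under output prefixing together with the inversion bound $\tout\pq\ell{\S_1}.\TT_1 \subt \TT$, and satisfies $\actions{\TT'} = \{\pq!\} \cup \actions{\TT_1'} \subseteq \actions{\PP}$. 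The case \rulename{t-ext} is symmetric, applying the induction hypothesis pointwise to every branch and preserving the outer branching shape.

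For the recursion rule \rulename{t-rec}, the induction hypothesis gives $\TT_0' \subt \TT_0$ with $\Theta, X{:}\TT_0 \vdash \PP_1 : \TT_0'$ and $\actions{\TT_0'} \subseteq \actions{\PP_1} = \actions{\mu X.\PP_1}$; to reapply \rulename{t-rec} I need the strengthened judgement $\Theta, X{:}\TT_0' \vdash \PP_1 : \TT_0'$. This I would obtain by a routine narrowing sub-lemma stating that tightening the declared type of a process variable to a subtype preserves typability: any leaf appeal to \rulename{t-var} assigning $\TT_0$ can be replaced by \rulename{t-var} giving $\TT_0'$ followed by \rulename{t-sub} using $\TT_0' \subt \TT_0$, and the rest of the derivation is untouched.

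The main obstacle is the conditional case \rulename{t-cond}. Inversion (\Cref{lemma:typeinversion}) gives a single type $\TT$ typing both branches, while the induction hypothesis yields potentially \emph{distinct} minimal types $\TT_1', \TT_2' \subt \TT$ for $\PP_1, \PP_2$. Simply setting $\TT' = \TT$ fails, because $\subt$ does not preserve actions in general: a singleton selection is a subtype of a multi-branch selection toward the same participant, yet the extra branches may carry additional actions through their continuations. Instead, I would construct a ``merged'' $\TT'$ that is a common supertype of $\TT_1'$ and $\TT_2'$ and still a subtype of $\TT$. The construction is delicate, since asynchronous subtyping may have permuted actions in $\TT_1'$ and $\TT_2'$ relative to $\TT$ via rules \rulename{ref-$\AC$}/\rulename{ref-$\BC$}; one must analyse how $\TT_1'$ and $\TT_2'$ each relate to $\TT$ through their SISO decompositions (\Cref{def:subtyping}) and align their shapes so that coinciding parts are shared and differing parts are combined by adding selection branches (for selection top-form) or intersecting branchings (for branching top-form), recursively on the continuations. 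By construction $\TT_i' \subt \TT' \subt \TT$ for both $i$, so both $\PP_i$ are typable at $\TT'$ by subsumption; \rulename{t-cond} then yields $\Theta \vdash \PP : \TT'$, and by induction on the merge $\actions{\TT'} \subseteq \actions{\TT_1'} \cup \actions{\TT_2'} \subseteq \actions{\PP_1} \cup \actions{\PP_2} = \actions{\PP}$.
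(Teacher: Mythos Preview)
Your overall induction structure matches the paper's, and you correctly locate the difficulty in the \rulename{t-cond} case. The other cases (including your narrowing argument for \rulename{t-rec}, which the paper simply elides) are fine.

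The gap is in your construction for \rulename{t-cond}. You propose to build the common type $\TT'$ \emph{bottom-up} by ``merging'' $\TT_1'$ and $\TT_2'$, aligning their shapes and combining them via selection unions and branching intersections. As you yourself note, asynchronous subtyping may have permuted prefixes via \rulename{ref-$\AC$}/\rulename{ref-$\BC$}, so $\TT_1'$ and $\TT_2'$ can have entirely different top-level shapes (one may start with an output to $\pp$, the other with an input from $\pq$); it is then unclear what ``aligning coinciding parts'' means, and your recursive merge is not well-defined as stated. The paper avoids this problem completely by working \emph{top-down}: it takes $\TT' = \TT$ and \emph{prunes} it. The key observation is that the SISO refinement $\subttt$ preserves action sets exactly (this is precisely what the $\actions{\cdot}$ side-conditions in \rulename{ref-$\AC$} and \rulename{ref-$\BC$} enforce). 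Hence any action $\alpha \in \actions{\TT} \setminus (\actions{\TT_1'} \cup \actions{\TT_2'})$ must lie on some $\WT'' \in \llbracket \VT' \rrbracket_\SO$ that is \emph{never} selected as a witness in the derivations of $\TT_1' \subt \TT$ or $\TT_2' \subt \TT$; such a $\WT''$ therefore lives entirely inside an internal-choice branch of $\TT$ that neither subtype exercises. Removing that branch yields $\TT'' \subt \TT$ with $\TT_1', \TT_2' \subt \TT''$ still holding, and strictly fewer extra actions. Iterating (finitely, since $\TT$ is regular) reaches the desired $\TT'$.

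So the missing idea is: do not attempt to join $\TT_1'$ and $\TT_2'$ from below; instead exploit action-preservation of $\subttt$ to identify removable internal-choice branches of $\TT$ and prune from above.
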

\begin{proof}
By induction on $\Theta \vdash \PP: \TT$. 
The only interesting case is \rulename{t-cond}. 
In this case, we have that 
$\Theta \vdash \cond\e{\PP_1}{\PP_2}:\TT$ 
is derived from 
$\Theta \vdash \e:\tbool$, 
$\Theta  \vdash \PP_1:\TT$ and $ \Theta  \vdash \PP_2:\TT$. 
By induction hypothesis we derive that there exist $\TT_1$ and $\TT_2$  
such that 
\begin{align}
& \Theta  \vdash \PP_1:\TT_1 
 \;\text{ and }\;  \TT_1\subt\TT 
 \;\text{ and }\;  \actions{\TT_1}\subseteq\actions{\PP_1}\label{eq:P1-minimal-T1}\\
& \Theta  \vdash \PP_2:\TT_2 
 \;\text{ and }\; \TT_2\subt\TT 
 \;\text{ and }\; \actions{\TT_2}\subseteq\actions{\PP_2}\label{eq:P2-minimal-T2}
\end{align}

We will show that there is $\TT'$ such that 
$\TT_1 \subt \TT' \subt \TT$ and 
$\TT_2 \subt \TT' \subt \TT$ and
\[
\actions{\TT'}
\;\subseteq\; \actions{\TT_1} \cup \actions{\TT_2} 
\;(\subseteq\; \actions{\PP_1}\cup \actions{\PP_1}
\;=\; \actions{\PP})
\]

Let us now expand the derivations of $\TT_1\subt\TT$ and $\TT_2\subt\TT$ 
given in~\eqref{eq:P1-minimal-T1} and \eqref{eq:P2-minimal-T2}:
\begin{align}
 &  \forall \UT_1\in\llbracket \TT_1\rrbracket_\SO \quad \forall \VT' \in \llbracket \TT \rrbracket_\SI \quad \exists \WT_1 \in\llbracket \UT_1 \rrbracket_\SI \quad \exists \WT_1' \in \llbracket \VT' \rrbracket_\SO \qquad \WT_1 \subttt\WT_1'\label{eq:derivation-of-T1-subt-T}\\
 &     \forall \UT_2\in\llbracket \TT_2\rrbracket_\SO \quad \forall \VT' \in \llbracket \TT \rrbracket_\SI \quad \exists \WT_2 \in\llbracket \UT_2 \rrbracket_\SI \quad \exists \WT_2' \in \llbracket \VT' \rrbracket_\SO \qquad \WT_1 \subttt\WT_2'\label{eq:derivation-of-T2-subt-T}
\end{align}
Let us consider the sets:
\begin{itemize}
\item $A_1$ as the set of all actions occurring in any $\UT_1$ in~\eqref{eq:derivation-of-T1-subt-T};
\item $A_2$ as the set of all actions occurring in any $\UT_2$ in~\eqref{eq:derivation-of-T2-subt-T};
\item $A'$  as the set of all actions occurring in any $\VT'$ in~\eqref{eq:derivation-of-T1-subt-T} 
(or equivalently in~\eqref{eq:derivation-of-T2-subt-T}).
\end{itemize}

Notice that $\actions{\TT_1}=A_1, \actions{\TT_2}=A_2$ and $\actions{\TT}=A'$.

Furthermore, let us also consider the sets: 
\begin{itemize}
\item $B_1$  as the set of all actions occurring in any $\WT_1$ selected in~\eqref{eq:derivation-of-T1-subt-T};
\item $B_2$  as the set of all actions occurring in any $\WT_2$ selected in~\eqref{eq:derivation-of-T2-subt-T};
\item $B_1'$ as the set of all actions occurring in any $\WT_1'$ selected in~\eqref{eq:derivation-of-T1-subt-T};
\item $B_2'$ as the set of all actions occurring in any $\WT_2'$ selected in~\eqref{eq:derivation-of-T2-subt-T}.
\end{itemize}

Now we have:
\begin{align}
& B_1 \subseteq A_1\label{eq:B1-contained-in-A1}\\
& B_2 \subseteq A_2\label{eq:B2-contained-in-A2}\\
& B_1' \cup B_2' \subseteq A'\\
& B_1 = B_1' \quad  (\text{from}\; \WT_1 \subttt \WT_1')\label{eq:B1=B1'}\\
& B_2 = B_2' \quad  (\text{from}\; \WT_2 \subttt \WT_2')\label{eq:B2=B2'}
\end{align}

Therefore, by~\eqref{eq:B1=B1'} and~\eqref{eq:B2=B2'} we have 
\begin{equation}
B_1 \cup B_2 = B_1' \cup B_2'
\end{equation}
and thus, by~\eqref{eq:B1-contained-in-A1},~\eqref{eq:B2-contained-in-A2},~\eqref{eq:P1-minimal-T1}  
and~\eqref{eq:P2-minimal-T2}, we obtain 
\begin{equation}
B_1' \cup B_2' 
\;\subseteq\; A_1 \cup A_2 
\;\subseteq\; \actions{\PP_1} \cup \actions{\PP_2} 
\;=\; \actions{\PP} 
\end{equation}

Now, consider the set $D = A' \setminus (A_1 \cup A_2)$: 
it contains all actions that occur in $\TT$, but not in $\TT_1$ nor $\TT_2$.  
Consider any action $\alpha \in D$: it must belong to some SISO tree $\WT''$ 
which was not selected neither as $\WT_1'$ in~\eqref{eq:derivation-of-T1-subt-T}, 
nor as $\WT_2'$ in~\eqref{eq:derivation-of-T2-subt-T}.  
Therefore, it must belong to some action of some SO tree in 
$\llbracket \VT' \rrbracket_\SO$ that is never selected by $\TT_1$ nor $\TT_2$.  
This means that $\alpha$ belongs to some internal choice branches of $\TT$ that are never selected by 
$\TT_1$ nor $\TT_2$.  
Therefore, if we prune $\T$ (i.e., the syntactic type with tree $\TT$) by removing all such internal choice branches, 
we get a session type $\T''$ with tree $\TT'' \subt \TT$ such that:
 \begin{align}
 &  \forall \UT_1\in\llbracket \TT_1\rrbracket_\SO \quad \forall \VT' \in \llbracket \TT'' \rrbracket_\SI \quad \exists \WT_1 \in\llbracket \UT_1 \rrbracket_\SI \quad \exists \WT_1' \in \llbracket \VT' \rrbracket_\SO \qquad \WT_1 \subttt\WT_1'\label{eq:derivation-of-T1-subt-T''}\\
 &     \forall \UT_2\in\llbracket \TT_2\rrbracket_\SO \quad \forall \VT' \in \llbracket \TT'' \rrbracket_\SI \quad \exists \WT_2 \in\llbracket \UT_2 \rrbracket_\SI \quad \exists \WT_2' \in \llbracket \VT' \rrbracket_\SO \qquad \WT_1 \subttt\WT_2'\label{eq:derivation-of-T2-subt-T''}
\end{align}
Hence, $\TT_1 \subt \TT''$ and $\TT_2\subt \TT''$.

Now let $A''=\actions{\TT''}$.
If we compute $D'=A''\setminus(A_1 \cup A_2)$ (i.e., the set of all actions that occur in $\TT''$, 
but not in $\TT_1$ nor $\TT_2$) 
using~\eqref{eq:derivation-of-T1-subt-T''} and~\eqref{eq:derivation-of-T2-subt-T''}, 
we obtain $D' \subset D$, because $\alpha$ (and possibly some other actions in $D$) have been removed by the pruning.  
By iterating the procedure (i.e., by induction on the number of actions in $D$), noticing that $D$ is finite (because $\TT$ is syntax-derived from some $\T$), we will eventually find some $\T'$ with tree $\TT'$ such that:
\begin{align*}
& \TT' \subt \TT   \;\text{ and }\;   \TT_1 \subt \TT'   \;\text{ and }\;   \TT_2 \subt \TT'\\
& \actions{\TT'} \subseteq \actions{\TT_1} \cup \actions{\TT_2}\\
& \Theta \vdash \PP : \T'\\
& \actions{\TT'} \subseteq \actions{\PP} 
\end{align*}
\end{proof}

\begin{lemma}[Error subject reduction]\label{lem:error-subject-reduction}
  If $\N \red \error$ then there is no type environment $\Gamma$ such that  $\Gamma$ is live and $ \Gamma \vdash \N.$
\end{lemma}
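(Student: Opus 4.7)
The plan is to proceed by case analysis on which error reduction rule from Table~\ref{tab:error} produces $\N \red \error$. In each case I will assume toward contradiction that a live $\Gamma$ with $\Gamma \vdash \N$ exists, and derive a contradiction by violating one of the clauses \rulename{LP$\&$} or \rulename{LP$\oplus$} of the coinductive characterisation of liveness (Def.~\ref{def:coinductive-liveness}). The uniform recipe will be: (i) apply typing inversion (Lemma~\ref{lemma:typeinversion}) to extract $\Gamma$'s shape on the relevant participants; (ii) apply Lemma~\ref{lem:minimal-type-of-process} together with Lemma~\ref{lem:subtyping-preserves-liveness} to sharpen $\Gamma$ into a still-live $\Gamma^\star$ whose per-participant types carry only actions present in the corresponding processes; and (iii) invoke Lemma~\ref{lem:queue->queue-type} to translate a syntactic absence of a queued message for some recipient into the absence of the corresponding prefix in the queue type.

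The easy cases are \rulename{err-eval} and \rulename{err-eval2}: typing inversion forces the guarding expression to be well-typed at sort $\tbool$ or at the payload sort of the output prefix, and Lemma~\ref{lem:sr:eval} then gives $\eval\e\val$ for some $\val$, contradicting the error rule's premise. In \rulename{err-mism} and \rulename{err-ophn} the head of $\pq$'s queue is $(\pp,\ell(\val))$, so Lemma~\ref{lemma:typeinversion}(\ref{lem:inversion-queue-not-empty}) places $\tout\pp\ell\S$ at the head of $\pq$'s queue type; clause \rulename{LP$\oplus$} then requires this message to be eventually consumed by $\pp$. But the minimal type of $\pp$'s process has either no branch labelled $\ell$ (in \rulename{err-mism}) or no action $\pq?$ at all (in \rulename{err-ophn}, since $\pq? \not\in \actions{\PP}$ and $\actions{\TT^\star_\pp} \subseteq \actions{\PP}$), and by Prop.~\ref{lem:siso-tree-projections-equal-subt} this obstruction is preserved under all reductions of $\Gamma^\star$. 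Case \rulename{err-strv} is symmetric and uses \rulename{LP$\&$}: $\pp$'s minimal type is an external choice from $\pq$, yet the minimal type of $\pq$'s process contains no $\pp!$ action, and Lemma~\ref{lem:queue->queue-type} together with the side condition $\h_\pq \not\equiv \msg\pp{-}{-}\cdot\h'_\pq$ rules out any pre-queued output from $\pq$ to $\pp$; hence no reduction path of $\Gamma^\star$ can ever enqueue what $\pp$ is waiting for.

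The main obstacle is \rulename{err-dlock}, because a contradiction must be derived from the impossibility of \emph{any} reduction of $\Gamma^\star$ producing progress, not merely reductions involving one specific participant. After the minimal refinement, every $\pp_j$ with $j\!\in\! J'$ must be typed with $\tend$ (since $\tend$'s only supertype under $\subt$ is $\tend$ itself, by Def.~\ref{def:subtyping} and rule \rulename{ref-end}), and every $\pp_j$ with $j\!\in\! J$ receives an input-starting type (inputs cannot be turned into outputs via $\subt$, by Prop.~\ref{lem:siso-tree-projections-equal-subt}). No participant therefore exposes a top-level selection, ruling out every \rulename{e-send} step; the side condition of \rulename{err-dlock} together with Lemma~\ref{lem:queue->queue-type} rules out every \rulename{e-rcv} step whose receiver is some $\pp_j$ with $j\!\in\! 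J$, while receivers with $j\!\in\! J'$ cannot match any branching since their type is $\tend$. Hence $\Gamma^\star$ admits no reduction at all; taking any $\pp_{j_0}$ with $j_0\!\in\! J$ (non-empty by hypothesis), the singleton path from $\Gamma^\star$ is vacuously fair but never fires $\pp_{j_0}$'s top-level branching, contradicting \rulename{LP$\&$}. The subtlety I expect to spend the most effort on is formalising that subtyping cannot transform an input-starting type into a selection-starting one, nor delay an output behind an input --- both facts following from an inspection of Def.~\ref{def:ref} combined with Prop.~\ref{lem:siso-tree-projections-equal-subt}.
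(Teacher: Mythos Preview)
Your approach is essentially the paper's: case analysis on the error rules, typing inversion, refining $\Gamma$ to a still-live $\Gamma^\star$ via Lemma~\ref{lem:minimal-type-of-process} and Lemma~\ref{lem:subtyping-preserves-liveness}, then contradicting liveness. Two points are worth correcting.

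First, you frame the proof as a flat case analysis on the six error rules, but $\N \red \error$ can also arise via \rulename{r-struct} from some $\N \equiv \N_1 \red \error$. The paper treats this as an inductive step: assume for contradiction a live $\Gamma$ with $\Gamma \vdash \N$; by Lemma~\ref{lemma:type_congruence} there is $\Gamma_2 \equiv \Gamma$ with $\Gamma_2 \vdash \N_1$, and by Lemma~\ref{lem:struct-equiv-and-liveness} $\Gamma_2$ is live, contradicting the induction hypothesis applied to $\N_1$. Without this step your case analysis is incomplete, since the error axioms fix a particular syntactic layout of participants and queue heads that an arbitrary $\N$ need not match literally.

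Second, your appeals to Prop.~\ref{lem:siso-tree-projections-equal-subt} are misplaced. That proposition says that SISO refinement preserves per-participant input/output projections; it says nothing about preservation under \emph{environment reductions}, nor does it tell you whether a supertype of a branching can begin with a selection. In \rulename{err-mism} and \rulename{err-ophn} what you actually need is that once $\Gamma^\star(\pp)$ has a type $\TT^\star_\pp$ with $\pq? \notin \actions{\TT^\star_\pp}$ (or with branching labels disjoint from $\ell$), no continuation reached by the rules of Def.~\ref{def:typing-env-reductions} can acquire the missing input --- this is immediate from the fact that continuations of a session type have a subset of its actions, and that \rulename{e-rcv} requires a label match. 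In \rulename{err-dlock}, the claim that an external-choice process cannot be typed by a selection-headed type follows directly from inspecting Def.~\ref{def:ref}: every rule relating $\WT \subttt \WT'$ with $\WT$ input-headed forces $\WT'$ to begin with an input as well (rules \rulename{ref-in}, \rulename{ref-$\AC$}), so no $\WT' \in \llbracket \VT' \rrbracket_\SO$ for a selection-headed $\VT'$ can be the target. The paper simply invokes typing inversion (Lemma~\ref{lemma:typeinversion}.\ref{proc2}) to obtain the branching shape explicitly, then constructs the refined $\Gamma'$ by hand --- this is more direct than routing through the minimal-type lemma for those cases.
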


\begin{proof}
  By induction on derivation $\N \red \error$.
\\
Base cases:
\\
\underline{\rulename{err-mism}}: 
     We have
    \begin{align}
      &\N = \pa\pp\sum\limits_{j\in J} \procin\pq{\ell_j(\x_j)}\PP_j \pc \pa\pp\h_\pp \pc \pa\pq\PP_\pq \pc \pa\pq (\pp,\ell(\val))\!\cdot \!\h \pc \N_1
      \label{eq:err-mismatch-sess}%
      \\%
      &\forall j\in J: \ell \neq \ell_j       \label{eq:err-labels}%
      \\%
      &\N_1 = \prod_{i\in I} (\pa{\pp_i}{\PP_i} \pc \pa{\pp_i}{\h_i})
    \end{align}
    Assume to the contrary that there exists $\Gamma$  such that:
    \begin{align}
    &\Gamma \vdash \N%
    &
    \\%
    \label{lem:sr:err:gamma-live}%
    &\text{$\Gamma$ is live}%
    &
  \end{align}
  By Lemma~\ref{lemma:typeinversion}.\ref{sess},
  \begin{align}
    &   \vdash \sum\limits_{j\in J} \procin\pq{\ell_j(\x_j)}\PP_j:\TT \label{eq:err:sr-receive-1}\\
    &   \vdash \h_\pp:\tqueue_\pp \label{eq:err:sr-receive-12}\\
    &  \vdash \PP_\pq:\TT_\pq \label{eq:err:sr-receive-11}\\
    &  \vdash (\pp,\ell(\val))\!\cdot\!\h:\tqueue  \label{eq:err:sr-receive-w} \\
    &  \forall i\in I:\quad  \vdash \PP_i:\TT_i \label{eq:err:sr-receive-3}\\
    &  \forall i\in I:\quad  \vdash \h_i:\tqueue_i \label{eq:err:sr-receive-4}\\
    &  \Gamma = \{\pp:(\tqueue_\pp,\TT), \pq:(\tqueue, \TT_\pq)\}
    \cup \{\pp_i:(\tqueue_i,\TT_i):i\in I\} \label{eq:err:sr-receive-5}%
  \end{align}
  By Lemma~\ref{lemma:typeinversion}.\ref{proc2},
  there are $\TT'_j, \S'_j$ (for $j \in J$) such that:
    \begin{align}
      \label{eq:err:sr-receive-extc-t}%
      & \texternal_{j\in J}\tin\pq{\ell_j}{\S'_j}.{\TT'_j} \subt \TT\\
      \label{eq:sr-receive-cont-t}%
      & \forall j \in J:\quad  x_j:\S'_j\vdash \PP_j: \TT'_j
    \end{align}
    By Lemma~\ref{lemma:typeinversion}.\ref{lem:inversion-queue-not-empty},
    there are $\tqueue', \S$ such that:
    \begin{align}
      &\vdash \val:\S \;\text{ and }\;
      \tqueue \equiv \tout\pp{\ell}{\S} \!\cdot\! \tqueue' 
       \;\text{ and }\;\vdash \h: \tqueue'
      \label{eq:err:sr-receive-7}
    \end{align}
    Now, let:
    \begin{align}
      \label{eq:err:sr-receive-8}%
      &\Gamma' = \{
        \pp:(\tqueue_\pp,\texternal_{j\in J}\tin\pq{\ell_j}{\S'_j}.{\TT'_j}),
        \pq:(\tout\pp{\ell}{\S} \!\cdot\! \tqueue',\TT_\pq)
      \}
      \cup \{\pp_i:(\tqueue_i,\TT_i):i \in I\}
    \end{align}
    Then, we have:
    \begin{align}
      \label{eq:err:sr-receive-9}%
      &\Gamma' \subt \Gamma%
      &\text{(by 
        \eqref{eq:err:sr-receive-7}, \eqref{eq:err:sr-receive-extc-t}, \eqref{eq:err:sr-receive-5}, \eqref{eq:err:sr-receive-8})}%
      \\%
      \label{eq:err:sr-receive-10}%
      &\text{$\Gamma'$ is live}%
      &\text{(by \eqref{eq:err:sr-receive-9} %
        and Lemma~\ref{lem:subtyping-preserves-liveness})}.%
    \end{align}
    By \eqref{eq:err-labels}, \eqref{eq:err:sr-receive-10} %
    and Definition~\ref{def:env-liveness} we get a contradiction.
    \\
\item[] \underline{\rulename{err-eval}}: We have 
    \begin{align}
      &\N = \pa\pp\cond\e{\PP_1}{\PP_2}\pc \pa\pp\h \pc  \N_1
      \\%
      &\not\exists \val: \eval \e \val \label{eq:sr:err:eval}
      \\%
      &\N_1 = \prod_{i\in I} (\pa{\pp_i}{\PP_i} \pc \pa{\pp_i}{\h_i})
    \end{align}
    Assume to the contrary that there exists $\Gamma$  such that:
    \begin{align}
    &\Gamma \vdash \N%
    &
    \\%
    \label{lem:sr:err:gamma-live}%
    &\text{$\Gamma$ is live}%
    &
  \end{align}
  By Lemma~\ref{lemma:typeinversion}.\ref{sess},
  \begin{align}
    &   \vdash \cond\e{\PP_1}{\PP_2}:\TT \label{eqv:err:proof1}\\
    &  \vdash \h:\tqueue  \label{eqv:err:proof2} \\
    &  \forall i\in I \;\;\;  \vdash \PP_i:\TT_i \label{eqv:err:proof3}\\
    &  \forall i\in I \;\;\; \vdash \h_i:\tqueue_i \label{eqv:err:proof4}\\
    &  \Gamma = \{\pp:(\tqueue, \TT)\} \cup \{\pp_i:(\tqueue_i,\TT_i):i\in I\} \label{eqv:err:proof5}%
  \end{align}
  From~(\ref{eqv:err:proof1}) and by Lemma~\ref{lemma:typeinversion}.\ref{proc4}:
    \begin{align}
      & \vdash  \PP_1:  \TT\\
      & \vdash  \PP_2:  \TT\\
      &  \vdash \e:\tbool
    \end{align} 
    By Lemma~\ref{lem:sr:eval}, there is a value $v$ such that $\eval\e\val,$ which leads to contradiction with assumption \eqref{eq:sr:err:eval}.

\item[] \underline{\rulename{err-eval2}}: We have 
    \begin{align}
      &\N = \pa\pp\procout{\pq}{\ell}{\e}{\PP} \pc \pa\pp\h  \pc  \N_1
      \\%
      &\not\exists \val: \eval \e \val \label{eq:sr:err:eval2}
      \\%
      &\N_1 = \prod_{i\in I} (\pa{\pp_i}{\PP_i} \pc \pa{\pp_i}{\h_i})
    \end{align}
    Assume to the contrary that there exists $\Gamma$  such that:
    \begin{align}
    &\Gamma \vdash \N%
    &
    \\%
    \label{lem:sr:err2:gamma-live}%
    &\text{$\Gamma$ is live}%
    &
  \end{align}
  By Lemma~\ref{lemma:typeinversion}.\ref{sess},
  \begin{align}
    &   \vdash \pa\pp\procout{\pq}{\ell}{\e}{\PP}:\TT \label{eqv:err2:proof1}\\
    &  \vdash \h:\tqueue  \label{eqv:err2:proof2} \\
    &  \forall i\in I \;\;\;  \vdash \PP_i:\TT_i \label{eqv:err2:proof3}\\
    &  \forall i\in I \;\;\; \vdash \h_i:\tqueue_i \label{eqv:err2:proof4}\\
    &  \Gamma = \{\pp:(\tqueue, \TT)\} \cup \{\pp_i:(\tqueue_i,\TT_i):i\in I\} \label{eqv:err2:proof5}%
  \end{align}
  From~(\ref{eqv:err2:proof1}) and by Lemma~\ref{lemma:typeinversion}.\ref{proc3}:
    \begin{align}
      & \tout\pq\ell{\S_1}.\TT_1 \subt \TT\\
      & \vdash \e:\S\\
      &  \S_1\subs \S
    \end{align} 
    By Lemma~\ref{lem:sr:eval}, there is a value $v$ such that $\eval\e\val,$ which leads to contradiction with assumption \eqref{eq:sr:err:eval2}.
\\
\underline{\rulename{err-ophn}}: We have:
    \begin{align}
      &\N = \pa\pp\PP \pc \pa\pp\h_\pp \pc \pa\pq\PP_\pq \pc \pa\pq (\pp,\ell(\val))\!\cdot \!\h \pc \N_1 \  \label{eq:err-orph-msg-sess}\\
      &q?\not\in\actions\PP \label{eq:err-orph-negin}
      \\%
      &\N_1 = \prod_{i\in I} (\pa{\pp_i}{\PP_i} \pc \pa{\pp_i}{\h_i})\label{eq:err-orph-msg-sess-tail}
    \end{align}
  By Lemma~\ref{lemma:typeinversion}.\ref{sess},
  \begin{align}
    &  \vdash \PP:\TT_{\pp} \label{eq:err-receive-1'}\\
    &  \vdash (\pp,\ell(\val))\!\cdot\!\h:\tqueue_{\pq}  \label{eq:err-receive-w} \\
    &   \vdash \PP_\pq:\TT_\pq \label{eq:err-receive-11'}\\
    &   \vdash \h_\pp:\tqueue_\pp \label{eq:err-receive-12'}\\
    &  \forall i\in I:\quad  \vdash \PP_i:\TT_i \label{eq:err-receive-3'}\\
    &  \forall i\in I:\quad  \vdash \h_i:\tqueue_i \label{eq:err-receive-4'}\\
    &  \Gamma = \{\pp:(\tqueue_\pp,\TT_{\pp}), \pq:(\tqueue_{\pq}, \TT_\pq)\}
    \cup \{\pp_i:(\tqueue_i,\TT_i):i\in I\} \label{eq:err-receive-5'}%
  \end{align}
    By Lemma~\ref{lemma:typeinversion}.\ref{lem:inversion-queue-not-empty},
    there are $\tqueue', \S'$ such that:
    \begin{align}
      &\vdash \val:\S' \;\text{ and }\;
      \tqueue_{\pq} \equiv \tout\pp{\ell_k}{\S'} \!\cdot\! \tqueue'
      \;\text{ and }\; \vdash \h: \tqueue'
       \label{eq:err-receive-7'}
    \end{align}
    Let 
    \begin{equation}\label{eq:gamma'}
    \Gamma'=\{\pp:(\tqueue_\pp,\TT_{\pp}), \pq:(\tout\pp{\ell_k}{\S'} \!\cdot\! \tqueue', \TT_\pq)\}
    \cup \{\pp_i:(\tqueue_i,\TT_i):i\in I\}
    \end{equation}
    Then we have 
    \begin{align}
    & \Gamma \equiv \Gamma'\label{eq:gamma'-is-equiv-to-gamma}
    &\text{(by \eqref{eq:err-receive-5'} and \eqref{eq:gamma'})}\\
    & \Gamma' \text{ is live}\label{eq:gamma'-is-live}
    &\text{(by Lemma~\ref{lem:struct-equiv-and-liveness})}
    \end{align}
    By \eqref{eq:err-receive-1'} and Lemma~\ref{lem:minimal-type-of-process} 
    we have that there is $\TT'$ such that 
    $\TT'\subt\TT_{\pp}$ and 
    $\vdash \PP: \TT'$ and 
    $\actions{\TT'}\subseteq\actions{\PP}$.
    Now let
        \begin{equation}\label{eq:gamma'}
    \Gamma''=\{\pp:(\tqueue_\pp,\TT'), \pq:(\tout\pp{\ell_k}{\S'} \!\cdot\! \tqueue', \TT_\pq)\}
    \cup \{\pp_i:(\tqueue_i,\TT_i):i\in I\}
    \end{equation} 
    Then we have $\Gamma'' \subt \Gamma'$, and hence, 
    $\Gamma''$ is live (by Lemma~\ref{lem:subtyping-preserves-liveness}). 
    This gives a contradiction with \eqref{eq:err-orph-negin}, 
    $\actions{\TT'}\subseteq\actions{\PP}$ and 
    Definition~\ref{def:env-liveness} 
    (since $\pq? \notin\actions{\TT'}$ holds, the message in the queue of $\pq$ will never be received
    in any reduction of $\Gamma''$).
\\
\underline{\rulename{err-strv}:} We have
    \begin{align}
      &\N = \pa\pp\sum\limits_{j\in J} \procin\pq{\ell_j(\x_j)}\PP_j \pc \pa\pp\h_\pp \pc \pa\pq\PP_\pq \pc \pa\pq\h_\pq \pc \N_1
      \label{eq:err-starv-sess}
      \\%
      & \pp!\not\in \actions{\PP_\pq}\label{eq:err-startv-no-prefix-in-process}\\
      & \h_\pq \not\equiv \msg{\pp}{-}{-} \cdot \h_\pq'\label{eq:err-startv-no-prefix-in-queue}\\
      &\N_1 = \prod_{i\in I} (\pa{\pp_i}{\PP_i} \pc \pa{\pp_i}{\h_i})\label{eq:err-starv-sess-tail}
    \end{align}
  Assume to the contrary that there exists $\Gamma$  such that:
    \begin{align}
    &\Gamma \vdash \N%
    &
    \\%
    \label{lem:sr:err:gamma-live}%
    &\text{$\Gamma$ is live}%
    &
  \end{align}

  By Lemma~\ref{lemma:typeinversion}.\ref{sess},
  \begin{align}
    &   \vdash \sum\limits_{j\in J} \procin\pq{\ell_j(\x_j)}\PP_j:\TT_\pp \label{eq:err-receive-1}\\
    &  \vdash \h_\pq:\tqueue_\pq  \label{eq:err-receive-w} \\
    &   \vdash \PP_\pq:\TT_\pq \label{eq:err-receive-11}\\
    &   \vdash \h_\pp:\tqueue_\pp \label{eq:err-receive-12}\\
    &  \forall i\in I:\quad  \vdash \PP_i:\TT_i \label{eq:err-receive-3}\\
    &  \forall i\in I:\quad  \vdash \h_i:\tqueue_i \label{eq:err-receive-4}\\
    &  \Gamma = \{\pp:(\tqueue_\pp,\TT_\pp), \pq:(\tqueue_\pq, \TT_\pq)\}
    \cup \{\pp_i:(\tqueue_i,\TT_i):i\in I\} \label{eq:err-receive-5}%
  \end{align}
  By Lemma~\ref{lemma:typeinversion}.\ref{proc2},
  there are $\TT'_j, \S'_j$ (for $j \in J$) such that:
    \begin{align}
      \label{eq:err-receive-extc-t}%
      & \texternal_{j\in J}\tin\pq{\ell_j}{\S'_j}.{\TT'_j} \subt \TT_\pp\\
      \label{eq:err-receive-cont-t}%
      & \forall j \in J:\quad  x_j:\S'_j\vdash \PP_j: \TT'_j
    \end{align}
    Now, let:
    \begin{align}
      \label{eq:err-starv-receive-8}%
      &\Gamma' = \{
        \pp:(\tqueue_\pp,\texternal_{j\in J}\tin\pq{\ell_j}{\S'_j}.{\TT'_j}),
        \pq:(\tqueue_\pq,\TT_\pq)
      \}
      \cup \{\pp_i:(\tqueue_i,\TT_i):i \in I\}
    \end{align}
    Then, we have:
    \begin{align}
      \label{eq:err-receive-9}%
      &\Gamma' \subt \Gamma%
      &\text{(by  \eqref{eq:err-receive-5}, \eqref{eq:err-receive-extc-t}, \eqref{eq:err-starv-receive-8})}%
      \\%
      \label{eq:err-receive-10}%
      &\text{$\Gamma'$ is live}%
      &\text{(by \eqref{eq:err-receive-9} %
        and Lemma~\ref{lem:subtyping-preserves-liveness})}%
    \end{align}
    By \eqref{eq:err-startv-no-prefix-in-queue}, \eqref{eq:err-receive-w} 
    and Lemma~\ref{lem:queue->queue-type} we have that 
    $\tqueue_{\pq}\not\equiv\tout\pp{-}{-}\cdot{\tqueue'}$. 
    By \eqref{eq:err-receive-11} and Lemma~\ref{lem:minimal-type-of-process} 
    there is $\TT'$ such that $\TT'\subt \TT_\pq$ and $\vdash \PP_\pq:\TT'$ 
    and $\actions{\TT'}\subseteq \actions{\PP_\pq}$.
    Now let 
    \begin{align}
      \label{eq:err-starv-gamma''}%
      &\Gamma'' = \{
        \pp:(\tqueue_\pp,\texternal_{j\in J}\tin\pq{\ell_j}{\S'_j}.{\TT'_j}),
        \pq:(\tqueue_\pq,\TT')
      \}
      \cup \{\pp_i:(\tqueue_i,\TT_i):i \in I\}
    \end{align}   
    Then we have $\Gamma'' \subt \Gamma'$, and hence, 
    $\Gamma''$ is live (by Lemma~\ref{lem:subtyping-preserves-liveness}). 
    This gives a contradiction with \eqref{eq:err-startv-no-prefix-in-process} and  
    $\actions{\TT'}\subseteq\actions{\PP_\pq}$ and $\tqueue_{\pq}\not\equiv\tout\pp{-}{-}\cdot{\tqueue'}$ and 
    Definition~\ref{def:env-liveness} 
    (since $\pp! \notin\actions{\TT'}$ and $\tqueue_{\pq}\not\equiv\tout\pp{-}{-}\cdot{\tqueue'}$ hold, 
    none of the active inputs in $\pp$ will be activated
    in any reduction of $\Gamma''$).  
\\\\
Inductive step:
\\
 \rulename{r-struct}. Assume that $\N \red \error$ is derived from:
    \begin{align}
      &\N \equiv \N_1 \label{lem:sr:struct1} \\
      &\N_1 \red \error
    \end{align}
  By the induction hypothesis, there is no live $\Gamma_1$ such that $\Gamma_1\vdash \N_1.$
  Assume on the contrary that there is live $\Gamma$ such that $\Gamma \vdash \N$. 
  Then, by Lemma~\ref{lemma:type_congruence}.3, there is $\Gamma_2$ such that 
  $\Gamma\equiv\Gamma_2$ and $\Gamma_2\vdash \N_1.$ 
  Since $\Gamma$ is live, by Lemma~\ref{lem:struct-equiv-and-liveness}
  we obtain $\Gamma_2$ is live, which is a contradiction with the induction hypothesis.
\end{proof}

\begin{lemma}\label{lem:type-contexts-equiv->subt}
If $\Gamma\equiv\Gamma'$ then $\Gamma\subt\Gamma'$.
\end{lemma}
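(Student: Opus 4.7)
The plan is to unfold the definitions of $\equiv$ and $\subt$ on typing environments (given on page~\pageref{page:gamma-subt-gammai}) and reduce the statement to two pointwise facts: (i) queue-type congruence implies queue-type congruence (trivial), and (ii) session-type congruence implies session-type subtyping. Once (ii) is established, the result follows immediately because the clauses governing participants outside the common domain are \emph{literally identical} in the definitions of $\equiv$ and $\subt$ on environments.

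The core step is therefore proving that, for session types $\T_1, \T_2$, if $\T_1 \equiv \T_2$ then $\T_1 \subt \T_2$. Recall from Definition~\ref{def:types} that $\equiv$ on session types is the least congruence satisfying $\mu\ty.\T \equiv \T\subst{\mu\ty.\T}{\ty}$. By the standard treatment of session trees (where $\ttree{\mu\ty.\T}$ coincides with the tree of its unfolding $\T\subst{\mu\ty.\T}{\ty}$, as noted on page~\pageref{page:tree-decomp}), a straightforward induction on the derivation of $\T_1 \equiv \T_2$ shows that $\ttree{\T_1} = \ttree{\T_2}$. Then, by Definition~\ref{def:subtyping}, $\T_1 \subt \T_2$ reduces to $\ttree{\T_1} \subt \ttree{\T_2}$, which holds by reflexivity of $\subt$ on session trees (Lemma~\ref{lem:transitivity}).

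Putting the pieces together: assume $\Gamma \equiv \Gamma'$. For every $\pp \in \dom{\Gamma} \cap \dom{\Gamma'}$, we have $\Gamma(\pp) \equiv \Gamma'(\pp)$, \ie, $\Gamma(\pp) = (\tqueue_1,\T_1)$ and $\Gamma'(\pp) = (\tqueue_2,\T_2)$ with $\tqueue_1 \equiv \tqueue_2$ and $\T_1 \equiv \T_2$. By the paragraph above, $\T_1 \subt \T_2$, and since $\tqueue_1 \equiv \tqueue_2$, we obtain $\Gamma(\pp) \subt \Gamma'(\pp)$ by the definition of subtyping on pairs. For participants in $\dom{\Gamma} \setminus \dom{\Gamma'}$ or $\dom{\Gamma'} \setminus \dom{\Gamma}$, the congruence hypothesis directly provides $\Gamma(\pp) \equiv (\temptyqueue,\tend) \equiv \Gamma'(\pq)$, which is exactly what the subtyping relation requires on such entries. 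Hence $\Gamma \subt \Gamma'$.

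There is no serious obstacle: the argument is essentially bookkeeping, with the only non-trivial observation being that $\equiv$ on session types identifies a type with its recursive unfoldings, so congruent types have identical trees and are related by $\subt$ via reflexivity. No coinductive reasoning beyond invoking Lemma~\ref{lem:transitivity} is required.
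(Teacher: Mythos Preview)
Your proof is correct and follows the same approach as the paper, which simply states ``Directly by the definitions of $\Gamma\equiv\Gamma'$ and $\Gamma\subt\Gamma'$.'' You have spelled out the one non-trivial ingredient the paper leaves implicit, namely that $\T_1 \equiv \T_2$ implies $\ttree{\T_1} = \ttree{\T_2}$ and hence $\T_1 \subt \T_2$ by reflexivity.
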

\begin{proof}
Directly by the definitions of $\Gamma\equiv\Gamma'$ and $\Gamma\subt\Gamma'$.
\end{proof}

\theoremSR*

\begin{proof}
  Assume:
  \begin{align}
    \label{lem:sr:hyp:typing}%
    &\Theta \cdot \Gamma \vdash \N%
    &\text{(by hypothesis)}%
    \\%
    \label{lem:sr:hyp:gamma-live}%
    &\text{$\Gamma$ is live}%
    &\text{(by hypothesis)}%
    \\%
    \label{lem:sr:hyp:sess-move}%
    &\N \red \N'
    &\text{(by hypothesis)}%
  \end{align}
  The proof is by induction on the derivation of $\N \red \N'.$ 
\\
  Base cases:
  \\
\underline{\rulename{r-send}:} We have
    \begin{align}
      &\N = \pa\pp\procout\pq\ell\e\PP\pc \pa\pp\h \pc  \N_1
      \\%
      &\eval \e \val%
      \\%
      \label{eqv:proof0}%
      &\N'=\pa\pp\PP\pc\pa\pp\h\cdot (\pq,\ell(\val))\pc \N_1%
      \\%
      &\N_1 = \prod_{i\in I} (\pa{\pp_i}{\PP_i} \pc \pa{\pp_i}{\h_i})
    \end{align}
  
  By Lemma~\ref{lemma:typeinversion}.\ref{sess},
  \begin{align}
    &  \vdash \procout\pq\ell\e\PP:\TT \label{eqv:proof1}\\
    &  \vdash \h:\tqueue  \label{eqv:proof2} \\
    &  \forall i\in I \;\;\; \vdash \PP_i:\TT_i \label{eqv:proof3}\\
    &  \forall i\in I \;\;\; \vdash \h_i:\tqueue_i \label{eqv:proof4}\\
    &  \Gamma = \{\pp:(\tqueue, \TT)\} \cup \{\pp_i:(\tqueue_i,\TT_i):i\in I\} \label{eqv:proof5}%
  \end{align}
  By Lemma~\ref{lemma:typeinversion}.\ref{proc3},
  there are $\TT',\S'$ such that:
    \begin{align}
      & \tout\pq\ell\S.\TT'\subt \TT  \label{eqv:proof7}\\
      &  \vdash \e:\S' \quad\text{and}\quad%
      \S'\subs \S  \label{eqv:proof8}\\
      & \vdash \PP: \TT'  \label{eqv:proof9}
    \end{align}
    Now, let:
    \begin{align}
      \label{eqv:proof9-2}%
      &\Gamma'' =  \{\pp:(\tqueue,  \tout\pq\ell{\S}.\TT')\} \cup \{\pp_i:(\tqueue_i,\TT_i):i\in I\}
      \\
      \label{eqv:proof11-2}%
      &\Gamma' = \{\pp:(\tqueue \cdot  \tout\pq\ell{\S}, \TT')\} \cup \{\pp_i:(\tqueue_i,\TT_i):i\in I\}
    \end{align}
    Then, we conclude:
    \begin{align} 
      \label{eqv:proof10}%
      &\Gamma'' \subt \Gamma%
      &\text{%
        (by \eqref{eqv:proof9-2}, \eqref{eqv:proof7}, and \eqref{eqv:proof5})
      }%
      \\
      \label{eqv:proof11}%
      &\text{$\Gamma''$ is live}
      &\text{(by \eqref{eqv:proof10} %
        and Lemma~\ref{lem:subtyping-preserves-liveness})%
      }%
      \\
      \label{eqv:proof12}%
      & \Gamma'' \red \Gamma'%
      &\text{(by \eqref{eqv:proof9-2}, \eqref{eqv:proof11-2}, %
        and \rulename{e-send} of Def.~\ref{def:typing-env-reductions})}%
      \\%
      \nonumber%
      &\text{$\Gamma'$ is live}%
      &\hspace{-20mm}\text{%
        (by \eqref{eqv:proof11}, \eqref{eqv:proof12}, and %
        Proposition~\ref{lem:move-preserves-liveness})%
      }%
      \\%
      \nonumber%
      &  \Gamma' \vdash \N'%
      &\text{%
        (\eqref{eqv:proof0}, \eqref{eqv:proof11-2}, and
        Def.~\ref{def:type-system})%
      }
    \end{align}
\underline{\rulename{r-rcv}:} We have
    \begin{align}
      &\N = \pa\pp\sum\limits_{j\in J} \procin\pq{\ell_j(\x_j)}\PP_j \pc \pa\pp\h_\pp \pc \pa\pq\PP_\pq \pc \pa\pq (\pp,\ell_k(\val))\!\cdot \!\h \pc \N_1
      \quad\text{(for some $k \in J$)}%
      \\%
      \label{eq:sr-receive-cont-sess}%
      &\N' = \pa\pp \PP_k\sub{\val}{\x_k} \pc \pa\pp\h_\pp \pc  \pa\pq\PP_\pq \pc  \pa\pq \h \pc \N_1
      \\%
      &\N_1 = \prod_{i\in I} (\pa{\pp_i}{\PP_i} \pc \pa{\pp_i}{\h_i})
    \end{align}
  By Lemma~\ref{lemma:typeinversion}.\ref{sess},
  \begin{align}
    &   \vdash \sum\limits_{j\in J} \procin\pq{\ell_j(\x_j)}\PP_j:\TT \label{eq:sr-receive-1}\\
    &  \vdash (\pp,\ell_k(\val))\!\cdot\!\h:\tqueue  \label{eq:sr-receive-w} \\
    &  \vdash \PP_\pq:\TT_\pq \label{eq:sr-receive-11}\\
    &   \vdash \h_\pp:\tqueue_\pp \label{eq:sr-receive-12}\\
    &  \forall i\in I:\quad  \vdash \PP_i:\TT_i \label{eq:sr-receive-3}\\
    &  \forall i\in I:\quad  \vdash \h_i:\tqueue_i \label{eq:sr-receive-4}\\
    &  \Gamma = \{\pp:(\tqueue_\pp,\TT), \pq:(\tqueue, \TT_\pq)\}
    \cup \{\pp_i:(\tqueue_i,\TT_i):i\in I\} \label{eq:sr-receive-5}%
  \end{align}
  By Lemma~\ref{lemma:typeinversion}.\ref{proc2},
  there are $\TT'_j, \S'_j$ (for $j \in J$) such that:
    \begin{align}
      \label{eq:sr-receive-extc-t}%
      & \texternal_{j\in J}\tin\pq{\ell_j}{\S'_j}.{\TT'_j} \subt \TT\\
      \label{eq:sr-receive-cont-t}%
      & \forall j \in J:\quad x_j:\S'_j\vdash \PP_j: \TT'_j
    \end{align}
    By Lemma~\ref{lemma:typeinversion}.\ref{lem:inversion-queue-not-empty},
    there are $\tqueue', \S'$ such that:
    \begin{align}
      &\vdash \val:\S' \;\text{ and }\;
      \tqueue \equiv \tout\pp{\ell_k}{\S'} \!\cdot\! \tqueue' \label{eq:sr-receive-7}
    \end{align}
    Now, let:
    \begin{align}
      \label{eq:sr-receive-8}%
      &\Gamma'' = \{
        \pp:(\tqueue_\pp,\texternal_{j\in J}\tin\pq{\ell_j}{\S'_j}.{\TT'_j}),
        \pq:(\tout\pp{\ell_k}{\S'} \!\cdot\! \tqueue',\TT_\pq)
      \}
      \cup \{\pp_i:(\tqueue_i,\TT_i):i \in I\}
    \end{align}
    Then, we have:
    \begin{align}
      \label{eq:sr-receive-9}%
      &\Gamma'' \subt \Gamma%
      &\text{(by \eqref{eq:sr-receive-8}, %
        \eqref{eq:sr-receive-7}, %
        \eqref{eq:sr-receive-extc-t},
        \eqref{eq:sr-receive-5})}%
      \\%
      \label{eq:sr-receive-10}%
      &\text{$\Gamma''$ is live}%
      &\text{(by \eqref{eq:sr-receive-9} %
        and Lemma~\ref{lem:subtyping-preserves-liveness})}%
    \end{align}
    Observe that we also have:
    \begin{align}
      \label{eq:sr-receive-subs}%
      \S' \subs \S'_k%
    \end{align}
    To prove \eqref{eq:sr-receive-subs}, assume (by contradiction) %
    that $\S' \not\subs \S'_k$. %
    Then, the premise of~\rulename{e-rcv} %
    in Def.~\ref{def:typing-env-reductions} %
    does not hold, %
    and thus, %
    $\pp$'s external choice in $\Gamma''$ %
    cannot possibly synchronise with $\pq$'s queue type. %
    But then, %
    by Def.~\ref{def:env-liveness}, %
    $\Gamma''$ is \emph{not} live, which gives a contradiction with~(\ref{eq:sr-receive-10}). %
    Therefore, it must be the case that %
    \eqref{eq:sr-receive-subs} holds.

    Now, let:
    \begin{align}
      \label{eq:sr-receive-11}%
      &\Gamma' = \{{
        \pp:(\tqueue_\pp, \TT'_k),
        \pq:(\tqueue', \TT_\pq)
      }\}
      \cup \{\pp_i:(\tqueue_i,\TT_i):i \in I\}
    \end{align}

    And we conclude:
    \begin{align}
      \label{eq:sr-receive-12}%
      & \Gamma'' \red \Gamma'%
      &\text{(by \eqref{eq:sr-receive-8}, \eqref{eq:sr-receive-subs}, %
        \eqref{eq:sr-receive-11}, %
        and rule \rulename{e-rcv} of Def.~\ref{def:typing-env-reductions})}
      \\%
      \nonumber%
      &\text{$\Gamma'$ is live}%
      &\hspace{-20mm}\text{%
        (by \eqref{eq:sr-receive-10}, \eqref{eq:sr-receive-12}, and %
        Proposition~\ref{lem:move-preserves-liveness})%
      }%
      \\%
      \nonumber%
      &  \Gamma' \vdash \N'%
      &\text{%
        (by \eqref{eq:sr-receive-cont-t}, %
        \eqref{eq:sr-receive-subs},
        \eqref{eq:sr-receive-cont-sess}, %
        Lemma~\ref{lem:substitution}, %
        \eqref{eq:sr-receive-11}, %
        and Def.~\ref{def:type-system})%
      }
    \end{align}
\\        
\underline{\rulename{r-cond-T} (\rulename{r-cond-F})}: We have:
    \begin{align}
      &\N = \pa\pp\cond\e\PP\PQ \pc \pa\pp\h \pc  \N_1
      \label{eq:sr-cond-cont-sess} \\%
      &\N' = \pa\pp \PP  \pc  \pa\pp \h \pc \N_1 \quad (\N' = \pa\pp \PQ  \pc  \pa\pp \h \pc \N_1)
      \label{eq:sr-cond-sess} \\%
      &\N_1 = \prod_{i\in I} (\pa{\pp_i}{\PP_i} \pc \pa{\pp_i}{\h_i})
    \end{align}
  By Lemma~\ref{lemma:typeinversion}.\ref{sess},
  \begin{align}
    &   \vdash \cond\e\PP\PQ :\TT \label{eq:sr-cond-1}\\
    &  \vdash \h:\tqueue  \label{eq:sr-cond-2} \\
    &  \forall i\in I:\quad  \vdash \PP_i:\TT_i \label{eq:sr-cond-3}\\
    &  \forall i\in I:\quad  \vdash \h_i:\tqueue_i \label{eq:sr-cond-4}\\
    &  \Gamma = \{\pp:(\tqueue,\TT)\}  \cup \{\pp_i:(\tqueue_i,\TT_i):i\in I\} \label{eq:sr-cond-5}%
  \end{align}
  By Lemma~\ref{lemma:typeinversion}.\ref{proc4}:
    \begin{align}
      & \vdash \PP: \TT       \label{eq:sr-cond-6}\\
      & \vdash \PQ: \TT       \label{eq:sr-cond-7}\\
      & \vdash \e:\tbool       \label{eq:sr-cond-8}
    \end{align}
    Then, letting $\Gamma' = \Gamma'' = \Gamma$, we have:
    \begin{align}
      &  \Gamma'' \subt \Gamma & \text{(by reflexivity of $\subt$)}
      \\%
      &  \Gamma'' \reds \Gamma'%
      \\%
      &  \Gamma' \vdash \N'%
      &\text{%
        (by    \eqref{eq:sr-cond-sess}, \eqref{eq:sr-cond-2}, %
        \eqref{eq:sr-cond-6}  or \eqref{eq:sr-cond-7},
        and Def.~\ref{def:type-system})%
      }
    \end{align}
Inductive step:
\\
 \underline{\rulename{r-struct}} Assume that $\N \red \N'$ is derived from:
    \begin{align}
      &\N \equiv \N_1 \label{lem:sr:struct1} \\
      &\N_1 \red \N_1' \\
      &\N' \equiv \N_1'  \label{lem:sr:struct3}
    \end{align}
    
\noindent
From \eqref{lem:sr:hyp:typing}, \eqref{lem:sr:struct1}, by Lemma~\ref{lemma:type_congruence},  there is $\Gamma_1$ such that
   \begin{align}
      & \Gamma_1 \equiv \Gamma  \label{lem:sr:struct4} \\
      &  \Gamma_1 \vdash \N_1  \label{lem:sr:struct5}.
    \end{align}
  
 \noindent 
  By induction hypothesis, there is are live type environments %
  $\Gamma_1', \Gamma''$ 
  such that:
  \begin{align}
      & \Gamma'' \subt \Gamma_1%
        \;\text{ and }\; \Gamma'' \red^* \Gamma_1'   \label{lem:sr:struct6} \\
      & \Gamma_1'\vdash\N_1'  \label{lem:sr:struct7}
  \end{align}

 \noindent 
 Now, by \eqref{lem:sr:struct3} and Lemma~\ref{lemma:type_congruence}, 
 there is a live environment $\Gamma'$ such that
    \begin{align}
      & \Gamma' \equiv \Gamma'_1 
       \;\text{ and }\; \Gamma' \vdash \N' \label{lem:sr:struct8}
    \end{align}
    
\noindent
We conclude:
    \begin{align}
      & \Gamma'' \subt \Gamma %
      & (\text{by \eqref{lem:sr:struct6}, \eqref{lem:sr:struct4}, Lemma~\ref{lem:type-contexts-equiv->subt} and transitivity of} \subt) \\
      &  \Gamma'' \red^* \Gamma' & \text{(by \eqref{lem:sr:struct6}, \eqref{lem:sr:struct8} and rule \rulename{e-struct} of Def.~\ref{def:typing-env-reductions})}
    \end{align}
\end{proof}

We may now show Type Safety and Progress results, that are both corollaries of Subject Reduction and Error subject reduction results.

\corollaryProgress*
\begin{proof}
  Assume $\Gamma \vdash \N$ with $\Gamma$ live, %
  and $\N \red^* \N'$. %
  By Theorem~\ref{thm:SR} (subject reduction), %
  there is some live $\Gamma'$ such that:
  \begin{align}
    \label{eq:cor-progress:gammai}
    &\Gamma' \vdash \N'%
  \end{align}
  This implies that $\N'$ cannot be an (untypable) $\error$ %
  --- which is the first part of the thesis. %
  For the \emph{``also\ldots''} part of the statement, %
  we have two possibilities:
  \begin{enumerate}
  \item%
    $\N'\equiv\pa\pp\inact \pc \pa\pp\emptyqueue$. %
    This is the thesis; or,%
  \item%
    $\N'\not\equiv\pa\pp\inact \pc \pa\pp\emptyqueue$. %
    We have two sub-cases:
    \begin{enumerate}
    \item%
      there is $\N''$ such that $\N' \red \N'' \neq \error$. %
      This is the thesis; or, %
    \item%
      there is no $\N''$ such that $\N' \red \N'' \neq \error$. %
      This case is impossible, because it means that either:
      \begin{enumerate}
      \item%
        $\not\!\exists \N'': \N' \red \N''$. %
        This case is impossible. %
        In fact, it would imply that $\N'$ cannot reduce by %
        rules \rulename{r-send}, %
        \rulename{r-rcv}, \rulename{r-cond-t}, or \rulename{r-cond-f} %
        (possibly via \rulename{r-struct}) %
        in Table~\ref{tab:main:reduction}. %
        Since $\N'\not\equiv\pa\pp\inact \pc \pa\pp\emptyqueue$, %
        this can only happen because $\N'$ %
        has some process stuck on an external choice %
        without a matching message, %
        or has some expression %
        (in conditionals or outputs) %
        that cannot be evaluated. %
        But then, by the rules in Table~\ref{tab:error}, %
        we have $\N' \red \error$ %
        by at least one %
        of the rules \rulename{err-mism}, \rulename{err-dlock}, %
        \rulename{err-eval} or \rulename{err-eval2}. %
        This leads to case~\ref{item:cor-progress:ni-error} below;%
      \item\label{item:cor-progress:ni-error}%
        $\N' \red \error$. %
        This case is impossible. %
        In fact, if we admit this case, %
        by \eqref{eq:cor-progress:gammai} and the contrapositive of %
        Lemma~\ref{lem:error-subject-reduction}, %
        we have that $\Gamma'$ is not live, %
        which means %
        (by the contrapositive of Proposition~\ref{lem:move-preserves-liveness}) %
        that $\Gamma$ is not live --- contradiction.
      \end{enumerate}
    \end{enumerate}
  \end{enumerate}
\end{proof}

\section{Appendix of Section~\ref{sec:op}}
\label{ap:op}

\label{subsec:app:preciseness}

\subsection{Proof of \Cref{lem:subttt-negation}}
\label{sec:subttt-negation}

Our aim is to prove that
a pair of SISO trees is \emph{not} related by the coinducteve relation $\subttt$ \emph{if and only if} they are related by the inductive relation $\nsubttt$.

\newcommand{\PhiApp}[1]{\Phi\!\left({#1}\right)}%
We adopt an approach inspired by \citeN{BHLN12,BHLN17}, %
by introducing the notions of \emph{rule function} %
and \emph{failing derivation} for our SISO refinement:%
\begin{enumerate}
\item
  we define a \emph{rule function $\Phi$} %
  (\Cref{def:rule_function} below), %
  to associate a judgment of the form ``$\WT \subttt \WT'$'' %
  \;to its coinductive premise in \Cref{def:ref} (if any), 
  so that $\WT \subttt \WT'$ can be derived %
  (by \Cref{def:ref}) %
  if and only if %
  \begin{enumerate*}[label=\emph{(\arabic*)}]
  \item%
    $\PhiApp{\WT \subttt \WT'}$ yields ``$\true$,'' or
  \item%
    $\PhiApp{\WT \subttt \WT'}$ yields
    a derivable judgement %
    according to the rules in \Cref{def:ref};
  \end{enumerate*}
\item
  then, in \Cref{def:derivation}
  we use the rule function $\Phi$ to define sequences of judgements
  representing \emph{successful or failing derivations}
  according to the rules in \Cref{def:ref};
\item
  finally, in \Cref{lem:failing-nsubttt-ind}
  we show that if we have a failing derivation,
  then for each one of its judgements ``$\WT \subttt \WT'$'',
  we can derive\; $\WT \nsubttt \WT'$
  \;under the rules in \Cref{tab:negationW}.
  Symmetrically, in \Cref{lem:nsubttt-failing-ind}
  we prove that for each judgement $\WT \nsubttt \WT'$ in a derivation
  based on \Cref{tab:negationW} %
  we can construct a failing derivation for ``$\WT \subttt \WT'$''.
  These results lead to the proof of \Cref{lem:subttt-negation}.
\end{enumerate}

Before proceeding,
it is handy to introduce some auxiliary rules for $\nsubttt$
(\Cref{def:aux-ref}) that are sound \wrt the rules in \Cref{tab:negationW}
(as per \Cref{lem:aux-ref-sound} below).
Such auxiliary rules will only be used in the proofs of this section.

\begin{definition}[Auxiliary SISO refinement rules]
  \label{def:aux-ref}
  We define the following additional rules for the SISO refinement relation:
  \[
    \inferruleR[\rulename{n-${\AC}$-act}]{
      \S' \subs\S & \actions{\WT} \neq \actions{\ACon\pp\WT'}
    }{
      \tin\pp{\ell}{\ST}.\WT \nsubttt \ACon\pp{{\tin\pp{\ell}{\ST'}.\WT'}}
    }
    \qquad
    \inferruleR[\rulename{n-${\BC}$-act}]{
      \S\subs\S' & \actions{\WT} \neq \actions{\BCon\pp\WT'}
    }{
      \tout\pp{\ell}{\ST}.\WT \nsubttt \BCon\pp{{\tout\pp{\ell}{\ST'}.\WT'}}
    }
  \]
\end{definition}

\begin{proposition}[Soundness of the auxiliary SISO refinement rules]
  \label{lem:aux-ref-sound}
  Assume that\, $\WT_1 \nsubttt \WT_2$ holds by one of the auxiliary rules
  introduced in \Cref{def:aux-ref}.
  Then,\, $\WT_1 \nsubttt \WT_2$ can also be derived
  by only using the rules in \Cref{tab:negationW}.
\end{proposition}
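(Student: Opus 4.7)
The plan is to show that each of the two auxiliary rules is admissible relative to the rules in \Cref{tab:negationW}. The two cases are dual: I focus on \rulename{n-${\AC}$-act}, with the case for \rulename{n-${\BC}$-act} proceeding analogously, swapping inputs for outputs, $\AContext\pp$-sequences for $\BContext\pp$-sequences, and rule \rulename{n-inp-R} for \rulename{n-out-R}.

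First I would apply rule \rulename{n-${\AC}$-$\WT$} from \Cref{tab:negationW} to the goal\; $\tin\pp{\ell}{\S}.\WT \nsubttt \ACon\pp{\tin\pp{\ell}{\S'}.\WT'}$, whose premises are $\S' \subs \S$ (immediate) and $\WT \nsubttt \ACon\pp\WT'$. The task therefore reduces to deriving this latter judgement from the hypothesis\, $\actions{\WT} \neq \actions{\ACon\pp\WT'}$.

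I would prove this reduced goal by induction on the length of $\AContext\pp$, combined with a case analysis on the top-level shape of $\WT$. In the base case, where $\AContext\pp = \tin\pq{\ell_0}{\S_0}$ with $\pq \neq \pp$, the sub-cases go as follows: if $\WT = \tend$, then \rulename{n-inp-R} applies directly since $\pq? \notin \emptyset$; if $\WT$ starts with an output, then either \rulename{n-out} or \rulename{n-i-o-1} fires, depending on whether the output participant already appears on the right-hand side; if $\WT = \tin\pr{\ell'}{\S''}.\WT''$, one further case-splits on whether $\pr = \pq$ and on whether the labels and sorts match, selecting \rulename{n-inp-R}, \rulename{n-inp-$\ell$}, \rulename{n-inp-$\S$}, or \rulename{n-inp-$\WT$}, in the last case recursing on continuations whose action sets still differ. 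The inductive step peels the leading input from $\AContext\pp$ and appeals to the induction hypothesis on the shorter context.

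The main obstacle will be identifying the correct mismatch rule when the top-level action of $\WT$ coincides with some action appearing at a deeper position of $\ACon\pp\WT'$: in such situations the naive top-level mismatch rules do not fire, and the argument must trace a sequence of matching prefixes down both sides before invoking a suitable rule, while taking care that the induction proceeds on a strictly smaller structure (measured by the combined size of $\AContext\pp$ and $\WT$). The dual proof for \rulename{n-${\BC}$-act} is slightly more delicate, because $\BContext\pp$ may contain both inputs from any participant and outputs to participants distinct from $\pp$, so more rules (including the symmetric \rulename{n-i-o-2}) come into play and more sub-cases must be inspected at each peeling step.
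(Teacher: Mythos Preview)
Your reduction via \rulename{n-${\AC}$-$\WT$} to the subgoal $\WT \nsubttt \ACon\pp\WT'$ is correct and matches the paper. The gap is in the induction measure. SISO trees are coinductive and $\WT$ may be infinite, so ``the combined size of $\AContext\pp$ and $\WT$'' is not well-defined; and induction on the length of $\AContext\pp$ alone does not suffice either (already in your base case, after applying \rulename{n-inp-$\WT$} you are left with the goal $\WT'' \nsubttt \WT'$ with no $\AContext\pp$ remaining, which lies outside your induction hypothesis, and the ``recursion on continuations'' you invoke there has no well-founded measure).

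The paper exploits the hypothesis $\actions{\WT} \neq \actions{\ACon\pp\WT'}$ directly: pick a witness action $\alpha$ present in one side but not the other, and take as measure the \emph{finite} depth $n$ at which a prefix carrying $\alpha$ first occurs (in $\WT$ if $\alpha$ is absent from the right, in $\ACon\pp\WT'$ otherwise). After at most $n{-}1$ applications of the inductive rules \rulename{n-inp-$\WT$}, \rulename{n-${\AC}$-$\WT$}, \rulename{n-out-$\WT$}, \rulename{n-${\BC}$-$\WT$} --- each peeling one prefix and never introducing $\alpha$ on the side that lacked it --- either an axiom fires early, or one reaches a judgement where the relevant side begins with a prefix carrying $\alpha$ while the other side still lacks $\alpha$, so one of \rulename{n-out}, \rulename{n-inp}, \rulename{n-out-R}, \rulename{n-inp-R} applies. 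This is well-founded regardless of whether $\WT$ is infinite.

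A smaller slip: in your base case, when $\WT$ begins with an output whose target does occur on the right, \rulename{n-i-o-1} does not apply (that rule requires the \emph{left}-hand side to start with an input and the right with an output); the applicable rules there are the \rulename{n-${\BC}$-$\ell$}/\rulename{n-${\BC}$-$\S$}/\rulename{n-${\BC}$-$\WT$} family.
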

\begin{proof}
  We have two cases.
  \begin{itemize}
  \item $\WT_1 \nsubttt \WT_2$ holds by \rulename{n-${\AC}$-act}.\quad
    In this case, we have:
    \begin{align}
      \label{eq:lem:aux-ref-sound:ac-act:premises-wt}
      &\WT_1 = \tin\pp{\ell}{\ST}.\WT
      \quad\text{and}\quad
      \WT_2 = \ACon\pp{{\tin\pp{\ell}{\ST'}.\WT'}}
      \\
      \label{eq:lem:aux-ref-sound:ac-act:premises-s}
      &\ST' \subs\ST
      \\
      \label{eq:lem:aux-ref-sound:ac-act:premises-act}
      &\actions{\WT} \neq \actions{\ACon\pp\WT'}
    \end{align}
    Observe that, by \eqref{eq:lem:aux-ref-sound:ac-act:premises-act},
    we have two (non-mutually exclusive) cases:
    \begin{enumerate}[label=(\alph*)]
    \item
      for some $\pp$, an action $\pp!$ (resp.~$\pp?$)
      occurs in $\WT$ but not in $\ACon\pp\WT'$.
      In this case, assume that a prefix $\tout{\pp}{\ell'}{\ST''}$
      (resp.~$\tin{\pp}{\ell'}{\ST''}$) corresponding to such an action
      first occurs at the $n$\textsuperscript{th} position in $\WT$.
      Then, there is a derivation of $\WT \not\subttt \ACon\pp\WT'$
      such that either:
      \begin{enumerate}[label=(\roman*)]
      \item%
        after less than $n-1$ applications of the inductive rules
        \rulename{n-inp-$\WT$}, \rulename{n-${\AC}$-$\WT$},
        \rulename{n-out-$\WT$} or \rulename{n-${\BC}$-$\WT$},
        the derivation reaches an axiom of \Cref{tab:negationW}, or
      \item
        after exactly $n-1$ applications of such inductive rules,
        the derivation reaches the axiom \rulename{n-out}
        (resp.~\rulename{n-inp}).
      \end{enumerate}
      Therefore, in both cases,
      we have that $\WT \not\subttt \ACon\pp\WT'$ is derivable
      by only using the rules in \Cref{tab:negationW}.
      And from this, by \eqref{eq:lem:aux-ref-sound:ac-act:premises-wt}
      and \eqref{eq:lem:aux-ref-sound:ac-act:premises-s},
      we conclude that $\WT_1 \nsubttt \WT_2$ holds by rule
      \rulename{n-${\BC}$-$\WT$} (resp.~\rulename{n-${\AC}$-$\WT$}),
      with a derivation that only uses the rules in \Cref{tab:negationW};
    \item
      for some $\pp$, an action $\pp!$ (resp.~$\pp?$)
      occurs in $\ACon\pp\WT'$ but not in $\WT$.
      In this case, assume that a prefix $\tout{\pp}{\ell'}{\ST''}$
      (resp.~$\tin{\pp}{\ell'}{\ST''}$) corresponding to such an action
      first occurs at the $n$\textsuperscript{th} position in $\ACon\pp\WT'$.
      Then, there is a derivation of $\WT \not\subttt \ACon\pp\WT'$
      such that either:
      \begin{enumerate}[label=(\roman*)]
      \item%
        after less than $n-1$ applications of the inductive rules
        \rulename{n-inp-$\WT$}, \rulename{n-${\AC}$-$\WT$},
        \rulename{n-out-$\WT$} or \rulename{n-${\BC}$-$\WT$},
        the derivation reaches an axiom of \Cref{tab:negationW}, or
      \item
        after exactly $n-1$ applications of such inductive rules,
        the derivation reaches the axiom \rulename{n-out-R}
        (resp.~\rulename{n-inp-R}).
      \end{enumerate}
      Therefore, in both cases,
      we have that $\WT \not\subttt \ACon\pp\WT'$ is derivable
      by only using the rules in \Cref{tab:negationW}.
      And from this, by \eqref{eq:lem:aux-ref-sound:ac-act:premises-wt}
      and \eqref{eq:lem:aux-ref-sound:ac-act:premises-s},
      we conclude that $\WT_1 \nsubttt \WT_2$ holds by rule
      \rulename{n-${\BC}$-$\WT$} (resp.~\rulename{n-${\AC}$-$\WT$}),
      with a derivation that only uses the rules in \Cref{tab:negationW};
    \end{enumerate}
  \item $\WT_1 \nsubttt \WT_2$ holds by \rulename{n-${\BC}$-act}.\quad
    The proof is similar to the previous case.
  \end{itemize}
\end{proof}

\begin{definition}
\label{def:rule_function}
  The \emph{rule function $\Phi$} %
  maps judgments of the form ``$\WT_1 \subttt \WT_2$''
  to judgment of the same form, or ``$\true$,'' or ``$\false,$'' %
  as follows:
  \[
    \PhiApp{\WT_1 \subttt \WT_2} \;=\;
    \begin{cases}
      \WT \subttt \WT'
      & \text{if }
      \begin{cases}
        \WT_1 = \tin\pp{\ell}{\ST}.\WT \\
        \text{and } \WT_2 = \tin\pp{\ell}{\ST'}.\WT' \\
        \text{and } \ST' \subs \ST
      \end{cases}
      \\
      \WT \subttt \ACon{\pp}{\WT'}
      & \text{if }
      \begin{cases}
        \WT_1 = \tin\pp{\ell}{\ST}.\WT \\
        \text{and } \WT_2 = \ACon\pp{{\tin\pp{\ell}{\ST'}.\WT'}} \\
        \text{and } \ST' \subs \ST \\
        \text{and } \actions{\WT}=\actions{\ACon\pp{\WT'}}
      \end{cases}
      \\%
      \WT \subttt \WT'
      & \text{if }
      \begin{cases}
        \WT_1 =  \tout\pp{\ell}{\ST}.\WT \\
        \text{and } \WT_2 = \tout\pp{\ell}{\ST'}.\WT' \\
        \text{and } \ST \subs \ST'
      \end{cases}
      \\
      \WT \subttt \BCon\pp{\WT'}
      & \text{if }
      \begin{cases}
        \WT_1 = \tout\pp{\ell}{\ST}.\WT \\
        \text{and } \WT_2 = \BCon\pp {\tout\pp{\ell}{\ST'}.\WT'} \\
        \text{and } \ST \subs \ST' \\
        \text{and } \actions{\WT}=\actions{\BCon\pp{\WT'}}
      \end{cases}
      \\
      \true & \text{if $\WT_1 = \WT_2 = \tend$} \\
      \false & \text{in all other cases}
    \end{cases}
  \]
\end{definition}

\begin{definition}
\label{def:derivation}
\label{def:failing_derivation}
  Consider %
  a (possibly infinite) sequence $(J_i)_{i \in I}$, %
  where $I$ contains consecutive natural numbers starting with $1$. %
  We say that the sequence is a \emph{derivation of\, $\WT \subttt \WT'$} %
  iff it satisfies the following requirements:
  \begin{enumerate}
  \item\label{item:deriv:j}
    for all $i \in I$,
    $J_i$ is a judgement of the form ``$\WT_i \subttt \WT'_i$'';
  \item\label{item:deriv:first} $J_1$ is\; %
    ``$\WT \subttt \WT'$'';
  \item\label{item:deriv:m} for all $i \in I$,\; $i > 1$ \;implies\; %
    $J_i = \PhiApp{\WT_{i-1} \subttt \WT'_{i-1}}$.
  \end{enumerate}
  We say that a derivation $(J_i)_{i \in I}$ is \emph{successful} %
  iff it is infinite, %
  or if it has a finite length $n$ and $\PhiApp{J_n} = \true$.

  We say that a derivation $(J_i)_{i \in I}$ is \emph{failing} %
  iff it has a finite length $n$ and $\PhiApp{J_n} = \false$.
\end{definition}

\begin{proposition}
\label{lem:failing-deriv}
   The refinement\, $\WT \subttt \WT'$
   is derivable (resp.~\emph{not} derivable)
   with the rules of \Cref{def:ref} %
   iff there is a successful (resp.~failing) derivation %
   of\, $\WT \subttt \WT'$.
\end{proposition}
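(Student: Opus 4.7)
The plan is to exploit the fact that the rule function $\Phi$ from \Cref{def:rule_function} is \emph{deterministic}: for each judgement ``$\WT \subttt \WT'$'', at most one rule of \Cref{def:ref} can apply (the rules \rulename{ref-in} and \rulename{ref-$\AC$} are mutually exclusive because $\ACon{\pp}{}$ is required to be non-empty and begins with an input from a participant $\neq \pp$; similarly for \rulename{ref-out}/\rulename{ref-$\BC$}, and \rulename{ref-end} stands alone). Consequently, given the starting judgement $\WT \subttt \WT'$, the sequence $(J_i)_{i\in I}$ satisfying conditions~\ref{item:deriv:j}--\ref{item:deriv:m} of \Cref{def:derivation} is uniquely determined, and falls into exactly one of three disjoint cases: (a) it is infinite; (b) it is finite of length $n$ with $\PhiApp{J_n} = \true$; or (c) it is finite of length $n$ with $\PhiApp{J_n} = \false$. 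Cases (a) and (b) are precisely the successful derivations, and case (c) is exactly the failing derivations; so a derivation of $\WT \subttt \WT'$ always exists and is either successful or failing, but not both.

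From this observation, it suffices to prove one of the two equivalences in the statement — the other follows by contrapositive. I would prove the first one: \emph{$\WT \subttt \WT'$ is derivable via the rules of \Cref{def:ref} iff there is a successful derivation of $\WT \subttt \WT'$}. For the $(\Leftarrow)$ direction, suppose the unique derivation $(J_i)_{i \in I}$ starting from $\WT \subttt \WT'$ is successful. Define $\Rel = \{(\WT_i, \WT'_i) \mid i \in I, J_i = \WT_i \subttt \WT'_i\}$, together with all pairs $(\tend,\tend)$ if the derivation terminates with $\PhiApp{J_n} = \true$. By the case analysis built into $\Phi$, each pair $(\WT_i, \WT'_i) \in \Rel$ matches the shape of the conclusion of exactly one rule of \Cref{def:ref}, the side conditions on sorts and on $\actions{\cdot}$ hold by definition of $\Phi$, and the coinductive premise is the next pair $(\WT_{i+1}, \WT'_{i+1}) \in \Rel$ (or trivially satisfied by \rulename{ref-end} when $\PhiApp{J_i} = \true$). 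Hence $\Rel$ is backward-closed under the rules, so by the coinduction principle applied to \Cref{def:ref}, every pair in $\Rel$ — including $(\WT, \WT')$ — is in the refinement relation.

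For the $(\Rightarrow)$ direction, suppose $\WT \subttt \WT'$ is derivable. Then there is a (possibly infinite) coinductive proof tree whose root is labelled $\WT \subttt \WT'$. By the determinism argument above, the unique rule applicable at each node coincides with the case of $\Phi$, and the unique refinement premise at each node coincides with $\PhiApp{\cdot}$ (when the rule is not \rulename{ref-end}); when the rule is \rulename{ref-end}, $\Phi$ returns $\true$ and the branch terminates. Following this proof tree down its unique refinement spine yields exactly the derivation $(J_i)_{i \in I}$ prescribed by $\Phi$ starting from $\WT \subttt \WT'$; since the spine is either infinite or ends in \rulename{ref-end}, this derivation is successful. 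Finally, the second half of the statement (\emph{non-}derivable iff failing derivation) follows by contrapositive together with the mutual exclusion of cases (a)/(b) versus (c) noted at the outset.

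The only mild subtlety — and the main point to get right — is the argument that $\Phi$ really does faithfully encode the rules of \Cref{def:ref}, including all their side conditions; once this bookkeeping is done, the equivalence between successful/failing derivations and (non-)derivability is essentially a restatement of what a coinductive proof is in this setting.
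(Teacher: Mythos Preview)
Your proposal is correct and follows the same approach that the paper intends: the paper's own proof is the single line ``Straightforward by \Cref{def:derivation},'' and what you have written is precisely the unpacking of that remark --- you make explicit the determinism of $\Phi$ (hence uniqueness of the derivation sequence), the trichotomy successful/failing, and the standard coinductive argument identifying a successful sequence with a backward-closed relation. There is no genuine divergence in method; you simply spell out the bookkeeping that the paper leaves to the reader.
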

\begin{proof}
  Straightforward by \Cref{def:derivation}.
\end{proof}

\begin{corollary}
  \label{lem:failing-deriv:n-forms}%
  Assume that the sequence of judgements %
  $(J_1,\ldots,J_n)$ (for some $n \geq 1$) is a failing derivation
  of\, $\WT_1 \subttt \WT'_1$. 
  \;Then, %
  $J_n$ is a judgement of the form ``$\WT_n \subttt \WT'_n$''
  \;such that $\WT_n \nsubttt \WT'_n$ holds
  by at least one of the axioms in \Cref{tab:negationW}
  or \Cref{def:aux-ref}, \ie:
  \emph{%
    \rulename{n-out}, \rulename{n-inp}, \rulename{n-out-R}, \rulename{n-inp-R},
    \rulename{n-inp-$\ell$}, \rulename{n-inp-$\S$},
    \rulename{n-${\AC}$-$\ell$}, \rulename{n-${\AC}$-$\S$},
    \rulename{n-i-o-1}, \rulename{n-i-o-2},
    \rulename{n-out-$\ell$}, \rulename{n-out-$\S$},
    \rulename{n-${\BC}$-$\ell$}, \rulename{n-${\BC}$-$\S$},
    \rulename{n-${\AC}$-act}, \rulename{n-${\BC}$-act}.
  }%
\end{corollary}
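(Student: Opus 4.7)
}
By Definition~\ref{def:failing_derivation}, a failing derivation has finite length $n$, and $\PhiApp{J_n} = \false$. By item~\ref{item:deriv:j} of Definition~\ref{def:derivation}, $J_n$ has the form ``$\WT_n \subttt \WT'_n$.'' The plan is to perform an exhaustive case analysis on the shape of $\WT_n$ (which is a SISO tree, hence either $\tend$, an input prefix, or an output prefix), and for each case on the shape of $\WT'_n$, identify an axiom from Table~\ref{tab:negationW} or Definition~\ref{def:aux-ref} that derives $\WT_n \nsubttt \WT'_n$. The key observation driving the case analysis is that $\PhiApp{J_n} = \false$ excludes exactly the shapes of $\WT'_n$ in which $\Phi$ would return either $\true$ (both $\tend$) or a continuation judgement (matching the four clauses of Definition~\ref{def:rule_function}).

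First, I would dispatch the case $\WT_n = \tend$: here $\WT'_n \neq \tend$, so $\WT'_n$ has either an input or output prefix, and rules \rulename{n-inp-R} or \rulename{n-out-R} apply (using $\actions{\tend} = \emptyset$). Next, for $\WT_n = \tin\pp{\ell}{\ST}.\WT$, I would split on $\WT'_n$ as follows: if $\pp? \notin \actions{\WT'_n}$ apply \rulename{n-inp}; if $\WT'_n = \tin\pp{\ell'}{\ST'}.\WT'$ use \rulename{n-inp-$\ell$} or \rulename{n-inp-$\S$} (the case $\ell' = \ell$ and $\ST' \subs \ST$ is excluded since then $\Phi$ would return a judgement); if $\WT'_n$ has the form $\ACon\pp(\tin\pp{\ell'}{\ST'}.\WT')$ with nonempty $\ACon\pp$, use \rulename{n-${\AC}$-$\ell$}, \rulename{n-${\AC}$-$\S$}, or \rulename{n-${\AC}$-act} depending on which side condition of $\Phi$ fails; if $\WT'_n$ starts with an output, use \rulename{n-i-o-1}; and if $\WT'_n$ has the form $\ACon\pp\tout\pq{\ell'}{\ST'}.\WT'$ (a finite nonempty sequence of non-$\pp$ inputs followed by an output), use \rulename{n-i-o-2}. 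The case $\WT_n = \tout\pp{\ell}{\ST}.\WT$ is symmetric but simpler: the choices are $\pp! \notin \actions{\WT'_n}$ (apply \rulename{n-out}), $\WT'_n = \tout\pp{\ell'}{\ST'}.\WT'$ (rules \rulename{n-out-$\ell$} / \rulename{n-out-$\S$}), or $\WT'_n = \BCon\pp(\tout\pp{\ell'}{\ST'}.\WT')$ with nonempty $\BCon\pp$ (rules \rulename{n-${\BC}$-$\ell$} / \rulename{n-${\BC}$-$\S$} / \rulename{n-${\BC}$-act}).

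The main obstacle will be ensuring exhaustiveness in the subcase where $\WT_n$ begins with an input: we must carefully classify every infinite or finite SISO tree $\WT'_n$ according to whether its ``first non-$\pp$-input'' prefix (possibly empty) leads to an input from $\pp$, to an output, or continues forever without either. In the first sub-situation $\WT'_n$ matches an $\ACon\pp$-pattern leading to a $\pp$-input; in the second sub-situation it matches an $\ACon\pp$-pattern leading to an output (rules \rulename{n-i-o-1}/\rulename{n-i-o-2}); and in the third sub-situation $\pp?$ never appears at all, so \rulename{n-inp} suffices. Checking that these three possibilities, combined with the dichotomy between matching versus non-matching side-conditions on labels/sorts/action sets, cover precisely the cases in which $\Phi$ yields $\false$ is the technical heart of the argument, and it relies directly on Definition~\ref{def:rule_function}.
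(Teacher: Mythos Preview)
Your proposal is correct and follows essentially the same approach as the paper: both observe that $\PhiApp{J_n}=\false$ and then perform a case analysis on the shapes of $\WT_n,\WT'_n$ that falsify all the ``if\ldots'' clauses of \Cref{def:rule_function}, matching each to an axiom in \Cref{tab:negationW} or \Cref{def:aux-ref}. The paper's proof is a one-line summary of this inspection, while you spell out the exhaustive case split (including the careful treatment of the $\AContext\pp$/$\BContext\pp$ prefixes) in detail.
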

\begin{proof}
  By \Cref{lem:failing-deriv}, we have $\PhiApp{J_n} = \emptyset$. %
  Therefore, we obtain the thesis from \Cref{def:rule_function}, 
  by inspecting all cases that falsify all its ``if\ldots'' clauses:
  each case corresponds to either at least one axiom in \Cref{tab:negationW},
  or at least one additional axiom in \Cref{def:aux-ref}.
\end{proof}

\begin{proposition}
  \label{lem:failing-nsubttt-ind}
  Assume that $(J_1,\ldots,J_n)$ (for some $n \geq 1$) is a failing derivation
  of\, $\WT \subttt \WT'$. 
  Then, for all $i \in 1..n$, we have that\, $\WT_i \nsubttt \WT'_i$ %
  is derivable by the rules in \Cref{tab:negationW}.
\end{proposition}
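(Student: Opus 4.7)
The plan is to proceed by downward induction on $i$, from $i = n$ down to $i = 1$, relying on \Cref{lem:failing-deriv:n-forms} for the base case and on a case analysis of the rule function $\Phi$ (\Cref{def:rule_function}) for the inductive step.

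For the base case $i = n$, \Cref{lem:failing-deriv:n-forms} tells us that $\WT_n \nsubttt \WT'_n$ is derivable by one of the axioms in \Cref{tab:negationW} or one of the two auxiliary axioms in \Cref{def:aux-ref}. In the first sub-case we are immediately done. In the second sub-case, we invoke \Cref{lem:aux-ref-sound} to convert the derivation using \rulename{n-${\AC}$-act} or \rulename{n-${\BC}$-act} into one that uses only the rules of \Cref{tab:negationW}. This establishes the claim at position $n$.

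For the inductive step, assume $i < n$ and that $\WT_{i+1} \nsubttt \WT'_{i+1}$ is derivable using only the rules of \Cref{tab:negationW}. By \Cref{def:derivation}\ref{item:deriv:m} we have $J_{i+1} = \PhiApp{J_i}$, and since $J_{i+1}$ is a judgement of the form $\WT_{i+1} \subttt \WT'_{i+1}$ (and not $\true$ or $\false$), we inspect \Cref{def:rule_function} and split into four cases according to which clause of $\Phi$ applies to $J_i$:
\begin{itemize}
\item If $\WT_i = \tin\pp{\ell}{\ST}.\WT$ and $\WT'_i = \tin\pp{\ell}{\ST'}.\WT'$ with $\ST' \subs \ST$, and $(\WT_{i+1},\WT'_{i+1}) = (\WT,\WT')$, then by the induction hypothesis and rule \rulename{n-inp-$\WT$} we derive $\WT_i \nsubttt \WT'_i$.
\item If $\WT_i = \tin\pp{\ell}{\ST}.\WT$ and $\WT'_i = \ACon\pp{\tin\pp{\ell}{\ST'}.\WT'}$ with $\ST' \subs \ST$ and $\actions{\WT}=\actions{\ACon\pp{\WT'}}$, and $(\WT_{i+1},\WT'_{i+1}) = (\WT,\ACon{\pp}{\WT'})$, then by the induction hypothesis and rule \rulename{n-${\AC}$-$\WT$} we derive $\WT_i \nsubttt \WT'_i$.
\item The two dual output cases are handled symmetrically via \rulename{n-out-$\WT$} and \rulename{n-${\BC}$-$\WT$}.
\end{itemize}
In each case the side conditions demanded by the rule in \Cref{tab:negationW} (subsorting and, where relevant, equality of action sets) are exactly the conditions that caused the corresponding clause of $\Phi$ to fire, so they are automatically satisfied.

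The main obstacle I expect is the bookkeeping in the inductive step: one must verify that every side condition of the $\nsubttt$-rule being applied matches precisely the ``if'' conditions of the $\Phi$-clause that triggered the transition from $J_i$ to $J_{i+1}$, and that the shapes of $\WT_i, \WT'_i, \WT_{i+1}, \WT'_{i+1}$ line up syntactically. A secondary subtlety is in the base case: one must take care to cover \emph{all} ways in which each ``if'' clause of $\Phi$ can fail (e.g.~wrong participant, wrong label, incompatible sort, wrong action sets, or the left/right tree being of the wrong constructor shape), and confirm that each such failure mode is witnessed by some axiom of \Cref{tab:negationW} or \Cref{def:aux-ref} --- which is essentially the content of \Cref{lem:failing-deriv:n-forms}.
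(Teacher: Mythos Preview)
Your proposal is correct and follows essentially the same approach as the paper: downward induction with \Cref{lem:failing-deriv:n-forms} and \Cref{lem:aux-ref-sound} handling the base case, and a case split on the defining clauses of $\Phi$ in the inductive step, closing each case with \rulename{n-inp-$\WT$}, \rulename{n-${\AC}$-$\WT$}, \rulename{n-out-$\WT$}, or \rulename{n-${\BC}$-$\WT$}. One minor remark: the rules \rulename{n-${\AC}$-$\WT$} and \rulename{n-${\BC}$-$\WT$} do not themselves require the action-set equality as a premise, so in the inductive step you actually have \emph{more} than you need from the $\Phi$-clause --- but this only makes your argument easier, not harder.
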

\begin{proof}
  The proof is by induction on $n$, %
  decreasing toward $1$:
  \begin{itemize}
  \item base case $n$.\quad
    In this case, by \Cref{lem:failing-deriv:n-forms}, we have either:
    \begin{enumerate}[label=(\alph*)]
    \item%
      $\WT_n \nsubttt \WT'_n$ holds
      by some axiom in \Cref{tab:negationW},
      which is the thesis; or
    \item
      $\WT_n \nsubttt \WT'_n$ holds
      by some axiom in \Cref{def:aux-ref}.
      In this case, by \Cref{lem:aux-ref-sound},
      there is a derivation of $\WT_n \nsubttt \WT'_n$ 
      that only uses the rules in \Cref{tab:negationW}, which is the thesis;
    \end{enumerate}
  \item inductive case $i < n$.
    By \Cref{def:failing_derivation},
    we know that\; $J_{i+1} = \PhiApp{\WT_i \subttt \WT'_i} \neq \false$.
    \;Therefore, by \Cref{def:rule_function},
    we have one of the following possibilities:
    \begin{itemize}
      \item
        $\WT_i = \tin\pp{\ell}{\ST}.\WT$
        and $\WT'_i = \tin\pp{\ell}{\ST'}.\WT'$
        and $\ST' \subs \ST$.
        In this case, by \Cref{def:failing_derivation,def:rule_function},
        we also have $J_{i+1} = \PhiApp{\WT_i \subttt \WT'_i} =
        \text{``$\WT \subttt \WT'$''}$ ---
        hence, by the induction hypothesis,
        $\WT \nsubttt \WT'$ is derivable by the rules in \Cref{tab:negationW}.
        Thus, we conclude that
        $\WT_i \nsubttt \WT'_i$ is derivable by \rulename{n-inp-$\WT$},
        only using the rules in  \Cref{tab:negationW};
      \item
        $\WT_i = \tin\pp{\ell}{\ST}.\WT$
        and $\WT'_i = \ACon\pp{{\tin\pp{\ell}{\ST'}.\WT'}}$
        and $\ST' \subs \ST$
        and $\actions{\WT}=\actions{\ACon\pp{\WT'}}$.
        In this case, by \Cref{def:failing_derivation,def:rule_function},
        we also have $J_{i+1} = \PhiApp{\WT_i \subttt \WT'_i} =
        \text{``$\WT \subttt \ACon{\pp}{\WT'}$''}$ ---
        hence, by the induction hypothesis,
        $\WT \nsubttt \ACon{\pp}{\WT'}$
        is derivable by the rules in \Cref{tab:negationW}.
        Thus, we conclude that
        $\WT_i \nsubttt \WT'_i$ is derivable by \rulename{n-$\AC$-$\WT$},
        only using the rules in  \Cref{tab:negationW};
      \item
        $\WT_i = \tout\pp{\ell}{\ST}.\WT$
        and $\WT'_i = \tout\pp{\ell}{\ST'}.\WT'$
        and $\ST \subs \ST'$.
        In this case, by \Cref{def:failing_derivation,def:rule_function},
        we also have $J_{i+1} = \PhiApp{\WT_i \subttt \WT'_i} =
        \text{``$\WT \subttt \WT'$''}$ ---
        hence, by the induction hypothesis,
        $\WT \nsubttt \WT'$ is derivable by the rules in \Cref{tab:negationW}.
        Thus, we conclude that
        $\WT_i \nsubttt \WT'_i$ is derivable by \rulename{n-out-$\WT$},
        only using the rules in  \Cref{tab:negationW};
      \item
        $\WT_i = \tout\pp{\ell}{\ST}.\WT$
        and $\WT'_i = \BCon\pp{{\tout\pp{\ell}{\ST'}.\WT'}}$
        and $\ST \subs \ST'$
        and $\actions{\WT}=\actions{\ACon\pp{\WT'}}$.
        In this case, by \Cref{def:failing_derivation,def:rule_function},
        we also have $J_{i+1} = \PhiApp{\WT_i \subttt \WT'_i} =
        \text{``$\WT \subttt \BCon{\pp}{\WT'}$''}$ ---
        hence, by the induction hypothesis,
        $\WT \nsubttt \ACon{\pp}{\WT'}$
        is derivable by the rules in \Cref{tab:negationW}.
        Thus, we conclude that
        $\WT_i \nsubttt \WT'_i$ is derivable by \rulename{n-$\BC$-$\WT$},
        only using the rules in  \Cref{tab:negationW};
      \item
        $\WT_i = \WT'_i = \tend$. This case is impossible,
        as it would imply
        $J_{i+1} = \PhiApp{\WT_i \subttt \WT'_i} = \true$,
        hence $(J_1,\ldots,J_n)$ would \emph{not} be a failing derivation
        according to \Cref{def:failing_derivation},
        thus contradicting the hypothesis in the statement.
    \end{itemize}
  \end{itemize}
\end{proof}

\begin{proposition}
  \label{lem:nsubttt-failing-ind}
  Assume that $\WT \nsubttt \WT'$ holds, by the rules in \Cref{tab:negationW};
  also assume that its derivation has $n \geq 1$ rule applications,
  counting upwards %
  (\ie, the $1$\textsuperscript{st} rule concludes\, $\WT \nsubttt \WT'$,
  while the $n$\textsuperscript{th} rule is the axiom),
  with the $i$\textsuperscript{th} rule concluding\,
  $\WT_i \nsubttt \WT'_i$ (for $i \in 1..n$).
  Then,\; $\PhiApp{\WT_n \subttt \WT'_n} = \false$;
  \;moreover, for all $i \in 1..n{-}1$,
  we have either\; $\PhiApp{\WT_i \subttt \WT'_i} = \false$, \;or\;
  $\PhiApp{\WT_i \subttt \WT'_i} = \text{``$\WT_{i+1} \subttt \WT'_{i+1}$''}$.
\end{proposition}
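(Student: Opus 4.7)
The plan is to proceed by case analysis on the rule of \Cref{tab:negationW} applied at each step $i$ of the derivation, with no induction required since each clause of the statement is a local property of a single rule application. The rules split naturally into two groups: the fourteen axioms (\rulename{n-out}, \rulename{n-inp}, \rulename{n-out-R}, \rulename{n-inp-R}, \rulename{n-inp-$\ell$}, \rulename{n-inp-$\S$}, \rulename{n-out-$\ell$}, \rulename{n-out-$\S$}, \rulename{n-${\AC}$-$\ell$}, \rulename{n-${\AC}$-$\S$}, \rulename{n-${\BC}$-$\ell$}, \rulename{n-${\BC}$-$\S$}, \rulename{n-i-o-1}, \rulename{n-i-o-2}), which carry no $\nsubttt$ premise and hence can only be the topmost rule (step $n$), and the four inductive rules (\rulename{n-inp-$\WT$}, \rulename{n-${\AC}$-$\WT$}, \rulename{n-out-$\WT$}, \rulename{n-${\BC}$-$\WT$}), which each have one $\nsubttt$ premise and may appear at any intermediate step $i < n$.

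For the axiom case (step $n$), I would inspect each axiom in turn and confirm that the shape imposed on $\WT_n$ and $\WT'_n$ by the rule, together with its side conditions (mismatched labels, sorts not related by $\subs$, or actions disjoint from the expected participant), forbids matching any of the five ``if'' clauses of $\Phi$ in \Cref{def:rule_function}, so that $\Phi$ falls through to $\false$. A recurring subtle point is that the second and fourth clauses of $\Phi$ decompose the right-hand side via $\ACon\pp$ or $\BCon\pp$ contexts; I would invoke the grammar of these contexts --- namely that they are non-empty and begin with an action targeting a participant distinct from $\pp$ --- to exclude spurious alternative decompositions in axioms such as \rulename{n-inp-$\ell$}, \rulename{n-inp-$\S$}, \rulename{n-${\AC}$-$\ell$}, \rulename{n-${\AC}$-$\S$}, \rulename{n-i-o-1}, \rulename{n-i-o-2}, and their output-sided counterparts.

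For the inductive step ($i < n$), I would match each of the four inductive rules to the corresponding $\Phi$ clause. Rules \rulename{n-inp-$\WT$} and \rulename{n-out-$\WT$} give $\WT_i, \WT'_i$ of precisely the form required by the first and third clauses of $\Phi$, with sort premise ($\S' \subs \S$ or $\S \subs \S'$) coinciding with the side condition of the clause; hence $\Phi(\WT_i \subttt \WT'_i)$ returns the judgement $\WT_{i+1} \subttt \WT'_{i+1}$ read off from the $\nsubttt$ premise. Rules \rulename{n-${\AC}$-$\WT$} and \rulename{n-${\BC}$-$\WT$} aim at the second and fourth clauses of $\Phi$, except that these clauses carry an additional action-equality side condition (e.g.\ $\actions{\WT}=\actions{\ACon\pp{\WT'}}$) which is \emph{not} required by the rule itself: when this extra condition holds $\Phi$ returns the next judgement in the derivation, and when it fails $\Phi$ falls through to $\false$ (the remaining clauses of $\Phi$ are trivially excluded by the input/output polarity of $\WT_i$). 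These two possibilities are exactly the two disjuncts in the statement.

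The main obstacle is not conceptual but bookkeeping: there are many axioms to examine, and in several of them one must verify that the right-hand side $\WT'_n$ cannot be reparsed under an alternative decomposition (e.g.\ as $\tin\pp{\ell}{\S'}.\WT'$ and simultaneously as $\ACon\pp{\tin\pp{\ell}{\S'}.\WT'}$ with an ``empty'' $\ACon\pp$), which is where the non-emptiness of the $\ACon\pp$ and $\BCon\pp$ contexts becomes essential. Once this decomposition rigidity is established, each axiom and each inductive rule yields a one-line check of $\Phi$, and the two conclusions of the proposition follow immediately.
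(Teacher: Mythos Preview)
Your proposal is correct and follows essentially the same approach as the paper: a case analysis splitting the axioms (step $n$, yielding $\Phi=\false$) from the four inductive rules (steps $i<n$), with the crucial observation that \rulename{n-${\AC}$-$\WT$} and \rulename{n-${\BC}$-$\WT$} lack the action-equality side condition of $\Phi$'s second and fourth clauses, which is precisely what produces the two disjuncts in the statement. Your discussion of decomposition rigidity (non-emptiness of $\AContext\pp$, $\BContext\pp$ and the participant constraint) is a point the paper's proof leaves implicit but that is indeed needed to rule out alternative matches in $\Phi$; you are simply being more explicit than the paper about the bookkeeping.
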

\begin{proof}
    The judgement $\WT_n \nsubttt \WT'_n$ holds by at least one axiom among
    \rulename{n-out}, \rulename{n-inp}, \rulename{n-out-R}, \rulename{n-inp-R},
    \rulename{n-inp-$\ell$}, \rulename{n-inp-$\S$},
    \rulename{n-${\AC}$-$\ell$}, \rulename{n-${\AC}$-$\S$},
    \rulename{n-i-o-1}, \rulename{n-i-o-2},
    \rulename{n-out-$\ell$}, \rulename{n-out-$\S$},
    \rulename{n-${\BC}$-$\ell$}, \rulename{n-${\BC}$-$\S$}.
    We proceed by cases on such axioms, inspecting the corresponding shapes
    of $\WT_n$ and $\WT'_n$: in all cases we conclude %
    $\PhiApp{\WT_n \subttt \WT'_n} = \false$.
    For any other $m < n$, the judgement $\WT_m \nsubttt \WT'_m$
    holds by one of the inductive rules in \Cref{tab:negationW}.
    Hence, we have the following possibilities:
    \begin{itemize}
    \item%
      \rulename{n-inp-$\WT$}.\quad
      In this case, we have $\WT_m = \tin\pp{\ell}{\ST}.\WT_{m+1}$,
      \;$\WT'_m = \tin\pp{\ell}{\ST'}.\WT_{m+1}$,
      \;and\; $\WT_{m+1} \nsubttt \WT'_{m+1}$ \;and\; $\ST' \subttt \ST$.
      Therefore, by \Cref{def:rule_function} we conclude
      $\PhiApp{\WT_m \subttt \WT'_m} =
      \text{``$\WT_{m+1} \subttt \WT'_{m+1}$''}$;
    \item%
      \rulename{n-${\AC}$-$\WT$}.\quad
      In this case, we have $\WT_m = \tin\pp{\ell}{\ST}.\WT_*$,
      \;$\WT'_m = \ACon{\pp}{\tin\pp{\ell}{\ST'}.\WT'_*}$,
      \;and\; $\ST' \subttt \ST$,
      \;and\; $\WT_* \nsubttt \ACon{\pp}{\WT'_*}$.
      Observe that the latter is the $m+1$\textsuperscript{th}
      step of the derivation, \ie, we have\; $\WT_{m+1} \nsubttt \WT'_{m+1}$
      with\; $\WT_{m+1}=\WT_*$ \;and\; $\WT'_{m+1} = \ACon{\pp}{\WT'_*}$.
      \;Therefore, we have the following sub-cases:
      \begin{itemize}
      \item%
        $\actions{\WT_{m+1}} \neq \actions{\WT'_{m+1}}$.\quad
        In this case, by \Cref{def:rule_function} we conclude
        $\PhiApp{\WT_m \subttt \WT'_m} = \false$;
      \item%
        $\actions{\WT_{m+1}} = \actions{WT'_{m+1}}$.\quad
        In this case, 
        by \Cref{def:rule_function} we conclude
        $\PhiApp{\WT_m \subttt \WT'_m} =
        \text{``$\WT_{m+1} \subttt \WT'_{m+1}$''}$;
      \end{itemize}
    \item%
      \rulename{n-out-$\WT$}.\quad
      Similar to case \rulename{n-inp-$\WT$} above;
    \item%
      \rulename{n-${\BC}$-$\WT$}.\quad
      Similar to case \rulename{n-$\AC$-$\WT$} above.
    \end{itemize}
\end{proof}

\lemNegationW*
\begin{proof}
  ($\implies$) Assume that $\WT \subttt \WT'$ is \emph{not} derivable:
  then, by \Cref{lem:failing-deriv} and \Cref{def:failing_derivation},
  there is a failing derivation $(J_1,\ldots,J_n)$ (for some finite $n \geq 1$)
  where $J_1$ is the judgement\, ``$\WT \subttt \WT'$''. %
  Then, by \Cref{lem:failing-nsubttt-ind}, we obtain the thesis.%

  ($\impliedby$) Assume that $\WT \nsubttt \WT'$ holds,
  by the rules in \Cref{tab:negationW}, %
  with a derivation with $n \geq 1$ rule applications.
  Then, by \Cref{lem:nsubttt-failing-ind},
  there is a failing derivation for ``$\WT \nsubttt \WT'$''
  (of length $n$ or less),
  and by \Cref{lem:failing-deriv}, we conclude that\,
  $\WT \nsubttt \WT'$ is \emph{not} derivable.
\end{proof}

\subsection{Regular representatives for subtyping negation}\label{sec:regular_representatives_appendix}

In the sequel, we will always consider only regular representatives of SO and SI trees that appear in the definition of the negation of subtyping. Before we adopt that approach, we will prove that whenever there exist a pair of (possibly non-regular) representatives $\UT\in \llbracket \ttree{\T} \rrbracket_\SO$ and $\VT'\in\llbracket \ttree{\T'} \rrbracket_\SI$ with $\UT \nsubttt \VT',$ there is also a pair of regular representatives $\ttree{\UU_1}\in \llbracket \ttree{\T} \rrbracket_\SO$ and $\ttree{\V_1'}\in \llbracket \ttree{\T'} \rrbracket_\SI$ such that $\ttree{\UU_1}\nsubttt\ttree{\V_1'}.$

We start by proving that for each irregular tree $\UT \in  \llbracket \ttree{\T} \rrbracket_\SO$ there is a regular tree  $\UT_1 \in  \llbracket \ttree{\T} \rrbracket_\SO$ such that $\UT$ and $\UT_1$ overlap in at least top $n$ levels, for a given $n.$  For that purpose, we introduce two auxiliary functions, $\regU{\UT}{i}{\T}{\T'}$ and $\mufree{\T}.$
The function $\regU{\UT}{i}{\T}{\T'},$ with $\UT\in \llbracket \ttree{\T} \rrbracket_\SO,$ follows  in the tree of $\T$  the pattern determined by top  $i$ levels of $\UT$ and extracts (step by step) a type $\UU$  with the tree $\ttree{\UU}$ that follows the same pattern. The type $\UU$ might not be unique, but all such types have the same top $i$ levels. Each step of the procedure applies one of the three options that  are introduced and  clarified along the following lines.
\begin{itemize}
\item[(1)]
If $\T = \mu\ty.\T_1,$  then 
\begin{align*}
   \regU{\UT}{i}{\mu\ty.\T_1}{\T'} = & \mu\ty. \regU{\UT}{i}{\T_1}{\mu\ty.\T_1}, \text{ for every }i\geq 0;
\end{align*}
The function goes behind $\mu\ty$  and in the same time  the forth parameter keeps the information on the form of the $\mu$ type (it might be needed later on for unfolding).
\item[(2)]
If  $\T\neq \mu\ty.\T_1,$ for any $\ty$ and $\T_1,$ and $i>0$, then
{\small
 \begin{align*}
    \regU{\UT}{i}{\T}{\T'} = & \left\{
    \begin{array}{ll}
         \tend & ,\UT=\tend\\
         \tout\pp{\ell_j}{\S_j}. \regU{\UT_j}{i-1}{\T_j}{\T'} & ,\UT=\tout\pp{\ell_j}{\S_j}.\UT_j, j\in K,\\
                                                                                 & \T=\tinternal_{k\in K}\tout\pp{\ell_k}{\S_k}.{\T_k},\\
         \texternal_{k\in K} \tin\pp{\ell_k}{\S_k}. \regU{\UT_k}{i-1}{\T_k}{\T'} & ,\UT=\texternal_{k\in K}\tin\pp{\ell_k}{\S_k}.{\UT_k}, \\
                                                                    & \T=\texternal_{k\in K}\tin\pp{\ell_k}{\S_k}.{\T_k},\\
         \regU{\UT}{i}{\T_1}{\T'} &  ,\T=\ty, \T'=\mu\ty.\T_1 \\
    \end{array}
    \right.
 \end{align*}}
The function extracts from $\T$ the prefix for $\UU$ that will induce the same level in its tree as  $\UT.$ If $\T=\ty,$ it first applies unfolding, recovering the form  for substitution from the forth parameter.
 \item[(3)]
 If  $\T\neq \mu\ty.\T_1,$ for any $\ty$ and $\T_1,$ and $i=0$, then
\\
 $ \regU{\UT}{0}{\T}{\T'} =  \UU_1$  for some $\ttree{\UU_1}\in \llbracket \ttree{\T\subst{\T'''}{\ty_1}} \rrbracket_\SO,$ where $\T'=\mu\ty_1.\T''$ and $\T'''=\mu\ty_2.\T''\subst{\ty_2}{\ty_1}$ (we choose here a fresh name $\ty_2$). The choice of $\UU_1$ is not unique and we will always choose one that satisfies $\actions{\UT} \supseteq \actions{\ttree{\UU_1}}.$
\end{itemize}

One can notice that the previous procedure might  create some terms of the form $\mu\ty.\UU'$ with $\ty \not\in \UU'.$ These terms are cleaned up by
$ \mufree{\T}$, that is defined as follows:

 \[
   \mufree{\T}=\left\{
   \begin{array}{ll}
        \tend &  \T=\tend \\
        \tinternal_{k\in K}\tout\pp{\ell_k}{\S_k}.\mufree{\T_k} &  \T= \tinternal_{k\in K} \tout\pp{\ell_k}{\S_k}.\T_k \\
        \texternal_{k\in K}\tin\pp{\ell_k}{\S_k}.\mufree{\T_k}  &  \T= \texternal_{k\in K} \tin\pp{\ell_k}{\S_k}.\T_k \\
        \mu\ty.\T' & \T=\mu\ty.\T', \ty\in \T' \\
         \mufree{\T'} & \T=\mu\ty.\T', \ty \not \in \T' \\
   \end{array} \right.
\]

\begin{lemma}
\label{lem:regU}
  If $\UT\in \llbracket \ttree{\T} \rrbracket_\SO$  then there is $\UU_1$ such that $\ttree{\UU_1}\in   \llbracket \ttree{\T} \rrbracket_\SO$ and $\ttree{\UU_1}$ overlaps with $\UT$ at top $n$ levels.
 \end{lemma}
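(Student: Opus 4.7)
The plan is to prove the lemma by explicit construction, taking $\UU_1 = \mufree{\regU{\UT}{n}{\T}{\T}}$ and then verifying the three required properties: (i) $\UU_1$ is a well-formed (finite, regular) session type, (ii) $\ttree{\UU_1} \in \llbracket \ttree{\T} \rrbracket_\SO$, and (iii) $\ttree{\UU_1}$ and $\UT$ agree at the top $n$ levels. Property (iii) is the one that most directly motivates the construction and should follow by induction on $n$, combined with an inner case analysis on the shape of $\T$ (and, where relevant, on the top constructor of $\UT$), mirroring the three clauses in the definition of $\regU{\cdot}{\cdot}{\cdot}{\cdot}$.

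First I would handle the three clauses of $\regU{\cdot}{\cdot}{\cdot}{\cdot}$ one by one. Clause~(1) simply moves past a $\mu$-binder while saving its unfolded form in the fourth parameter, so no level of the tree is consumed and the inductive step is immediate. Clause~(2) consumes one level: when $\T$ is an internal choice, we copy the selection label that $\UT$ took (this is where the SO shape is essential, since $\UT$ commits to a single $j \in K$ at each selection); when $\T$ is an external choice, we copy the whole branching structure; when $\T = \ty$, we unfold using $\T'$ and recurse without decreasing $i$. In each subcase, the top-level prefix we produce is exactly the one that the corresponding level of $\UT$ displays, giving a one-level overlap plus an $(n-1)$-level overlap by the induction hypothesis. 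Clause~(3), the base case $i=0$, uses the general fact that every closed, contractive session type has at least one SO representative: picking any such $\UU_1$ with $\actions{\ttree{\UU_1}} \subseteq \actions{\UT}$ (which exists because $\UT$ itself witnesses such an SO decomposition of the unfolded type) yields a syntactic continuation.

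Next I would argue that $\ttree{\UU_1} \in \llbracket \ttree{\T} \rrbracket_\SO$. The recursion-unfolding in clause~(1) and the variable-unfolding in clause~(2) never alter the underlying tree, since $\ttree{\mu\ty.\T_1} = \ttree{\T_1\subst{\mu\ty.\T_1}{\ty}}$; the selection clause picks exactly one continuation of an internal choice, matching the definition of $\llbracket\cdot\rrbracket_\SO$ on page~\pageref{page:tree-decomp}; the branching clause preserves the whole external choice while descending into each continuation, again matching $\llbracket\cdot\rrbracket_\SO$; and clause~(3) returns an element of $\llbracket \ttree{\T\subst{\T'''}{\ty_1}}\rrbracket_\SO = \llbracket \ttree{\T}\rrbracket_\SO$ by construction. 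Composing these along the recursive calls shows that $\ttree{\UU_1}$ lies in $\llbracket\ttree{\T}\rrbracket_\SO$.

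The hard part, which I would treat last, is property (i): showing that $\mufree{\regU{\UT}{n}{\T}{\T}}$ is a genuinely regular, finite, contractive session type. The concern is that clause~(1) can introduce a $\mu\ty$ whose body, after clause~(2)'s unfolding of $\ty$, may drop all syntactic occurrences of $\ty$; this is precisely what $\mufree{\cdot}$ cleans up. I would show by a standard induction on the construction that $\regU{\UT}{n}{\T}{\T}$ produces a finite term whose size is bounded by $n$ plus the size of the regular representative chosen in clause~(3), and that $\mufree{\cdot}$ preserves tree semantics while restoring contractiveness (removing vacuous $\mu$-binders). The fact that we can always choose a regular representative in clause~(3) — and with the required action-inclusion property — is where the proof really bites; it relies on $\T$ being a syntactic (hence regular) session type with guarded recursion, so only finitely many distinct subterms arise under unfolding, and at each selection node $\UT$ dictates a choice that picks out one such subterm.
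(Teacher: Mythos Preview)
Your proposal is essentially the paper's own argument, just spelled out in more detail: the paper's proof consists of the single line ``choose $\UU_1=\mufree{\regU{\UT}{n}{\T}{\T}}$'' (after handling the trivial finite case $\UU_1=\UT$), relying on the preceding definitions of $\regU{\cdot}{\cdot}{\cdot}{\cdot}$ and $\mufree{\cdot}$ to carry the weight, while you unpack why that construction actually delivers properties (i)--(iii). The only cosmetic difference is that the paper first branches on whether $\UT$ is finite, whereas you apply the construction uniformly; both are fine.
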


\begin{proof} 
  If $\UT$ is finite, then we  choose $\UU_1=\UT.$
 If $\UT$ is infinite, then we choose $\UU_1=\mufree{\regU{\UT}{n}{\T}{\T}}.$ 
\end{proof}

In the following two examples we illustrate the procedure on some interesting cases.

\begin{example}
  Take $\T=\mu\ty_1.(\tout\pp{\ell_1}{\S_1}.\ty_1 \& \tout\pp{\ell_2}{\S_2}.\mu\ty_2.\tout\pp{\ell_3}{\S_3}.\ty_2)$ and choose $\UT\in \llbracket \ttree{\T} \rrbracket_\SO$ such that  $\UT=\tout\pp{\ell_1}{\S_1}.\tout\pp{\ell_2}{\S_2}.\tout\pp{\ell_3}{\S_3}\ldots$ We show here that the procedure introduced above gives a regular $\UU_1$ that overlaps with  $\UT$ (at least) in the top 3 levels and $\ttree{\UU_1}\in \llbracket \ttree{\T} \rrbracket_\SO$.
  \begin{align*}
      \UU'_1 = & \regU{\UT}{ 3}{\T}{\T} \\
               = & \mu\ty_1. \regU{\UT}{3}{\tout\pp{\ell_1}\ty_1 \& \tout\pp{\ell_2}{\S_2}.\mu\ty_2.\tout\pp{\ell_3}{\S_3}.\ty_2}{\T}\\
               = & \mu\ty_1.\tout\pp{\ell_1}{\S_1}. \regU{\tout\pp{\ell_2}{\S_2}.\tout\pp{\ell_3}{\S_3}\ldots}{2}{\ty_1}{\T}\\
               = & \mu\ty_1.\tout\pp{\ell_1}{\S_1}. \regU{\tout\pp{\ell_2}{\S_2}.\tout\pp{\ell_3}{\S_3}\ldots}{2}{\tout\pp{\ell_1}{\S_1}.\ty_1 \& \tout\pp{\ell_2}{\S_2}.\mu\ty_2.\tout\pp{\ell_3}{\S_3}.\ty_2}{\T}\\
               = & \mu\ty_1.\tout\pp{\ell_1}{\S_1}.\tout\pp{\ell_2}{\S_2}. \regU{\tout\pp{\ell_3}{\S_3}\ldots}{1}{\mu\ty_2.\tout\pp{\ell_3}{\S_3}.\ty_2}{\T}\\
               = & \mu\ty_1.\tout\pp{\ell_1}{\S_1}.\tout\pp{\ell_2}{\S_2}. \mu\ty_2.\regU{\tout\pp{\ell_3}{\S_3}\ldots}{1}{\tout\pp{\ell_3}{\S_3}.\ty_2}{\mu\ty_2.\tout\pp{\ell_3}{\S_3}.\ty_2}\\
               = & \mu\ty_1.\tout\pp{\ell_1}{\S_1}.\tout\pp{\ell_2}{\S_2}. \mu\ty_2.\tout\pp{\ell_3}{\S_3}.\regU{\tout\pp{\ell_3}{\S_3}\ldots}{0}{\ty_2}{\mu\ty_2.\tout\pp{\ell_3}{\S_3}.\ty_2}\\
               = & \mu\ty_1.\tout\pp{\ell_1}{\S_1}.\tout\pp{\ell_2}{\S_2}. \mu\ty_2.\tout\pp{\ell_3}{\S_3}.  \mu\ty_3.\tout\pp{\ell_3}{\S_3}. \ty_3.
  \end{align*}
  After erasure of the meaningless $\mu$ terms, we get
    \begin{align*}
      \UU_1 = & \mufree{\UU'} = \mufree{\mu\ty_1.\tout\pp{\ell_1}{\S_1}.\tout\pp{\ell_2}{\S_2}. \mu\ty_2.\tout\pp{\ell_3}{\S_3}. \mu\ty_3.\tout\pp{\ell_3}{\S_3}. \ty_3} \\
               = & \mufree{\tout\pp{\ell_1}{\S_1}.\tout\pp{\ell_2}{\S_2}. \mu\ty_2.\tout\pp{\ell_3}{\S_3}. \mu\ty_3.\tout\pp{\ell_3}{\S_3}. \ty_3}  \\
               = & \tout\pp{\ell_1}{\S_1}.\tout\pp{\ell_2}{\S_2}.  \mufree{\mu\ty_2.\tout\pp{\ell_3}{\S_3}. \mu\ty_3.\tout\pp{\ell_3}{\S_3}. \ty_3}  \\
               = & \tout\pp{\ell_1}{\S_1}.\tin\pp{\ell_2}{\S_2}. \tout\pp{\ell_3}{\S_3}. \mufree{\mu\ty_3.\tout\pp{\ell_3}{\S_3}. \ty_3}\\
               = & \tout\pp{\ell_1}{\S_1}.\tin\pp{\ell_2}{\S_2}. \tout\pp{\ell_3}{\S_3}. \mu\ty_3.\tout\pp{\ell_3}{\S_3}. \ty_3.
    \end{align*}
    
\end{example}

\begin{example}
  Take $\T=\mu\ty.(\tout\pp{\ell_1}{\S_1}.\tin\pp{\ell_4}{\S_4}.\ty \& \tout\pp{\ell_2}{\S_2}.\ty)$ and choose $\UT\in \llbracket \ttree{\T} \rrbracket_\SO$ such that  $\UT=\tout\pp{\ell_1}{\S_1}\ldots$ We consutruct here a regular $\UU_1$ that overlaps with  $\UT$ at the top level and $\ttree{\UT_1}\in \llbracket \ttree{\T} \rrbracket_\SO$.
  \begin{align*}
      \UU_1'  = & \regU{\UT}{ 1}{\T}{\T} \\
               = & \mu\ty. \regU{\UT}{1}{\tout\pp{\ell_1}{\S_1}.\tin\pp{\ell_4}{\S_4}.\ty \& \tout\pp{\ell_2}{\S_2}.\ty}{\T}\\
               = & \mu\ty. \tout\pp{\ell_1}{\S_1}. \regU{\UT}{0}{\tin\pp{\ell_4}{\S_4}.\ty}{\T} 
  \end{align*}
  We can now choose any $\UU_2$ such that  
  \begin{align*}
       \ttree{\UU_2} \in \llbracket \ttree{\tin\pp{\ell_4}{\S_4}.\mu\ty_1.(\tout\pp{\ell_1}{\S_1}.\tin\pp{\ell_4}{\S_4}.\ty \& \tout\pp{\ell_2}{\S_2}.\ty_1)} \rrbracket_\SO.
  \end{align*}
   For example, with
  $\UU_2=\tin\pp{\ell_4}{\S_4}.\mu\ty_2. \tout\pp{\ell_2}{\S_2}.\ty_2$ we get
    \begin{align*}
      \UU_1 = & \mufree{\mu\ty. \tout\pp{\ell_1}{\S_1}.\UU_2} \\
               = & \mufree{\mu\ty.\tout\pp{\ell_1}{\S_1}.\tin\pp{\ell_4}{\S_4}.\mu\ty_2. \tout\pp{\ell_2}{\S_2}.\ty_2} \\
               = & \tout\pp{\ell_1}{\S_1}.\tin\pp{\ell_4}{\S_4}.\mu\ty_2. \tout\pp{\ell_2}{\S_2}.\ty_2.
  \end{align*}
\end{example}

\begin{lemma}
\label{lem:regV}
  If $\VT'\in \llbracket \ttree{\T'} \rrbracket_\SO$  then there is $\V'_1$ such that $\ttree{\V'_1}\in   \llbracket \ttree{\T'} \rrbracket_\SI$ and $\ttree{\V'_1}$ overlaps with $\VT'$ at top $n$ levels.
 \end{lemma}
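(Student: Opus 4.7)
The plan is to mirror the construction used for Lemma~\ref{lem:regU}, since the statement is exactly the dual (with SI replacing SO, and with the roles of internal and external choices swapped). First I observe that the statement almost certainly contains a typographical slip: $\VT' \in \llbracket \ttree{\T'} \rrbracket_\SO$ should read $\VT' \in \llbracket \ttree{\T'} \rrbracket_\SI$, matching both the conclusion and the intended use of the lemma in the regular-representatives argument.

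The core of the proof will be to introduce a function $\text{reg}_{\SI}(\VT', i, \T', \T'')$ dual to $\regU{\UT}{i}{\T}{\T'}$, together with the same auxiliary cleanup $\mufree{\cdot}$. The three cases of the definition are the evident symmetric counterparts: (1) when $\T' = \mu\ty.\T'_1$, push $\mu\ty$ outside and recurse on $\T'_1$, recording $\mu\ty.\T'_1$ in the fourth parameter for later unfolding; (2) when $\T'$ is a head-normal non-$\mu$ type and $i>0$, match the head of $\VT'$ against $\T'$ keeping \emph{full selections} $\tinternal_{k\in K}\tout\pp{\ell_k}{\S_k}.\text{reg}_{\SI}(\VT'_k,i{-}1,\T'_k,\T'')$ and \emph{single branches} $\tin\pp{\ell_j}{\S_j}.\text{reg}_{\SI}(\VT'_j,i{-}1,\T'_j,\T'')$ with $j \in K$ when $\VT' = \tin\pp{\ell_j}{\S_j}.\VT'_j$, and unfold when $\T' = \ty$ using the fourth parameter; (3) when $i = 0$ and $\T'$ is head-normal non-$\mu$, pick any $\V_2$ with $\ttree{\V_2} \in \llbracket\ttree{\T'\subst{\T'''}{\ty_1}}\rrbracket_\SI$, preferring one with $\actions{\ttree{\V_2}} \subseteq \actions{\VT'}$ so as not to invent new actions.

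The proof then proceeds by a straightforward case distinction on whether $\VT'$ is finite or infinite. If $\VT'$ is finite, take $\V'_1 = \VT'$, which is trivially regular and overlaps with itself at every level. If $\VT'$ is infinite, set $\V'_1 = \mufree{\text{reg}_{\SI}(\VT', n, \T', \T')}$; by construction the result is a syntactic (hence regular) session type, its tree lies in $\llbracket \ttree{\T'} \rrbracket_\SI$ because each recursive step descends through $\ttree{\T'}$ following the decomposition rules for $\llbracket\cdot\rrbracket_\SI$, and it agrees with $\VT'$ at the top $n$ levels because the first $n$ recursive calls of $\text{reg}_{\SI}$ faithfully copy the head constructors of $\VT'$.

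The main technical obstacle will be case~(2) when $\T' = \ty$ and we must unfold using the stored $\T''$: we need to check that after unfolding, the decomposition invariant $\ttree{\V'_1} \in \llbracket\ttree{\T'}\rrbracket_\SI$ is preserved, which relies on the coincidence between the tree of $\mu\ty.\T'$ and the tree of its unfolding. A secondary subtlety is that $\mufree{\cdot}$ must be shown to preserve both the tree shape (up to the top $n$ levels) and the membership in $\llbracket\ttree{\T'}\rrbracket_\SI$; this is a routine induction on the derivation of $\text{reg}_{\SI}$, but it is the only place where the argument is not purely symmetric to Lemma~\ref{lem:regU}, since selection nodes in SI trees are retained with all branches while branching nodes keep only one.
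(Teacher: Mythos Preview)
Your proposal is correct and matches the paper's own proof, which simply states that ``the construction is analogous to the one from the previous lemma'' (i.e., Lemma~\ref{lem:regU}). Your explicit dualisation---defining $\text{reg}_{\SI}$ with full selections retained and single branches chosen, together with the same $\mufree{\cdot}$ cleanup and the finite/infinite case split---is exactly the intended symmetric construction, and your observation that the hypothesis should read $\VT' \in \llbracket \ttree{\T'} \rrbracket_\SI$ rather than $\llbracket \ttree{\T'} \rrbracket_\SO$ is a genuine typo in the statement.
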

 
 \begin{proof}
   The construction is analogous to the one from the previous lemma.
 \end{proof}

\begin{corollary}
\label{lem:reg-iregU}
   Let $\UT\in \llbracket \ttree{\T} \rrbracket_\SO$ and $\VT'\in \llbracket \ttree{\T'} \rrbracket_\SI$ be such that $\UT\not\subt \VT'.$
   Then, there are $\UU_1$ and $\V'_1$ such that $\ttree{\UU_1}\in \llbracket \ttree{\T} \rrbracket_\SO$ and $\ttree{\V_1'}\in \llbracket \ttree{\T'} \rrbracket_\SI$   
   and $\ttree{\UU_1} \not\subt \ttree{\V_1'}.$
\end{corollary}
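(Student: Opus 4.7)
Plan. The strategy combines Lemma~\ref{lem:subttt-negation}, which turns the coinductive $\not\subttt$ into an inductively (hence finitely) derivable $\nsubttt$, with Lemmas~\ref{lem:regU}--\ref{lem:regV}, which approximate arbitrary SO/SI decompositions by regular ones that agree with the original in the top $n$ levels, for any $n$. Since $\UT$ is SO and $\VT'$ is SI, Def.~\ref{def:subtyping} specialises $\UT \not\subt \VT'$ to the statement: for every pair $(\WT, \WT') \in \llbracket \UT \rrbracket_\SI \times \llbracket \VT' \rrbracket_\SO$, $\WT \nsubttt \WT'$ holds by a finite derivation using the rules of Table~\ref{tab:negationW}. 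The goal is then to replace $\UT, \VT'$ by regular $\UU_1, \V'_1$ of the same kind, for which every pair from their decompositions likewise fails.

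The crucial step is a compactness/König argument that produces a uniform depth bound $n_0$ on these finite witnesses. I would build a finitely-branching tree $\mathcal{F}$ whose level-$k$ nodes are the prefix-pairs obtained by resolving the first $k$ external choices in $\UT$ and internal choices in $\VT'$; branching is finite because each choice in a session type ranges over a finite label set. A node is \emph{closed at level $k$} when some rule application from Table~\ref{tab:negationW} already settles $\nsubttt$ for every extension of its prefix, using only the top $k$ levels of both trees. By hypothesis each full SISO pair is witnessed finitely, so every infinite branch of $\mathcal{F}$ is closed at some finite level; the open sub-tree therefore has finite branching and no infinite path, hence is finite by König's lemma, yielding a uniform depth $n_0$. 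Table~\ref{tab:negationW}'s axioms whose side-conditions inspect entire trees (such as $\pp!\not\in\actions{\WT'}$ in \rulename{n-out}, or the $\actions{\WT}\neq\actions{\ACon\pp{\WT'}}$ clauses of \rulename{n-$\AC$-$\WT$}/\rulename{n-$\BC$-$\WT$}) are reduced to finite-depth witnesses via Prop.~\ref{lem:aux-ref-sound}, which shows that any such discrepancy actually manifests at some finite depth through one of the prefix-mismatch axioms; that depth is absorbed into $n_0$.

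Given $n_0$, I would apply Lemma~\ref{lem:regU} with $n = n_0{+}1$ to obtain a regular $\UU_1$ with $\ttree{\UU_1} \in \llbracket \ttree{\T} \rrbracket_\SO$ overlapping $\UT$ in its top $n_0{+}1$ levels, and symmetrically $\V'_1$ via Lemma~\ref{lem:regV}. Since $\ttree{\UU_1}$ is SO and $\ttree{\V'_1}$ is SI, $\ttree{\UU_1} \not\subt \ttree{\V'_1}$ reduces (via Def.~\ref{def:subtyping}) to proving $\WT_1 \nsubttt \WT'_1$ for every $(\WT_1, \WT'_1) \in \llbracket \ttree{\UU_1} \rrbracket_\SI \times \llbracket \ttree{\V'_1} \rrbracket_\SO$. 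Each such pair agrees, in its top $n_0{+}1$ levels, with some pair $(\WT, \WT') \in \llbracket \UT \rrbracket_\SI \times \llbracket \VT' \rrbracket_\SO$ obtained by replaying the same choices inside $\UT$ and $\VT'$; the witness for $\WT \nsubttt \WT'$ has depth at most $n_0$, hence it transplants verbatim to $\WT_1 \nsubttt \WT'_1$. The main obstacle is handling Table~\ref{tab:negationW}'s global side-conditions uniformly at finite depth; this is precisely the role of Prop.~\ref{lem:aux-ref-sound}, and without it the König step would not go through, since a naive tree $\mathcal{F}$ whose closure depends on whole-tree properties need not be finite even when every branch is finitely witnessed.
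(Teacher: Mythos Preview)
Your overall strategy is reasonable, but it diverges from the paper's and carries a real gap.

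\textbf{What the paper does.} The paper does \emph{not} work at the SISO level at all. It invokes Lemma~\ref{lemm:negationUandV}: the negation $\UT\not\subt\VT'$ is derived \emph{once}, inductively, using the SO/SI-level rules of Table~\ref{tab:negationUandV'}. That single derivation has some finite height~$n$; the paper then reads off a bound $k$ on the depth of $\VT'$ touched by the derivation, applies Lemmas~\ref{lem:regU}/\ref{lem:regV} at depths $n$ and $k$, and replays the very same derivation for the approximants. No K\"onig argument is needed. The global side-conditions in the SO/SI axioms (e.g.\ $\pp!\not\in\actions{\VT'}$ in \rulename{n-UV-out-act}) transplant because the construction behind Lemmas~\ref{lem:regU}/\ref{lem:regV} guarantees $\actions{\ttree{\UU_1}}\subseteq\actions{\UT}$ and $\actions{\ttree{\V'_1}}\subseteq\actions{\VT'}$, and these conditions are stated on the whole SO/SI tree.

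\textbf{The gap in your argument.} Your appeal to Prop.~\ref{lem:aux-ref-sound} does not do what you claim. That proposition eliminates the auxiliary axioms \rulename{n-$\AC$-act}/\rulename{n-$\BC$-act} of Def.~\ref{def:aux-ref} in favour of the rules of Table~\ref{tab:negationW}; its proof explicitly says the resulting derivation may \emph{end in} \rulename{n-out} or \rulename{n-inp}. Those axioms still carry whole-tree side-conditions ($\pp!\not\in\actions{\WT'}$, etc.), so nothing has been ``reduced to finite-depth witnesses''. Consequently your transplant step can fail: if some $\WT\nsubttt\WT'$ is witnessed by \rulename{n-out} with $\pp!\not\in\actions{\WT'_m}$ at depth~$m$, and $(\WT_1,\WT'_1)$ agrees with $(\WT,\WT')$ only in the top $n_0$ levels, the tail of $\WT'_1$ (which lives in the glued regular continuation of $\V'_1$) may perfectly well contain $\pp!$. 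The action-subset property that the construction provides is at the SO/SI level, not for individual SISO paths, so it does not rescue this step. Also, note that \rulename{n-$\AC$-$\WT$}/\rulename{n-$\BC$-$\WT$} in Table~\ref{tab:negationW} carry no $\actions{\cdot}$ clause; you have conflated them with the auxiliary rules of Def.~\ref{def:aux-ref}.

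The cleanest fix is exactly what the paper does: use Lemma~\ref{lemm:negationUandV} to obtain one finite SO/SI-level derivation, where the global conditions are on $\UT,\VT'$ themselves and are preserved by the action-subset guarantee of the approximation.
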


\begin{proof}
If $\UT \not\subt \VT'$ was derived in $n$ steps $(n\geq 1)$, there  is $k$ such that  prefixes from  top $n$ levels of $\UT$ that appear in $\VT'$ are placed in the top $k$ levels of $\VT'$ (those that are considered for the negation derivation). By Lemma~\ref{lem:regU} and Lemma~\ref{lem:regV}, there are  $\UU_1$ and $\V_1'$ such that $\ttree{\UU_1} \in  \llbracket \ttree{\T} \rrbracket_\SO$ and $\ttree{\V_1'} \in  \llbracket \ttree{\T'} \rrbracket_\SI$  such that $\ttree{\UU_1}$ ovelaps with $\UT$ in top $n$ levels and $\ttree{\V_1}$ ovelaps with $\VT'$ in top $k$ levels. It can be derived in  $n$ steps that $\ttree{\UU_1}\not\subt\ttree{\V_1'}.$ 
\end{proof}

 \paragraph*{Step2: characteristic process}
The proof that characteristic process of $\UU$ is typable by $\ttree{\UU}$ is exactly the same as in the case of synchronous multiparty sessions (See \cite{GhilezanJPSY19}).
We consider only single-output processes and for such processes there is no difference in typing rules. The whole proof is replicated here, adapted to single-output processes.

\begin{lemma}[Strengthening]
  If $ \Theta, X:\UT'\vdash \PP: \UT$ and $X\not\in \fv\{\PP\}$ then $\Theta \vdash \PP: \UT.$
\end{lemma}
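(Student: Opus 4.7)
The plan is to proceed by straightforward induction on the derivation of $\Theta, X:\UT'\vdash \PP: \UT$, using the typing rules of \Cref{figure:typesystem}. The hypothesis $X \not\in \fv(\PP)$ propagates down to every sub-derivation because the set of free process variables of a subterm is contained in the set of free process variables of its parent (after suitable $\alpha$-renaming in the recursion case, which we may assume by Barendregt's convention so that bound recursion variables differ from $X$).

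First I would handle the axiom cases. In rule \rulename{t-$\inact$}, the conclusion $\Theta, X:\UT' \vdash \inact : \tend$ is independent of the entry $X:\UT'$, and applying \rulename{t-$\inact$} yields $\Theta \vdash \inact : \tend$. In rule \rulename{t-var}, the derived judgement has the form $\Theta, X:\UT' \vdash Y : \UT$ with $(Y{:}\UT) \in \Theta, X{:}\UT'$; since $Y \in \fv(Y)$ and $X \not\in \fv(Y)$, we must have $Y \neq X$, so in fact $(Y{:}\UT) \in \Theta$, and \rulename{t-var} applies directly to give $\Theta \vdash Y:\UT$.

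For the inductive cases, in each of \rulename{t-out}, \rulename{t-ext}, \rulename{t-cond}, and \rulename{t-rec}, the premises concern subprocesses $\PP_i$ of $\PP$, and by the syntactic notion of free variables we have $\fv(\PP_i) \subseteq \fv(\PP)$ (modulo $\alpha$-renaming in \rulename{t-rec}, where one may assume the bound variable $Y \neq X$); hence $X \not\in \fv(\PP_i)$. Applying the induction hypothesis to each premise removes $X:\UT'$ from its context, and reapplying the same rule rebuilds the derivation of $\Theta \vdash \PP : \UT$. The expression-typing premises in \rulename{t-out}, \rulename{t-cond} concern only value sort environments, and $X$ is a process variable, so removing $X:\UT'$ from the joint environment trivially preserves these premises. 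For \rulename{t-sub}, we invoke the induction hypothesis on the premise $\Theta, X:\UT' \vdash \PP : \UT''$ to get $\Theta \vdash \PP : \UT''$, and then reapply \rulename{t-sub} with the unchanged subtyping judgement $\UT'' \subt \UT$.

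There is no genuine obstacle here: the lemma is the standard strengthening property, and the only point requiring mild care is the \rulename{t-rec} case, where we must ensure the bound recursion variable is distinct from $X$ before invoking the induction hypothesis. This is handled by $\alpha$-conversion, which is admissible since process equivalence includes structural congruence on recursive terms.
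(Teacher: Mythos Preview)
Your proposal is correct and follows the standard approach: the paper itself states this lemma without proof (it is treated as a routine result), so your induction on the typing derivation, with the expected care in the \rulename{t-var} and \rulename{t-rec} cases, is exactly what is intended.
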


\begin{lemma}[Weakening]
  If $\Theta \vdash \PP:\UT$ and $X\not\in \dom\Theta$ then $\Theta ,X:\UT' \vdash \PP: \UT.$
\end{lemma}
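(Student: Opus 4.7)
The plan is to prove the Weakening lemma by structural induction on the derivation of $\Theta \vdash \PP : \UT$, proceeding by cases on the last typing rule applied (from \Cref{figure:typesystem}). In each case I would apply the induction hypothesis to every sub-derivation, extending every typing environment uniformly with $X:\UT'$, and then reapply the same typing rule to conclude. The hypothesis $X \notin \dom{\Theta}$ is used to guarantee that the extended environment $\Theta, X:\UT'$ remains well-formed, i.e.\ that no shadowing conflict arises with existing bindings; in the recursive cases one additionally alpha-renames bound recursion variables if necessary to preserve this freshness invariant when descending under binders.

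The cases split naturally: for \rulename{t-$\inact$} the result is immediate since $\inact$ is typable in any environment; for \rulename{t-var} the variable $Y:\T$ is still present in $\Theta, X:\UT'$ (and distinct from $X$ by the freshness assumption); for \rulename{t-rec} we use the induction hypothesis on $\Theta, Y:\T \vdash \PP : \T$ (alpha-renaming $Y$ if needed so that $Y \neq X$), yielding $\Theta, X:\UT', Y:\T \vdash \PP : \T$, and then reapply \rulename{t-rec}; for \rulename{t-ext} we weaken each branch (again alpha-renaming the bound payload variables $x_i$ if necessary so that they differ from $X$); the cases \rulename{t-out}, \rulename{t-cond} follow similarly, invoking the induction hypothesis on the continuation(s); and \rulename{t-sub} is immediate since the subtyping premise does not involve $\Theta$.

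For \rulename{t-out} and \rulename{t-cond} the premises include expression judgments of the form $\Theta \vdash \e : \ST$. These require a companion (trivial) weakening statement at the level of expressions, which is proved by a straightforward induction on the expression typing rules, and which holds unconditionally since expressions cannot mention recursion variables --- so adding a binding $X:\UT'$ to $\Theta$ cannot invalidate any expression judgment.

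This proof is entirely routine; I do not anticipate a genuine obstacle. The only point requiring a modicum of care is the handling of bound variables in \rulename{t-rec} and \rulename{t-ext}, where alpha-renaming under the Barendregt convention ensures that the freshness condition $X \notin \dom{\Theta}$ extends to the inductive calls. Since no new subtleties arise from the asynchronous or multiparty setting (Weakening is insensitive to the semantics of the types and the subtyping relation), the proof follows the same template as weakening in any standard session-type system.
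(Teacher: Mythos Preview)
Your proposal is correct and follows the standard approach. The paper itself does not provide a proof for this lemma, treating it as routine; your sketch by induction on the typing derivation with case analysis on the last rule applied is exactly the expected argument.
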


\begin{lemma}
\label{lem:tvs}
  If $\Theta, X_{\ty}:\UT_1 \vdash \CP{\UU}:\ttree{\UU\sigma}$ where $\UT_1 = \ttree{\UU_1}$ (for some $\UU_1$), 
  and  $ \sigma=\{\nicefrac{\Theta(X_{\ty'})}{\ty'} \;|\; \ty'\in \fv(\UU)\},$  then $\Theta, X_{\ty}:\UT_2 \vdash \CP{\UU}: \ttree{\UU\sigma'},$ 
  for any $\UT_2 = \ttree{\UU_2}$ (for some $\UU_2$) and  $\sigma'= (\sigma \setminus \subst{\UU_1}{\ty}) \cup \subst{\UU_2}{\ty}.$
\end{lemma}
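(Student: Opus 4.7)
The plan is to proceed by induction on the structure of $\UU$, exploiting the crucial observation that the characteristic process $\CP{\UU}$ is built syntactically from $\UU$, so its typing derivation mirrors the structure of $\UU$ and not that of $\UT_1$. The only places where the type assigned to $X_\ty$ in the environment directly matters are the leaves of the derivation corresponding to occurrences of $\ty$ in $\UU$, where rule \rulename{t-var} is applied; everywhere else the derivation is insensitive to the concrete choice of $\UT_1$.

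First I would handle the base cases. For $\UU = \tend$, we have $\CP{\tend} = \inact$ typed by $\tend$ via \rulename{t-$\inact$} regardless of the environment, and $\ttree{\tend\sigma} = \ttree{\tend\sigma'} = \tend$. For $\UU = \ty'$: if $\ty' = \ty$, then by the definition of $\sigma$ we have $\ttree{\UU\sigma} = \UT_1$, while $\CP{\UU} = X_\ty$ is typed by \rulename{t-var}; switching the assumption to $\UT_2$ types $X_\ty$ by $\UT_2 = \ttree{\UU\sigma'}$ as required. If $\ty' \neq \ty$, then $\sigma$ and $\sigma'$ agree on $\ty'$ (by construction of $\sigma'$), so both sides yield the same tree via the same \rulename{t-var} step on $X_{\ty'}$, whose assumption does not change.

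For the inductive cases $\UU = \tout\pq{\ell}{\ST}.\UU''$ and $\UU = \texternal_{i \in I} \tin\pp{\ell_i}{\ST_i}.\UU''_i$, the derivation of $\CP{\UU}$ ends in \rulename{t-out} or \rulename{t-ext} with subderivations for $\CP{\UU''}$ (resp.\ each $\CP{\UU''_i}$, wrapped by a \rulename{t-cond} for the boolean expression $\exprt{x_i}{\ST_i}$) at the corresponding continuation type $\ttree{\UU''\sigma}$ (resp.\ $\ttree{\UU''_i\sigma}$). I would apply the induction hypothesis to those subderivations to swap $\UT_1$ with $\UT_2$, noting that the sort-level judgement $\vdash \valt{\ST} : \ST$ and the boolean typing of $\exprt{x_i}{\ST_i}$ are independent of $X_\ty$. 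Since $\ttree{(\tout\pq{\ell}{\ST}.\UU''){\sigma'}} = \tout\pq{\ell}{\ST}.\ttree{\UU''\sigma'}$ and similarly for branching, reapplying the same typing rules yields the desired judgement.

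The main obstacle will be the recursion case $\UU = \mu\ty'.\UU''$, where the coinductive unfolding of trees interacts with the substitution. Here $\CP{\UU} = \mu X_{\ty'}.\CP{\UU''}$, and the derivation concludes with \rulename{t-rec} from the premise $\Theta, X_\ty{:}\UT_1, X_{\ty'}{:}\ttree{\UU\sigma} \vdash \CP{\UU''} : \ttree{\UU\sigma}$. Alpha-renaming so that $\ty' \neq \ty$ and $\ty' \notin \dom(\sigma)$, the tree $\ttree{\UU\sigma}$ coincides with $\ttree{\UU''\sigma_\bullet}$ where $\sigma_\bullet = \sigma \cup \{\ttree{\mu\ty'.\UU''\sigma}/\ty'\}$, which matches the extended environment. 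To apply the induction hypothesis cleanly I will strengthen the induction statement to allow simultaneously remapping several recursion-variable assumptions (since $\sigma_\bullet$ itself changes uniformly from $\UT_1$ to $\UT_2$); this strengthening is essentially cosmetic, because the construction of $\sigma_\bullet$ is identical in the two environments once we remap $\sigma$ to $\sigma'$. The induction hypothesis then gives the premise under $\Theta, X_\ty{:}\UT_2, X_{\ty'}{:}\ttree{\UU\sigma'}$, and a final application of \rulename{t-rec} concludes.
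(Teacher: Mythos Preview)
Your proposal is correct and follows exactly the approach the paper takes: the paper's proof is the single line ``By induction on the structure of $\UU$,'' and your detailed case analysis is a faithful expansion of that. Your handling of the recursion case---observing that swapping $X_\ty$ forces the $X_{\ty'}$ assumption to change in lockstep, and resolving this either by a mild strengthening or (equivalently) by two successive applications of the induction hypothesis---is the right way to make the one-line proof rigorous.
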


\begin{proof} By induction on the structure of $\UU$.
\end{proof}

\begin{lemma}
\label{lem:cp}
    For every (possibly open) type $\UU$,  there are $\Theta$ and $\sigma$ such that  $\mathrm{dom}(\Theta)= \{X_{\ty} \,\;|\;\, \ty\in \fv(\UU)\}$ 
    and  $\Theta \vdash \CP{\UU} : \ttree{\UU\sigma},$ where $\sigma$ is a substitution  such that 
    $\sigma=\{\nicefrac{\UU_{\!\ty}}{\ty} \;\;|\;\; \ty\in \fv(\UU) \text{ and } \Theta(X_{\ty}) = \ttree{\UU_{\!\ty}} \}.$
\end{lemma}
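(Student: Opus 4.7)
The plan is to proceed by structural induction on $\UU$, treating each production of the grammar of SO types (cf.\ \Cref{def:characteristic-process}) as a separate case. In each case I will construct the required $\Theta$ and $\sigma$ and verify the typing using the rules of \Cref{figure:typesystem}, appealing freely to Strengthening, Weakening, and \Cref{lem:tvs} (which lets us swap the type assigned to a recursion variable in $\Theta$).

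First I would dispatch the two easy leaves. For $\UU=\tend$, set $\Theta=\emptyset$ and $\sigma=\emptyset$: then $\CP{\UU}=\inact$ and $\ttree{\UU\sigma}=\tend$, so the judgment follows from \rulename{t-$\inact$}. For $\UU=\ty$, pick an arbitrary closed type $\UU_{\ty}$ and set $\Theta=\{X_{\ty}{:}\,\ttree{\UU_{\ty}}\}$ and $\sigma=\{\UU_{\ty}/\ty\}$; then $\CP{\ty}=X_{\ty}$ and $\ttree{\ty\sigma}=\ttree{\UU_{\ty}}$, and the judgment follows from \rulename{t-var}. For the output prefix $\UU=\pp!\ell(\ST).\UU'$, the induction hypothesis yields $\Theta$ and $\sigma$ with $\Theta\vdash\CP{\UU'}:\ttree{\UU'\sigma}$; a short case analysis on $\ST\in\{\tnat,\tint,\tbool,\tunit\}$ shows $\vdash\valt{\ST}:\ST$, so \rulename{t-out} delivers $\Theta\vdash\procout{\pp}{\ell}{\valt{\ST}}{\CP{\UU'}}:\tout{\pp}{\ell}{\ST}.\ttree{\UU'\sigma}=\ttree{\UU\sigma}$, as required.

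Next I would handle the branching case $\UU=\texternal_{i\in I}\pp?\ell_i(\ST_i).\UU_i$. By the induction hypothesis, for each $i\in I$ there exist $\Theta_i$ and $\sigma_i$ with $\Theta_i\vdash\CP{\UU_i}:\ttree{\UU_i\sigma_i}$. Since the free type variables of $\UU$ are shared across the branches, I would use \Cref{lem:tvs} to align all the $\Theta_i$ to a common $\Theta$ (with associated $\sigma$) assigning the same $\ttree{\UU_{\ty}}$ to each $X_{\ty}$ with $\ty\in\fv(\UU)$; Strengthening and Weakening let us trim and extend domains as necessary. For each $i$, I then need to type the body $\cond{\exprt{x_i}{\ST_i}}{\CP{\UU_i}}{\CP{\UU_i}}$ under $\Theta,x_i{:}\ST_i$: a case analysis on $\ST_i$ shows that $\exprt{x_i}{\ST_i}$ is a well-formed boolean expression (using rules for $\neg$, $>\zero$, $\approx()$), so \rulename{t-cond} applies with both branches typed at $\ttree{\UU_i\sigma}$ by the induction hypothesis plus Weakening. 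Finally \rulename{t-ext} combines the branches into $\Theta\vdash\CP{\UU}:\texternal_{i\in I}\tin\pp{\ell_i}{\ST_i}.\ttree{\UU_i\sigma}=\ttree{\UU\sigma}$.

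The main obstacle will be the recursion case $\UU=\mu\ty.\UU'$: here the substitution $\sigma$ constructed by the induction hypothesis on $\UU'$ must be reconciled with the fact that, in the conclusion, the variable $X_{\ty}$ should be typed by $\ttree{\mu\ty.\UU'\sigma}$ itself, so that \rulename{t-rec} can conclude $\Theta\vdash\mu X_{\ty}.\CP{\UU'}:\ttree{\mu\ty.\UU'\sigma}$. The idea is to apply the induction hypothesis to $\UU'$, obtaining some $\Theta',X_{\ty}{:}\UT_1\vdash\CP{\UU'}:\ttree{\UU'\sigma'}$, and then use \Cref{lem:tvs} to replace $\UT_1$ with $\ttree{\mu\ty.\UU'\sigma}$, updating $\sigma'$ accordingly to the substitution $\sigma$ that maps $\ty$ to the recursive type itself and all remaining free variables as before. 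This gives $\Theta',X_{\ty}{:}\ttree{\mu\ty.\UU'\sigma}\vdash\CP{\UU'}:\ttree{\UU'\sigma}$, and since $\ttree{\mu\ty.\UU'\sigma}\equiv\ttree{\UU'\sigma\{\mu\ty.\UU'\sigma/\ty\}}=\ttree{(\mu\ty.\UU')\sigma}$ by one unfolding, \rulename{t-rec} followed by Strengthening (to remove $X_{\ty}$ from the environment) yields the desired conclusion with $\Theta=\Theta'$ restricted to $\{X_{\ty'}\mid \ty'\in\fv(\mu\ty.\UU')\}$. Making the coherence between $\sigma$ and the tree-level unfolding precise — and verifying that $\ty\notin\fv(\mu\ty.\UU')$ so Strengthening applies — is the delicate point of the argument.
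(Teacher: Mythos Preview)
Your plan is correct and follows the paper's proof essentially verbatim: structural induction on $\UU$, with Weakening and \Cref{lem:tvs} to align the per-branch environments in the $\texternal$ case, and \Cref{lem:tvs} followed by \rulename{t-rec} (exploiting the tree-level unfolding $\ttree{\mu\ty.\T}=\ttree{\T\{\mu\ty.\T/\ty\}}$) in the recursion case. Two small remarks: the paper splits the $\mu$ case on whether $\ty\in\fv(\UU')$ (when $\ty\notin\fv(\UU')$ one must first weaken to introduce $X_\ty$ before \rulename{t-rec}), and your appeal to Strengthening after \rulename{t-rec} is unnecessary---\rulename{t-rec} already discharges $X_\ty$, leaving $\Theta'$ with exactly the required domain $\{X_{\ty'}\mid\ty'\in\fv(\mu\ty.\UU')\}$.
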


\begin{proof} By induction on the structure of $\UU$.
\begin{itemize}
\item $\UU \equiv \tend:$  $\CP\tend=\inact$ and, by \rulename{t-$\inact$}, $\vdash \CP\tend: \tend.$
 \item $\UU \equiv \ty:$ $\CP{\ty}=X_{\ty}$\\
 By \rulename{t-var}, $X_\ty:\ttree{\UU'} \vdash X_\ty: \ttree{\UU'}$ for any $\UU'.$ For $\sigma=\subst{\UU'}{\ty}$, we have
   \begin{equation*}
        X_\ty:\ttree{\UU'} \vdash \CP{\ty}: \ttree{\UU\sigma}.
   \end{equation*}
\item $\UU \equiv \texternal\limits_{i\in I} \pp?\ell_i(\S_i).\UU_i:$ $\CP{\UU}=\sum\limits_{i\in I} \procin  \pp {\ell_i(\x_i)}\cond{\exprt{\x_i}{\S_i}}{\CP{\UU_i}}{\CP{\UU_i}}$\\
    By the induction hypothesis, %
  $\Theta_i \vdash \CP{\UU_i}: \ttree{\UU_i \sigma_i},$ where $\sigma_i=\{\nicefrac{\UU_{\ty}}{\ty} \;|\;\ty\in \fv(\UU_i)$ and $\ttree{\UU_{\ty}} = \Theta_i(X_{\ty})\}$ for some $\Theta_i$, and every $i\in I$. 
  Let us denote by $\Theta$ the environment consisting of assignments $X_{\ty}:\ttree{\UU_{\ty}}$ for arbitrarily chosen $\UU_{\ty}$, where $X_{\ty}\in \dom {\Theta_i}$ for some $i\in I$.   By typing rules, $\Theta, \x_i:\S_i \vdash \exprt{\x_i}{\S_i}:{\tbool},$ for every $i\in I$. By Lemma\nobreakspace \ref {lem:tvs}, \rulename{t-cond} and weakening, for each $i \in I$, we have the judgements:
  \[
   \Theta,\x_i:\S_i \vdash  \cond{\exprt{\x_i}{\S_i}}{\CP{\UU_i}}{\CP{\UU_i}}: {\ttree{\UU_i\sigma_i'}}
 \]
 where
 $ \sigma_i'=
  \left\{\nicefrac{\UU_{\ty}}{\ty} \;\middle|\;
  \begin{array}{@{}l@{}}
    \ty\in \fv(\UU_i)  \text{ and\; } %
    \\%
   \ttree{\UU_{\ty}} \!=\! \Theta(X_\ty)
  \end{array}
  \right\}.%
  $
  Now, by \rulename{t-ext}, we have 
  \begin {equation*}
     \Theta\vdash \sum\limits_{i\in I} \procin  \pp {\ell(\x_i)}\cond{\exprt{\x_i}{\S_i}}{\CP{\UU_i}}{\CP{\UU_i}}: \texternal\limits_{i\in I} \pp?\ell(\S_i).\ttree{\UU_i\sigma_i'}.
   \end{equation*}
   We conclude this case by remarking that 
   $
     \texternal\limits_{i\in I} \pp?\ell(\S_i).\ttree{\UU_i\sigma_i'} = \ttree{ \UU \sigma} 
  $
    for 
    $\sigma=\cup_{i\in I} \sigma_i'=
     \{\nicefrac{\UU_{\ty}}{\ty} \;\;|\;\;
     \ty\in \fv(\UU) \text{ and }
     \ttree{\UU_\ty} = \Theta(X_{\ty})
     \}.
  $
 \\
\item $\UU\equiv  \pp!\ell(\S).\UU':$ $\CP{\UU}= \procout  \pp{\ell}{\valt{\S}}{\CP{\UU'}}$\\ 
  By the induction hypothesis,  $\Theta'\vdash \CP{\UU'}: \ttree{\UU' \sigma'},$ where $\sigma'=\{\nicefrac{\UU_{\ty}}{\ty} \;|\;\ty\in \fv(\UU')$  and $\ttree{\UU_{\ty}} = \Theta'(X_{\ty})\}$ for some $\Theta'$. Let us denote by $\Theta$ the environment consisting of assignments $X_{\ty}:\ttree{\UU_{\ty}}$ for arbitrarily chosen $\UU_{\ty}$, where $X_{\ty}\in \dom {\Theta'}$.

 By  \rulename{t-out},
   \[
       \Theta'\vdash {\pp!\ell(\valt{S}).\CP{\UU}}:{\pp!\ell(\S).\ttree{\UU'\sigma'}}.
    \]

\item $\UU \equiv \mu\ty.\UU':$ $\CP{\UU}=\mu X_{\ty}.\CP{\UU'}$
 \\
 By induction hypothesis, there is $\Theta'$ such that 
 \begin{equation*}
   \Theta' \vdash \CP{\UU'}: \ttree{\UU'\sigma'}
   \qquad\text{
     where 
     $\sigma'=\{\nicefrac{\UU_{\ty'}}{\ty'} \;|\;
     \ty'\in \fv(\UU')
     \text{ and }
     \ttree{\UU_{\ty'}} = \Theta'(X_{\ty'})
     \}$
   }
  \end{equation*}
 We have two cases:
 \begin{itemize}
   \item[(i)] $\ty \not\in \fv(\UU')$.\quad
     In this case, $X_{\ty}\not\in \fv(\CP{\UU'})$ and $\Theta'', X_{\ty}:\ttree{\UU''} \vdash \CP{\UU'} : \ttree{\UU'\sigma'}$, for some $\UU''$
     (either  $\Theta'=\Theta'', X_{\ty}:\ttree{\UU''}$ or it is obtained by weakening of $\Theta'$). By \rulename{t-rec}, we get $\Theta'' \vdash \mu X_{\ty}.\CP{\UU'}: \ttree{\UU'\sigma}$ with $\sigma'=\sigma.$  
   \item[(ii)] $\ty \in \fv(\UU')$.\quad%
     In this case, %
     $X_{\ty}\in \fv(\CP{\UU'})$ and $\Theta'=\Theta'', X_{\ty}:\ttree{\UU''}$ for some $\UU''.$ By Lemma\nobreakspace \ref {lem:tvs},
    \begin{equation*}
      \begin{array}{c}
        \Theta'',X_{\ty}:\ttree{\UU'\sigma'} \vdash \CP{\UU'}: \ttree{\UU'\sigma''}
        \\%
        \text{where}\quad%
        \sigma''\,=\,
        \left\{\nicefrac{\UU_{\ty'}}{\ty'}
        \;\;\middle|\;\;
        \ty'\in \fv(\UU') \setminus \{\ty\} \text{ and } \ttree{\UU_{\ty'}} = \Theta''(X_{\ty'})
        \right\}
        \cup \subst{\UU'\sigma'}{\ty}
      \end{array}
    \end{equation*}
    i.e., %
    the difference between $\sigma'$ and $\sigma''$  is that $\sigma''$ contains $\nicefrac{\UU'\sigma'}{\ty}$
    instead of $\nicefrac{\UU''}{\ty}$.   Then, by \rulename{t-rec}, we conclude 
    $
      \Theta'' \vdash \mu X_{\ty}.\CP{\UU'}: \ttree{\UU'\sigma}
    $ 
     where
     $\sigma\,=\,\{\nicefrac{\UU_{\ty}}{\ty} \;|\; \ty\in \fv(\UU)$
     and
     $\ttree{\UU_{\ty}} = \Gamma''(X_{\ty}) \} \,=\, \sigma'' \setminus \subst{\UU'\sigma'}{\ty}.$
 \end{itemize}
\end{itemize}
\end{proof}

\begin{lemma}\label{lem:sub:properties}
Let $\TT$ be a session type tree. Then
\begin{itemize}
  \item[(i)]  $\forall \UT \in \llbracket \TT\rrbracket_\SO$ $\UT\subt \TT$
  \item[(ii)] $\forall \VT' \in \llbracket \TT'\rrbracket_\SI$ $\TT'\subt \VT'$
\end{itemize}
\end{lemma}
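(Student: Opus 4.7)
The plan is to reduce both claims to the already-established Lemma~\ref{lem:for-reflexivity-of-subtyping} (which guarantees a common SISO witness in $\llbracket\UT\rrbracket_\SI \cap \llbracket\VT\rrbracket_\SO$ for any SO/SI pair extracted from the same tree) together with reflexivity of $\subttt$ (Lemma~\ref{lem:transitivityW}). The key observation is that when one of the two trees in a $\subt$ judgement is already single-output (resp.\ single-input), one of the universal quantifications in Definition~\ref{def:subtyping} collapses to a singleton, so the goal simplifies considerably.

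For part~(i), fix any $\UT\in\llbracket\TT\rrbracket_\SO$. By Definition~\ref{def:subtyping}, to show $\UT\subt\TT$ I need
\[
  \forall \UT_1\!\in\!\llbracket\UT\rrbracket_\SO\;\;\forall \VT\!\in\!\llbracket\TT\rrbracket_\SI\;\;
  \exists \WT\!\in\!\llbracket\UT_1\rrbracket_\SI\;\;\exists \WT'\!\in\!\llbracket\VT\rrbracket_\SO\quad
  \WT\subttt\WT'.
\]
Since $\UT$ is already an SO tree, $\llbracket\UT\rrbracket_\SO = \{\UT\}$, so the first quantifier degenerates. I then apply Lemma~\ref{lem:for-reflexivity-of-subtyping} to $\TT$, $\UT$, and any chosen $\VT\in\llbracket\TT\rrbracket_\SI$ to obtain a common witness $\WT\in\llbracket\UT\rrbracket_\SI \cap \llbracket\VT\rrbracket_\SO$. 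Taking $\WT' = \WT$, reflexivity of $\subttt$ gives $\WT\subttt\WT'$, as required.

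For part~(ii), fix any $\VT'\in\llbracket\TT'\rrbracket_\SI$. To show $\TT'\subt\VT'$, I must establish
\[
  \forall \UT\!\in\!\llbracket\TT'\rrbracket_\SO\;\;\forall \VT'_1\!\in\!\llbracket\VT'\rrbracket_\SI\;\;
  \exists \WT\!\in\!\llbracket\UT\rrbracket_\SI\;\;\exists \WT'\!\in\!\llbracket\VT'_1\rrbracket_\SO\quad
  \WT\subttt\WT'.
\]
This time $\VT'$ is already an SI tree, so $\llbracket\VT'\rrbracket_\SI = \{\VT'\}$ and the second quantifier collapses. Again I invoke Lemma~\ref{lem:for-reflexivity-of-subtyping} on $\TT'$ with the given $\UT\in\llbracket\TT'\rrbracket_\SO$ and $\VT'\in\llbracket\TT'\rrbracket_\SI$, yielding some $\WT\in\llbracket\UT\rrbracket_\SI\cap\llbracket\VT'\rrbracket_\SO$; choosing $\WT' = \WT$ and using reflexivity of $\subttt$ completes the argument.

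There is no serious obstacle here: the statement is essentially packaging Lemma~\ref{lem:for-reflexivity-of-subtyping} (which does the real work of coinductively building the common SISO path by resolving selections through $\UT$'s choices and external choices through $\VT$'s choices) into the shape used by the subtyping definition. The only subtle point worth double-checking is the collapse of $\llbracket\UT\rrbracket_\SO$ (resp.\ $\llbracket\VT'\rrbracket_\SI$) to a singleton: this holds directly from the decomposition equations on page~\pageref{page:tree-decomp}, since applying $\llbracket\cdot\rrbracket_\SO$ to a tree with singleton selections (an SO tree) leaves every internal-choice branch unchanged, and dually for $\llbracket\cdot\rrbracket_\SI$ on SI trees.
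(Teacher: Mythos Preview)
Your proof is correct and is essentially a careful unpacking of what the paper leaves implicit: the paper's proof is a single sentence (``Follows from the definition of decompositions''), and your argument via Lemma~\ref{lem:for-reflexivity-of-subtyping} and reflexivity of $\subttt$ is exactly how that sentence is to be understood (it is, in fact, the same mechanism the paper uses for reflexivity of $\subt$ in Lemma~\ref{lem:transitivity}). The singleton-collapse observations for $\llbracket\UT\rrbracket_\SO$ and $\llbracket\VT'\rrbracket_\SI$ are correct and make the reduction clean.
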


\begin{proof}
  Follows  from the definition of decompositions.
\end{proof}

\propCP*

\begin{proof} 
  As a  direct consequence of Lemma\nobreakspace \ref {lem:cp}
 we get $\vdash \CP{\UU}:\UU.$ 
 Since by Lemma~\ref{lem:sub:properties} for every $\UU$ with $\ttree{\UU}\in \llbracket \ttree{\T} \rrbracket_\SO,$ $\ttree{\UU} \subt \ttree{\T},$ by \rulename{t-sub}, $\vdash \CP{\UU}:\T.$
\end{proof}
 \paragraph*{Step3: characteristic session}

 \propCharacteristicSession*
 
 \begin{proof}
   Follows directly from the construction of the characteristic session types and the definition of the live typing environments.
 \end{proof}

\propCharacteristicSessionTi*

\begin{proof}
By Proposition~\ref{prop:live}, $\vdash \Q_1:\V'$ implies 
there is a live typing environment $\Gamma''$ such that 
  \[
       \Gamma'' \vdash \pa\pr \Q_1 \pc \pa\pr\EmptyQueue \pc \M_{\pr,\V'}
  \]
where $\Gamma''= \Gamma', \pr:(\emptyqueue, \V')$. 
By Lemma~\ref{lem:sub:properties} we have $\T' \subt \V'$. 
Since for $\Gamma= \Gamma', \pr:(\emptyqueue, \T')$ 
we have $\Gamma \subt \Gamma''$, by 
Lemma~\ref{lem:subtyping-preserves-liveness} we obtain that $\Gamma$ is also live. 
Hence, if $\vdash \Q:\T',$ then we may show 
  \[
       \Gamma \vdash \pa\pr \Q \pc \pa\pr\EmptyQueue \pc \M_{\pr,\V'}
  \]
and $\Gamma$ is live.
\end{proof}

 \paragraph*{Step4: completeness}

 \begin{table}[th!]
$
{\small
\begin{array}{@{}l@{}}
     
     \inferrule[\rulename{n-UV-out-act}]
  {\pp! \not\in \actions{\VT' }}{\tout\pp\ell\S.\UT \not\subt \VT'}
  \qquad
       \inferrule[\rulename{n-UV-inp-act}]
  {\pp? \not\in \actions{\VT' }}{\texternal_{i \in I}\tin\pp{\ell_i}{\S_i}.{\UT_i} \not\subt \VT'}
\qquad
       \inferrule[\rulename{n-UV-out-act-R}]
  {\pp! \not\in \actions{\UT}}{\UT \not\subt \tinternal_{i \in I}\tout\pp{\ell_i}{\S_i}.{\VT'_i}}
  \qquad
       \inferrule[\rulename{n-UV-inp-act-R}]
  {\pp? \not\in \actions{\UT}}{\UT \not\subt \tin\pp\ell\S.\VT'}
\\\\
\inferrule[\rulename{n-UV-inp}]
     {\forall i\in I : \ell_i\neq \ell \;\vee\; \ST \not\subs \ST_i \;\vee\; \UT_i \not\subt \VT'}
     {\texternal_{i \in I}\tin\pp{\ell_i}{\S_i}.{\UT_i} \not\subt\tin\pp{\ell}{\ST}.\VT'}
 \qquad
 \inferrule[\rulename{n-UV-${\AC}$}]
     {\forall i\in I : \ell_i\neq \ell \;\vee\; \ST \not\subs \ST_i \;\vee\; \UT_i \not\subt \ACon\pp\VT'}
     {\texternal_{i \in I}\tin\pp{\ell_i}{\S_i}.{\UT_i} \not\subt \ACon\pp{{\tin\pp{\ell}{\ST}.\VT'}}}
\\\\
  \inferrule[\rulename{n-UV-in-out-1}]
{ }{\texternal_{i \in I}\tin\pp{\ell_i}{\S_i}.{\UT_i} \not\subt \tinternal_{j \in J}\tout\pq{\ell_j}{\S_j}.{\VT'_j}}
  \qquad
  \inferrule[\rulename{n-UV-in-out-2}]
 { }{\texternal_{i \in I}\tin\pp{\ell_i}{\S_i}.{\UT_i} \not\subt \ACon\pp\tinternal_{j \in J}\tout\pq{\ell_j}{\S_j}.{\VT'_j}}
  \\\\
\inferrule[\rulename{n-UV-out}]
   {\forall i \in I : \ell\neq \ell_i \;\vee\; \ST \not\subs \ST_i \;\vee\; \UT \not\subt \VT'_i}
   {\tout\pp{\ell}{\ST}.\UT \not\subt \tinternal_{i \in I}\tout\pp{\ell_i}{\S_i}.{\VT'_i}}
\qquad
   \inferrule[\rulename{n-UV-$\mathcal{C}$}]
   {\forall n\in N \, \forall i \in I_n : \ell\neq\ell_i \;\vee\; \ST\not\subs\ST_i \;\vee\; \UT\not\subt (\proj{\CContext\pp}{n})[\VT_i']}
   {\tout\pp{\ell}{\ST}.\UT \not\subt \CContext\pp[\tinternal_{i\in I_n}\tout\pp{\ell_i}{\S_i}.{\VT'_i}]^{n\in N}}
\end{array}}$
\\
\caption{\label{tab:negationUandV'}Shapes of unrelated $\UT$ and $\VT'$ type trees}
\end{table}
  
In order to describe the shape of $\VT'$ type when $\UT\not\subt\VT'$ 
is derived using cases that involve context $\BContext\pp$ 
(for the corresponding projections that satisfy $\WT\nsubttt\WT'$) 
we define context $\CContext\pp$ with holes, that (as $\VT$) have only single inputs, 
and, in which there are no outputs on $\pp$. 

\[
\CContext\pp :: = 
[ \; ]^{n} \sep
 \tin\pq{\ell}{\ST}.\CContext\pp \sep \tinternal_{i\in I}\tout\pr{\ell_i}{\ST_i}.\CContext\pp \internal
\tinternal_{i\in I'} \tout\pr{\ell_i}{\ST_i}.\VT'_i
\qquad
\pr\not=\pp \text{ and } \pp !\not\in\actions{\VT'_i}
\]
Note that, in case of selection, context $\CContext\pp$ may have holes only in some branches 
while the rest of the branches contain no outputs on $\pp$. 
This allows us to determine all the ``first'' outputs appearing in some $\VT'$ type tree.
We write $\CContext\pp[\;]^{n\in N}$ to denote a context 
with  holes  indexed by elements of $N$
and $\CContext\pp[\VT'_n]^{n\in N}$ to denote the same context when the hole 
$[ \; ]^n$ has been filled with $\VT'_n$.
We index the holes in contexts in order to distinguish them. 
For the rest of the paper we assume $\CContext\pp$ are nonempty, i.e., 
it always holds that $\CContext\pp\neq [ \; ]$.

Furthermore, for a context $\CContext\pp[\;]^{n\in N}$ we may apply 
a mapping $\proj{\CContext\pp}{n}$ that  produces the projection of the context 
into the path that leads to the hole indexed with $n$, 
i.e., it produces the corresponding $\BContext\pp.[\;]^n$.
 
\begin{lemma}\label{lemm:negationUandV}
If $\neg(\UT\subt\VT')$ then $\UT\not\subt\VT'$ can be derived 
by the inductive rules given in Table~\ref{tab:negationUandV'}.
\end{lemma}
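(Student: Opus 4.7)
My plan is to reduce the lemma to \Cref{lem:subttt-negation} via the decomposition-based definition of $\subt$, and then to proceed by structural case analysis on the top-level shapes of $\UT$ and $\VT'$. First, since $\UT$ is an SO tree and $\VT'$ is an SI tree, we have $\llbracket \UT \rrbracket_\SO = \{\UT\}$ and $\llbracket \VT' \rrbracket_\SI = \{\VT'\}$, so by \Cref{def:subtyping} the hypothesis $\neg(\UT\subt\VT')$ is equivalent to:
\[
  \forall \WT \in \llbracket\UT\rrbracket_\SI \;\; \forall \WT' \in \llbracket\VT'\rrbracket_\SO \quad \WT \nsubttt \WT'
\]
where $\nsubttt$ is derivable via \Cref{tab:negationW} (\Cref{lem:subttt-negation}).

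Next, I would perform a case split on the outermost constructors of $\UT$ and $\VT'$. For the trivial case $\UT = \tend$ or $\VT' = \tend$, the result follows from one of the four action-mismatch rules $\rulename{n-UV-}\ast\rulename{-act(-R)}$. When $\UT = \tout\pp\ell\S.\UT_0$, we distinguish whether $\pp! \in \actions{\VT'}$: if not, $\rulename{n-UV-out-act}$ directly applies; otherwise, every SISO projection of $\VT'$ must traverse some prefix before reaching a selection on $\pp$, so $\VT'$ is captured by some context $\CContext\pp[\tinternal_{i\in I_n}\tout\pp{\ell_i}{\S_i}.\VT'_i]^{n\in N}$, and for every hole $n$ and every branch $i \in I_n$, each pair $\tout\pp\ell\S.\WT \subttt \text{proj}_n(\CContext\pp)[\tout\pp{\ell_i}{\S_i}.\WT'_i]$ fails, forcing (by inverting the $\subttt$-derivation, using the rules $\rulename{ref-out}$ and $\rulename{ref-$\BC$}$) one of the three mismatch conditions of $\rulename{n-UV-$\mathcal{C}$}$ (label, sort, or continuation); when the context is empty we recover $\rulename{n-UV-out}$. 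The dual analysis for $\UT = \texternal_{i\in I}\tin\pp{\ell_i}{\S_i}.\UT_i$ leads, depending on whether $\VT'$ begins with an input on $\pp$ (possibly behind a prefix $\ACon\pp{}$) or with a selection (ditto), to exactly one of the rules $\rulename{n-UV-inp-act}$, $\rulename{n-UV-inp}$, $\rulename{n-UV-$\AC$}$, $\rulename{n-UV-in-out-1}$, or $\rulename{n-UV-in-out-2}$.

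In each case, verifying the premises of the chosen rule amounts to showing that the universal failure of $\subttt$ on SISO decompositions transfers to each branch/continuation. For the branching rules $\rulename{n-UV-inp}$, $\rulename{n-UV-$\AC$}$, $\rulename{n-UV-out}$, and $\rulename{n-UV-$\mathcal{C}$}$, the premise is a disjunction over branches $i$ of either label-mismatch, sort-mismatch, or $\UT_i \not\subt \VT'_i$ (on sub-trees). The disjunction is justified by the fact that SISO decompositions factor through branch selection: if $\subttt$ fails uniformly, then for each branch independently either the label or sort mismatch is already visible at the top, or every pair of SISO projections of the sub-trees fails, which by the induction hypothesis yields $\UT_i\not\subt\VT'_i$ via the rules of \Cref{tab:negationUandV'} applied recursively on smaller syntactic trees (after regularisation, cf.\ \Cref{sec:regular_representatives_appendix}).

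The main obstacle I anticipate is the $\rulename{n-UV-$\mathcal{C}$}$ case: the context $\CContext\pp$ may embed several holes corresponding to distinct selection points on $\pp$, each reached by a distinct $\BContext\pp$-prefix in the SISO decomposition. I will need to argue that the failure of every SISO refinement $\tout\pp\ell\S.\WT \subttt \text{proj}_n(\CContext\pp)[\tout\pp{\ell_i}{\S_i}.\WT'_i]$ projects, for each pair $(n,i)$ independently, onto one of the three inductive failure conditions after stripping the $\BContext\pp$ prefix via $\rulename{ref-$\BC$}$. This requires a small auxiliary lemma along the lines of \Cref{lem:failing-nsubttt-ind}, stating that a failing $\subttt$-derivation between $\tout\pp\ell\S.\WT$ and $\BCon\pp{\tout\pp{\ell'}{\S'}.\WT''}$ must attribute the failure either to $\ell\neq\ell'$, $\S\not\subs\S'$, or to a recursive failure $\WT \not\subttt \BCon\pp{\WT''}$ — which, lifted to the $\SO/\SI$ level, gives the premises of $\rulename{n-UV-$\mathcal{C}$}$.
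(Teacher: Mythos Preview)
Your proposal is correct and follows essentially the same route as the paper: unfold $\neg(\UT\subt\VT')$ via \Cref{def:subtyping} into $\forall\WT\in\llbracket\UT\rrbracket_\SI\,\forall\WT'\in\llbracket\VT'\rrbracket_\SO:\WT\nsubttt\WT'$, then case-analyse to recover the rules of \Cref{tab:negationUandV'}. The only difference is organisational: the paper splits on \emph{which rule of \Cref{tab:negationW} derives $\WT\nsubttt\WT'$} and reads off the shapes of $\UT,\VT'$ from there, whereas you split directly on the top-level constructors of $\UT$ and $\VT'$; the two decompositions coincide case by case.

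One small correction: your appeal to an ``induction hypothesis on smaller syntactic trees (after regularisation)'' is misplaced. The trees $\UT,\VT'$ are coinductive and need not shrink; regularisation (\Cref{sec:regular_representatives_appendix}) plays no role in this lemma. The well-foundedness you need comes instead from the \emph{finite} derivations of $\WT\nsubttt\WT'$ in \Cref{tab:negationW}: when you pass from $\UT,\VT'$ to a premise like $\UT_i\not\subt\VT'_1$, every witnessing $\nsubttt$-derivation for the sub-pair is a proper sub-derivation of one for the original pair. The paper is equally terse about this point, but that is the correct measure for the recursion.
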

\begin{proof}
If $\neg(\UT\subt\VT')$ then $\forall\WT\in\llbracket \UT \rrbracket_\SI$ 
and $\forall\WT'\in\llbracket \VT' \rrbracket_\SO$ holds $\WT\nsubttt\WT'$.
Now the proof continues by case analysis of the last applied rules for $\WT\nsubttt\WT'$.
	\begin{itemize}
	\item Case \rulename{n-out}: 
	If $\forall\WT\in\llbracket \UT \rrbracket_\SI$ 
	and $\forall\WT'\in\llbracket \VT' \rrbracket_\SO$ $\WT\nsubttt\WT'$ is derived by \rulename{n-out}, 
	then from $\WT=\tout\pp\ell\ST.\WT_1$ we conclude $\UT=\tout\pp\ell\ST.\UT_1$, 
	by the definition of $\llbracket \; \rrbracket_\SI$. 
	Since $\forall\WT'\in\llbracket \VT' \rrbracket_\SO$ holds $\pp!\not\in\actions{\WT'}$, 
	we may conclude $\pp!\not\in\actions{\VT'}$ (by definition of $\llbracket \; \rrbracket_\SO$). 
	Hence, we get that $\UT$ and $\VT'$ satisfy the clauses of \rulename{n-UV-out-act}.
	\item Cases \rulename{n-inp}, \rulename{n-out-R} and \rulename{n-inp-R}: 
	By a similar reasoning as in the previous case we may show that 
	$\UT$ and $\VT'$ satisfy the clauses of \rulename{n-UV-inp-act}, 
	\rulename{u-UV-out-act-R} and  \rulename{n-UV-inp-act-R}, respectively.
	\item Cases \rulename{n-inp-$\ell$}, \rulename{n-inp-$\ST$} and \rulename{n-inp-$\WT$}: 
	Assume $\exists\WT\in\llbracket \UT \rrbracket_\SI$ 
	and $\exists\WT'\in\llbracket \VT' \rrbracket_\SO$ such that $\WT\nsubttt\WT'$ is derived by 
	\rulename{n-inp-$\ell$}, \rulename{n-inp-$\ST$} or \rulename{n-inp-$\WT$}. 
	Then, from $\WT= \tin\pp\ell\ST.\WT_1$ and definition of $\llbracket \; \rrbracket_\SI$ 
	we may conclude $\UT=\texternal_{i\in I}\tin\pp{\ell_i}{\ST_i}.\UT_i$. 
	Also, from $\WT'= \tin\pp{\ell'}{\ST'}.\WT'_1$ and definition of 
	$\llbracket \; \rrbracket_\SO$, we conclude $\VT'=\tin\pp{\ell'}{\ST'}.\VT'_1$. 
	Since $\forall\WT\in\llbracket \UT \rrbracket_\SI$ 
	and $\forall\WT'\in\llbracket \VT' \rrbracket_\SO$ holds $\WT\nsubttt\WT'$ 
	we distinguish three subcases:
		\begin{itemize}
		\item $\forall\WT\in\llbracket \UT \rrbracket_\SI$ $\forall\WT'\in\llbracket \VT' \rrbracket_\SO$  
		$\WT\nsubttt\WT'$ is derived by \rulename{n-inp-$\ell$};
		\item $\exists\WT\in\llbracket \UT \rrbracket_\SI$ $\exists\WT'\in\llbracket \VT' \rrbracket_\SO$  
		$\WT\nsubttt\WT'$ is derived by \rulename{n-inp-$\ST$}, but 
		could not be derived by \rulename{n-inp-$\ell$};
		\item $\exists\WT\in\llbracket \UT \rrbracket_\SI$ $\exists\WT'\in\llbracket \VT' \rrbracket_\SO$  
		$\WT\nsubttt\WT'$ is derived by \rulename{n-inp-$\WT$}, but could not be derived by 
		\rulename{n-inp-$\ell$} and \rulename{n-inp-$\WT$}.
		\end{itemize}
	In the first subcase we conclude that $\forall i \in I : \ell_i\neq \ell'$; 
	in the second 
	$\exists i \in I : \ell_i= \ell'$ but $\ST'\not\subs\ST_i$; 
	while in the third case 
	$\exists i \in I : \ell_i= \ell'$ and $\ST'\subs\ST_i$ but 
	$\forall\WT_1 \in \llbracket \UT_i \rrbracket_\SI  \forall \WT'_1\in \llbracket \VT'_1 \rrbracket_\SO$ 
	holds $\WT_1\nsubttt \WT'_1$. 
	Thus, we derived that $\forall i\in I : \ell_i\neq \ell' \;\vee\; \ST'\not\subs\ST_i 
	\;\vee\; \UT_i\not\subt\VT'$, which are the clauses of \rulename{n-UV-inp}.
	\item Cases \rulename{n-${\AC}$-$\ell$}, \rulename{n-${\AC}$-$\ST$} 
	and \rulename{n-${\AC}$-$\WT$}: 
	Follow by a similar reasoning as in the previous item, 
	only deriving that $\UT$ and $\VT'$ satisfy rule \rulename{n-UV-${\AC}$}.
	\item Cases \rulename{n-i-o-1} and \rulename{n-i-o-2}: 
	Assume $\exists\WT\in\llbracket \UT \rrbracket_\SI$ 
	and $\exists\WT'\in\llbracket \VT' \rrbracket_\SO$ such that $\WT\nsubttt\WT'$ 
	is derived by \rulename{n-i-o-1} or \rulename{n-i-o-2}. 
	Using the definition of $\llbracket \; \rrbracket_\SO$ and $\llbracket \; \rrbracket_\SI$ 
	we conclude $\UT= \texternal_{i\in  I} \tin\pp{\ell_i}{\ST_i}.\UT_i$ and, 
	$\VT'=\tinternal_{j\in J} \tout\pq{\ell_j}{\ST_j}.\VT'_j$ or 
	$\VT'=\AContext\pp.\tinternal_{j\in J} \tout\pq{\ell_j}{\ST_j}.\VT'_j$,
	i.e., $\UT$ and $\VT'$ satisfy \rulename{n-UV-in-out-1} or \rulename{n-UV-in-out-2}, respectively.
	\item Cases \rulename{n-out-$\ell$}, \rulename{n-out-$\ST$} and \rulename{n-out-$\WT$}: 
	Follow by a similar reasoning as in the item with \rulename{n-inp-$\ell$}, 
	\rulename{n-inp-$\ST$} and \rulename{n-inp-$\WT$}, 
	only deriving that $\UT$ and $\VT'$ satisfy rule \rulename{n-UV-out}.
	\item Cases \rulename{n-${\BC}$-$\ell$}, \rulename{n-${\BC}$-$\ST$} 
	and \rulename{n-${\BC}$-$\WT$}: 
	Assume $\exists\WT\in\llbracket \UT \rrbracket_\SI$ 
	and $\exists\WT'\in\llbracket \VT' \rrbracket_\SO$ such that $\WT\nsubttt\WT'$ 
	is derived by \rulename{n-${\BC}$-$\ell$}, \rulename{n-${\BC}$-$\ST$} 
	or \rulename{n-${\BC}$-$\WT$}. 
	Then, $\WT=\tout\pp\ell\ST.\WT_1$ and 
	$\WT'=\BContext\pp.\tout\pp{\ell'}{\ST'}.\WT'_1$. 
	By definition of $\llbracket \; \rrbracket_\SI$ we directly obtain 
	$\UT=\tout\pp\ell\ST.\UT_1$. 
	By definition of $\llbracket \; \rrbracket_\SO$ we obtain 
	$\pp!\in\actions{\VT'}$ and $\VT'\neq \tinternal_{i \in I} \tout\pp{\ell_i}{\ST_i}.\VT'_i$, 
	i.e., $\forall\WT'\in \llbracket \VT' \rrbracket_\SO$ either 
	$\WT'=\BContext\pp.\tout{\pp}{\ell''}{\ST''}.\WT''_1$ or $\pp!\not\in\actions{\WT'}$. 
	Thus, we may conclude 
	$\VT'=\CContext\pp[\tinternal_{i \in I_n}\tout\pp{\ell_i}{\ST_i}.\VT'_i]^{n\in N}$. 
	Furthermore, since $\forall\WT\in \llbracket \UT \rrbracket_\SI$  
	(that have form $\WT=\tout\pp\ell\ST.\WT_1$) and 
	$\forall\WT' \in \llbracket \VT' \rrbracket_\SO$  (that have form 
	$\WT'=(\proj{\CContext\pp}{n}) [\tout{\pp}{\ell_i}{\ST_i}.\WT'_i]$ 
	or $\pp !\notin\actions{\WT'}$) 
	hold $\WT\nsubttt\WT'$, we may distinguish three cases:
		\begin{itemize}
		\item $\forall\WT=\tout\pp\ell\ST.\WT_1$ 
		$\forall\WT'=(\proj{\CContext\pp}{n}) [\tout{\pp}{\ell_i}{\ST_i}.\WT'_i]$  
		$\WT\nsubttt\WT'$ is derived by \rulename{n-${\BC}$-$\ell$};
		\item $\exists\WT=\tout\pp\ell\ST.\WT_1$ 
		$\exists \WT'=(\proj{\CContext\pp}{n}) [\tout{\pp}{\ell_i}{\ST_i}.\WT'_i]$  
		$\WT\nsubttt\WT'$ is derived by \rulename{n-${\BC}$-$\ST$}, but could not 
		be derived by \rulename{n-${\BC}$-$\ell$};
		\item $\exists\WT=\tout\pp\ell\ST.\WT_1$ 
		$\exists \WT'=(\proj{\CContext\pp}{n}) [\tout{\pp}{\ell_i}{\ST_i}.\WT'_i]$  
		$\WT\nsubttt\WT'$ is derived by \rulename{n-${\BC}$-$\WT$}, but could not 
		be derived by \rulename{n-${\BC}$-$\ell$} or \rulename{n-${\BC}$-$\ST$}.
		\end{itemize}
	In the first case we get that $\forall n\in N\forall i \in I_n \; \ell\neq \ell_i$; 
	in the second that $\exists n \in N \exists i \in I_n$ such that $\ell=\ell_i$, 
	and $\ST\not\subs\ST_i$; and in the third that $\exists n \in N \exists i \in I_n$ 
	such that $\ell=\ell_i$ and $\ST\subs\ST_i$, but 
	$\WT\nsubttt(\proj{\CContext\pp}{n}) [\tout{\pp}{\ell_i}{\ST_i}.\WT'_i]$. 
	Notice that in the third case $\forall\WT\in \llbracket \tout\pp\ell\ST.\UT_1 \rrbracket_\SI$ 
	and $\forall \WT'\in \llbracket(\proj{\CContext\pp}{n}) [\tout{\pp}{\ell}{\ST_i}.\VT'_i]\rrbracket_\SO$, 
	where $\ST\subs\ST_i$, we have $\WT\nsubttt\WT'$ is derived by \rulename{n-${\BC}$-$\WT$}. 
	Then, we may conclude that $\forall\WT_1\in \llbracket \UT_1 \rrbracket_\SI$ 
	and $\forall \WT'_1\in \llbracket(\proj{\CContext\pp}{n}) [\VT'_i]\rrbracket_\SO$, 
	we have $\WT_1\nsubttt\WT'_1$, i.e., we obtain $\UT_1 \not\subt (\proj{\CContext\pp}{n}) [\VT'_i]$. 
	Therefore, we concluded that $\UT=\tout\pp\ell\ST.\UT_1$, and
	$\VT'=\CContext\pp[\tinternal_{i \in I_n}\tout\pp{\ell_i}{\ST_i}.\VT'_i]^{n\in N}$, 
	and that $\forall n\in N\forall i \in I_n : \ell\neq\ell_i \;\vee\; \ST\not\subs\ST_i 
	\;\vee\; \UT_1 \not\subt (\proj{\CContext\pp}{n}) [\VT'_i]$, that are the clauses of \rulename{n-UV-$\mathcal{C}$}.
\end{itemize}
\end{proof}

\begin{lemma}
\label{lemma:error:val}
  If $\S \not \subs \S'$ there is no $\val$ such that $ \eval {\exprt{\valt{S}}{S'}}\val.$
\end{lemma}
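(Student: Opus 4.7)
The plan is to proceed by a finite case analysis on the pairs $(\S, \S')$ of sorts such that $\S \not\subs \S'$. Since the sorts form the finite set $\{\tnat, \tint, \tbool, \tunit\}$ and subsorting is the reflexive closure of the single inclusion $\tnat \subs \tint$, the negation $\S \not\subs \S'$ holds for exactly the eleven explicitly enumerable pairs (all pairs of distinct sorts, except $(\tnat,\tint)$). For each such pair, I would compute $\valt{\S}$ and then the expression $\exprt{\valt{\S}}{\S'}$ according to Def.~\ref{def:characteristic-process}, obtaining a concrete closed expression built from $\fsucc{\cdot}$, $\finv{\cdot}$, $\neg$, $\cdot > \zero$, or $\cdot \approx ()$ applied to $\valt{\S}$.

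The key observation is that, in Table~\ref{tab:evaluation}, each operator has a very restricted domain: $\fsucc{\cdot}$ is defined only on natural numbers $\valn$; $\finv{\cdot}$ only on integers; $\neg$ only on booleans $\true/\false$; $\e > \zero$ only when $\e$ evaluates to an integer; and $\e \approx ()$ only when $\e$ evaluates to $()$. Moreover, the only way to build a derivation of $\eval{\Econtext(\e)}{\val'}$ is via the contextual rule, which requires the inner expression to evaluate first. Hence, for each of the eleven cases, the outermost operator of $\exprt{\valt{\S}}{\S'}$ is applied to a fully-reduced value $\valt{\S}$ whose sort lies outside the operator's domain, so no evaluation rule applies.

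Concretely, I would list each case and exhibit the clash, e.g.\ for $\S = \tint, \S' = \tnat$ we have $\exprt{\valt{\tint}}{\tnat} = \fsucc{(-1)} > \zero$ and no rule evaluates $\fsucc{(-1)}$ since the rule for $\fsucc{\cdot}$ requires a positive $\valn$; for $\S = \tbool, \S' = \tint$ we have $\finv{\true} > \zero$ and the rules for $\finv{\cdot}$ apply only to $\valn$, $-\valn$, or $\zero$; for $\S = \tunit, \S' = \tbool$ we have $\neg ()$ which matches no evaluation rule; and so on for the remaining cases. In each case, since the outer expression's evaluation depends (by the contextual rule) on evaluating the blocked subexpression, the whole expression $\exprt{\valt{\S}}{\S'}$ admits no derivation of $\eval{\exprt{\valt{\S}}{\S'}}{\val}$ for any $\val$.

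There is no real obstacle here: the lemma is essentially a sanity check that the characteristic process $\CP{\UU}$ genuinely detects a sort mismatch at runtime, which is precisely the mechanism used in Step~4 of the completeness proof (\Cref{prop:completeness-maintext}) to force the error reductions \rulename{err-eval} / \rulename{err-eval2}. The only routine care required is to make the case split on $(\S,\S')$ exhaustive and to cite Table~\ref{tab:evaluation} for the absence of applicable rules in each case.
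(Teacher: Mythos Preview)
Your proposal is correct and follows essentially the same approach as the paper's proof: a finite case analysis on the pairs $(\S,\S')$ with $\S \not\subs \S'$, computing the concrete expression $\exprt{\valt{\S}}{\S'}$ and observing that no evaluation rule in Table~\ref{tab:evaluation} applies. In fact you are slightly more thorough than the paper, which only spells out the five cases among $\tnat$, $\tint$, $\tbool$ and silently omits the six cases involving $\tunit$; your enumeration of all eleven pairs (and the handling of $\approx ()$) fills that gap.
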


\begin{proof}
   By case analysis, we consider expression ${\exprt{\valt{S}}{S'}}$:
   \begin{itemize}
      \item $\tint\not\subs \tnat:$ $\exprt{-1}{\tnat} = (\fsucc{-1}>0);$
       \item $\tbool\not\subs \tnat:$ $\exprt{\true}{\tnat} = (\fsucc{\true} >0);$
       \item $\tnat\not\subs \tbool:$ $\exprt{1}{\tbool} = \neg 1;$
       \item $\tint\not\subs \tbool:$ $\exprt{-1}{\tbool} = \neg(-1);$
       \item $\tbool\not\subs \tint:$ $\exprt{\true}{\tint} = (\finv{\true}>0).$
   \end{itemize}
   In each case, the expression is undefined since the following  expressions are undefined: $\fsucc{-1}, \fsucc{\true},  \neg 1,\neg(-1),  \finv{\true}.$
\end{proof}

\begin{lemma}
\label{lemma:true:val}
  If $\S  \subs \S'$ then $\eval {\exprt{\valt{S}}{S'}}\true$ or $\eval {\exprt{\valt{S}}{S'}}\false.$
\end{lemma}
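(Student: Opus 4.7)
The plan is to proceed by straightforward case analysis on the derivation of $\S \subs \S'$. Recall that, by \Cref{tab:sub2}, $\subs$ is defined as the least reflexive binary relation on sorts such that $\tnat \subs \tint$. Hence the only pairs we must consider are the four reflexive cases $\tnat\subs\tnat$, $\tint\subs\tint$, $\tbool\subs\tbool$, $\tunit\subs\tunit$, plus the single non-reflexive case $\tnat \subs \tint$.

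For each of these five cases, I would unfold the definitions of $\valt{\S}$ and $\exprt{\x}{\S'}$ from \Cref{def:characteristic-process}, substitute $\valt{\S}$ for $\x$, and apply the evaluation rules of \Cref{tab:evaluation} to compute the result. Concretely: for $\tnat\subs\tnat$, we get $\exprt{1}{\tnat} = (\fsucc{1}>\zero)$, which evaluates to $\true$ since $\fsucc{1} = 2$ is a positive integer; for $\tint\subs\tint$, we get $\exprt{-1}{\tint} = (\finv{-1}>\zero)$, which evaluates to $\true$ because $\finv{-1}=1$; for $\tbool\subs\tbool$, we get $\exprt{\true}{\tbool} = \neg\true$, which evaluates to $\false$; for $\tunit\subs\tunit$, we get $\exprt{()}{\tunit} = (()\approx())$, which evaluates to $\true$; and finally, for the non-reflexive case $\tnat\subs\tint$, we get $\exprt{1}{\tint} = (\finv{1}>\zero)$, which evaluates to $\false$ since $\finv{1}=-1$. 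In every case the expression is defined and evaluates to either $\true$ or $\false$, yielding the thesis.

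The proof is entirely routine: it only requires careful bookkeeping of the definitions of $\valt{\cdot}$ and $\exprt{\cdot}{\cdot}$, together with the evaluation rules for $\fsucc$, $\finv$, $\neg$, $>$, and $\approx$. There is no conceptual obstacle; the only thing worth noting is the contrast with the preceding \Cref{lemma:error:val}, whose cases are exactly the complementary pairs where $\S \not\subs \S'$, and where by design the characteristic expression gets stuck on an undefined operator application. Together, the two lemmas confirm that $\exprt{\valt{\S}}{\S'}$ acts as a sound-and-complete dynamic test of the subsorting relation, which is precisely the property used in the error-producing reductions driven by characteristic processes in Step~4 of the completeness proof.
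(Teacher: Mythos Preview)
Your proof is correct and follows the same approach as the paper, namely a direct case analysis on the pairs $\S \subs \S'$ followed by unfolding the definitions and applying the evaluation rules. In fact your argument is more thorough: the paper's proof only spells out the three cases $\tnat\subs\tnat$, $\tint\subs\tint$, and $\tnat\subs\tint$, omitting the (equally trivial) reflexive cases $\tbool\subs\tbool$ and $\tunit\subs\tunit$ that you correctly include.
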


\begin{proof}
    By case analysis:
    \begin{itemize}
       \item $\tnat \subs \tnat:$ $\exprt{1}{\tnat} = \eval{(\fsucc{1} >0)}{\true};$
       \item $\tint \subs \tint:$ $\exprt{-1}{\tint} = \eval{(\finv{-1}>0)}{\true};$
       \item $\tnat\subs\tint:$ $\exprt{1}{\tint} = \eval{(\finv{1}>0)}{\false}.$
    \end{itemize}
\end{proof}

\begin{lemma}\label{lemm:for-completeness-induction-on-A-context}
Let $\participant{\AContext\pp.\V_1'}\subseteq\{\pp_k:1\leq k\leq m\}$ and 
\[
\MM =  \pa\pr\PP \pc\pa\pr\EmptyQueue \pc 
              \prod\limits_{1\leq k \leq m} 
	             (\pa{\pp_k}\CP{\cyclic{\AContext\pp.\V_1'}{\pp_k}} \pc\pa{\pp_k}\h_k)
\]
Then, 
\[
\MM \red^* \pa\pr\PP \pc\pa\pr\EmptyQueue \pc 
             \prod\limits_{1\leq k \leq m} 
	             (\pa{\pp_k}\CP{\cyclic{\V_1'}{\pp_k}} \pc\pa{\pp_k}\h'_k)
\]
where for all $1\leq k \leq m$: 
$\h'_k\equiv \h_k\cdot(\pr,\ell_{1}(\valt{\S_{1}}))\cdot\ldots\cdot(\pr,\ell_{n}(\valt{\S_{n}}))$
when 
\[
\AContext\pp=\AContext{\pp_k}_1.\tin{\pp_k}{\ell_{1}}{\S_{1}}.\AContext{\pp_k}_2.\ldots\AContext{\pp_k}_n.\tin{\pp_k}{\ell_{n}}{\S_{n}}.\AContext{\pp_k}_{n+1}
\] 
where instead of $\AContext{\pp_k}_i$ contexts there could also be empty contexts, and if $\pp_k?\notin\actions{\AContext\pp}$ then $\h'_k\equiv\h_k$.
\end{lemma}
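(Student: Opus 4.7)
The plan is to prove this lemma by induction on the number of inputs in the prefix context $\AContext\pp$. Throughout, I will use the operational semantics of Table~\ref{tab:main:reduction}, together with the structure of the characteristic session types from Definition~\ref{def:characteristic-types} and the characteristic processes from Definition~\ref{def:characteristic-process}. The key observation is that $\AContext\pp$ only contains inputs from participants different from $\pp$, hence each prefix $\tin\pq{\ell}{\S}$ in $\AContext\pp$ corresponds, under projection, to a cyclic communication chain among the participants $\pp_1,\ldots,\pp_m$ (one of which is $\pq$).

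For the base case, take $\AContext\pp = \tin{\pp_k}{\ell}{\S}$ for some $k \in \{1,\ldots,m\}$. By Definition~\ref{def:characteristic-types}, the process $\CP{\cyclic{\tin{\pp_k}{\ell}{\S}.\V_1'}{\pp_k}}$ begins with $\procout\pr{\ell}{\valt\S}{\procout{\pp_{k+1}}{\ell}{\valt\tbool}{\procin{\pp_{k-1}}{\ell(x)}{\ldots}}}$, while for each $j \neq k$ the process $\CP{\cyclic{\tin{\pp_k}{\ell}{\S}.\V_1'}{\pp_j}}$ begins with $\procin{\pp_{j-1}}{\ell(x_j)}{\cond{\exprt{x_j}{\tbool}}{\procout{\pp_{j+1}}{\ell}{\valt\tbool}{\ldots}}{\ldots}}$. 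The plan is to schedule the following sequence of reductions: first, $\pp_k$ fires $\rulename{r-send}$ to enqueue $(\pr,\ell(\valt\S))$ into $\h_k$, then enqueues $(\pp_{k+1},\ell(\valt\tbool))$. Next, $\pp_{k+1}$ consumes the message via $\rulename{r-rcv}$, evaluates the conditional to $\true$ by Lemma~\ref{lemma:true:val} (since $\tbool \subs \tbool$), enqueues the forwarded $\ell$ to $\pp_{k+2}$, and so on around the cycle. Finally, $\pp_k$ consumes the message from $\pp_{k-1}$ via $\rulename{r-rcv}$ and both branches of the subsequent conditional evaluate to $\CP{\cyclic{\V_1'}{\pp_k}}$. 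The net effect is that every $\pp_j$ moves to $\CP{\cyclic{\V_1'}{\pp_j}}$, the queue $\h_k$ gets exactly one extra message $(\pr,\ell(\valt\S))$, and all other queues remain unchanged; this matches the claim, since in this case $\pp_j? \notin \actions{\AContext\pp}$ for $j \neq k$, so $\h'_j \equiv \h_j$.

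For the inductive step, write $\AContext\pp = \tin\pq{\ell}{\S}.\AContext\pp^{\mathit{rest}}$ with $\pq = \pp_k$. Applying the base-case reasoning above to the first prefix $\tin{\pp_k}{\ell}{\S}$, we obtain
\[
\MM \;\red^*\; \pa\pr\PP \pc \pa\pr\EmptyQueue \pc \prod_{1 \le j \le m}\bigl(\pa{\pp_j}{\CP{\cyclic{\AContext\pp^{\mathit{rest}}.\V_1'}{\pp_j}}} \pc \pa{\pp_j}{\h_j''}\bigr)
\]
where $\h_k'' \equiv \h_k \cdot (\pr,\ell(\valt\S))$ and $\h_j'' \equiv \h_j$ for $j \neq k$. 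We then invoke the induction hypothesis on $\AContext\pp^{\mathit{rest}}$, with the new queues $\h_j''$ as starting queues, to reduce further to the prescribed final session. Concatenating the appended outputs from the base case with those from the inductive step yields precisely the queues $\h'_j$ described in the statement: each input from $\pp_j$ in $\AContext\pp$ contributes one message $(\pr,\ell_i(\valt{\S_i}))$ to $\h'_j$ in left-to-right order, and participants never targeted by an input in $\AContext\pp$ retain their original queue.

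The main obstacle will be bookkeeping: verifying that the schedule of $\rulename{r-send}$, $\rulename{r-rcv}$, $\rulename{r-cond-T}$, and $\rulename{r-struct}$ steps is actually enabled at each point, given the FIFO constraints on queues and the sequential form of each characteristic process. In particular, one must check that at each cyclic transfer from $\pp_i$ to $\pp_{i+1}$ the relevant message sits at the head of $\pp_i$'s queue (up to $\equiv$ using the commutativity rule for messages with different recipients, which is why outputs to $\pr$ do not block the transfers), and that the conditional guards evaluate correctly by Lemma~\ref{lemma:true:val}. Once these scheduling details are established, the induction goes through routinely.
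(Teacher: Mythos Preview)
Your proposal is correct and follows essentially the same approach as the paper: both proofs proceed by induction on the structure of $\AContext\pp$, peel off the leading input $\tin{\pp_k}{\ell}{\S}$, let $\pp_k$ enqueue the message to $\pr$ and then run the cyclic communication among $\pp_1,\ldots,\pp_m$, and finally apply the induction hypothesis to the shorter context. One small slip: the conditional $\exprt{\valt{\tbool}}{\tbool} = \neg\true$ actually evaluates to $\false$, not $\true$, but since both branches of the characteristic process are identical this does not affect the argument (as you yourself note later).
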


\begin{proof}
By induction on context $\AContext\pp$. 
We detail only the case of inductive step. 
Without loss of generality assume $\AContext\pp=\tin{\pp_1}{\ell}{\S}.\AContext\pp_1$, 
and let $\pp_1=\pp_{m+1}$. 
Then, 
\begin{align*}
\MM = & \pa\pr{\PP} \pc\pa\pr\EmptyQueue \pc \\
      & \pa{\pp_1} \CP{\tout{\pr}{\ell}{\S}.\tout{\pp_{2}}{\ell}{\tbool}.\tin{\pp_{m}}{\ell}{\tbool}.\cyclic{\AContext\pp_1.\V_1'}{\pp_1}} \pc\pa{\pp_1}\h_1 \pc \\
      & \prod\limits_{2\leq k \leq m} 
	             (\pa{\pp_k}\CP{\tin{\pp_{k-1}}{\ell}{\tbool}.\tout{\pp_{k+1}}{\ell}{\tbool}.\cyclic{\AContext\pp_1.\V_1'}{\pp_k}} \pc\pa{\pp_k}\h_k)\\
\red^*& \pa\pr{\PP} \pc\pa\pr\EmptyQueue \pc \\
      & \pa{\pp_1} \CP{\cyclic{\AContext\pp_1.\V_1'}{\pp_1}} \pc\pa{\pp_1}\h''_1  \pc \\
      & \prod\limits_{2\leq k \leq m} 
	             (\pa{\pp_k}\CP{\cyclic{\AContext\pp_1.\V_1'}{\pp_k}} \pc\pa{\pp_k}\h''_k)\\
	=: \MM_1
\end{align*}
where $\h''_1=\h_1\cdot (\pr,\ell(\valt{\S}))$ and $\h''_k=\h_k$, and where in $\MM$ first participant $\pp_1$ enqueues the message and then all participants (except $\pr$) participate in the cycle. Applying induction hypothesis on $\MM_1$ we obtain 
\[
\MM_1 \red^* \pa\pr{\PP} \pc\pa\pr\EmptyQueue \pc    
            \prod\limits_{1\leq k \leq m} 
	             (\pa{\pp_k}\CP{\cyclic{\V_1'}{\pp_k}} \pc\pa{\pp_k}\h'_k)
\]
where for all $1\leq k \leq m$: 
$\h'_k\equiv \h''_k\cdot(\pr,\ell_{1}(\valt{\S_{1}}))\cdot\ldots\cdot(\pr,\ell_{n}(\valt{\S_{n}}))$
when 
\[
\AContext\pp_1=\AContext{\pp_k}_1.\tin{\pp_k}{\ell_{1}}{\S_{1}}.\AContext{\pp_k}_2.\ldots\AContext{\pp_k}_n.\tin{\pp_k}{\ell_{n}}{\S_{n}}.\AContext{\pp_k}_{n+1}
\] 
where instead of $\AContext{\pp_k}_i$ contexts there could also be empty contexts, and if $\pp_k?\notin\actions{\AContext\pp_1}$ then $\h'_k\equiv\h''_k$. 
Since $\AContext\pp=\tin{\pp_1}{\ell}{\S}.\AContext\pp_1$ and 
$\h'_1\equiv \h_1\cdot (\pr,\ell(\valt{\S}))\cdot(\pr,\ell_{1}(\valt{\S_{1}}))\cdot\ldots\cdot(\pr,\ell_{n}(\valt{\S_{n}}))$, or $\h'_1\equiv\h_1\cdot (\pr,\ell(\valt{\S}))$ in case $\pp_1?\notin\actions{\AContext\pp_1}$, we may conclude the proof.

\end{proof}

\begin{proposition}
\label{thm:completenessUandV'}
  Let $\UU$ and $\V'$ be session types such that  $\ttree{\UU} \not \subt \ttree{\V'}$ and $\participant{\V'}\subseteq\{\pp_k:1\leq k\leq m\}$  and $\UU_{\pp_k}=\cyclic{\V'}{\pp_k}$.
  If
   \begin{eqnarray*}
        \MM & \equiv & \pa\pr \CP{\UU} \pc  \pa\pr \EmptyQueue \pc \prod\limits_{1\leq k \leq m} \left(\pa{\pp_k} \CP{\UU_{\pp_k}}\pc \pa{\pp_k} \EmptyQueue \right)\\
        \MM' & \equiv & \pa\pr \PP \pc  \pa\pr \Queue_\pp \pc \prod\limits_{1\leq k \leq m} \left( \pa{\pp_k} \PP_k \pc \pa{\pp_k} \Queue_{k} \right),
 \end{eqnarray*}
 where $\pa\pr \CP{\UU} \pc  \pa\pr \EmptyQueue \red^*\pa\pr \PP \pc  \pa\pr \Queue_\pp$ and  
 \[
 \prod\limits_{1\leq k \leq m} \left(\pa{\pp_k} \CP{\UU_{\pp_k}}\pc \pa{\pp_k} \EmptyQueue \right) 
 \red^* 
 \prod\limits_{1\leq k \leq m} \left( \pa{\pp_k} \PP_k \pc \pa{\pp_k} \Queue_{k} \right)
 \] 
 then,  $\MM\red^*\MM'$ and $\MM'\red^*\error.$
\end{proposition}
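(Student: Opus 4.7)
The claim $\MM \red^* \MM'$ follows immediately from the two reduction hypotheses: the reductions in the subsystem $\pa\pr \CP{\UU} \pc \pa\pr \EmptyQueue$ only involve $\pr$'s local conditionals, unfoldings and output enqueuing, so they commute with the reductions in $\prod_k (\pa{\pp_k} \CP{\UU_{\pp_k}} \pc \pa{\pp_k}\EmptyQueue)$, and a standard interleaving argument plus rule $\rulename{r-struct}$ yields $\MM \red^* \MM'$. The non-trivial content is therefore $\MM' \red^* \error$. The plan is to prove this by induction on the derivation of $\ttree{\UU} \not\subt \ttree{\V'}$ built via Lemma~\ref{lemm:negationUandV} using the rules in Table~\ref{tab:negationUandV'}, generalising over all $\MM'$ reachable from $\MM$ in the prescribed way (so that the inductive hypothesis is applicable at inner positions).

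First I would dispatch the \emph{axiomatic} cases, each of which yields an error without requiring further structural reduction. For $\rulename{n\text{-}UV\text{-}out\text{-}act}$ and $\rulename{n\text{-}UV\text{-}inp\text{-}act\text{-}R}$, $\pr$'s characteristic process eventually enqueues an output to some $\pp$ that never appears as an action in $\V'$, so after driving the cyclic communications of $\MM'$ forward by Lemma~\ref{lemm:for-completeness-induction-on-A-context}, rule $\rulename{err\text{-}ophn}$ fires. Symmetrically $\rulename{n\text{-}UV\text{-}inp\text{-}act}$ and $\rulename{n\text{-}UV\text{-}out\text{-}act\text{-}R}$ lead to $\rulename{err\text{-}strv}$ or $\rulename{err\text{-}dlock}$. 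The cases $\rulename{n\text{-}UV\text{-}in\text{-}out\text{-}1}$ and $\rulename{n\text{-}UV\text{-}in\text{-}out\text{-}2}$ are similar: $\pr$ blocks on an input from $\pp$, while the characteristic session of $\V'$ expects $\pr$ to produce an output first (possibly after some anticipatable inputs), which never arrives, producing $\rulename{err\text{-}dlock}$ after the cyclic propagation is completed.

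Next I would handle the \emph{inductive} cases $\rulename{n\text{-}UV\text{-}inp}$, $\rulename{n\text{-}UV\text{-}out}$, $\rulename{n\text{-}UV\text{-}\AC}$ and $\rulename{n\text{-}UV\text{-}\mathcal C}$. In each, the top prefix of $\UU$ matches some prefix in $\V'$ syntactically (modulo an $\AContext\pp$ or $\CContext\pp$ context), but either the labels disagree, the sorts disagree, or the continuations fail the subtyping. Label mismatches trigger $\rulename{err\text{-}mism}$ directly. Sort mismatches are handled by Lemma~\ref{lemma:error:val}: the characteristic process of the recipient will evaluate $\exprt{x}{\ST_i}$ on the value $\valt{\ST}$ sent by $\CP{\UU}$, and since $\ST \not\subs \ST_i$, the expression is undefined, so $\rulename{err\text{-}eval}$ (or $\rulename{err\text{-}eval2}$) applies. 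Continuation mismatches are resolved by first reducing $\MM'$ further: fire the selected communication between $\pr$ and the matching participant, then use Lemma~\ref{lemm:for-completeness-induction-on-A-context} to drive the cyclic messages generated by $\AContext\pp$ or by the branches of $\CContext\pp$ not taken, reaching a configuration of the form required by the inductive hypothesis over the smaller relation $\UU_i \not\subt (\proj{\CContext\pp}{n})[\VT'_i]$ (or similar); a direct application of the IH then yields $\red^* \error$.

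The main obstacle will be case $\rulename{n\text{-}UV\text{-}\mathcal C}$, where the matching output on the right-hand side is buried under a context $\CContext\pp$ with multiple indexed holes and nested internal choices that do not contain $\pp!$-actions. The difficulty is two-fold: establishing that, independently of which branch of $\CContext\pp$ is internally chosen by the characteristic session, the cyclic communications propagate consistently so that exactly one hole $[\,]^n$ is reached; and showing that the projected continuation $(\proj{\CContext\pp}{n})[\VT'_i]$ is again of the form required to re-apply the whole construction (in particular, that the reduced configuration still looks like the $\MM'$ in the statement, with $\pr$'s characteristic process paired with the characteristic session of a smaller SI type). I expect this to require a subsidiary lemma analogous to Lemma~\ref{lemm:for-completeness-induction-on-A-context} but for $\CContext\pp$, letting us reduce the ``don't-care'' branches to queued tokens that do not interfere with the continuation, so that structural congruence exposes an instance of the IH.
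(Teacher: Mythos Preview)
Your proposal is correct and follows essentially the same strategy as the paper: induction on the derivation of $\ttree{\UU}\not\subt\ttree{\V'}$ via the rules of Table~\ref{tab:negationUandV'}, with the axiomatic cases discharged by the appropriate error rules and the inductive cases split into the label/sort/continuation trichotomy using Lemma~\ref{lemma:error:val} and the IH, with Lemma~\ref{lemm:for-completeness-induction-on-A-context} driving the cyclic communications in the $\AContext\pp$ cases. For the hard case $\rulename{n\text{-}UV\text{-}\mathcal C}$ the paper does exactly what you anticipate, only inline rather than via a separate lemma: it performs an inner induction on $\CContext\pp$ and, to recover an IH instance from the partially reduced session, synthesises auxiliary contexts $\BContext{\pr}$ and $\AContext{\pr}$ absorbing the already-fired prefixes so that $\BContext{\pr}.\UT'_1 \not\subt \AContext{\pr}.\VT'_i$ is derivable from $\UT_1\not\subt(\proj{\CContext\pp}{n})[\VT'_i]$ by unwinding the Table~\ref{tab:negationUandV'} rules.
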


\begin{proof}
The proof is by induction on the derivation of 
$\ttree{\UU}\not\subt\ttree{\V'}$.    
We extensively use notation $\UT=\ttree{\UU}$ and $\VT'=\ttree{\V'}$. 
The cases for the last rule applied are derived from Table~\ref{tab:negationUandV'}.

We first consider the cases with $\actions{\UT}\neq  \actions{\VT'}$.
\\\\
\fbox{$\actions{\UT}\neq \actions{\VT'}$} 
\\\\
\underline{$\rulename{n-UV-out-act}:$} $\UT=\tout\pp\ell\S.\UT_1$ and $\pp!\not\in\actions{\VT'}$.
\\
In this case, if $\pp\in\participant{\V'}$ by definition of characteristic session type and characteristic process, $\pr?\not\in \CP{\UU_\pp}$.  Thus,
\begin{align*}
  \MM \equiv & \pa\pr\CP{\UU} \pc\pa\pr\EmptyQueue \pc \pa\pp\CP{\UU_\pp} \pc\pa\pp\EmptyQueue \pc\MM'_1 &\\
          \equiv & \pa\pr\CP{\tout\pp\ell\S.\UU_1} \pc\pa\pr\EmptyQueue \pc \pa\pp \CP{\UU_\pp} \pc\pa\pp\EmptyQueue \pc\MM'_1 & \\
            \red  &  \pa\pr\CP{\UU_1} \pc\pa\pr (\pp,\ell(\valt{\S})) \pc \pa\pp \CP{\UU_\pp} \pc\pa\pp\EmptyQueue \pc\MM'_1 
                       \red \error  & \text{(by \rulename{err-ophn})}
\end{align*}
If $\pp\notin\participant{\V'}$ we use $\MM\equiv \MM \pc \pa\pp\inact \pc\pa\pp\EmptyQueue$ and derive the analogous proof as above.
\underline{$\rulename{n-UV-inp-act}:$} $\UT=\texternal_{i\in I}\tin\pp{\ell_i}{\S_i}.\UT_i$ and $\pp?\not\in\actions{\VT'}$.
\\
In this case, if $\pp\in\participant{\V'}$ 
by definition of characteristic session type and characteristic process, $\pr!\not\in \CP{\UU_\pp}$. Thus,
\begin{align*}
  \MM \equiv & \pa\pr\CP{\UU} \pc\pa\pr\EmptyQueue \pc \pa\pp\CP{\UU_\pp} \pc\pa\pp\EmptyQueue \pc\MM'_1 &\\
          \equiv & \pa\pr\CP{\texternal_{i\in I}\tin\pp{\ell_i}{\S_i}.\UU_i} \pc\pa\pr\EmptyQueue \pc \pa\pp \CP{\UU_\pp} \pc\pa\pp\EmptyQueue \pc\MM'_1 & \\
          \equiv & \pa\pr \sum_{i\in I}\procin{\pp}{\ell_i(\x_i)}{\PP_i} \pc\pa\pr\EmptyQueue \pc \pa\pp \CP{\UU_\pp} \pc\pa\pp\EmptyQueue \pc\MM'_1 
                       \red \error  &  \text{(by \rulename{err-strv})}
\end{align*}
If $\pp\notin\participant{\V'}$ we use $\MM\equiv \MM \pc \pa\pp\inact \pc\pa\pp\EmptyQueue$ and derive the analogous proof as above.
\underline{$\rulename{n-UV-out-act-R}:$} $\pp!\not\in\actions{\UT}$ and 
$\VT'=\tinternal_{i\in I}\tout\pp{\ell_i}{\S_i}.\VT'_i$.
\\
In this case, by definition of characteristic process, $\pp!\not\in \CP{\UU}$.
\begin{align*}
  \MM \equiv & \pa\pr\CP{\UU} \pc\pa\pr\EmptyQueue \pc \pa\pp\CP{\cyclic{\tinternal_{i\in I}\tout\pp{\ell_i}{\S_i}.\V'_i}{\pp}} \pc\pa\pp\EmptyQueue \pc\MM'_1 &\\
          \equiv & \pa\pr\CP{\UU} \pc\pa\pr\EmptyQueue \pc \pa\pp \sum_{i\in I}\procin{\pp}{\ell_i(\x_i)}{\PP_i} \pc\pa\pp\EmptyQueue \pc\MM'_1 \red \error  &  \text{(by \rulename{err-strv})}
\end{align*}
\underline{$\rulename{n-UV-inp-act-R}:$} $\pp?\not\in\actions{\UT}$ and $\VT'=\tin\pp\ell\S.\VT'_1$.
\\
In this case, by definition of characteristic process, $\pp?\not\in \CP{\UU}$.
\begin{align*}
  \MM \equiv & \pa\pr\CP{\UU} \pc\pa\pr\EmptyQueue \pc \pa\pp\CP{\cyclic{\tin\pp\ell\S.\V'_1}{\pp}} \pc\pa\pp\EmptyQueue \pc\MM'_1 &\\
          \equiv & \pa\pr \CP{\UU} \pc\pa\pr\EmptyQueue \pc \pa\pp \procout\pr\ell{\valt{\S}}\PP \pc\pa\pp\EmptyQueue \pc\MM'_1 &\\ 
           \red   & \pa\pr\CP{\UU} \pc\pa\pr\EmptyQueue \pc \pa\pp \PP \pc\pa\pp (\pr,\ell(\valt{\S})) \pc\MM'_1  \red \error  &  \text{(by \rulename{err-ophn})}
\end{align*}
In the following cases, $\UT$ type tree is rooted with an external choice.
\\\\
\fbox{$\UT= \texternal_{i\in I}\tin {\pp}{\ell_i}{\S_i}.\UT_i$}   
\\\\
In these cases, we have 
\begin{align*}
   \CP{\UU}  \equiv  &  \sum_{i\in I}\procin{\pp}{\ell_i(\x_i)}{\PP_i}, \text{ where }\\
       \PP_i  \equiv  & \cond{\exprt{\x_i}{\S_i}}{\CP{\UU_i}}{\CP{\UU_i}}
\end{align*}
According to  Table~\ref{tab:negationUandV'}, we distinguish four cases (not already considered), depending on the form of $\VT'.$
\\
\underline{\rulename{n-UV-inp}:} $\VT'=\tin\pp{\ell}{\S}.\VT_1'$ and 
$\forall i\in I : \ell_i\neq \ell \;\vee\; \S\not\subs\S_i \;\vee\; \UT_i \not\subt \VT'_1$.
\\
Now we have
\begin{align*}
  \MM \equiv   & \pa\pr\sum_{i\in I}\procin{\pp}{\ell_i(\x_i)}{\PP_i} \pc\pa\pr\EmptyQueue \pc \pa\pp \CP{\cyclic{\tin\pp{\ell}{\S}.\V_1'}{\pp}} \pc\pa\pp\EmptyQueue \pc\MM'_1 &\\
  =  & \pa\pr\sum_{i\in I}\procin{\pp}{\ell_i(\x_i)}{\PP_i} \pc\pa\pr\EmptyQueue \pc \pa\pp \procout\pr{\ell}{\valt{\S}}\PP' \pc\pa\pp\EmptyQueue \pc\MM'_1 &\\
\red & \pa\pr\sum_{i\in I}\procin{\pp}{\ell_i(\x_i)}{\PP_i} \pc\pa\pr\EmptyQueue \pc \pa\pp\PP' \pc\pa\pp (\pr,\ell(\valt{\S})) \pc\MM'_1 \\
=: & \MM_1
\end{align*}
We now distinguish three cases.
	\begin{itemize}
	\item $\forall i\in I : \ell_i\neq \ell$: 
	 Session $\MM_1$ reduces to $\error$ by \rulename{err-mism}.
	\item $\exists i\in I : \ell_i=\ell \;\wedge\; \S\not\subs\S_i$:
	\[
	\MM_1 \red \pa\pr\cond{\exprt{\valt{\S}}{\S_i}}{\CP{\UU_i}}{\CP{\UU_i}} \pc   
	\pa\pr\EmptyQueue \pc  \pa\pp\PP' \pc \pa\pp\EmptyQueue \pc\MM'_1
	\]
	By Lemma~\ref{lemma:error:val} and \rulename{err-eval},  the session reduces to $\error$.
	\item $\exists i\in I : \ell_i=\ell \;\wedge\; \S\subs\S_i \;\wedge\; \UT_i\not\subt \VT'_1$:
	\[
	\MM_1 \red \pa\pr\cond{\exprt{\valt{\S}}{\S_i}}{\CP{\UU_i}}{\CP{\UU_i}} \pc   
	\pa\pr\EmptyQueue \pc  \pa\pp\PP' \pc \pa\pp\EmptyQueue \pc\MM'_1
	\]
	By Lemma~\ref{lemma:true:val}, we further derive 
	\[
	\MM_1 \red^* \pa\pr\CP{\UU_i} \pc   
	\pa\pr\EmptyQueue \pc  \pa\pp\PP' \pc \pa\pp\EmptyQueue \pc\MM'_1
	\]
	where (assuming $\pp=\pp_1=\pp_{m+1}$)
	\begin{align*}
	\PP' \equiv     & \tout{\pp_{2}}{\ell}{\true}.\tin{\pp_{m}}{\ell}{\x}.\\
	 & \cond{\exprt{\valt{\x}}{\tbool}} {\cyclic{\V'_1}{\pp_1}}{\cyclic{\V'_1}{\pp_1}}\\
	\MM'_1 \equiv    & \prod\limits_{2\leq k \leq m} 
	             (\pa{\pp_k} \tin{\pp_{k-1}}{\ell}{\x}. \cond{\exprt{\valt{\x}}{\tbool}}
	             {\PQ_k}
	             {\PQ_k}
	             \pc\pa{\pp_k}\EmptyQueue )
	\end{align*}
	where $\PQ_k=\tout{\pp_{k+1}}{\ell}{\true}.\CP{\cyclic{\V'_1}{\pp_k}}$.
	Hence, we have 
	\begin{align*}
    \MM_1 \red^* & \pa\pr\CP{\UU_i} \pc   
	\pa\pr\EmptyQueue \pc  \pa\pp\cyclic{\V'_1}{\pp} \pc \pa\pp\EmptyQueue \pc\\
	&          \prod\limits_{2\leq k \leq m} 
	           \left(\pa{\pp_k}\CP{\cyclic{\V'_1}{\pp_k}}\pc\pa{\pp_k}\EmptyQueue \right)
	\end{align*}
	Since $\UT_i\not\subt \VT'_1$, by induction hypothesis  the session reduces to $\error$.
	\end{itemize}
\underline{\rulename{n-UV-${\AC}$}:}  
$\VT'=\AContext\pp.\tin\pp{\ell}{\S}.\VT_1'$ and 
$\forall i\in I : \ell_i\neq \ell \;\vee\; \S\not\subs\S_i \;\vee\; \UT_i \not\subt \AContext\pp.\VT'_1$.
\\
Assuming $\pp_1=\pp_{m+1}=\pp$, we have 
\begin{align*}
\MM \equiv & \pa\pr\sum_{i\in I}\procin{\pp}{\ell_i(\x_i)}{\PP_i} \pc\pa\pr\EmptyQueue \pc 
             \pa\pp \CP{\cyclic{\AContext\pp.\tin\pp{\ell}{\S}.\V_1'}{\pp}} \pc\pa\pp\EmptyQueue \pc\\
           & \prod\limits_{2\leq k \leq m} 
	             (\pa{\pp_k}\CP{\cyclic{\AContext\pp.\tin\pp{\ell}{\S}.\V_1'}{\pp_k}} \pc\pa{\pp_k}\EmptyQueue)
\end{align*}
By Lemma~\ref{lemm:for-completeness-induction-on-A-context}, we have  
\begin{align*}
\MM \red^*& \pa\pr\sum_{i\in I}\procin{\pp}{\ell_i(\x_i)}{\PP_i} \pc\pa\pr\EmptyQueue \pc 
             \pa\pp \CP{\cyclic{\tin\pp{\ell}{\S}.\V_1'}{\pp}} \pc\pa\pp\EmptyQueue \pc\\
           & \prod\limits_{2\leq k \leq m} 
	             (\pa{\pp_k}\CP{\cyclic{\tin\pp{\ell}{\S}.\V_1'}{\pp_k}} \pc\pa{\pp_k}\h_k)
\end{align*}
where for all $2\leq k \leq m$: 
$\h_k=(\pr,\ell_{1}(\valt{\S_{1}}))\cdot\ldots\cdot(\pr,\ell_{n}(\valt{\S_{n}}))$
when 
\[
\AContext\pp=\AContext{\pp_k}_1.\tin{\pp_k}{\ell_{1}}{S_{1}}.\AContext{\pp_k}_2.\ldots\AContext{\pp_k}_n.\tin{\pp_k}{\ell_{n}}{S_{n}}.\AContext{\pp_k}_{n+1}
\] 
where instead of $\AContext{\pp_k}_i$ contexts there could also be empty contexts, and if $\pp_k?\notin\actions{\AContext\pp}$ then $\h_k=\EmptyQueue$.
Using the last observation, we have 
\begin{align*}
\MM \red^*& \pa\pr\sum_{i\in I}\procin{\pp}{\ell_i(\x_i)}{\PP_i} \pc\pa\pr\EmptyQueue \pc 
             \pa\pp \procout\pr{\ell}{\valt{\S}}\PP' \pc\pa\pp\EmptyQueue \pc\\
           & \prod\limits_{2\leq k \leq m} 
	             (\pa{\pp_k}\CP{\cyclic{\tin\pp{\ell}{\S}.\V_1'}{\pp_k}} \pc\pa{\pp_k}\h_k)\\
	\red    & \pa\pr\sum_{i\in I}\procin{\pp}{\ell_i(\x_i)}{\PP_i} \pc\pa\pr\EmptyQueue \pc 
             \pa\pp \PP' \pc\pa\pp(\pr,\ell(\valt{\S})) \pc\\
           & \prod\limits_{2\leq k \leq m} 
	             (\pa{\pp_k}\CP{\cyclic{\tin\pp{\ell}{\S}.\V_1'}{\pp_k}} \pc\pa{\pp_k}\h_k)
\end{align*}
We now distinguish three cases
	\begin{itemize}
	\item $\forall i\in I : \ell_i\neq \ell$;
	\item $\exists i\in I : \ell_i=\ell \;\wedge\; \S\not\subs\S_i$;
	\item $\exists i\in I : \ell_i=\ell \;\wedge\; \S\subs\S_i \;\wedge\; \UT_i\not\subt \AContext\pp.\VT'_1$.
	\end{itemize}
In the first two cases $\MM$ reduces to $\error$ by 
the same arguments that are presented for the case of \rulename{n-UV-inp}. 
For the third case, again using the same arguments as in the case of \rulename{n-UV-inp}, 
we have that 
\begin{align*}
\MM \red^* & \pa\pr\CP{\UU_i} \pc   
	\pa\pr\EmptyQueue \pc  \pa\pp\cyclic{\V'_1}{\pp} \pc \pa\pp\EmptyQueue \pc\\
	&          \prod\limits_{2\leq k \leq m} 
	           (\pa{\pp_k}\CP{\cyclic{\V'_1}{\pp_k}}\pc\pa{\pp_k}\h_k )\\
	&   =: \MM'
\end{align*}
Now let 
\begin{align*}
\MM'' = & \pa\pr\CP{\UU_i} \pc   
	\pa\pr\EmptyQueue \pc  \pa\pp\cyclic{\AContext\pp.\V'_1}{\pp} \pc \pa\pp\EmptyQueue \pc\\
	   &          \prod\limits_{2\leq k \leq m} 
	           (\pa{\pp_k}\CP{\cyclic{\AContext\pp.\V'_1}{\pp_k}}\pc\pa{\pp_k}\EmptyQueue )
\end{align*}
By Lemma~\ref{lemm:for-completeness-induction-on-A-context} we obtain $\MM'' \red^* \MM'$.
Since $\UT_i\not\subt \AContext\pp.\VT'_1$ 
by induction hypothesis $\MM''\red^* \MM'\red^* \error$, and hence, $\MM\red^*\MM'\red^*\error$.
\\\\
\underline{\rulename{n-UV-in-out-1}:} $\VT'=\tinternal_{j\in J}\tout\pq{\ell_j}{\S_j}.\VT'_j$. 
\\
Assuming 
$\pp_1=\pp_{m+1}=\pq$
we have
\begin{align*}
\MM \equiv  & \pa\pr\CP{\UU} \pc\pa\pr\EmptyQueue \pc \pa\pq\CP{\cyclic{\V'}{\pq}} \pc\pa\pq\EmptyQueue \pc\MM'_1 &\\
    =  & \pa\pr\sum_{i\in I}\procin{\pp}{\ell_i(\x_i)}{\PP_i}  \pc\pa\pr\EmptyQueue \pc  
         \pa\pq \sum_{j\in J}\procin\pr{\ell(\x_j)}{\PP_j^\pq}\pc\pa\pq\EmptyQueue \pc\MM'_1                
\end{align*}  
where 
\[
\MM'_1 \equiv \prod\limits_{2\leq k \leq m} 
	             (\pa{\pp_k} \sum_{j\in J}\tin{\pp_{k-1}}{\ell_j}{\x_j}.\PP^{\pp_k}_j\pc\pa{\pp_k}\EmptyQueue)
\]
The session reduces to $\error$ by \rulename{err-dlock}.
\\\\
\underline{\rulename{n-UV-in-out-2}:} 
$\VT'=\AContext\pp.\tinternal_{j\in J}\tout\pq{\ell_j}{\S_j}.\VT'_j$.
\\
Let us first assume $\pq\neq\pp$.
Denoting 
$\pp_1=\pp_{m+1}=\pq$ and $\pp_2=\pp$ 
we have 
\begin{align*}
\MM \equiv  & \pa\pr\sum_{i\in I}\procin{\pp}{\ell_i(\x_i)}{\PP_i} \pc\pa\pr\EmptyQueue \pc\\
      &  \prod\limits_{1\leq k \leq m} 
	             (\pa{\pp_k}\CP{\cyclic{\AContext\pp.\tinternal_{j\in J}\tout\pq{\ell_j}{\S_j}.\V'_j}{\pp_k}} \pc\pa{\pp_k}\EmptyQueue)           
\end{align*}  
By Lemma~\ref{lemm:for-completeness-induction-on-A-context} we obtain
\begin{align*}
\MM \red^*& \pa\pr\sum_{i\in I}\procin{\pp}{\ell_i(\x_i)}{\PP_i} \pc\pa\pr\EmptyQueue \pc \\
           & \prod\limits_{1\leq k \leq m} 
	             (\pa{\pp_k}\CP{\cyclic{\tinternal_{j\in J}\tout\pq{\ell_j}{\S_j}.\V'_j}{\pp_k}} \pc\pa{\pp_k}\h_k)\\
	       & =:\MM'
\end{align*}
where for all $1\leq k \leq m$: 
$\h_k=(\pr,\ell_{1}(\valt{\S_{1}}))\cdot\ldots\cdot(\pr,\ell_{n}(\valt{\S_{n}}))$
when 
\[
\AContext\pp=\AContext{\pp_k}_1.\tin{\pp_k}{\ell_{1}}{S_{1}}.\AContext{\pp_k}_2.\ldots\AContext{\pp_k}_n.\tin{\pp_k}{\ell_{n}}{S_{n}}.\AContext{\pp_k}_{n+1}
\] 
where instead of $\AContext{\pp_k}_i$ contexts there could also be empty contexts, and if $\pp_k?\notin\actions{\AContext\pp}$ then $\h_k=\EmptyQueue$. 
Since $\pp?\notin\actions{\AContext\pp}$, for the last derived session we have 
\begin{align*}
\MM' =  & \pa\pr\sum_{i\in I}\procin{\pp}{\ell_i(\x_i)}{\PP_i}  \pc\pa\pr\EmptyQueue \pc  
         \pa\pq \sum_{j\in J}\procin\pr{\ell(\x_j)}{\PP_j^\pq}\pc\pa\pq\h_1 \pc \\
       &  \pa\pp \sum_{j\in J}\tin{\pq}{\ell_j}{\x_j}.\PP^{\pp_k}_j \pc\pa\pp\EmptyQueue
       \pc \prod\limits_{3\leq k \leq m} 
	     (\pa{\pp_k} \sum_{j\in J}\tin{\pp_{k-1}}{\ell_j}{\x_j}.\PP^{\pp_k}_j\pc\pa{\pp_k}\h_k)
\end{align*}
the session reduces to $\error$ by \rulename{err-dlock}. 
In case $\pq=\pp$, the proof follows similar lines.
\\\\
In the following cases, $\UT$ type tree is rooted with an internal choice.
\\\\
\fbox{$\UT= \tout {\pp}{\ell}{\S}.\UT_1$}   
\\\\
In these case, we have $\CP{\UU}  \equiv   \procout{\pp}{\ell}{\valt{S}}\CP{\UU_1}.$
\\
Depending on the form of $\VT'$, we distinguish two more cases (according to rules in Table~\ref{tab:negationUandV'}).
\\\\
\underline{\rulename{n-UV-out}:} $\VT'=\tinternal_{i\in I}\tout\pp{\ell_i}{\S_i}.\VT'_i$ and 
$\forall i\in I : \ell\neq \ell_i \;\vee\; \S\not\subs\S_i \;\vee\; \UT_1 \not\subt \VT'_i$.
\\
Now we have
\begin{align*}
  \MM \equiv   & \pa\pr\procout{\pp}{\ell}{\valt{S}}\CP{\UU_1} \pc\pa\pr\EmptyQueue \pc 
  \pa\pp \CP{\cyclic{\tinternal_{i\in I}\tout\pp{\ell_i}{\S_i}.\V'_i}{\pp}} \pc\pa\pp\EmptyQueue \pc\MM'_1 &\\
  =  & \pa\pr\procout{\pp}{\ell}{\valt{S}}\CP{\UU_1} \pc\pa\pr\EmptyQueue \pc \pa\pp \sum_{i\in I}\procin{\pr}{\ell_i(\x_i)}{\PP_i} \pc\pa\pp\EmptyQueue \pc\MM'_1 &\\
\red & \pa\pr\CP{\UU_1} \pc\pa\pr(\pp,\ell(\valt{\S})) \pc \pa\pp\sum_{i\in I}\procin{\pr}{\ell_i(\x_i)}{\PP_i} \pc\pa\pp \EmptyQueue \pc\MM'_1 
\end{align*}
Now the proof proceeds following the same lines as in the case of \rulename{n-UV-inp}.
\\\\
\underline{\rulename{n-UV-$\mathcal{C}$}}:  
$\VT'=\CContext\pp[\tinternal_{i\in I_n}\tout\pp{\ell_i}{\S_i}.\VT'_i]^{n\in N}$ and 
$\forall n\in N \, \forall i \in I_n : 
\ell\neq\ell_i \;\vee\; \ST\not\subs\ST_i 
\;\vee\; \UT_1\not\subt (\proj{\CContext\pp}{n})[\VT_i']$.
\\
Let us denote
$\pp_1=\pp_{m+1}=\pp$.
We have 
\begin{align*}
  \MM \equiv   & \pa\pr\procout{\pp}{\ell}{\valt{S}}\CP{\UU_1} \pc\pa\pr\EmptyQueue \pc 
  \pa\pp \CP{\cyclic{\CContext\pp[\tinternal_{i\in I_n}\tout\pp{\ell_i}{\S_i}.\V'_i]^{n\in N}}{\pp}} \pc\pa\pp\EmptyQueue  \\
  & \pc \prod\limits_{2\leq k \leq m} 
	             (\pa{\pp_k}\CP{\cyclic{\CContext\pp[\tinternal_{i\in I_n}\tout\pp{\ell_i}{\S_i}.\V'_i]^{n\in N}}{\pp_k}} \pc\pa{\pp_k}\EmptyQueue) 
\end{align*}

By induction on $\CContext\pp [\;]^{n\in N}$ we may show that either $\MM \red^* \error$ or there is $n\in N$ such that
\begin{align*}
  \MM \red^*   & \pa\pr\CP{\UU_1'} \pc\pa\pr(\pp,\ell(\valt{\S}))\cdot\h_r \pc 
  \pa\pp \CP{\cyclic{\tinternal_{i\in I_n}\tout\pp{\ell_i}{\S_i}.\V'_i}{\pp}} \pc\pa\pp\h_\pp  \\
       & \pc \prod\limits_{2\leq k \leq m} 
	     (\pa{\pp_k}\CP{\cyclic{\tinternal_{i\in I_n}\tout\pp{\ell_i}{\S_i}.\V'_i}{\pp_k}} \pc\pa{\pp_k}\h_k)\\
    =: & \MM_1 
\end{align*}
where there exist output-only context $\BContext{\pr}$  
and $\AContext\pr$ context (where instead of $\pr$ we could use any other fresh name) such that 
\begin{align*}
   \MM_2 = & \pa\pr\CP{\tout\pp{\ell}{\S}.\BContext{\pr}.\UU_1'}  \pc\pa\pr\EmptyQueue \pc 
  \pa\pp \CP{\cyclic{\AContext{\pr}.\tinternal_{i\in I_n}\tout\pp{\ell_i}{\S_i}.\V'_i}{\pp}} \pc\pa\pp\EmptyQueue  \\
  & \pc \prod\limits_{2\leq k \leq m} 
	     (\pa{\pp_k}\CP{\cyclic{\AContext{\pr}.\tinternal_{i\in I_n}\tout\pp{\ell_i}{\S_i}.\V'_i}{\pp_k}} \pc\pa{\pp_k}\EmptyQueue)\\
	     \red^* & \MM_1  
\end{align*}
and where %
$\BContext{\pr}.\UT_1' \not\subt \AContext{\pr}.\VT'_i$ 
can be derived from $\UT_1\not\subt (\proj{\CContext\pp}{n})[\VT_i']$ by applying the rules given in Table~\ref{tab:negationUandV'} from conclusion to premises. 

Note that the above $\AContext\pr$ is actually derived from $(\proj{\CContext\pp}{n})$, 
by taking out all the outputs (that are transformed into the inputs in the 
characteristic session), since they have found the appropriate outputs in $\UT_1$, 
and also some inputs (that are transformed into outputs in the characteristic session), that have found the appropriate inputs in $\UT_1$: these pairs of actions enabled session $\MM$ to reduce to $\MM_1$. 
Along these lines $\BContext{\pr}.\UT_1'$ is derived from $\UT_1$.

The above also implies that by the assumption 
$\BContext{\pr}.\UT_1' \subt \AContext{\pr}.\VT'_i$ 
we could derive $\UT_1\subt (\proj{\CContext\pp}{n})[\VT_i']$.

Since
\[
\CP{\cyclic{\tinternal_{i\in I_n}\tout\pp{\ell_i}{\S_i}.\V'_i}{\pp}} = 
\sum_{i\in I}\procin{\pr}{\ell_i(\x_i)}{\PP_i}
\]
we distinguish three cases
	\begin{itemize}
	\item $\forall i\in I : \ell\neq \ell_i$;
	\item $\exists i\in I : \ell=\ell_i \;\wedge\; \S\not\subs\S_i$;
	\item $\exists i\in I : \ell=\ell_i \;\wedge\; \S\subs\S_i \;\wedge\; \UT_1\not\subt (\proj{\CContext\pp}{n})[\VT_i']$.
	\end{itemize}
In the first two cases $\MM_1$ reduces to $\error$ using similar arguments as for \rulename{n-UV-inp}. 
For the third case we have that 
\begin{align*}
  \MM_1 \red^*   & \pa\pr\CP{\UU_1'} \pc\pa\pr\h_r \pc 
  \pa\pp \CP{\cyclic{\V'_i}{\pp}} \pc\pa\pp\h_\pp  \\
       & \pc \prod\limits_{2\leq k \leq m} 
	     (\pa{\pp_k}\CP{\cyclic{\V'_i}{\pp_k}} \pc\pa{\pp_k}\h_k)\\
	     =: \MM'
\end{align*}
Similarly as in the case of \rulename{n-UV-${\AC}$}, we have 
\begin{align*}
   \MM'' = & \pa\pr\CP{\BContext{\pr}.\UU_1'}  \pc\pa\pr\EmptyQueue \pc 
  \pa\pp \CP{\cyclic{\AContext{\pr}.\V'_i}{\pp}} \pc\pa\pp\EmptyQueue  \\
  & \pc \prod\limits_{2\leq k \leq m} 
	     (\pa{\pp_k}\CP{\cyclic{\AContext{\pr}.\V'_i}{\pp_k}} \pc\pa{\pp_k}\EmptyQueue)\\
	     \red^* & \MM'  
\end{align*}
Since $\BContext{\pr}.\UT_1' \not\subt \AContext{\pr}.\VT'_i$ 
can be derived from $\UT_1\not\subt (\proj{\CContext\pp}{n})[\VT_i']$ by applying the rules given in 
Table~\ref{tab:negationUandV'} from conclusion to premises, we may apply induction hypothesis 
and obtain $\MM''\red^*\MM'\red^* \error$. Hence, $\MM\red^*\MM'\red^* \error$.
\end{proof}

\begin{proposition}
\label{prop:completenessT}
  Let $\T$ and $\T'$ be session types such that $\T \not\subt\T'$. %
  Then, there are $\UU$ and $\V'$ %
  with $\ttree{\UU} \in \llbracket \ttree{\T} \rrbracket_\SO$ %
  and $\ttree{\V'} \in \llbracket \ttree{\T'} \rrbracket_\SI$ %
  and $\UU \not\subt \V'$ such that:

  \smallskip%
  \centerline{\(%
    \pa\pr \CP{\UU} \pc  \pa\pr \EmptyQueue \pc \prod\limits_{1\leq k \leq m} \pa{\pp_k} \left(\CP{\UU_{\pp_k}}\pc \pa{\pp_k} \EmptyQueue \right) \red^* \error,
    \)}%

 where $\participant{\V'}\subseteq\{\pp_k:1\leq k\leq m\}$  and $\UU_{\pp_k}=\cyclic{\V'}{\pp_k}$.
\end{proposition}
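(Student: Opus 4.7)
The plan is to assemble this proposition from the building blocks already developed in Steps 1--3 of the preciseness proof. Essentially, \Cref{prop:completenessT} is the final bridge between the abstract characterisation of $\not\subt$ (\Cref{sec:preciseness-step1}) and the concrete error reductions obtained in \Cref{thm:completenessUandV'}.

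First, I would use the characterisation of $\not\subt$ given by \Cref{lem:subttt-negation} and \Cref{def:subtyping}: the assumption $\T \not\subt \T'$ unfolds to the existence of some $\UT \in \llbracket \ttree{\T} \rrbracket_\SO$ and some $\VT' \in \llbracket \ttree{\T'} \rrbracket_\SI$ such that, for all $\WT \in \llbracket \UT \rrbracket_\SI$ and all $\WT' \in \llbracket \VT' \rrbracket_\SO$, we have $\WT \nsubttt \WT'$. That is, $\UT \not\subt \VT'$ as session trees, as expressed in \eqref{eq:not-subt-def}. However, these witnesses need not be regular (syntax-derived) trees, which is a problem because Step 2 (\Cref{cp}) and Step 3 (\Cref{def:characteristic-session,prop:live-ti}) require actual session types $\UU$ and $\V'$ as inputs.

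Second, I would invoke \Cref{lem:reg-iregU} to replace $\UT$ and $\VT'$ by regular representatives: this yields closed session types $\UU$ and $\V'$ such that $\ttree{\UU} \in \llbracket \ttree{\T} \rrbracket_\SO$, $\ttree{\V'} \in \llbracket \ttree{\T'} \rrbracket_\SI$, and $\ttree{\UU} \not\subt \ttree{\V'}$, i.e., $\UU \not\subt \V'$ in the sense of \Cref{def:subtyping}. This is precisely the quantifier-witness content of \eqref{eq:not-subt-def-regular}.

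Third, once $\UU$ and $\V'$ are in hand, \Cref{thm:completenessUandV'} applies directly: instantiating its statement with these $\UU, \V'$ (and with $\participant{\V'} \subseteq \{\pp_k : 1 \le k \le m\}$ and $\UU_{\pp_k} = \cyclic{\V'}{\pp_k}$), and taking $\MM = \MM'$ and the trivial reduction $\MM \red^* \MM$, gives
\[
\pa\pr \CP{\UU} \pc \pa\pr \EmptyQueue \pc \prod_{1 \le k \le m} \left( \pa{\pp_k} \CP{\UU_{\pp_k}} \pc \pa{\pp_k} \EmptyQueue \right) \;\red^*\; \error,
\]
which is exactly the conclusion of \Cref{prop:completenessT}. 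The hard work is all encapsulated in \Cref{thm:completenessUandV'}, whose proof performs the lengthy induction on the derivation of $\ttree{\UU} \not\subt \ttree{\V'}$ via the rules of \Cref{tab:negationUandV'} (with \Cref{lemm:negationUandV} providing that every failing subtyping instance is indeed derivable by those rules). Here, the main obstacle I anticipate is not in new reasoning but in ensuring that \Cref{lem:reg-iregU} is applied correctly so that the inclusion $\actions{\ttree{\UU}} \supseteq \actions{\UT}$ (guaranteed by the construction in \Cref{lem:regU}) is compatible with preserving the failing derivation at depth sufficient to still witness $\ttree{\UU} \not\subt \ttree{\V'}$; this is precisely what \Cref{lem:reg-iregU} establishes, so the proof itself reduces to a short chain of citations.
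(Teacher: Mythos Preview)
Your proposal is correct and follows essentially the same approach as the paper: unfold $\T \not\subt \T'$ to obtain (possibly irregular) witnesses $\UT,\VT'$, invoke \Cref{lem:reg-iregU} to replace them by syntax-derived $\UU,\V'$ with $\ttree{\UU}\not\subt\ttree{\V'}$, and then apply \Cref{thm:completenessUandV'}. One minor slip in your closing remark: the action inclusion guaranteed by the $\regU$ construction is $\actions{\UT}\supseteq\actions{\ttree{\UU_1}}$, not the reverse direction you wrote; but this side comment is not needed for the argument, which indeed reduces to the short chain of citations you outlined.
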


\begin{proof}
 If  $\ttree{\T} \not \subt \ttree{\T'}$,  there are $\UT$ and $\VT$ such that $\UT \in \llbracket \ttree{\T} \rrbracket_\SO$ and $\VT' \in \single{ \ttree{\T'} }$ and $\UT \not\subt \VT'.$ By Corollary~\ref{lem:reg-iregU}, there are $\UU$ and $\V'$ such that $\ttree{\UU} \in \llbracket \ttree{\T} \rrbracket_\SO$ and $\ttree{\V'} \in \single{ \ttree{\T'} }$ and $\ttree{\UU} \not\subt \ttree{\V'}.$ Now the proof follows by Proposition~\ref{thm:completenessUandV'}.
\end{proof}

\thCompleteness*

\begin{proof}
Let $\T$ and $\T'$ be such that $\ttree{\T}\not\subt \ttree{\T'}.$ 
Then, by Proposition~\ref{prop:completenessT} there are $\UU$ and $\V'$ with $\ttree{\UU} \in \llbracket \ttree{\T} \rrbracket_\SI$ and $\ttree{\V'} \in \llbracket \ttree{\T'} \rrbracket_\SO$ and $\ttree{\UU}\not\subt \ttree{\V'},$
 such that,
 \begin{equation}\label{eq:charact_sess_red_to_err}
 \pa\pr \CP{\UU} \pc  \pa\pr \EmptyQueue \pc \M_{\pr,\V'} \red^* \error,
 \end{equation}      
 where 
 \[
 \M_{\pr,\V'}= \prod\limits_{1\leq k \leq m} \pa{\pp_k} \left(\CP{\UU_{\pp_k}}\pc \pa{\pp_k} \EmptyQueue \right)
 \]
 and
 $\participant{\V'}\subseteq\{\pp_k:1\leq k\leq m\}$  and $\UU_{\pp_k}=\cyclic{\V'}{\pp_k}$.

By Proposition~\ref{prop:live-ti}, $\vdash \Q_1:\T'$ implies there is a live typing environment $\Gamma$ such that 
  \[
       \Gamma \vdash \pa\pr \Q_1 \pc \pa\pr\EmptyQueue \pc \M_{\pr,\V'}
  \]

Since by Proposition~\ref{cp} $\vdash \CP{\UU}:\T,$
we conclude the proof by~(\ref{eq:charact_sess_red_to_err}).
\end{proof}

\end{document}